\pdfoutput=1
\documentclass[prd,aps,amsfonts,eqsecnum,superscriptaddress,nofootinbib,notitlepage,longbibliography,10pt]{revtex4-2}
\usepackage{booktabs}
\usepackage{xcolor}
\usepackage[a4paper, left=2cm, right=2cm, top=2cm, bottom=2cm]{geometry}
\usepackage{setspace}
\usepackage[english]{babel}
\addto\extrasenglish{%
	\def\appendixautorefname{Appendix}%
}
\usepackage{appendix}
\usepackage[utf8]{inputenc}
\usepackage[T1]{fontenc}
\usepackage{lmodern}
\usepackage{amsmath,amsthm,mathdots,amssymb,bm,bbm,mathrsfs,mathtools}
\usepackage[final]{graphicx}
\usepackage{algorithm,algpseudocode}
\usepackage{subcaption}
\usepackage{tikz,pgfplots}\pgfplotsset{compat=1.16}

\usepackage{comment}

\usepackage[hidelinks, hyperindex=true, linktocpage=true, draft=false]{hyperref}
\usepackage{mleftright}\mleftright

\usepackage{enumitem}
\mathtoolsset{showonlyrefs=true}
\usepackage{needspace}
\usepackage{microtype,color,graphicx}

\usepackage{tikz-cd}
\usepackage{xcolor}
\usepackage{multirow}
\usepackage{tablefootnote}

\usepackage{macros}
\BeforeBeginEnvironment{thm}{\Needspace{6\baselineskip}}
\BeforeBeginEnvironment{lem}{\Needspace{6\baselineskip}}
\BeforeBeginEnvironment{cor}{\Needspace{6\baselineskip}}
\BeforeBeginEnvironment{conj}{\Needspace{6\baselineskip}}
\BeforeBeginEnvironment{prop}{\Needspace{6\baselineskip}}
\BeforeBeginEnvironment{defn}{\Needspace{6\baselineskip}}
\numberwithin{algorithm}{section}
\crefname{algorithm}{Algorithm}{Algorithms}
\makeatletter
\def\ftype@algorithm{8}
\makeatother

\usepackage{ifdraft}

\usetikzlibrary{quantikz}
\makeatletter
\def\l@subsubsection#1#2{}
\makeatother
\begin{document}
\input{inputs/equation-reference-tags}
\def\appendixautorefname{Appendix}
\renewcommand{\appendixautorefname}{Appendix}
\renewcommand{\chapterautorefname}{Chapter}
\renewcommand{\sectionautorefname}{Section}
\renewcommand{\subsectionautorefname}{Section}
\renewcommand{\subsubsectionautorefname}{Section}

\title{Efficient learning of quantum interactions from thermal metastable states}

\author{Bingrun Wang}
\email{wangbr25@mails.tsinghua.edu.cn}
\affiliation{Center for Quantum Information, IIIS, Tsinghua University, Beijing 100084, China}

\author{Qi Ye}
\email[Corresponding author: ]{yeq22@mails.tsinghua.edu.cn}
\affiliation{Center for Quantum Information, IIIS, Tsinghua University, Beijing 100084, China}
\affiliation{Shanghai Qi Zhi Institute, Shanghai 200232, China}
\affiliation{Hefei National Laboratory, Hefei 230088, China}

\author{Chi-Fang Chen}
\email[Corresponding author: ]{achifchen@gmail.com}
\affiliation{University of California, Berkeley, California 94720, USA}

\begin{abstract}

Learning quantum interactions from finite-temperature many-body systems is a central task in emerging quantum platforms. Recently, the problem of learning from lattice quantum Gibbs states has found rigorous, efficient protocols. Nevertheless, exact Gibbs states, as the input premise, are in fact computationally intractable to prepare and may not faithfully represent generic finite-temperature quantum systems. In contrast, a system coupled to a heat bath can be stuck at an approximate stationary state (\textit{metastable state}) long before it truly equilibrates.
Here, we formulate a physically and algorithmically consistent alternative: learning from such metastable states of detailed-balanced master equations (Lindbladians) arising from system-bath interactions. We distill the algorithmic mechanism and structural condition underlying Gibbs-state learning and extend it in full to metastable states, attaining nearly optimal sample and computational complexity (in the system size and the precision). More broadly, we sharpen notions of metastability and develop a unified framework for finite-temperature learning.

\end{abstract}

\maketitle

\tableofcontents
\section{Introduction}
\label{sec:introduction}

Learning from the quantum physical world has emerged as a frontier in many-body physics, quantum information theory, quantum sensing, and learning theory~\cite{gebhartLearningQuantumSystems2023,anshuSurveyComplexityLearning2023,huangQuantumAdvantageLearning2022}. However, the full description of a generic $n$-qubit quantum state is simply intractable to learn~\cite{haahSampleoptimalTomographyQuantum2016,odonnellEfficientQuantumTomography2016}. Therefore, any hope of efficiently obtaining a succinct description of many-body systems must sharply recognize the essential structure of the family of interest~\cite{cramerEfficientQuantumState2010,flammiaQuantumTomographyCompressed2012,huangPredictingManyProperties2020}. A natural and highly effective ansatz class comes from quantum thermodynamics: equilibrium phenomena are often specified by the interaction (many-body Hamiltonian) and the temperature~\cite{jaynesInformationTheoryStatistical1957,alhambraQuantumManyBodySystems2023}.

Today, the task of learning the Hamiltonian from quantum Gibbs states has surfaced as a clean proxy problem for abstracting finite-temperature learning in a range of applications, perhaps reminiscent of the story of classical Boltzmann machines~\cite{ackleyLearningAlgorithmBoltzmann1985,breslerEfficientlyLearningIsing2014,vuffrayInteractionScreeningEfficient2016,klivansLearningGraphicalModels2017,aminQuantumBoltzmannMachine2018}. In the analog sensing setting, we often have access to experimental apparatus, such as a fridge, that cools a piece of complex material which we wish to understand~\cite{samarakoonMachinelearningassistedInsightSpin2020}; in controllable quantum simulators, we might already have some desired Hamiltonian in mind, but wish to benchmark whether the prepared state precisely matches the target~\cite{wiebeHamiltonianLearningCertification2014,eisertQuantumCertificationBenchmarking2020,carrascoTheoreticalExperimentalPerspectives2021,kokailEntanglementHamiltonianTomography2021}. In any case, provably efficient (sample-wise and computation-wise) learning protocols have been ultimately found~\cite{anshuSampleefficientLearningQuantum2021,haah2022optimal, bakshiLearningQuantumHamiltonians2023, chen2025learning}, in principle, for learning the coefficients of any lattice Hamiltonian, given multiple copies of finite-temperature Gibbs states.

However, a major conceptual inconsistency of the above premise is that Gibbs states might well be intractable to prepare in the first place, therefore invalidating the input assumptions, let alone the algorithm guarantees. Indeed, even classical Gibbs states are NP-hard to sample from~\cite{barahonaComputationalComplexityIsing1982,slyComputationalTransitionUniqueness2010,slyCountingTwospinModels2014}; therefore, we do not expect a quantum simulator to always efficiently prepare the Gibbs states~\cite{kitaevClassicalQuantumComputation2002,kempeComplexityLocalHamiltonian2006}. In experimental cooling apparatus, if we approximate the interaction with the bath as a master equation, it, as a quantum computational process, is likewise constrained by complexity-theoretic hardness~\cite{chenLocalMinimaQuantum2023,rouzeEfficientThermalizationUniversal2026} and can mix exponentially slowly~\cite{gamarnikSlowMixingQuantum2024,rakovszkyBottlenecksQuantumChannels2024}. A sharper class of states is then needed to capture finite-temperature quantum systems~\cite{berthierTheoreticalPerspectiveGlass2011,brownQuantumMemoriesFinite2016,bergamaschi2025metastability}. Unfortunately, the arguments behind the Gibbs learning protocol often rely on exact analytic properties of the Gibbs formulation. Even though we morally expect --- just as the role of classical Boltzmann machines --- that Gibbs states are intended to be a toy model, it is still an open-ended question what might be a suitable extension beyond the Gibbs family, and whether such a candidate admits an effective mathematical description and conceptually transparent learning algorithms.

In this paper, we seek to articulate a conceptually consistent setting of efficient Hamiltonian learning in quantum systems at finite temperatures, and provide a protocol with provable guarantees. Departing from ideal Gibbs states, we start with a microscopic dynamical model of how the system interacts with the bath. In particular, we model the system-bath interaction as a (quasi)-local, detailed-balance master equation~\cite{daviesMarkovianMasterEquations1974,chen2025thermal,chen2023exact,scandiThermalizationOpenManyBody2026}, and study its approximate stationary states (or metastable states)~\cite{bergamaschi2025metastability}, a self-consistent model of finite-temperature quantum states.

What we found is a conceptually unified picture: from approximate stationary states we can also learn the unique Hamiltonian, through the same mechanism as if they were actually Gibbs states. Consequently, we also give a range of generalizations and robustness of the scope of applicability of the results: the experimental samples need not be \textit{identical} states, but any sequence of metastable states suffices. In particular, since metastable states can always be efficiently prepared from Lindbladian evolution~\cite[Lemma~II.1]{bergamaschi2025metastability}, our learning protocol also works when we are only given access to detailed-balanced Lindblad evolution as a black box, which can be natural in digital quantum simulation contexts~\cite{temmeQuantumMetropolisSampling2011,cleveEfficientQuantumAlgorithms2017,chen2025thermal,chen2023thermal,chen2023exact,dingEfficientQuantumGibbs2024,gilyen2024glauber}. Of course, the scope of this work implicitly assumes a model of thermalization where the system interacts weakly with a Markovian bath~\cite{daviesMarkovianMasterEquations1974,breuerTheoryOpenQuantum2007,mozgunovCompletelyPositiveMaster2020}, even though we believe the learning protocol can still qualitatively work. At least, our metastable learning setup is actually robust to perturbative errors in the microscopic master equation.

The main technical contribution of our work is partly a conceptual and technical extension of the initial framework of~\cite{bergamaschi2025metastability} and the synergies with learnability of the Hamiltonian. The `approximate detailed balance' condition, among many other notions introduced in~\cite{bergamaschi2025metastability}, appears to be the essential workhorse that ultimately recovers the learning results analogous to the ideal Gibbs case.

\subsection{Setup}
\label{sec:setup}
We consider $n$ qubits on a finite lattice $\Lambda\subset\mathbb{Z}^D$ with a geometrically local, $k$-local Hamiltonian
\begin{align}
    \vH=\sum_{\gamma\in\Gamma}h_\gamma\vP_\gamma,\quad \text{with unknown coefficients}
    \qquad h_\gamma\in[-1,1].
    \label{eq:intro-lattice-Hamiltonian}
\end{align}
The distinct, nonidentity Pauli strings $\vP_\gamma$ are known and have weight at most $k$ and respect the $D$-dimensional lattice geometry. We treat the geometric parameters as fixed constants throughout the paper and collect them in full detail in~\Cref{defn:local-Hamiltonian}.

Our main physical model of finite-temperature quantum states regards those arising from interaction with a finite-temperature bath. Under the time-Markovian, weak-coupling assumption, the fundamental dynamical description is modeled by a Lindbladian or master equation
\begin{align}
        \frac{\dd\vsigma}{\dd t}
    &=\CL[\vsigma].
\end{align}

While there are many extra parameters and microscopic nuances that may play a role in realistic system-bath interaction (e.g., bath memory effects or strong coupling), we will focus on a minimal quantitative model~\cite{chen2025thermal} for interacting with a heat bath.  Given inverse-temperature $\beta$ and Hamiltonian $\vH$, there exists an explicit master equation (\Cref{subsec:thermal-lindbladian-definition})
\begin{align}
        \CL[\cdot]&=-i[\vH,\cdot]+\sum_{a\in\mathcal A}\CL_a[\cdot],
    \label{eq:intro-thermal-dynamics}
\end{align}
where the generators\footnote{Strictly speaking, we can include a coupling strength parameter for $\sum_{a\in\mathcal A}\CL_a[\cdot]$ which will propagate through the bounds.} $\CL_a$ are quasi-local in the lattice geometry and satisfy Kubo--Martin--Schwinger (KMS) detailed balance at inverse temperature $\beta$. The set ${a\in\mathcal A}$ prescribes the particular system-bath interaction terms, which explicitly parameterize each $\CL_a$. For simplicity, we take them to be the set of single-site Pauli jumps $\{\vA_a\}_{a\in\mathcal A}=\{\vX_i,\vY_i,\vZ_i\}_{i\in\Lambda}$, i.e., the entire lattice interacts with the bath, not just the boundary.
We also include the system Hamiltonian term $-i[\vH,\cdot]$, which is merely the standard Schrödinger equation that cannot be `turned off'. We do note that it is sometimes mathematically cleaner to drop the Hamiltonian term, which is completely viable in digital quantum thermal simulation settings.

A defining feature of exact detailed balance is that the Gibbs state (for Hamiltonian $\vH$ at inverse temperature $\beta$) is exactly a stationary state such that
\begin{align}
    \vrho:=\frac{e^{-\beta\vH}}{\Tr(e^{-\beta\vH})},
    \qquad \CL_a[\vrho]=0\quad\text{for all }a\in\mathcal A, \quad \text{thus}
    \quad \CL[\vrho]=0.
    \label{eq:intro-thermal-fixed-point}
\end{align}
Although the Gibbs state is a stationary state, evolving the Lindbladian may not always efficiently converge to the Gibbs state (i.e., the \textit{mixing time} can be prohibitively large for complexity-theoretic and physical reasons).
Therefore, our main model of finite-temperature state --- under the Lindbladian~\eqref{eq:intro-thermal-dynamics} --- is then any \textit{approximate stationary state} of the dynamics.
Following Ref.~\cite{bergamaschi2025metastability}, we say a state $\vsigma$ is $\epsilon$-metastable if
\begin{align}
    \|\CL[\vsigma]\|_1\leq\epsilon.
    \label{eq:intro-global-metastability}
\end{align}
When $\epsilon >0$, such $\vsigma$ can be very far from the Gibbs state $\vrho$, but still appear effectively in equilibrium with the heat bath. Indeed, metastable states can always be efficiently prepared; for example, any initial state becomes $\frac{2}{T}$-metastable after evolving the Lindbladian over uniformly random times $[0,T].$ Of course, a metastable state may arise in many other ways, and we would like an analytic framework that is agnostic of its past and depends only on the parameter $\epsilon$.

Our main topic, then, is whether such relaxed classes of finite-temperature states indeed make a more physical case of efficient and local learning at finite temperatures, like in the Gibbs case.

\subsection{Main results}
\label{sec:main-results}

\begin{figure}[!t]
\centering
\includegraphics[width=\textwidth]{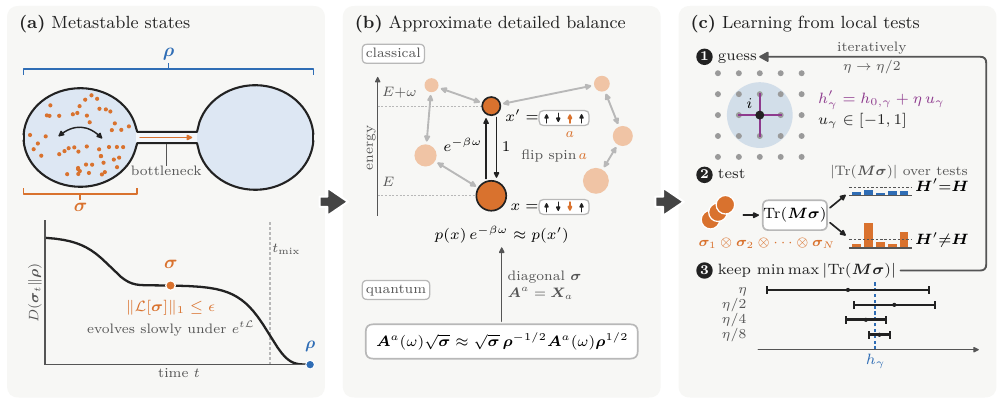}
\caption{Overview of the results.
(a)~In this example, a metastable state $\vsigma$ is concentrated in one region of configuration space, whereas the Gibbs state $\vrho$ spreads over all of it.
Probability leaks out of that region only slowly, so $\vsigma$ is approximately stationary under the Lindbladian evolution, $\|\CL[\vsigma]\|_1\leq\epsilon$, yet still far from equilibrium (the plateau of $D(\vsigma_t\|\vrho)$).
(b)~Metastability implies approximate detailed balance (ADB).
Classically, if configurations $x$ and $x'$ differ by one spin flip and their energies differ by $\omega$, the probability fluxes between them balance, $p(x)e^{-\beta\omega}\approx p(x')$.
ADB is the quantum version for single-site jumps $\vA^a$, and it reduces to this classical relation when $\vsigma$ is diagonal with entries $p(x)$ and $\vA^a=\vX_a$.
(c)~Algorithm sketch. Guesses $h_{0,\gamma}+\eta u_\gamma$ for the terms near each site are checked by local tests $\Tr(\vM\vsigma)$ measured on independent preparations.
By ADB, every $|\Tr(\vM\vsigma)|$ is small for the true Hamiltonian, whereas a wrong guess makes some of them large.
The algorithm keeps the guess whose largest test value is smallest, which reduces the coefficient error from $\eta$ to $\eta/2$, and repeats this step with the updated estimates.}
\label{fig:schematic}
\end{figure}

Our representative learning result considers access to a streamed sequence of unknown metastable states under the same master equation (thus the same Hamiltonian).

\begin{thm}[Hamiltonian learning from metastable states]
\label{thm:intro-global-metastable-learning}
Consider a $D$-dimensional, $n$-qubit lattice Hamiltonian $\vH$ (\Cref{defn:local-Hamiltonian}) and a detailed-balanced Lindbladian $\CL$ (as in~\eqref{eq:intro-thermal-dynamics}), and assume access to a stream of
independent, possibly different $\epsilon$-metastable states $\{\vsigma_i\}$.
Then, there exists an algorithm that estimates every coefficient $h_\gamma$
to additive error $\eta$ using
\begin{align}
    N_{\mathrm{samp}}
    &=O\!\left(\frac{e^{\poly(\beta^{\pm 1})}}{\eta^2}\log\!\frac{n}{\delta}\,\polylog\left(\frac{1}{\eta}\right)\right) \quad \text{sample complexity and} \\
    T_{\mathrm{tot}}
    &=O\!\left(n \cdot \frac{e^{\poly(\beta^{\pm 1})}}{\eta^2}\log\!\frac{n}{\delta}\,\polylog\left(\frac{1}{\eta}\right)\right) \quad \text{quantum and classical
time complexity},
    \label{eq:intro-global-learning-complexities}
\end{align}
with probability at least $1-\delta$, provided that the precision is above the floor
$\eta\geq C_\beta\epsilon^{\alpha_1}n^{\alpha_2}$. In the above, big $O(\cdot)$ and constants
($C_\beta=e^{\poly(\beta^{\pm1})}$,
$\alpha_1,\alpha_2=1/\poly(\beta^{\pm1})>0$) hide only the geometric constants.
\end{thm}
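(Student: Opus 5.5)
The proof goes through approximate detailed balance, following the three panels of \Cref{fig:schematic}. It has three steps: (i) turn global metastability into local linear constraints that the true Hamiltonian nearly satisfies; (ii) show that these constraints identify the local coefficients robustly; (iii) run a local iterative search with concentration bounds.

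\textbf{Step (i): metastability implies ADB.} Starting from $\|\CL[\vsigma]\|_1\le\epsilon$, I would follow~\cite{bergamaschi2025metastability} and show that for most sites $a$ the local dissipator is nearly balanced on $\vsigma$. Concretely, I expect a bound of the form
\[
\Big|\Tr\big(\vM\,\CL_a^\dagger\text{-type observable}\,\vsigma\big)\Big| \lesssim \epsilon^{\alpha}n^{\alpha'}
\]
for bounded quasi-local test operators $\vM$ near $a$. The KMS symmetry of each $\CL_a$ is what makes this work. Because the $\CL_a$ are detailed balanced with respect to $\vrho$, the quantity $\Tr(\vsigma\,\CL^\dagger[\cdot])$ is controlled by an entropy-production/Dirichlet-form argument. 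Summing over $a$ and using the $\epsilon$ bound gives a small average ADB defect per site; the polynomial loss in $n$ comes from distributing the global budget over the $n$ sites. The resulting relation is the quantum analogue of $p(x)e^{-\beta\omega}\approx p(x')$: it says $\vsigma$ nearly satisfies $\Tr(\vM_{a}(\vH)\vsigma)\approx 0$. Here $\vM_a(\vH)$ is built from the Heisenberg-evolved jump $\vA^a(t)$ and the operator Fourier transform, so it depends quasi-locally on the coefficients $h_\gamma$ near $a$. I would truncate $\vM_a$ to a ball of radius $r=\polylog(1/\eta)$ using Lieb--Robinson bounds. That truncation is the source of the $\polylog(1/\eta)$ factor and of the $O(1)$-size local problem.

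\textbf{Step (ii): identifiability (the main obstacle).} I need a strong-convexity-type statement: if a guessed local Hamiltonian $\vH'$ differs from $\vH$ near $a$ by $\eta$ in some coefficient, then some test $\vM$ has $|\Tr(\vM_a(\vH')\vsigma)|\ge e^{-\poly(\beta)}\eta$. This must hold uniformly over all states satisfying ADB for $\vH$, not just the Gibbs state. My approach is to write $\vM_a(\vH')-\vM_a(\vH)$ to first order in $\vH'-\vH$, as a commutator-like term $[\vH'-\vH,\cdot]$ dressed by the filter functions. Then I would show that the ADB constraints themselves force a nondegenerate expectation of that term, via a lower bound on local variances of the kind used for Gibbs states in~\cite{bakshiLearningQuantumHamiltonians2023}. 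This time the bound has to be derived from ADB and the locality of the tests alone, so that it passes to metastable states. The second-order terms are $O(\eta^2 e^{\poly\beta})$ and are absorbed when $\eta$ is small. I expect this step to cost the $e^{\poly(\beta^{\pm1})}$ factors, and it is the real difficulty: analytic Gibbs identities must be replaced by inequalities that survive the $\epsilon$-defect. A residual defect of size $C\epsilon^{\alpha_1}n^{\alpha_2}$ is tolerable only while it stays below $e^{-\poly\beta}\eta$, and that requirement is exactly the stated floor on $\eta$.

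\textbf{Step (iii): algorithm and complexity.} Begin with $\eta_0=1$ and repeat the refinement for $O(\log(1/\eta))$ rounds. In each round and for each site, take a net of $e^{O(1)}$ local perturbations $h_0+\eta u$. Estimate all tests $\Tr(\vM\vsigma_i)$ to accuracy $e^{-\poly\beta}\eta$, keep the minimax guess, and thereby halve the error. The tests are local, so classical-shadow or Pauli-grouped measurements estimate all sites simultaneously from the same copies. Independence of the $\vsigma_i$ together with Hoeffding's inequality then works even though the states differ, because every $\vsigma_i$ satisfies the same ADB bound and the test values are averages of bounded independent variables whose means are all small for the true $\vH$. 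A union bound over $n$ sites and all rounds gives $e^{\poly\beta}\eta^{-2}\log(n/\delta)\polylog(1/\eta)$ samples; the geometric decrease of $\eta_k$ means the last round dominates. Processing each sample costs $O(n)$, which gives $T_{\mathrm{tot}}$.
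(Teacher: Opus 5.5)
There is a genuine gap. Your overall architecture (metastability $\Rightarrow$ ADB $\Rightarrow$ local linear tests $\Rightarrow$ min--max halving) is the paper's. However, the step you yourself call the main obstacle is left open, and the route you sketch for it does not work as stated.

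Your tests $\Tr(\vM_a(\vH')\vsigma)$ depend only on the candidate. Their first-order variation is therefore the expectation of $\vH'-\vH$ dressed by real- and imaginary-time evolution and filtering under the \emph{unknown} $\vH$. That operator is not a combination of a few known local operators, so you cannot build a measurable test that pairs it with itself into a squared norm. You also give no mechanism that makes its expectation nondegenerate uniformly over small-ADB states; you note that Gibbs-state variance bounds must be rederived from ADB, but not how. Finally, the expansion is only valid once $\eta\le e^{-\poly(\beta^{\pm1})}$, while your iteration starts at $\eta_0=1$. There the $O(\eta^2e^{\poly(\beta^{\pm1})})$ remainder swamps the $e^{-\poly(\beta^{\pm1})}\eta$ signal.

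The paper's soundness target is exact instead. The identifiability equation (\Cref{lem:state-identifiability}) holds for every state and every $\vH'$, with left side $\frac\beta2\langle\vO,[\vA,\vH-\vH']\rangle_{\vsigma}$. From there:
\begin{enumerate}
\item ADB linearizes the $\sqrt{\vsigma}\cdots\sqrt{\vsigma}$ sandwich. This needs a Dyson expansion, because ADB controls only $\vH$-dynamics while the expression contains $\vH'$-dynamics (\Cref{thm:kms-to-gns-reduction}, and \Cref{thm:coarse-adb-linearization} for $\eta$ of order one).
\item The leftover $\vH$-evolution is moved off the state. Since $\vsigma$ need not commute with $\vH$, this needs an ADB-derived stationarity bound (\Cref{thm:approximate-commute-deduce} or \Cref{thm:weak}).
\item That $\vH$ is replaced by an auxiliary $\vG$ on a local net, and the min--max over $\vG$ supplies $\vG\approx\vH$.
\end{enumerate}
Since $[\vA,\vH-\vH']$ is a combination of the known $[\vA,\vP_\gamma]$, the tests then control $\frac\beta2\|[\vA,\vH-\vH']\|_{\vsigma}^2$. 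Converting this into coefficient error needs local faithfulness of the KMS norm for small-ADB states (\Cref{thm:meta-local-faithfulness}), a new result with its own proof.

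Steps (i) and (iii) also have concrete holes. In (i), the bound $\mathsf{EP}\le\epsilon\|\log\vsigma-\log\vrho\|$ is vacuous when $\vsigma$ is rank deficient, which metastable states may be. The paper regularizes to $\vsigma_r=(1-r)\vsigma+r\vrho$, for which $\|\log\vsigma_r-\log\vrho\|\lesssim n+\log(1/r)$. This, not a distribution of the budget over sites, is where $n$ enters. It then bounds ADB by entropy production (\Cref{thm:constant-fi-adb}) and returns to $\vsigma$ through $|\sqrt{\mathsf{ADB}_a[\vsigma]}-\sqrt{\mathsf{ADB}_a[\vxi]}|\le2\sqrt{\|\vsigma-\vxi\|_1}$ (\Cref{lem:state-cont}). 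You also need $\max_a\mathsf{ADB}_a$, not a bound for most sites or on average. It follows from $0\le\mathsf{EP}_a\le\mathsf{EP}_{\CL}$, since the Hamiltonian part produces no entropy.

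In (iii), Hoeffding only shows that each estimate concentrates around the test evaluated on its batch average $\overline{\vsigma}$. That settles completeness. For soundness, the invariant that passes to $\overline{\vsigma}$ is metastability, which is linear in the state, not the smallness of each $\mathsf{ADB}_a[\vsigma_i]$. More seriously, overlapping tests at one site cannot all be measured on the same copies. (Classical shadows on supports of radius $\polylog(1/\eta)$ would ruin the $\eta^{-2}\polylog(1/\eta)$ scaling.) Different tests are therefore generally estimated on different average states. Then ``some test in the family is large'' no longer rejects a bad candidate, since each test may be small on its own batch.

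The paper repairs this with the self-normalizing test $\vO=[\vA,\vG-\vH']/\eta$. Its single instance with $\vG\approx\vH$ is approximately $\frac{\beta}{2\eta}\|[\vA,\vH-\vH']\|_{\overline{\vsigma}}^2$ on whatever admissible average state it used (\Cref{lem:Qprime-coefficient-recovery}). It is estimated by the randomized kernel sampling of \Cref{alg:randomized-Qprime-measurement}.
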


The sample complexity reflects a union bound, so that each of the $O(n)$ coefficients $\{h_{\gamma}\}$ is learned within the failure-probability budget. The computational complexity comes from covering each sample with non-overlapping local measurements, with greedy classical post-processing. That is, we attain the nearly optimal $n$, $\eta$, and $\delta$ dependence of Gibbs-state learning~\cite[Theorem I.2]{chen2025learning}\footnote{As stated, the temperature prefactor $e^{\poly(\beta^{\pm 1})}$ is worse than $e^{\poly(\beta)}/\beta^2$ but likely improvable.}; recall the lower bound of $\Omega\!\left(e^\beta\log(n/\delta)/(\beta^2\eta^2)\right)$ for Gibbs learning~\cite[Theorem 1.2]{haah2022optimal}, which also applies here.

Meanwhile, the above covers a much larger class of states beyond Gibbs states, as long as each $\vsigma_i$ is sufficiently metastable, regardless of how each of them is obtained. More precisely, since the required metastability $\epsilon$ depends polynomially on the desired precision $\eta$ and system size $n$, these metastable states can efficiently arise under a Lindbladian evolution time $T \sim 1/\epsilon$ from arbitrary initial states. Therefore, our results also give a natural way to learn the Hamiltonian given black-box access to the master equation.
\begin{cor}[Hamiltonian learning from detailed-balanced Lindbladians]
\label{cor:intro-lindbladian-learning}

\Cref{thm:intro-global-metastable-learning} continues to hold, when access to unknown metastable states $\{\vsigma_i\}$ is replaced with black-box access to the Lindbladian evolution itself
$e^{\CL t}$ (associated with the unknown target Hamiltonian). For error $\eta$, it suffices to use Lindbladian evolution time up to $T=C_\beta n^{\alpha_2/\alpha_1}\eta^{-1/\alpha_1}$ per sample.

\end{cor}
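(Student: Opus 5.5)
The plan is to reduce the corollary to \Cref{thm:intro-global-metastable-learning} by showing that black-box Lindbladian evolution produces a stream of independent $\epsilon$-metastable states with $\epsilon$ below the required floor. The key tool is the time-averaging fact recalled in the setup (cf.~\cite[Lemma~II.1]{bergamaschi2025metastability}): for any initial state $\vsigma_0$ and any $T>0$, the averaged state
\begin{align}
    \bar\vsigma_T:=\frac{1}{T}\int_0^T e^{\CL t}[\vsigma_0]\,\dd t
    \qquad\text{satisfies}\qquad
    \CL[\bar\vsigma_T]=\frac{1}{T}\left(e^{\CL T}[\vsigma_0]-\vsigma_0\right),
\end{align}
so $\|\CL[\bar\vsigma_T]\|_1\le 2/T$. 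Operationally, we draw $t\sim\mathrm{Unif}[0,T]$, apply $e^{\CL t}$ to a fixed easy initial state (say the maximally mixed state or $\ket{0^n}$), and discard the value of $t$. The resulting output, averaged over the classical randomness, is exactly $\bar\vsigma_T$. Fresh randomness for each sample gives independent copies, which is precisely the input model of \Cref{thm:intro-global-metastable-learning}. That theorem even allows different states, so we could use varying initial states too.

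Next, we choose $T$ to meet the precision floor. The theorem requires $\eta\ge C_\beta\epsilon^{\alpha_1}n^{\alpha_2}$, i.e.\ $\epsilon\le (\eta/(C_\beta n^{\alpha_2}))^{1/\alpha_1}$. Setting $\epsilon=2/T$ and solving gives
\begin{align}
    T = 2\,C_\beta^{1/\alpha_1}\, n^{\alpha_2/\alpha_1}\,\eta^{-1/\alpha_1}.
\end{align}
The factor $2C_\beta^{1/\alpha_1}$ is again of the form $e^{\poly(\beta^{\pm1})}$, since $1/\alpha_1=\poly(\beta^{\pm1})$. We relabel it as $C_\beta$ to match the stated $T$. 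The sample count $N_{\mathrm{samp}}$ and the measurement and post-processing cost are unchanged, because the algorithm of the theorem only touches the samples through local measurements of $\Tr(\vM\vsigma_i)$. The total evolution time is then $N_{\mathrm{samp}}\cdot T$. Implementation cost of simulating $e^{\CL t}$, if it is counted, is additive and outside the stated claim.

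The main subtlety is that the theorem is applied to the mixture $\bar\vsigma_T$, not to each conditional state $e^{\CL t}[\vsigma_0]$, and those conditional states are generally not metastable. This is legitimate because the random $t$ is hidden. Each sample is then a genuine independent copy of $\bar\vsigma_T$, so any measurement statistics, and hence the concentration bounds inside the theorem's proof, are those of $\bar\vsigma_T$. A second point is that \Cref{thm:intro-global-metastable-learning} must be used in its global-metastability form, $\|\CL[\vsigma]\|_1\le\epsilon$, with the full $\CL$ including $-i[\vH,\cdot]$. That is exactly what the averaging identity controls, so no local-versus-global metastability conversion is needed.
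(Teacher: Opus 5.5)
Your proposal is correct and is essentially the paper's own argument: the paper likewise evolves an arbitrary initial state for a uniformly random time in $[0,T]$ to obtain $2/T$-metastable inputs (citing \cite[Lemma~II.1]{bergamaschi2025metastability}, which your averaging identity reproves), then solves the floor $\eta\geq C_\beta\epsilon^{\alpha_1}n^{\alpha_2}$ for $T$ and absorbs $2C_\beta^{1/\alpha_1}$ into $C_\beta$. Your observation that discarding the random time makes each sample an independent copy of the mixture $\bar\vsigma_T$ is exactly what justifies applying the global-metastability theorem.
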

In digital simulation applications, this gives a way to quantitatively benchmark a claimed implementation of the Lindbladian, which might come from heuristic compilations, acting on choices of initial states.

Of course, the algorithm has a fundamental precision floor for the Hamiltonian that depends on the quality of the metastable state, where having more samples simply cannot further improve the Hamiltonian learning precision (see Proposition~\ref{prop:adb-information-floor} and the remark following it for a simple example where the precision floor scales as $\epsilon/\beta$, independently of $n$).  Whether the system-size-dependent factor between the learning precision $\eta$ and metastability $\epsilon$ in \Cref{thm:intro-global-metastable-learning} is necessary remains open.

At least, we can remove the system-size dependence on the precision floor under a slightly stronger metastability assumption.

\begin{thm}[Hamiltonian learning from locally metastable states]
\label{thm:intro-local-metastable-learning}
In the setting of~\Cref{thm:intro-global-metastable-learning}, assume the streamed metastable states are instead metastable with respect to every Lindbladian term: $$\max_{a\in\mathcal A}\|\CL_a[\vsigma_i]\|_1\leq\epsilon.$$ Then, the sample and computational complexities continue to hold, provided that $\eta\geq C_\beta\epsilon^{\alpha_1}$.
\end{thm}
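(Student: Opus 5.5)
The plan is to rerun the proof of \Cref{thm:intro-global-metastable-learning} and change only the step that turns metastability into approximate detailed balance (ADB). That step is the only place where the factor $n^{\alpha_2}$ enters. The rest of the pipeline takes a per-site ADB guarantee as input and is agnostic to where it came from. This includes the local tests $\Tr(\vM\vsigma)$ built from quasi-local observables near each site, the identifiability argument that a wrong guess $h_0+\eta u$ makes some test value large, and the greedy halving of $\eta$. In the global setting, $\|\CL[\vsigma]\|_1\le\epsilon$ constrains only the sum $\sum_a \CL_a[\vsigma]$ together with the Hamiltonian term $-i[\vH,\vsigma]$. Extracting a bound on an individual jump $a$ requires a decoupling or averaging argument over the whole lattice, which costs a polynomial factor of $n$. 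Under the hypothesis $\max_a\|\CL_a[\vsigma_i]\|_1\le\epsilon$, we have that control for every single $a$ directly.

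First, I would show that for each $a\in\mathcal A$, the bound $\|\CL_a[\vsigma]\|_1\le\epsilon$ implies the ADB relation for the jump $\vA^a$ at Bohr frequencies $\omega$ with error $C_\beta\epsilon^{\alpha}$, where $\alpha$ and $C_\beta$ do not depend on $n$. Heuristically, the relation reads $\hat\vA^a(\omega)\vsigma\approx e^{-\beta\omega}\vsigma\hat\vA^a(\omega)$ when tested against bounded quasi-local observables. The route is the one used in \cite{bergamaschi2025metastability} for a single detailed-balanced generator. Write $\CL_a$ in its KMS-symmetric form, pair $\CL_a[\vsigma]$ against the test observable $\vM$, and use the quasi-locality of the operator Fourier transforms $\hat\vA^a(\omega)$ to truncate to a ball of radius $r$ around site $a$. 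The truncation error is $e^{-\Omega(r/\poly(\beta))}$ and the local norm factors are $e^{O(r^D)}$. Optimizing $r\sim\log(1/\epsilon)$ yields the power $\epsilon^{\alpha_1}$ with no dependence on $n$. I expect this step to be the main obstacle. The difficulty is that $\CL_a$ contains both a dissipative part and a coherent (Lamb-shift-type) part. Converting a single trace-norm bound into a \emph{weighted} ADB statement requires controlling the $\vsigma$-dependent weights $e^{\pm\beta\omega/2}$, and this must be done without invoking any global spectral gap. If the direct argument stalls, I would first apply a Cauchy--Schwarz bound to the Dirichlet-form-like quantity $\Tr(\vA^{a\dagger}\CL_a[\vsigma])$, and then use Fourier-domain smoothing with a $\beta$-dependent window to upgrade to general $\vM$.

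Second, since the Hamiltonian term $-i[\vH,\vsigma]$ is no longer part of the hypothesis, I would check that the global proof never needed it. If the tests used near-stationarity under the coherent part, I would replace that use with ADB alone, which suffices since the tests are designed so that the true Hamiltonian makes them vanish under ADB. Third, with per-site ADB error $C_\beta\epsilon^{\alpha_1}$, I would plug it into the per-site identifiability bound and the iterative halving argument unchanged. This gives the floor $\eta\ge C_\beta\epsilon^{\alpha_1}$. The sample complexity carries over verbatim via Hoeffding bounds and a union bound over $O(n)$ coefficients and $O(\log 1/\eta)$ rounds, giving the $\log(n/\delta)\polylog(1/\eta)$ factor. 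The time complexity $O(n\cdot N_{\mathrm{samp}})$ likewise carries over from the non-overlapping cover of local measurements.
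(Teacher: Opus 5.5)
Your overall architecture matches the paper: only the metastability-to-ADB step changes (\Cref{thm:local-adb} in place of \Cref{thm:global-adb}), and everything downstream takes $\epsilon_{\mathrm{ADB}}$ as input with $n$-independent constants. But the step you flag as the main obstacle has a genuine gap, and you misdiagnose where $n$ enters. In the global case nothing needs to be decoupled. Each $\mathsf{EP}_a\ge0$ and the Hamiltonian part produces no entropy, so $\mathsf{EP}_a\le\mathsf{EP}_{\CL}$ for free. The factor $n$ comes from the relative logarithm in $\mathsf{EP}\le\|\CL[\vsigma]\|_1\,\|\log\vsigma-\log\vrho\|$. That logarithm is infinite for rank-deficient $\vsigma$ and of order $n+\log(1/r)$ after regularizing with the global Gibbs state. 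The local hypothesis $\|\CL_a[\vsigma]\|_1\le\epsilon$ leaves this obstruction untouched. Your proposed mechanism pairs $\CL_a[\vsigma]$ (or $\Tr(\vA^{a\dagger}\CL_a[\vsigma])$) against bounded quasi-local observables. That only produces linear functionals of $\vsigma$, already bounded by $\|\vM\|\epsilon$ without any truncation, and it cannot yield ADB, which is a quadratic Hilbert--Schmidt defect of $\sqrt{\vsigma}$. The ``tested'' form of ADB you write down would also not feed the pipeline. The linearization step, which merges two factors of $\sqrt{\vsigma}$, and local faithfulness both need the $\sqrt{\vsigma}$ Hilbert--Schmidt form.

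The missing idea is to use the relative logarithm itself as the test operator, against a \emph{locally} regularized reference. Let $\vH_\ell$ be the truncation to a patch $S$ of $m\sim\ell^D$ sites, and set $\vtau_a=\vrho_S\otimes\tr_S\vsigma$ and $\vsigma_{a,r}=(1-r)\vsigma+r\vtau_a$. Since $\vsigma\le2^m\vI_S\otimes\tr_S\vsigma$ and $\vrho_S\ge2^{-m}e^{-2\beta\|\vH_\ell\|}\vI_S$, operator monotonicity of the logarithm gives $\|\log\vsigma_{a,r}-\log\vtau_a\|\le1+\log(1/r)+2m\log2+2\beta\|\vH_\ell\|$. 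This bound is independent of $n$ and of the small eigenvalues of $\vsigma$. The truncated generator $\CL_a^{\vH_\ell}$ fixes $\vtau_a$ and is KMS-detailed-balanced with respect to it, because the exterior factor cancels in both real- and imaginary-time conjugation. So its entropy production relative to $\vtau_a$ is at most $(\epsilon+\text{Lieb--Robinson tail})$ times that logarithm bound. The state-independent EP-to-ADB comparison of \Cref{thm:constant-fi-adb} then bounds the ADB of $\vsigma_{a,r}$. Finally, \Cref{lem:H-cont} and the continuity estimate $|\sqrt{\mathsf{ADB}(\vsigma)}-\sqrt{\mathsf{ADB}(\vxi)}|\le2\sqrt{\|\vsigma-\vxi\|_1}$ of \Cref{lem:state-cont} return to $\vH$ and $\vsigma$. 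Choosing $r=\epsilon$ and $\ell\sim\log(e/\epsilon)$ gives $\epsilon_{\mathrm{ADB}}\lesssim\epsilon\log^{D+1}(e/\epsilon)$.

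Note that the patch size enters only through a logarithm, i.e.\ polynomially in $\ell$. With the $e^{O(r^D)}$ local factors you posit, balancing against an $e^{-\Omega(r)}$ tail yields at best $e^{-\Omega((\log 1/\epsilon)^{1/D})}$ for $D\ge2$, not a power of $\epsilon$. Also, in your second step, ``the tests vanish for the true Hamiltonian under ADB'' covers completeness only. Soundness with $\vG=\vH$ requires moving the $\vH$-evolution off the state, which needs the $n$-independent tested-commutator bound of \Cref{thm:weak}. The global bound $\|[\vH,\vsigma]\|_1\lesssim n\,\epsilon_{\mathrm{ADB}}^{\alpha}$ would not do, since its factor $n$ is unavoidable in general (\Cref{prop:commutation-error-lower-bound}).
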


That is, we can learn a local term in the Hamiltonian, as long as the input state is metastable for all\footnote{We believe that a local Hamiltonian term can be learned only assuming metastability for local terms of the Lindbladian nearby.}  local terms of the Lindbladian. Now, the system-size independence of this precision threshold is optimal.
For every $n$ and sufficiently small $\epsilon$, the example in
\Cref{sec:adb-learning-lower-bound} gives two Hamiltonians whose coefficients
differ by $\Omega_\beta(\epsilon)$ and share the same
$\epsilon$-locally metastable input state.
No number of copies can distinguish these instances, so the precision
floor cannot vanish as $n$ grows.

To what extent does our model of metastable states capture real, analog quantum states at finite temperatures? To be very careful, we are already restricting the Hamiltonian family to be geometrically local, respecting some $D$-dimensional lattice (instead of the full infinite-dimensional Schrödinger equation). Furthermore, the master equation is inherently an approximation of the system-bath dynamics, which need not capture long-time trajectories of the true system-bath joint evolution. Fortunately, our metastable state learning setting is robust: it suffices if the \textit{generator} of the true dynamics $\CL'$ is well approximated by the detailed-balanced family $\CL$.

\begin{cor}[Robustness of Hamiltonian learning under imperfect dynamics]
\label{cor:lindblad-learning}
In the setting of
\Cref{thm:intro-global-metastable-learning},
suppose the inputs
$\vsigma_i$ each satisfy $\|\CL'[\vsigma_i]\|_1\leq\epsilon$ for some physical generator $\CL'$ satisfying
$\|\CL'-\CL\|_\diamond\leq\mu$.
Then, the same
sample and time complexities hold to the precision floor
$\eta\geq C_\beta(\epsilon+\mu)^{\alpha_1}n^{\alpha_2}$.
\end{cor}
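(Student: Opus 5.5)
The plan is a direct reduction to \Cref{thm:intro-global-metastable-learning} by the triangle inequality. The diamond norm controls the induced trace norm on every input, and in particular on density operators. Hence for each sample
\begin{align}
    \|\CL[\vsigma_i]\|_1
    \leq \|\CL'[\vsigma_i]\|_1+\|(\CL-\CL')[\vsigma_i]\|_1
    \leq \epsilon+\|\CL'-\CL\|_\diamond\,\|\vsigma_i\|_1
    \leq \epsilon+\mu .
\end{align}
So every input $\vsigma_i$ is an $(\epsilon+\mu)$-metastable state of the ideal detailed-balanced Lindbladian $\CL$. That $\CL$ is the one built from the unknown target $\vH$ and inverse temperature $\beta$, as in~\eqref{eq:intro-thermal-dynamics}.

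The key observation is that the learning algorithm and its analysis never use how the states were produced, nor any property of the true generator $\CL'$. They use only the metastability bound of each $\vsigma_i$ with respect to the detailed-balanced family $\CL$. This bound feeds the approximate detailed balance inequalities, and those in turn certify that the local test values $\Tr(\vM\vsigma_i)$ are small at the true coefficients. The samples are still independent, as required, since the $\vsigma_i$ are unchanged.

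We then invoke \Cref{thm:intro-global-metastable-learning} verbatim with $\epsilon$ replaced by $\epsilon':=\epsilon+\mu$. Its sample and time complexities do not depend on $\epsilon$ at all; metastability enters only through the precision floor. This yields the same $N_{\mathrm{samp}}$ and $T_{\mathrm{tot}}$, with the floor $\eta\geq C_\beta(\epsilon+\mu)^{\alpha_1}n^{\alpha_2}$.

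The only step needing care is to confirm that the theorem's proof relies solely on $\|\CL[\vsigma_i]\|_1\leq\epsilon'$ as a black-box hypothesis. In particular, it must not use that $\vsigma_i$ is an exact trajectory state of $\CL$, or any $\epsilon$-dependent choice of algorithm parameters that would need knowledge of $\mu$. If the algorithm's internal thresholds are tuned to $\epsilon$, we set them using the upper bound $\epsilon+\mu$ instead. Since the floor is monotone in $\epsilon$, this is consistent. No other obstacle arises: the reduction is a one-line triangle inequality.
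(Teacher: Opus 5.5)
Your proposal is correct and matches the paper's own argument: the triangle inequality with $\|\CL'-\CL\|_\diamond\leq\mu$ gives $\|\CL[\vsigma_i]\|_1\leq\epsilon+\mu$, so each input is $(\epsilon+\mu)$-metastable for the ideal detailed-balanced $\CL$. Since \Cref{thm:intro-global-metastable-learning} uses metastability only as a hypothesis that enters the precision floor, with the streaming analysis relying on it only through linearity of $\CL$ on the averaged state, substituting $\epsilon\to\epsilon+\mu$ yields the stated complexities and floor.
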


Indeed, observe that if $\|\CL'-\CL\|_\diamond\leq\mu$, then every state satisfies $\|\CL[\vsigma]\|_1\leq\|\CL'[\vsigma]\|_1+\mu$. That is, if we evolve the true dynamics $\CL'$ for a very long time and obtain a very metastable state $\vsigma$ for $\CL'$, then it is still a metastable state also for the ideal $\CL$. An analogous trick $\epsilon \rightarrow \epsilon+ \mu$ holds when assuming $\norm{\CL_a'[\vsigma]}_1\le \epsilon$ and $\norm{\CL_a'-\CL_a}_{\diamond}\le \mu.$
Broadly, we hope our results relax the Gibbs input assumption to more physical contexts, which gives quantitative evidence that the learning protocol should work in a range of natural models of finite-temperature quantum states.

\subsection{Related works}
\label{sec:related-works}
Our Hamiltonian learning setting sits among other learning theory threads.

\paragraph*{Hamiltonian learning from Gibbs states.}

The most algorithmically and conceptually relevant framework is learning a local Hamiltonian from copies of its Gibbs state; implicitly, a large body of work has taken the Gibbs states as the fundamental model of finite-temperature quantum states.
Anshu et al.~\cite{anshuSampleefficientLearningQuantum2021} gave the sample-efficient result valid at any finite temperature, although their classical post-processing requires evaluating the partition function, which can be computationally hard. At high temperatures, Haah, Kothari, and Tang~\cite{haah2022optimal} gave an algorithm with optimal sample and time complexity. Beyond this regime, Bakshi et al.~\cite{bakshiLearningQuantumHamiltonians2023} obtained a polynomial-time algorithm at any constant temperature, and Narayanan~\cite{narayananImprovedAlgorithmsLearning2024} improved its dependence on the inverse temperature. More recently, Chen, Anshu, and Nguyen~\cite{chen2025learning} gave a local algorithm that learns each term from measurements near it, with near-optimal sample complexity on lattices at any temperature; our work adapts from their protocol. Its key ingredient, the \textit{identifiability equation}, relies on exact properties of the Gibbs state, including the KMS condition and commutation with the Hamiltonian; in fact, all of the works above take copies of the Gibbs state as input.

\paragraph*{Detailed-balanced Lindbladians and their approximate stationary states.}
The main analytical framework and the way we model finite-temperature systems trace back to the microscopic models of system-bath interaction. What has recently accelerated quantitative study of this equilibration process is the construction of detailed-balanced, quasi-local Lindbladians~\cite{chen2023exact,gilyen2024glauber,dingEfficientQuantumGibbs2024}. While serving as a quantum algorithmic counterpart of MCMC methods initially, it also turns out to quantitatively match physical derivations from suitable weak-coupling in some regime~\cite{scandiThermalizationOpenManyBody2026}. Motivated by classical Markov chains, the question of mixing times and bottlenecks in quantum many-body systems naturally becomes a next target. Gamarnik, Kiani, and Zlokapa~\cite{gamarnikSlowMixingQuantum2024} and Rakovszky et al.~\cite{rakovszkyBottlenecksQuantumChannels2024} proposed quantum bottleneck theorems and proved exponential lower bounds on the mixing time for certain Hamiltonians at low temperatures; at certain low-temperature limits, quantum local minima of the energy ~\cite{chenLocalMinimaQuantum2023} have been defined and proven to be highly expressive. Most relevant to us is the study of approximately stationary states of the detailed-balance Lindbladian by Bergamaschi, Chen, and Vazirani~\cite{bergamaschi2025metastability}. In particular, we will draw a direct connection from mathematical notions of metastability (the `approximate detailed-balance condition') to Hamiltonian learning, and conversely enlarge their analytic framework.

\paragraph*{Classical graphical models.}
The classical counterpart of our problem, learning Ising models and other graphical models, has a long history. For i.i.d.\ samples from the Gibbs distribution, Bresler~\cite{breslerEfficientlyLearningIsing2014} gave an efficient algorithm that recovers the interaction graph of bounded-degree Ising models. Interaction screening~\cite{vuffrayInteractionScreeningEfficient2016,lokhovOptimalStructureParameter2018} recovers both the graph and the couplings with near-optimal sample complexity, and Klivans and Meka~\cite{klivansLearningGraphicalModels2017} gave a multiplicative-weights algorithm that applies to general graphical models. Other works learn instead from a trajectory of Glauber dynamics~\cite{breslerLearningGraphicalModels2018}, and Gaitonde, Moitra, and Mossel~\cite{gaitondeBypassingNoisyParity2024} showed that such trajectories can even bypass computational barriers of the i.i.d.\ setting. Liu et al.~\cite{liuLocallyStationaryDistributions2024} studied locally stationary measures of slow-mixing chains, a classical counterpart of the approximately stationary states considered here. Closest to our work, Jayakumar et al.~\cite{jayakumarDiscreteDistributionsAre2026} showed that classical models can be learned from metastable samples. Their notion of strong metastability is an approximate detailed balance condition for a reversible Markov chain, and it ensures that single-variable conditionals remain close to those of the Gibbs distribution. Our ADB condition plays the same role in the quantum setting. Since noncommuting local terms admit no convenient analogue of these conditionals, we work instead with approximate versions of the KMS and commutation identities.

While not directly relevant, we also mention some broader connections.

\paragraph*{Hamiltonian learning from other input assumptions.}
Beyond Gibbs states, a large body of work learns the Hamiltonian from its real-time evolution $e^{-i\vH t}$~\cite{haah2022optimal,stilckfrancaEfficientRobustEstimation2024,yuRobustEfficientHamiltonian2023a,huangLearningManyBodyHamiltonians2023,dutkiewiczAdvantageQuantumControl2024,liHeisenberglimitedHamiltonianLearning2023,mobusHeisenberglimitedHamiltonianLearning2025,niQuantumHamiltonianLearning2024,miraniLearningInteractingFermionic2024,bakshiStructureLearningHamiltonians2024,zhaoLearningStructureAny2025,maLearningKbodyHamiltonians2026,huAnsatzfreeHamiltonianLearning2025,zhouOptimalAnsatzfreeHamiltonian2026,chenLowerBoundsLearning2026,pradenneLearningHamiltoniansLong2026}, where the relevant resource is the total evolution time. Huang et al.~\cite{huangLearningManyBodyHamiltonians2023} showed that interleaving the evolution with single-qubit control reaches the Heisenberg limit $O(1/\epsilon)$, and Dutkiewicz, O'Brien, and Schuster~\cite{dutkiewiczAdvantageQuantumControl2024} proved that without control the total evolution time is $\Omega(\epsilon^{-2})$ for large classes of Hamiltonians. Later work also removed the need to know the interaction structure in advance~\cite{bakshiStructureLearningHamiltonians2024,huAnsatzfreeHamiltonianLearning2025}, and related methods learn local Lindbladians~\cite{lewisLearningStructureOpen2026,mobusRobustStructureLearning2026,aradNearOptimalLearningLocal2026}. While the goal seems similar to ours, these techniques do not directly transfer to learning from static states.

Static states other than Gibbs states have been used as well, although the choice of state family is less unified. For eigenstates and other states that commute with $\vH$, local correlations give linear constraints that determine $\vH$ up to scale~\cite{qiDeterminingLocalHamiltonian2019a,baireyLearningLocalHamiltonian2019}, and similar ideas apply to steady states of open-system dynamics~\cite{baireyLearningDynamicsOpen2020}. Our learning protocol contains approximate versions of these commutator constraints, while requiring neither control over the dynamics nor exact stationarity.

\section{Proof idea}
\label{sec:proof-idea}

Why should a metastable state, which may be far from the Gibbs state, still determine every local term of the Hamiltonian?
For classical spin systems, the answer is a condition called `strong metastability' in Ref.~\cite{jayakumarDiscreteDistributionsAre2026}.
A distribution $p$ satisfies strong metastability if the probability flows between configurations $x$ and $x'$ that differ by a single spin flip remain nearly balanced in total variation, so that $p(x)/p(x')\approx\pi(x)/\pi(x')=e^{\beta\Delta E}$ for most such pairs, where $\pi$ is the Gibbs distribution and $\Delta E=E(x')-E(x)$ depends only on the Hamiltonian terms near the flipped spin (\Cref{fig:schematic}(b)).
Estimating these ratios from samples therefore reveals the energy change $\Delta E$ of each single spin flip, which in turn determines the local terms of the Hamiltonian, even when $p$ is globally far from $\pi$.

In the quantum study of metastability, Ref.~\cite{bergamaschi2025metastability} started from the stationary condition $\norm{\CL[\vsigma]}_1 \approx 0$ and proved a local Markov property and an area law of mutual information. The approximate detailed balance condition, as one of the intermediate conditions, was introduced as part of the theory building. Of course, the approximate detailed balance condition (ADB) itself does not obviously address learning. Unfortunately, the classical strategy of learning from conditionals requires at least a change in language due to the noncommuting nature of quantum Hamiltonians and quantum states.

We therefore carefully open up the Gibbs-state algorithm of Ref.~\cite{chen2025learning}.
To attain near-optimal performance, the algorithm relied extensively on exact properties of Gibbs states, such as the exact Kubo--Martin--Schwinger (KMS) condition, commutation with the Hamiltonian, and faithfulness of the KMS norm.
Surprisingly, all of these properties are approximately implied by liberal uses of ADB.
Overcoming these obstacles --- and recovering the near-optimal scaling in system size $n$ and precision $\epsilon$ --- is the main technical challenge of our work.
\Cref{thm:intro-global-metastable-learning,thm:intro-local-metastable-learning} therefore follow from the chain of reductions illustrated in \Cref{fig:schematic},
\begin{align*}
    \text{metastability}
    \quad\Longrightarrow\quad\text{ADB}
    \quad\Longrightarrow\quad\text{measurable tests}
    \quad\Longrightarrow\quad\text{learning}.
\end{align*}
Below, we describe these reductions in reverse order.
The reduction from measurable tests to learning follows Ref.~\cite{chen2025learning}, and the reduction from metastability to ADB builds on Ref.~\cite{bergamaschi2025metastability} with several improvements.
The reduction from ADB to measurable tests is the main technical contribution of this work.
Along the way, we establish further relations among different notions of approximate stationarity, which together give a complete framework, summarized in \Cref{fig:metastability-relations}.

\paragraph*{From measurable tests to learning.}
Following the local min--max strategy of Ref.~\cite[Algorithm~IV.3]{chen2025learning}, the learner guesses $\vH'$ and tests its closeness to $\vH$ through measurements on $\vsigma$.
For a single-site Pauli $\vA$, Pauli orthogonality gives $\|[\vA,\vH-\vH']\|_{F}^2=4\sum_{\gamma:\{\vA,\vP_\gamma\}=0}|h_\gamma-h'_\gamma|^2$, where $\|\vX\|_{F}^2:=2^{-n}\Tr(\vX^\dagger\vX)$.
Since every term touching a site anticommutes with a Pauli there, these local commutators control all coefficient errors.
Since all available data come from measurements on the input state $\vsigma$, we need to relate these commutators to that state. We therefore consider the $\vsigma$-weighted Kubo--Martin--Schwinger (KMS) inner product and norm,
\begin{align*}
    \langle\vX,\vY\rangle_{\vsigma}:=\Tr\bigl(\vX^\dagger\sqrt{\vsigma}\,\vY\sqrt{\vsigma}\bigr),
    \qquad
    \|\vX\|_{\vsigma}:=\sqrt{\langle\vX,\vX\rangle_{\vsigma}}.
\end{align*}
For the maximally mixed state $\vtau:=\vI/2^n$, this norm reduces to the normalized Frobenius norm, $\|\vX\|_{\vtau}=\|\vX\|_F$.
For a general metastable state, we need it to still detect local operators, $\|\vB\|_{F}\leq C_{\beta}\|\vB\|_{\vsigma}$ for $\vB$ of bounded support.
Ref.~\cite{chen2025learning} proved local faithfulness for Gibbs states. Our extension to states with small ADB error uses a substantially different proof and recovers the Gibbs-state result as a corollary (\Cref{thm:meta-local-faithfulness}).

To compare a guess $\vH'$ with the unknown $\vH$, the algorithm uses a family of tests $Q_{\vsigma}(\vO,\vA;\vG,\vH')$.
They are indexed by local test operators $\vO$, single-site Paulis $\vA$, and local auxiliary Hamiltonians $\vG$ chosen by the learner.
The tests have three properties,
\begin{align*}
    &\text{(P1) Completeness:}
    &&Q_{\vsigma}(\vO,\vA;\vG,\vH)\approx0
    \quad\text{for every auxiliary }\vG,\\
    &\text{(P2) Soundness:}
    &&Q_{\vsigma}(\vO,\vA;\vH,\vH')\approx\frac{\beta}{2}\langle\vO,[\vA,\vH-\vH']\rangle_{\vsigma},\\
    &\text{(P3) Measurability:}
    &&Q_{\vsigma}(\vO,\vA;\vG,\vH')=\Tr(\vM\vsigma)
    \quad\text{for a known bounded observable }\vM.
\end{align*}
Together, (P1) and (P2) let the tests distinguish the true Hamiltonian from wrong candidates, as illustrated in \Cref{fig:schematic}(c).
If $\vH'=\vH$, every test vanishes for every $\vO$, $\vA$, and $\vG$.
If $\vH'$ differs from $\vH$ near the site of $\vA$, then the choice $\vG=\vH$ and $\vO=[\vA,\vH-\vH']$ gives the test value $\frac{\beta}{2}\|[\vA,\vH-\vH']\|_{\vsigma}^2$, which is far from zero.
The operator $[\vA,\vH-\vH']$ is unknown, but it is a linear combination of the known operators $[\vA,\vP_\gamma]$, so a large test value already appears for some $\vO=[\vA,\vP_\gamma]$.

The learner does not know $\vH$ and therefore cannot choose $\vG=\vH$ directly.
Instead, it selects the candidate whose worst test is smallest,
\begin{align*}
    \vH'_{*}=\arg\min_{\vH'}\;\max_{\vG,\vO,\vA}\bigl|Q_{\vsigma}(\vO,\vA;\vG,\vH')\bigr|.
\end{align*}
By (P1), the true Hamiltonian makes every test small, so the minimizer also has small test values for every auxiliary $\vG$.
The auxiliary Hamiltonians range over a local net that contains a point close to $\vH$, so this includes $\vG\approx\vH$, and the previous paragraph shows that $\vH'_*$ is close to $\vH$.

The algorithm measures each test near $\vO$ and $\vA$ and iteratively refines local guesses, halving the coefficient error in each round until it reaches the accuracy set by the tests.
This yields the sample and time complexities of \Cref{thm:intro-global-metastable-learning} (\Cref{sec:learning-protocol}).
By adapting the tests and sampling procedure, the same guarantees extend to independent, nonidentical metastable states (\Cref{sec:streaming-learning}).

\paragraph*{From approximate detailed balance to measurable tests.}
The learning problem therefore reduces to constructing tests with properties (P1)--(P3).
Our starting point is the identifiability equation of Ref.~\cite{chen2025learning}, which extends to an arbitrary state (\Cref{lem:state-identifiability}),
\begin{align}
\frac{\beta}{2}\langle\vO,[\vA,\vH-\vH']\rangle_{\vsigma}
={}&\frac{1}{\sqrt{2\pi}}\int_{\mathbb R}g_{\mathrm{id}}(t)
\Tr\!\left[\vO^\dagger\sqrt{\vsigma}
\left(
\vrho^{-1/2}\vrho'^{1/2}\vA_{\beta\vH'/2}(t)\vrho'^{-1/2}\vrho^{1/2}
\right.\right.\nonumber\\[-2pt]
&\hspace{3.2cm}\left.\left.
-\vrho^{1/2}\vrho'^{-1/2}\vA_{\beta\vH'/2}(t)\vrho'^{1/2}\vrho^{-1/2}
\right)_{\beta\vH/2}(-t)\sqrt{\vsigma}\right]\dd t.
\label{eq:intro-identifiability}
\end{align}
Here $\vrho$ and $\vrho'$ are the Gibbs states of $\vH$ and $\vH'$, $\vX_{\vH}(t)=e^{it\vH}\vX e^{-it\vH}$ denotes Heisenberg evolution, and $g_{\mathrm{id}}$ is a fixed, exponentially decaying kernel.
The left-hand side is exactly the pairing in (P2).

For a Gibbs input $\vsigma=\vrho$, the right-hand side of \eqref{eq:intro-identifiability} can be turned into tests with (P1)--(P3) using two exact properties of $\vrho$.
First, the outer factors $\sqrt{\vrho}$ combine with the adjacent Gibbs conjugations, which leaves an expression linear in the input state.
Second, $[\vH,\vrho]=0$ allows the evolution under the unknown $\vH$ to be moved from the state onto $\vO$.
The learner then replaces $\vH$ acting on $\vO$ by an auxiliary $\vG$, which yields a measurable test as in (P3).
The choice $\vG=\vH$ recovers \eqref{eq:intro-identifiability}, which gives (P2).
For $\vH'=\vH$, the two terms cancel by the KMS condition for every $\vG$, which gives (P1).
The imaginary-time conjugations must also be regularized in frequency, which we handle as in Ref.~\cite{chen2025learning} (\Cref{thm:double-frequency-truncation}).

However, for a metastable input, both properties of $\vrho$ fail.
\begin{enumerate}[label=(\roman*),leftmargin=*,topsep=3pt,itemsep=2pt,parsep=0pt]
    \item \emph{Nonlinearity in the state.} The separated factors of $\sqrt{\vsigma}$ make the right-hand side of \eqref{eq:intro-identifiability} generally nonlinear in $\vsigma$, so it is not directly the expectation value of a state-independent observable.
    \item \emph{Noncommutation with the Hamiltonian.} A metastable state need not commute with $\vH$, so moving the unknown dynamics onto $\vO$ also introduces time evolution of the state.
\end{enumerate}
Overcoming these two difficulties, which do not arise for Gibbs states, is the core technical contribution of our work.

Both difficulties are resolved by ADB.
Quantitatively, ADB measures the deviation
\begin{align}
    \vA^a(\omega)\sqrt{\vsigma}
    -\sqrt{\vsigma}\,\vrho^{-1/2}\vA^a(\omega)\vrho^{1/2},
    \label{eq:intro-adb-deviation}
\end{align}
where $\vA^a(\omega)$ is a quasi-local operator Fourier transform of the jump $\vA^a$, the component of $\vA^a$ that changes the energy of $\vH$ by about $\omega$.
The deviation vanishes for $\vsigma=\vrho$.
When $\vH$ consists only of $\vI$ and $\vZ$ Pauli terms, $\vsigma$ is diagonal, and $\vA^a$ is a Pauli $\vX$, it reduces to the classical flow balance described above.
We write $\epsilon_{\mathrm{ADB}}$ for the largest weighted average of the squared Hilbert--Schmidt norm of \eqref{eq:intro-adb-deviation} over all single-site jumps (\Cref{defn:adb}).

For difficulty~(i), ADB allows local operators to pass through $\sqrt{\vsigma}$ at the cost of a Gibbs conjugation.
This merges the two factors of $\sqrt{\vsigma}$ into a single $\vsigma$, schematically $\sqrt{\vsigma}\,\vrho^{-1/2}\vX\vrho^{1/2}\sqrt{\vsigma}\approx\vX\vsigma$, and makes the expression linear in the state.
The subtlety is that ADB applies to operators evolved under the true $\vH$, whereas \eqref{eq:intro-identifiability} also involves the candidate $\vH'$, so we expand the candidate evolution around $\vH$ in a Dyson series.
Each term of this series is a product of local operators evolved under $\vH$, which can be moved through $\sqrt{\vsigma}$ as above (\Cref{thm:kms-to-gns-reduction}).

We address difficulty~(ii) in two ways. The first uses ADB to bound $\|[\vH,\vsigma]\|_1$ and hence the motion of the entire state, at the cost of a factor of $n$ (part~(a) below). To our knowledge, this global stationarity result is new and of independent interest. The second bounds only the tested commutators $\Tr(\vsigma[\vH,\vB])$ for products $\vB$ of a few local operators evolved under $\vH$. Quasi-locality confines the relevant contributions to $[\vH,\vB]$ to neighborhoods of these operators, up to controlled tails. The Gibbs-conjugation identity in \Cref{lem:weak-commutator-identity} then lets us apply local ADB to obtain a bound independent of $n$, as in part~(b) below.

\begin{thm}[ADB implies approximate stationarity under $\vH$, informal version of \Cref{thm:approximate-commute-deduce,thm:weak}]
\label{thm:intro-tested-commutator}
There are constants $C_\beta=e^{\poly(\beta^{\pm1})}$ and $\alpha_\beta=1/\poly(\beta^{\pm1})$, independent of $n$, such that the following holds.
\begin{enumerate}[label=(\alph*),leftmargin=*,topsep=3pt,itemsep=2pt,parsep=0pt]
    \item The state approximately commutes with the Hamiltonian, $\|[\vH,\vsigma]\|_1\leq C_\beta\,n\,\epsilon_{\mathrm{ADB}}^{\alpha_\beta}$.
    \item For every product $\vB=\prod_{j=1}^{q}\vA_j$ of $q$ single-site Paulis, $|\Tr(\vsigma[\vH,\vB])|\leq C_\beta^{\,q}\,\epsilon_{\mathrm{ADB}}^{\alpha_\beta}$.
\end{enumerate}
\end{thm}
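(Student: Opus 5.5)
The plan is to derive both parts from a single operator identity that expresses commutators with $\vH$ through the ADB deviations of the single-site jumps. The Gibbs-side relation is that for the true Gibbs state $\vrho$, $\vrho^{-1/2}\vA^a(\omega)\vrho^{1/2}=e^{\beta\omega/2}\vA^a(\omega)$ up to the frequency-truncation error of the quasi-local operator Fourier transform. Differentiating in $\omega$, or equivalently comparing Heisenberg evolution with the Gibbs conjugation, gives
\begin{align}
[\vH,\vA^a(\omega)]\approx\omega\,\vA^a(\omega).
\end{align}
Combining this with the ADB relation $\vA^a(\omega)\sqrt{\vsigma}\approx e^{\beta\omega/2}\sqrt{\vsigma}\,\vA^a(\omega)$ says that $\sqrt{\vsigma}$ approximately intertwines every Bohr-frequency component of the jumps in the same way $\vrho^{1/2}$ does. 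So $\sqrt{\vsigma}$ approximately commutes with the modular-type operator built from $\vH$ acting on local operators.

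For part (b), I would first treat $\vB=\vA$ a single site Pauli. I would decompose $\vA=\int \vA(\omega)\,\dd\omega$, with a tail error controlled by the Gaussian filter width, and write
\begin{align}
\Tr(\vsigma[\vH,\vA])\approx\int\omega\,\Tr\bigl(\sqrt{\vsigma}\vA(\omega)\sqrt{\vsigma}\bigr)\dd\omega.
\end{align}
Applying ADB once to move $\vA(\omega)$ across $\sqrt{\vsigma}$ converts the integrand into $\omega e^{\beta\omega/2}\Tr(\vsigma\vA(\omega))$. Applying ADB in the reversed form, via the adjoint jump $\vA(\omega)^\dagger=\vA^\dagger(-\omega)$ and cyclicity, gives the same quantity with $e^{-\beta\omega/2}$. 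The two expressions are consistent only if the weighted quantity is small. More cleanly, this is exactly the content of the Gibbs-conjugation identity \Cref{lem:weak-commutator-identity}: I would invoke it to rewrite $\Tr(\vsigma[\vH,\vB])$ as a finite combination of pairings $\Tr\bigl(\vX^\dagger(\vA(\omega)\sqrt{\vsigma}-\sqrt{\vsigma}\vrho^{-1/2}\vA(\omega)\vrho^{1/2})\bigr)$ with $\|\vX\|_2$ bounded by $C_\beta$. Cauchy--Schwarz against the weighted Hilbert--Schmidt ADB error then gives $\epsilon_{\mathrm{ADB}}^{1/2}$ times a constant.

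For a product $\vB=\prod_j\vA_j$, I would use the Leibniz rule $[\vH,\vB]=\sum_j\vA_1\cdots[\vH,\vA_j]\cdots\vA_q$ and handle each term the same way, with the other Paulis absorbed into $\vX$. This costs a factor of $q$ and, through the Dyson and frequency expansions of the neighbouring factors, a factor $C_\beta^q$. The exponent $\alpha_\beta<1$ arises from optimizing the frequency cutoff and the Lieb--Robinson truncation radius against $\epsilon_{\mathrm{ADB}}$. The cutoff needs a width of order $\log(1/\epsilon)$, and the truncation of the quasi-local $\vA(\omega)$ contributes errors of the form $e^{-r/\xi}$ versus $e^{O(\beta r)}$ growth, which together yield a power law.

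For part (a), I would write $\|[\vH,\vsigma]\|_1=\sup_{\|\vY\|\le1}|\Tr(\vY[\vH,\vsigma])|$. I would split $\vH=\sum_\gamma h_\gamma\vP_\gamma$ into $O(n)$ local terms and, for each term, reduce $\Tr(\vsigma[\vY,\vP_\gamma])$ to local ADB errors. The catch is that $\vY$ is global, so the Pauli-product bound (b) does not apply directly. Instead I would use the operator-level identity
\begin{align}
[\vH,\sqrt{\vsigma}]\approx\textstyle\sum_a\int c_a(\omega)\bigl(\text{ADB deviation}_a(\omega)\bigr)\dd\omega,
\end{align}
obtained from the Lindbladian-resolution of $-i[\vH,\cdot]$ in terms of jumps, since single-site Paulis form an operator basis on each site. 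I would bound the Hilbert--Schmidt norm of this expression by $\sqrt{n\,\epsilon_{\mathrm{ADB}}}$. Then $\|[\vH,\vsigma]\|_1\le2\|\sqrt{\vsigma}\|_2\|[\vH,\sqrt{\vsigma}]\|_2$, which gives the factor of $n$ after the same cutoff optimization.

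The main obstacle is the exact form of this operator identity. It requires expressing $[\vH,\cdot]$ through frequency components of single-site jumps with an error that stays uniformly controlled over $n$ sites. I would first try the approach of \cite{bergamaschi2025metastability}: write $\vH$'s local terms via $\sum_{\vA}\vA^\dagger[\vA,\vP_\gamma]$, using Pauli completeness on the support, and iterate over the support sites.
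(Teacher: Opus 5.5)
There are genuine gaps in both parts. \emph{Products in (b).} You cannot bound the Leibniz terms $\Tr(\vsigma\,\vA_1\cdots[\vH,\vA_j]\cdots\vA_q)$ one at a time, because they are not individually small even when $\epsilon_{\mathrm{ADB}}=0$. Take $\vH=h\vZ_1$ and $\vB=\vX_1\vY_1$ (sites may repeat). Then $[\vH,\vX_1]\vY_1=2ih\vI$ and $\vX_1[\vH,\vY_1]=-2ih\vI$, so the two terms equal $\pm2ih$ in every state, the Gibbs state included; only their sum vanishes. The cancellation has to be kept by applying \Cref{lem:weak-commutator-identity} to $\vB$ as a whole.

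The idea you are missing is that Gibbs conjugation $e^{\mp\beta\vH/2}(\cdot)e^{\pm\beta\vH/2}$ is multiplicative. After a cutoff on the total frequency $\sum_j|\omega_j|\le\Omega$, the product defect $\sqrt{\vsigma}\prod_j\widehat{\vA}_j-e^{-\beta\vH/2}\bigl(\prod_j\widehat{\vA}_j\bigr)e^{\beta\vH/2}\sqrt{\vsigma}$ therefore telescopes into single-factor defects. Each of these is controlled by \Cref{lem:low_freq_bound}, as in \Cref{lem:weak-low-frequency}. The high-frequency remainder is bounded in operator norm, $\|[\vH,\vB^{\mathrm{prod}}_{>\Omega}]\|\le e^{\poly(\beta^{\pm1})q}e^{-\Omega/(8d)}$, using the Gaussian-shift envelope $\|\widehat{\vA}_j(\omega)\|\lesssim e^{-|\omega|/(4d)}$ rather than Lieb--Robinson truncation (\Cref{lem:weak-high-frequency}).

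This frequency split is also why even a single Pauli does not give $\epsilon_{\mathrm{ADB}}^{1/2}$ by Cauchy--Schwarz. The identity contains unfiltered imaginary-time conjugations, and the weight $\gamma(\omega)$ controls the defect only on a bounded frequency window, at cost $e^{\beta\Omega/2}$. Balancing this against the high-frequency tail produces the exponent $1/\poly(\beta^{\pm1})$.

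\emph{Part (a).} The operator identity you flag as the main obstacle is exactly what is missing, and the Lindbladian or Pauli-completeness route is unnecessary. Your closing step $\|[\vH,\vsigma]\|_1\le2\|[\sqrt{\vsigma},\vH]\|_2$ is the right one. Since $\vH$ is invariant under its own Heisenberg evolution, $\widehat{\vH}_{\vH}(\omega)=\widehat f_\sigma(\omega)\vH$. Hence the low-frequency part $\vH_{\le\Delta_0}=\sum_\gamma h_\gamma(\vP_\gamma)_{\le\Delta_0}$ equals $c\vH$ for a positive scalar $c$, and moreover $\vrho^{1/2}\vH\vrho^{-1/2}=\vH$. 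Therefore, exactly,
$c(\sqrt{\vsigma}\vH-\vH\sqrt{\vsigma})=\sum_\gamma h_\gamma\bigl(\sqrt{\vsigma}(\vP_\gamma)_{\le\Delta_0}-\vrho^{1/2}(\vP_\gamma)_{\le\Delta_0}\vrho^{-1/2}\sqrt{\vsigma}\bigr)$.
Each summand is a bounded low-frequency ADB defect of a weight-$\le k$ Pauli product, again handled by \Cref{lem:low_freq_bound}. The triangle inequality over $|\Gamma|=O(n)$ terms then gives \Cref{thm:approximate-commute-deduce}.

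Note also that your intermediate $\sqrt{n\,\epsilon_{\mathrm{ADB}}}$ Hilbert--Schmidt bound cannot hold in any form $\sqrt{n}\,F(\epsilon_{\mathrm{ADB}})$. In \Cref{prop:commutation-error-lower-bound} the local ADB errors are $\Theta(\sin^2\theta)$ uniformly in $n$, yet $\|[\vH,\sqrt{\vsigma_\theta}]\|_2\ge\frac12\|[\vH,\vsigma_\theta]\|_1=nh\tanh(\beta h)|\sin\theta|$. So site contributions do not behave orthogonally, and linear growth in $n$ is forced.
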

Part~(a) alone would already suffice, since $|\Tr(\vsigma[\vH,\vB])|=|\Tr(\vB[\vH,\vsigma])|\leq\|\vB\|\,\|[\vH,\vsigma]\|_1$, but it would introduce a factor of $n$ into the learning precision.
Part~(b) avoids this factor by bounding only the tested commutators that the transfer actually requires.

After both steps, the learner replaces $\vH$ acting on $\vO$ by an auxiliary $\vG$, as in the Gibbs case.
The resulting tests are measurable, and the error in (P2) is bounded by a sufficiently small multiple of the current coefficient error $\eta$, plus $C_\beta\epsilon_{\mathrm{ADB}}^{1/\poly(\beta^{\pm1})}$ (\Cref{lem:tested-identification}). The term proportional to $\eta$ is absorbed into each refinement step, while the ADB term sets the final precision guaranteed by the algorithm (\Cref{subsec:learning-precision}).

\paragraph*{From metastability to approximate detailed balance.}
It remains to derive ADB from metastability.
Ref.~\cite{bergamaschi2025metastability} bounds ADB through entropy production (EP), the rate at which relative entropy to the Gibbs state decreases under the thermal dynamics.
For a full-rank state, EP is given by
\begin{align*}
    \mathsf{EP}_{\CL}[\vsigma]
    &:=-\left.\frac{\dd}{\dd t}
       D\bigl(e^{t\CL}[\vsigma]\|\vrho\bigr)\right|_{t=0}
    =-\Tr\bigl(\CL[\vsigma](\log\vsigma-\log\vrho)\bigr).
\end{align*}
Consequently,
\[
    \mathsf{EP}_{\CL}[\vsigma]
    \leq \|\CL[\vsigma]\|_1\,
          \|\log\vsigma-\log\vrho\|.
\]
This direct estimate becomes singular when $\vsigma$ has zero eigenvalues and therefore does not yield a finite ADB bound for states that are not full rank.
Regularization with the global Gibbs state controls this divergence, but the resulting bound on the global logarithmic term introduces a factor of $n$.

We improve this approach in two respects.
First, combining regularization with continuity of ADB gives a universal bound for the original state under global metastability, including states with zero eigenvalues.
Second, exploiting locality and regularizing only near each jump gives a bound under local metastability that is independent of the system size.
The following theorem summarizes these two improvements.

\begin{thm}[Metastability implies ADB, informal version of \Cref{thm:global-adb,thm:local-adb}]
\label{thm:intro-ms-to-adb}
For a state $\vsigma$, define the global and local metastability errors
\begin{align}
    \epsilon_{\mathrm{ms}}^{\mathrm{glo}}(\vsigma):=\|\CL[\vsigma]\|_1,
    \qquad
    \epsilon_{\mathrm{ms}}^{\mathrm{loc}}(\vsigma):=\max_{a\in\mathcal A}\|\CL_a[\vsigma]\|_1.
    \label{eq:intro-local-metastability}
\end{align}
There is a constant $C_\beta=e^{\poly(\beta^{\pm1})}$, independent of $n$, such that the following holds.
\begin{enumerate}[label=(\alph*),leftmargin=*,topsep=3pt,itemsep=2pt,parsep=0pt]
    \item If $\epsilon_{\mathrm{ms}}^{\mathrm{glo}}\leq1/2$, then $\epsilon_{\mathrm{ADB}}\leq C_\beta\,\epsilon_{\mathrm{ms}}^{\mathrm{glo}}\bigl[n+\log(1/\epsilon_{\mathrm{ms}}^{\mathrm{glo}})\bigr]$.
    \item If $\epsilon_{\mathrm{ms}}^{\mathrm{loc}}\leq1/2$, then $\epsilon_{\mathrm{ADB}}\leq C_\beta\,\epsilon_{\mathrm{ms}}^{\mathrm{loc}}\bigl[\log(e/\epsilon_{\mathrm{ms}}^{\mathrm{loc}})\bigr]^{D+1}$.
\end{enumerate}
\end{thm}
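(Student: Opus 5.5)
The plan is to route both parts through entropy production, following Ref.~\cite{bergamaschi2025metastability}. We will use two ingredients. The first is a lower bound of ADB by EP: for each $a$, the Dirichlet-form expression for $\mathsf{EP}_{\CL_a}[\vsigma]$ dominates, up to a factor $C_\beta$, the weighted squared Hilbert--Schmidt norm of the ADB deviation \eqref{eq:intro-adb-deviation}. This is a quantum analogue of $(\sqrt{p(x)}-\sqrt{p(x')e^{\beta\omega}})^2\lesssim (p(x)-p(x')e^{\beta\omega})\log\frac{p(x)}{p(x')e^{\beta\omega}}$. Summing over $a$ gives $\epsilon_{\mathrm{ADB}}\le C_\beta\sum_a\mathsf{EP}_{\CL_a}[\vsigma]$. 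Since the Hamiltonian part $-i[\vH,\cdot]$ contributes zero EP relative to $\vrho$, we also get $\sum_a\mathsf{EP}_{\CL_a}=\mathsf{EP}_{\CL}$. The second ingredient is the upper bound $\mathsf{EP}_{\CL}[\vsigma]\le\|\CL[\vsigma]\|_1\|\log\vsigma-\log\vrho\|$, and the remaining work is to tame the logarithm.

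\emph{Part (a).} We regularize and set $\vsigma_\lambda=(1-\lambda)\vsigma+\lambda\vrho$. Then $\|\CL[\vsigma_\lambda]\|_1\le\epsilon$, since $\CL[\vrho]=0$, and $\vsigma_\lambda\ge\lambda\vrho$. By operator monotonicity of $\log$, $\log\vsigma_\lambda\ge\log\lambda+\log\vrho$. For the upper side, use $\vsigma_\lambda\le\vI$ together with $-\log\vrho\le\beta\|\vH\|+\log\Tr e^{-\beta\vH}=O(\beta n)$. Together these give $\|\log\vsigma_\lambda-\log\vrho\|\le O(\beta n)+\log(1/\lambda)$, and hence $\epsilon_{\mathrm{ADB}}(\vsigma_\lambda)\le C_\beta\epsilon[n+\log(1/\lambda)]$. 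It then remains to transfer this to $\vsigma$ itself, for which we need continuity of ADB in the state. Since the ADB deviation is linear in $\sqrt{\vsigma}$ and the jumps $\vA^a(\omega)$ are bounded, its HS norm changes by at most $C\|\sqrt{\vsigma}-\sqrt{\vsigma_\lambda}\|_2$. By Powers--Størmer, $\|\sqrt{\vsigma}-\sqrt{\vsigma_\lambda}\|_2^2\le\|\vsigma-\vsigma_\lambda\|_1\le2\lambda$, where the weighted average over $a$ is normalized so that the per-jump bound suffices. Combining the two estimates gives $\epsilon_{\mathrm{ADB}}(\vsigma)\le 2\epsilon_{\mathrm{ADB}}(\vsigma_\lambda)+C\lambda$. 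Choosing $\lambda=\epsilon$ yields (a), and the condition $\epsilon\le1/2$ keeps $\log(1/\epsilon)$ positive.

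\emph{Part (b).} Here we avoid the factor $n$ by working per jump with a local regularizer. For a fixed $a$, we bound $\mathsf{EP}_{\CL_a}$ instead of the global EP. Because $\CL_a$ is quasi-local, built from $\vA^a(\omega)$ with tails decaying as $e^{-(r/\xi_\beta)}$ (stretched exponential at worst), we can truncate to a ball $B_r$ around site $a$ at cost $e^{-\Omega(r)}$ in diamond norm. We then replace $\vrho$ by a reference state that is KMS-detailed-balanced for the truncated generator and whose logarithm differs from $\log\vsigma$ by $O(\beta|B_r|)+\log(1/\lambda)$ on the relevant support. Concretely, we take $\vsigma_\lambda^{(a)}=(1-\lambda)\vsigma+\lambda\,\vsigma_{B_r^c}\otimes\vrho_{B_r}$-like mixtures, and pair $\CL_a[\vsigma]$ against a log-difference whose norm is $O(r^D)+\log(1/\lambda)$. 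This gives $\epsilon_{\mathrm{ADB},a}\le C_\beta\big(\epsilon(r^D+\log(1/\lambda))+\lambda+e^{-\Omega(r)}\big)$. Taking $\lambda=\epsilon$ and $r\sim\log(e/\epsilon)$ balances the tail against $\epsilon$ and gives $\epsilon[\log(e/\epsilon)]^{D}\cdot\log(e/\epsilon)$, which is the claimed power $D+1$. The extra logarithm comes from the stretched-exponential or Gaussian tails together with the $\log(1/\lambda)$ term, and may require $r\sim\log^{1+}$.

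The main obstacle is this local regularization. The EP of a single $\CL_a$ relative to $\vrho$ involves the global $\log\vrho$, and replacing it by a local reference requires the ADB lower bound for $\CL_a$ to depend only on approximately local detailed balance. I would first check that the EP--ADB inequality holds for any state $\vrho_{\mathrm{loc}}$ with respect to which the truncated $\CL_a$ is exactly KMS-balanced, for instance the Gibbs state of the truncated Hamiltonian tensored with the identity. After that, the truncation errors of the Bohr-frequency decomposition need to be accounted for, and the ADB deviation defined with $\vrho$ must be compared with the one defined with $\vrho_{\mathrm{loc}}$ through quasi-locality of $\vrho^{-1/2}\vA^a(\omega)\vrho^{1/2}$.
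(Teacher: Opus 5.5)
\textbf{Verdict: genuine gap.} Your architecture matches the paper's. For (a) you use Gibbs regularization with H\"older and operator monotonicity. You return to $\vsigma$ by Powers--St{\o}rmer continuity. For (b) you truncate the single-jump generator and ADB functional to a patch and regularize locally with $\vrho_S\otimes\vsigma_E$.

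The gap is your ``first ingredient.'' Ref.~\cite{bergamaschi2025metastability} does not give $\mathsf{ADB}_a\le C_\beta\,\mathsf{EP}_a$ with a state-independent constant. It gives $\mathsf{ADB}_a\lesssim\mathsf{EP}_a\bigl(1+\log(\|\log\vsigma-\log\vrho\|^2/\mathsf{EP}_a)\bigr)$. The classical inequality $(\sqrt p-\sqrt q)^2\le\frac14(p-q)\log(p/q)$ does not lift to the quantum setting. In the entropy-gradient representation, ADB and EP have the same integrand and differ only in their time kernels $g_s^{\mathsf{ADB}}$ and $g_s$, and the ratio of these kernels diverges logarithmically near $s=t=0$.

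With only the known bound, part (a) gives $\mathsf{EP}_a\le\epsilon L$ with $L\approx C_\beta(n+\log(1/\epsilon))$, hence $\mathsf{ADB}_a\lesssim\epsilon L\log(L/\epsilon)$. This carries an extra factor $\log(n/\epsilon)$, which the stated bound excludes. In (b) the same factor would cost only one logarithm, and the exponent $D+1$ happens to absorb it because $\Delta_{\ell,r}\lesssim[\log(e/\epsilon)]^D$. It is (a) that genuinely needs the constant.

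The paper proves the constant comparison as a new result, \Cref{thm:constant-fi-adb}, by twirling:
\begin{itemize}
\item Averaging the ADB time weight against $|f_\beta(u)|^2$ changes ADB by at most absolute constants for every state, since $e^{-2\pi|u|/\beta}g(t)\le g(t-u)\le e^{2\pi|u|/\beta}g(t)$.
\item The convolution integrates out the logarithmic singularity, giving $\widetilde g_s^{\mathsf{ADB}}\le C_\beta g_s$ uniformly in $s,t$. This uses $g_s\ge\cos(\pi s)\,g$ and, by convexity, $g_s^{\mathsf{ADB}}\le\cos(\pi s)\,g_0^{\mathsf{ADB}}$.
\end{itemize}

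In (b), the entropy-production reference must be the regularizer itself, $\vtau_a=\vrho_S\otimes\vsigma_E$. Your last paragraph instead proposes the truncated Gibbs state tensored with the identity. With $\vI_E$ the relative logarithm picks up $\log\vsigma_E+|E|\log2$, which is not bounded independently of $n$. The $n$-independent bound $\|\log\vsigma_{a,r}-\log\vtau_a\|\le1+\log(1/r)+2m\log2+2\beta\|\vH_\ell\|$ rests on the operator inequality $\vsigma\le2^m\,\vI_S\otimes\vsigma_E$ (\Cref{lem:adb4-relative-log}). Your sketch asserts this bound without giving that mechanism.

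You also do not need to compare imaginary-time conjugations by $\vrho$ and $\vrho_{\mathrm{loc}}$. The state $\vtau_a$ is the Gibbs state of $\vH_\ell\otimes\vI-\beta^{-1}\vI\otimes\log\vsigma_E$, whose exterior part commutes with everything supported on $S$. So the generator and ADB functional it defines are exactly $\CL_a^{\vH_\ell}$ and $\operatorname{ADB}_{\vH_\ell}^{a}$. What you need is an EP--ADB comparison valid for this nonlocal reference Hamiltonian, which the Hamiltonian-independent comparison supplies. Comparing $\operatorname{ADB}_{\vH}^{a}$ with $\operatorname{ADB}_{\vH_\ell}^{a}$ is then handled by the Gaussian shift and Lieb--Robinson bounds.

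Two smaller points:
\begin{itemize}
\item $r^D+\log(1/\lambda)$ is a sum, not a product, so you actually obtain $\epsilon[\log(e/\epsilon)]^D$ with $r\sim C_\beta\log(e/\epsilon)$. The tails are genuinely exponential, so no larger radius is needed.
\item ``Bounded jumps'' does not justify the continuity step. The conjugated jump $\vrho^{-1/2}\vA^a(\omega)\vrho^{1/2}=e^{\beta\omega/2+1/4}\vA^a(\omega+\beta^{-1})$ grows with $\omega$, and $\gamma\equiv1$ for $\omega\le-1/(2\beta)$. You need Parseval in $\omega$ together with $\gamma(\omega)e^{\beta\omega+1/2}\le1$, as in \Cref{lem:state-cont}.
\end{itemize}
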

Part~(a) uses the thermal-state regularization technique of Ref.~\cite[Appendix~F]{bergamaschi2025metastability}, applying the EP argument to the mixture $\vsigma_r=(1-r)\vsigma+r\vrho$ with $r=\epsilon_{\mathrm{ms}}^{\mathrm{glo}}$.
Our new continuity lemma for ADB (\Cref{lem:state-cont}) then transfers the bound back to the original state, without requiring full rank.
Together with the previous reduction, part~(a) proves \Cref{thm:intro-global-metastable-learning}, and it is the only source of the system-size-dependent factor in its precision.
Part~(b) instead regularizes the state only near each jump, so that the relevant relative logarithm stays bounded independently of the system size.
It shows that under local metastability, the precision of our algorithm does not depend on $n$. Together with the previous reduction, part~(b) proves \Cref{thm:intro-local-metastable-learning}.

\paragraph*{Byproduct: a complete framework of approximate stationarity.}
Ref.~\cite{bergamaschi2025metastability} related several notions of approximate stationarity through the chain of implications
\begin{align*}
    \text{global metastability}
    \quad\Longrightarrow\quad\text{small EP}
    \quad\Longrightarrow\quad\text{ADB}
    \quad\Longrightarrow\quad\text{local metastability},
\end{align*}
in which the first two steps require full-rank states.
Our results complete this picture, as summarized in \Cref{fig:metastability-relations}.
In particular, we supply the missing direction from local to global metastability.
By \Cref{thm:intro-ms-to-adb}(b), a locally metastable state satisfies ADB, and by \Cref{thm:intro-tested-commutator}(a), ADB forces the state to approximately commute with $\vH$.
The thermal Lindbladian $\CL=-i[\vH,\cdot]+\sum_a\CL_a$ consists of this Hamiltonian part and $3n$ thermal components, so the triangle inequality $\|\CL[\vsigma]\|_1\leq\|[\vH,\vsigma]\|_1+3n\,\epsilon_{\mathrm{ms}}^{\mathrm{loc}}(\vsigma)$ yields the following corollary.

\begin{figure}[!t]
\centering
\begingroup
\def\epscg{\epsilon_{\mathrm{com}}^{\mathrm{glo}}}
\def\epscl{\epsilon_{\mathrm{com}}^{\mathrm{loc}}}
\def\epsmg{\epsilon_{\mathrm{ms}}^{\mathrm{glo}}}
\def\epsml{\epsilon_{\mathrm{ms}}^{\mathrm{loc}}}
\def\epsep{\epsilon_{\mathrm{EP}}}
\def\epsadb{\epsilon_{\mathrm{ADB}}}
\begin{tikzpicture}[
 box/.style={draw,rounded corners,align=center,text width=3.15cm,
   minimum height=1.05cm,inner sep=5pt,font=\small},
 prior/.style={->,thick},
 added/.style={->,very thick,blue!65!black},
 note/.style={font=\scriptsize\normalcolor,align=center,fill=white,inner sep=2pt}]

\node[box] (gm) at (0,2.10) {Global MS.\\[3pt]
 $\|\CL[\vsigma]\|_1\leq\epsmg$};
\node[box] (lm) at (0,-1.30) {Local MS.\\[3pt]
 $\max_a\|\CL_a[\vsigma]\|_1\leq\epsml$};
\node[box] (ep) at (4.30,.40) {Entropy production\\[3pt]
 $\max_a\mathsf{EP}_a[\vsigma]\leq\epsep$};
\node[box,text width=3.65cm] (adb) at (8.60,.40) {Approx. DB\\[3pt]
 \mbox{$\max_a\mathsf{ADB}_a[\vsigma]\leq\epsadb$}};
\node[box] (gc) at (13.35,2.10) {Global Comm.\\[3pt]
 $\|[\vH,\vsigma]\|_1\leq\epscg$};
\node[box] (lc) at (13.35,-1.30) {Local Comm.\\[3pt]
 $|\Tr(\vB[\vH,\vsigma])|\leq\epscl$};

\draw[added] ([xshift=-.23cm]lm.north)--([xshift=-.23cm]gm.south);
\draw[prior] ([xshift=.23cm]gm.south)--([xshift=.23cm]lm.north);
\node[note,anchor=east] at (-0.33,.40)
 {$\epsmg\lesssim$\\$n\,(\epsml)^{\Omega(1)}$\\Cor.~\ref{cor:intro-local-to-global-metastability}};
\node[note,anchor=west] at (0.33,.40)
 {$\epsml\overset{\dagger}{\lesssim}\sqrt{\epsmg}$};

\draw[prior] (gm.east)--node[note,above=2pt,pos=.50]
 {$\epsep\overset{\dagger}{\lesssim}\epsmg$}
 (3.55,2.10)--([xshift=-.75cm]ep.north);
\draw[prior] (lm.east)--node[note,above=2pt,pos=.50]
 {$\epsep\overset{\dagger}{\lesssim}\epsml$}
 (3.55,-1.30)--([xshift=-.75cm]ep.south);

\draw[added] (gm.north)--(0,3.00)--
 node[note,above=2pt,pos=.50]
 {$\epsadb\overset{*}{\lesssim} n\,\epsmg$\quad Thm.~\ref{thm:global-adb}}
 (8.60,3.00)--(adb.north);

\draw[prior] ([xshift=.58cm]ep.north)--(4.88,1.65)--
 node[note,above=2pt,pos=.5]
 {$\epsadb\overset{*,\dagger}{\lesssim}\epsep$}
 (8.00,1.65)--([xshift=-.60cm]adb.north);
\draw[added] ([xshift=.58cm]ep.south)--(4.88,-.85)--
 node[note,below=2pt,pos=.5]
 {$\epsadb\lesssim\epsep$\quad Thm.~\ref{thm:constant-fi-adb}}
 (8.00,-.85)--([xshift=-.60cm]adb.south);

\draw[added] ([xshift=.60cm]adb.north)--(9.20,2.10)--
 node[note,above=2pt,pos=.50]
 {$\epscg\lesssim n\,\epsadb^{\Omega(1)}$\\Thm.~\ref{thm:approximate-commute-deduce}}
 (gc.west);
\draw[added] ([xshift=.60cm]adb.south)--(9.20,-1.30)--
 node[note,below=2pt,pos=.50]
 {$\epscl\lesssim\epsadb^{\Omega(1)}$\\Thm.~\ref{thm:weak}}
 (lc.west);
\draw[prior] (gc.south)--node[note,pos=.50]
 {$\epscl\leq\epscg$\\Trace duality} (lc.north);

\draw[prior] ([xshift=-.28cm]adb.south)--(8.32,-2.20)--
 node[note,above=2pt,pos=.51]
 {$\epsml\lesssim\sqrt{\epsadb}$}
 (0.66,-2.20)--([xshift=.66cm]lm.south);
\draw[added] (lm.south)--(0,-3.00)--
 node[note,above=2pt,pos=.50]
 {$\epsadb\overset{*}{\lesssim}\epsml$\quad Thm.~\ref{thm:local-adb}}
 (8.60,-3.00)--(adb.south);
\end{tikzpicture}

\caption{Relations between metastability (MS), entropy production (EP), approximate detailed balance (ADB), and commutation with the Hamiltonian (Comm.).
Black arrows are results from Ref.~\cite{bergamaschi2025metastability} or direct consequences, and blue arrows are our results.
At fixed temperature and local parameters, $\lesssim$ suppresses constants, and $\Omega(1)$ denotes a positive exponent.
Asterisks ($*$) mark bounds that hold up to additional factors polylogarithmic in the inverse error.
Daggers ($\dagger$) mark additional $\|\log\vsigma-\log\vrho\|$ dependence, which requires full-rank states.
For Local Comm., $\vB$ is a norm-one local operator. Our ADB-to-local-commutation bound extends this class to products of a fixed number of local operators evolved under $\vH$ for bounded times.
Definitions are given in \Cref{sec:adb-to-commutation}.}
\label{fig:metastability-relations}
\endgroup
\end{figure}

\begin{cor}[Local metastability implies global metastability]
\label{cor:intro-local-to-global-metastability}
There exist constants $K_\beta=e^{\poly(\beta^{\pm 1})}$ and $\nu_\beta=1/\poly(\beta^{\pm 1})>0$, independent of $n$, such that every state $\vsigma$ satisfies $\epsilon_{\mathrm{ms}}^{\mathrm{glo}}(\vsigma)\leq K_\beta\, n\,\bigl(\epsilon_{\mathrm{ms}}^{\mathrm{loc}}(\vsigma)\bigr)^{\nu_\beta}$.
\end{cor}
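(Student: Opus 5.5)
The plan is to chain \Cref{thm:intro-ms-to-adb}(b) with \Cref{thm:intro-tested-commutator}(a) and then apply the triangle inequality. Write $\epsilon:=\epsilon_{\mathrm{ms}}^{\mathrm{loc}}(\vsigma)$. The argument splits into two regimes, and the trivial regime is handled first.

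\emph{Trivial regime.} Every state satisfies $\|\CL[\vsigma]\|_1\leq\|\CL\|_{1\to1}\leq 2\|\vH\|+\sum_a\|\CL_a\|_{1\to1}=O(n)$. This uses that $\vH$ has $O(n)$ bounded terms and that each $\CL_a$ is a bounded quasi-local generator with $\|\CL_a\|_{1\to1}=O(1)$ for single-site jumps. So if $\epsilon>1/2$, or more generally if $\epsilon$ exceeds any fixed constant, the claim holds once $K_\beta$ is taken at least this $O(1)$ constant times $2^{\nu_\beta}$.

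\emph{Main regime, $\epsilon\leq1/2$.} By \Cref{thm:intro-ms-to-adb}(b) (formally \Cref{thm:local-adb}),
\[
\epsilon_{\mathrm{ADB}}\leq C_\beta\,\epsilon\,[\log(e/\epsilon)]^{D+1}.
\]
The polylogarithmic factor is absorbed into a power loss: $[\log(e/\epsilon)]^{D+1}\leq c_D\,\epsilon^{-1/2}$ for $\epsilon\leq1/2$, so $\epsilon_{\mathrm{ADB}}\leq C'_\beta\,\epsilon^{1/2}$. Feeding this into \Cref{thm:intro-tested-commutator}(a) (formally \Cref{thm:approximate-commute-deduce}) gives
\[
\|[\vH,\vsigma]\|_1\leq C_\beta\,n\,\epsilon_{\mathrm{ADB}}^{\alpha_\beta}\leq C''_\beta\,n\,\epsilon^{\alpha_\beta/2}.
\]
We must check that the hypotheses of that theorem (for instance, $\epsilon_{\mathrm{ADB}}$ below some threshold) are satisfied. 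If they are not, we fall back on the trivial bound $\|[\vH,\vsigma]\|_1\leq 2\|\vH\|=O(n)$, which is again dominated by $K_\beta n\epsilon^{\nu_\beta}$ once $\epsilon$ is bounded below by a constant depending on $\beta$.

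\emph{Combining.} The triangle inequality gives
\[
\|\CL[\vsigma]\|_1\leq\|[\vH,\vsigma]\|_1+\sum_{a\in\mathcal A}\|\CL_a[\vsigma]\|_1\leq C''_\beta n\epsilon^{\alpha_\beta/2}+3n\epsilon.
\]
Since $\epsilon\leq1$ and $\alpha_\beta/2\leq1$, we have $\epsilon\leq\epsilon^{\alpha_\beta/2}$. Setting $\nu_\beta=\alpha_\beta/2=1/\poly(\beta^{\pm1})$ and $K_\beta=C''_\beta+3+O(1)=e^{\poly(\beta^{\pm1})}$ then proves the claim.

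The only real content is in the two invoked theorems. The main point needing care is the bookkeeping of their side conditions: the smallness thresholds and the precise normalization of $\epsilon_{\mathrm{ADB}}$ as a weighted maximum over jumps must match what \Cref{thm:local-adb} outputs. These thresholds are what force the case split above.
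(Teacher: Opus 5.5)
Your proposal is correct and is essentially the paper's argument: chain \Cref{thm:local-adb} (local metastability implies ADB) with \Cref{thm:approximate-commute-deduce} (ADB bounds $\|[\vH,\vsigma]\|_1$), then close with the triangle inequality over the Hamiltonian part and the $3n$ thermal components. Your extra bookkeeping only makes explicit what the paper leaves implicit: the trivial regime $\epsilon>1/2$ via $\|\CL[\vsigma]\|_1=O(n)$, and absorbing $[\log(e/\epsilon)]^{D+1}$ into an $\epsilon^{-1/2}$ loss. Note also that \Cref{thm:approximate-commute-deduce} has no smallness threshold on $\epsilon_{\mathrm{ADB}}$, so your fallback at that step is never needed.
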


We also sharpen one of the known implications.
Ref.~\cite{bergamaschi2025metastability} bounded ADB by EP only up to a logarithmic factor involving $\|\log\vsigma-\log\vrho\|$, which diverges as eigenvalues of $\vsigma$ approach zero.
We show that ADB is at most a constant multiple of EP, with a constant independent of the state (\Cref{thm:constant-fi-adb}).
Together, global metastability, local metastability, and ADB are equivalent at fixed temperature, up to polynomial losses in the error and factors of $n$.

We have not attempted to optimize the exponents of the errors in these relations, and we focus on their dependence on $n$.
The factor $n$ in the conversion from local to global metastability (Corollary~\ref{cor:intro-local-to-global-metastability}) is necessary, and so is the factor $n$ in the conversion from ADB to global commutation (\Cref{thm:intro-tested-commutator}(a)).
Both are witnessed by a product example in which the local errors stay bounded while the global errors grow linearly in $n$ (Proposition~\ref{prop:commutation-error-lower-bound}).
Whether the factor $n$ in the conversion from global metastability to ADB (\Cref{thm:intro-ms-to-adb}(a)) is necessary remains open.
Removing it would also remove the system-size-dependent factor from the precision of \Cref{thm:intro-global-metastable-learning}.

\section{Discussion}
\label{sec:discussions}

We provided efficient Hamiltonian learning protocols for quantum metastable states of detailed-balanced Lindbladians. To extend the Gibbs learning framework, we heavily adopted the notion of the approximate detailed balance condition, which is simultaneously a way to quantify metastability and a key structure that enables identifiability of the Hamiltonian.

\paragraph*{Theory of metastability.}
Much of the technical work amounts to digesting and sharpening the initial framework of~\cite{bergamaschi2025metastability}. In fact, when handling the Hamiltonian term, we had to dive deep into the framework and ultimately discover that, up to polynomial losses in the error and extra factors of the system size $n$, the approximate detailed balance condition is equivalent to global and local metastability and implies approximate commutation with the Hamiltonian (\Cref{fig:metastability-relations}). Some of these factors of $n$ are necessary, but we do not know whether the factor in the conversion from global metastability to ADB is, and it is this factor that makes the learning precision floor $\eta$ depend on the system size.

\paragraph*{Connection to Markov properties.}
Ref.~\cite{chen2025learning} asked whether the local Markov property of quantum Gibbs states~\cite{chen2025GibbsMarkov} can be used directly for learning.
Metastable states are also locally Markov~\cite{bergamaschi2025metastability}, and classical metastable samples can be learned through their single-variable conditionals~\cite{jayakumarDiscreteDistributionsAre2026}.
Can a quantum local Markov property similarly be turned into a learning algorithm for metastable states?

\paragraph*{Beyond detailed-balanced Lindbladians.}
As the fundamental assumption, the Lindbladians of Ref.~\cite{chen2023exact} with single-site Pauli jumps appeared throughout this paper. We believe, without a proof, that similar results will hold under general conditions on the jump operators and transition rates, and furthermore the learning protocol applies to other detailed-balanced families~\cite{gilyen2024glauber,dingEfficientQuantumGibbs2024,scandiThermalizationOpenManyBody2026}. On the other hand, the present argument is very much restricted to the weak-coupling, Markovian bath settings and there might be other classes of states that can be counted as `finite-temperature'. While we believe the approximate detailed balance condition can continue to play a role --- and that the presented protocol can very well work --- it is not clear how to derive this condition beyond Lindbladian stationary states.

\vspace{\baselineskip}
\noindent\textbf{Acknowledgements.}
Concurrently and independently, Thiago Bergamaschi also obtained an argument for learning Hamiltonians from quantum metastable states. We thank him for discussing this work and for coordinating the arXiv posting. We also thank Sitan Chen and Yunchao Liu for early discussions that motivated this project, and Anurag Anshu, Quynh Nguyen, Thiago Bergamaschi, and Umesh Vazirani for collaboration on related topics. We thank Dong-ling Deng, Weiyuan Gong, and Laura Lewis for helpful discussions.
QY is supported by the National Natural Science Foundation of China (Grant No.~T24B2002) and the Quantum Science and Technology-National Science and Technology Major Project (Grant No.~2021ZD0302203). CFC is supported by a Simons-CIQC postdoctoral fellowship through NSF QLCI Grant No.~2016245.

\vspace{\baselineskip}
\noindent\textbf{AI Disclosure.}
The authors conceived and wrote the original statement and proof of \Cref{thm:intro-global-metastable-learning}. To further improve the learning precision floor (\Cref{thm:intro-local-metastable-learning}), LLMs helped discover the system-size-independent ADB bound under local metastability (\Cref{thm:local-adb,lem:state-cont,lem:adb4-relative-log,lem:adb4-regularized-ADB}) and develop the idea underlying \Cref{thm:weak}. LLMs also assisted with polishing the writing and figure preparation. The authors reviewed and checked all incorporated LLM-generated material and take full responsibility for the manuscript. The models used were GPT-6 Pro and Opus 5.5.

\appendix
\clearpage
\hypersetup{hidelinks}
\section{Preliminaries}
\label{sec:basic-notation}

Bold uppercase Latin letters denote operators and Hamiltonians, while bold Greek letters denote density operators. We write $\|\cdot\|$ for the operator norm, $\|\cdot\|_p$ for the Schatten $p$-norm, and $\supp(\vO)$ for the support of an operator.
\begin{defn}[Local Hamiltonians]\label{defn:local-Hamiltonian}
On $n$ qubits, a local Hamiltonian has the form
\begin{align}
 \vH=\sum_{\gamma\in\Gamma}h_\gamma\vP_\gamma,
 \qquad h_\gamma\in[-1,1],
 \label{eq:local-Hamiltonian-definition}
\end{align}
where the known, distinct, nonidentity Pauli strings $\vP_\gamma$ have weight at most $k$. We choose an interaction-degree bound $d\geq1$ so that each site belongs to at most $d$ interaction terms and each interaction term overlaps at most $d$ terms, including itself. For the geometric learning protocol, the qubits lie on a finite subset of $\mathbb Z^D$ and the interaction terms have bounded diameter. The parameters $k,d,D$ and the interaction range are fixed independently of $n$.
\end{defn}
Unless stated otherwise, local Hamiltonians satisfy \Cref{defn:local-Hamiltonian}. We write $X\lesssim Y$ when $X\leq {C_{\beta}}Y$ for a positive constant ${C_{\beta}}$ that depends only on $\beta$ and the fixed locality parameters. Such constants are independent of $n$; an absolute constant is independent of all parameters. The coefficients and degrees of $\poly(\beta^{\pm 1})$ may also depend on the fixed locality parameters. We keep dependence on variable support sizes and product lengths explicit.

At inverse temperature $\beta>0$, the Gibbs state and the maximally mixed state are
\begin{align}
 \vrho=\frac{e^{-\beta\vH}}{\Tr(e^{-\beta\vH})},
 \qquad \vtau=\frac{\vI}{2^n}.
 \label{eq:basic-gibbs-state}
\end{align}
For any density operator $\vsigma$, define the KMS pairing and norm by
\begin{align}
 \langle\vO,\vP\rangle_{\vsigma}
 :=\Tr[\vO^\dagger\vsigma^{1/2}\vP\vsigma^{1/2}],
 \qquad \|\vO\|_{\vsigma}:=\sqrt{\langle\vO,\vO\rangle_{\vsigma}}.
 \label{eq:kms-inner-product-basic}
\end{align}
Strictly speaking, when $\vsigma$ is not full rank, $\|\cdot\|_{\vsigma}$ is only a seminorm. We nevertheless refer to it as the KMS norm throughout. For $\vsigma=\vtau$, this is the normalized Hilbert--Schmidt norm.

The ideal property of Schatten norms gives, for compatible operators,
\begin{align}
    \|\vX\vY\vZ\|_2
    \leq \|\vX\|\,\|\vY\|_2\,\|\vZ\|,
    \qquad
    \|\vX\|_{\vsigma}\leq\|\vX\|,
    \qquad
    \|\vsigma^{1/2}\|_2=1.
    \label{eq:basic-schatten-estimates}
\end{align}
We use these inequalities below without further comment when converting trace expressions into bounds in the operator or Hilbert--Schmidt norm.

\subsection{Thermal dynamics and detailed balance}
\label{subsec:thermal-lindbladian-definition}
A system coupled to a thermal reservoir exchanges energy with its environment. Under suitable weak-coupling and Markovian approximations, its reduced dynamics is described by a Lindblad master equation. Classical Glauber dynamics provides a useful model in which local updates relax the system toward the Gibbs distribution. Detailed balance equates the equilibrium probability flow of each transition with that of its reverse. Here, quantum detailed balance means that the Heisenberg generator is self-adjoint in the Gibbs KMS inner product. This property implies that the Gibbs state is stationary~\cite{chen2023exact,gilyen2024glauber}.

We use the construction of Chen, Kastoryano, and Gily\'en~\cite{chen2023exact}, in the form studied in Ref.~\cite[Section IV C]{bergamaschi2025metastability}. It combines exact KMS detailed balance with quasi-local jump operators. These jumps resolve energy differences through Gaussian-filtered Hamiltonian evolution. For an operator $\vA$, write $\vA_{\vH}(t):=e^{i\vH t}\vA e^{-i\vH t}$ and define
\begin{align}
 \widehat{\vA}_{\vH}(\omega)
 &:=\frac1{\sqrt{2\pi}}\int_{-\infty}^{\infty}\vA_{\vH}(t)e^{-i\omega t}f_\sigma(t)\dd t,
 &f_\sigma(t)&:=e^{-\sigma^2t^2}\sqrt{\sigma\sqrt{2/\pi}}.
 \label{eq:operator-fourier-transform-basic}
\end{align}
The scalar width $\sigma$ is distinct from the state $\vsigma$.

For the single-site Pauli jumps $\{\vA^a\}_{a\in\mathcal A}=\{\vX_i,\vY_i,\vZ_i\}_{i\in\Lambda}$, define $\widehat{\vA}^{a}(\omega)=\widehat{\vA^a}_{\vH}(\omega)$ by \eqref{eq:operator-fourier-transform-basic}, with $0<\sigma\leq1/\beta$, and set
\begin{align}
 \gamma(\omega)&=\exp\!\left[-\beta\max\left(\omega+\frac{\beta\sigma^2}{2},0\right)\right],
 &\widehat{\vA}^{a}(\omega,t)&=\bigl(\widehat{\vA}^{a}(\omega)\bigr)_{\vH}(t).
 \label{eq:thermal-jumps-and-weight}
\end{align}
Each local generator is
\begin{align}
 \CL_a[\cdot]={}&-i[\vC^a,\cdot]
 +\int_{-\infty}^{\infty}\gamma(\omega)
 \left(\widehat{\vA}^{a}(\omega)(\cdot)\widehat{\vA}^{a}(\omega)^\dagger
 -\frac12\left\{\widehat{\vA}^{a}(\omega)^\dagger\widehat{\vA}^{a}(\omega),\cdot\right\}\right)\dd\omega,
 \label{eq:local-thermal-lindbladian}\\
 \vC^a:={}&\operatorname{p.v.}\!\int_{-\infty}^{\infty}\int_{-\infty}^{\infty}
 \gamma(\omega)c(t)\widehat{\vA}^{a}(\omega,t)^\dagger\widehat{\vA}^{a}(\omega,t)
 \dd\omega\dd t,
 &c(t)&:=\frac{1}{\beta\sinh(2\pi t/\beta)}.
 \label{eq:thermal-coherent-correction}
\end{align}
The principal value is taken symmetrically at $t=0$. The Hermitian correction $\vC^a$ is part of the thermal generator. As shown in Refs.~\cite{chen2023exact,bergamaschi2025metastability}, the Heisenberg adjoint of $\CL_a$ satisfies the KMS detailed-balance identity
\begin{align}
 \langle\vO,\CL_a^\dagger[\vP]\rangle_{\vrho}
 =\langle\CL_a^\dagger[\vO],\vP\rangle_{\vrho},
 \qquad\CL_a[\vrho]=0.
 \label{eq:thermal-KMS-balance}
\end{align}
Including the physical Hamiltonian evolution gives the full dynamics
\begin{align}
    \CL[\cdot]=-i[\vH,\cdot]+\sum_a\CL_a[\cdot],
    \qquad \CL_a[\vrho]=0.
    \label{eq:thermal-lindbladian-basic}
\end{align}
Thus $\CL[\vrho]=0$. KMS self-adjointness above refers to each thermal component $\CL_a$. The stationarity conditions for this dynamics are introduced in \Cref{sec:adb-to-commutation}.

The Gaussian Fourier representation makes this generator particularly convenient for our estimates. Quantum Glauber and Metropolis constructions are effective models of thermal relaxation, with connections to master equations obtained from system--bath approximations~\cite[Section 8]{gilyen2024glauber}. Establishing the same learning guarantees for other thermal generators remains open. We use the explicit detailed-balance structure of this construction throughout.

\subsection{Operator Fourier transform}
\label{subsec:operator-fourier}
We collect properties of the transform \eqref{eq:operator-fourier-transform-basic}, following Refs.~\cite{chen2023exact,chen2025learning,chen2025GibbsMarkov}.
The Fourier transform of the scalar Gaussian is
\begin{align}
    \widehat f_\sigma(\omega)
    =\frac{e^{-\omega^2/(4\sigma^2)}}{\sqrt{\sigma\sqrt{2\pi}}}.
    \label{eq:gaussian-fourier-pair-basic}
\end{align}
For $\vH=\sum_E E\vP_E$, define the Bohr components $\vA_\nu:=\sum_{E_2-E_1=\nu}\vP_{E_2}\vA\vP_{E_1}$. Then
\begin{align}
    \vA_{\vH}(t)=\sum_\nu e^{i\nu t}\vA_\nu,
    \qquad
    \widehat{\vA}_{\vH}(\omega)
    =\sum_\nu \vA_\nu\widehat f_\sigma(\omega-\nu),
    \label{eq:bohr-decomposition-basic}
\end{align}
and the reconstruction identity is
\begin{align}
    \vA=\frac{1}{\sqrt{2\sigma\sqrt{2\pi}}}
    \int_{-\infty}^{\infty}\widehat{\vA}_{\vH}(\omega)\,\mathrm d\omega.
    \label{eq:operator-ft-reconstruction-basic}
\end{align}
These identities and normalization conventions agree with~\cite{chen2023exact,chen2025learning,chen2025GibbsMarkov}.

\paragraph*{Imaginary-time conjugation.}

For the Gibbs state $\vrho\propto e^{-\beta\vH}$, Gaussian filtering converts imaginary-time conjugation into a frequency shift according to
\begin{align}
    \vrho^{1/2}\widehat{\vA}_{\vH}(\omega)\vrho^{-1/2}
    &=e^{-\beta\omega/2+\sigma^2\beta^2/4}
      \widehat{\vA}_{\vH}(\omega-\sigma^2\beta),
    \label{eq:imaginary-time-positive-basic}\\
    \vrho^{-1/2}\widehat{\vA}_{\vH}(\omega)\vrho^{1/2}
    &=e^{\beta\omega/2+\sigma^2\beta^2/4}
      \widehat{\vA}_{\vH}(\omega+\sigma^2\beta).
    \label{eq:imaginary-time-negative-basic}
\end{align}
Since \(\|\widehat{\vA}_{\vH}(\omega)\|\leq
\|\vA\|/[\sqrt{\sigma}(2\pi)^{1/4}]\), \Cref{eq:imaginary-time-positive-basic,eq:imaginary-time-negative-basic} imply, for \(|\omega|\leq\Omega\),
\begin{align}
    \left\|\vrho^{\pm1/2}\widehat{\vA}_{\vH}(\omega)\vrho^{\mp1/2}\right\|
    \leq
    \frac{e^{\beta\Omega/2+\sigma^2\beta^2/4}}
    {\sqrt{\sigma}(2\pi)^{1/4}}\,\|\vA\|.
    \label{eq:imaginary-time-norm-basic}
\end{align}
The frequency cutoffs in \Cref{sec:identifiability-bounds} bound these imaginary-time factors independently of the system size~\cite{chen2025learning,chen2025GibbsMarkov}.

The formulas above hold for general $0<\sigma\leq\beta^{-1}$. In the subsequent appendices, we set the scalar Fourier width to $\sigma=\beta^{-1}$. The notation $f_\beta$ denotes $f_{\beta^{-1}}$ from \eqref{eq:operator-fourier-transform-basic}; $\sigma_f$, when used, denotes this same scalar width.

\subsection{Lieb--Robinson bounds}

We use the truncation and perturbation estimates of Ref.~\cite[Lemmas B.1 and B.2]{chen2025learning}, which are based on Lieb--Robinson bounds~\cite{lieb1972finite}. For this subsection, we have similar definition requirements as in \Cref{defn:local-Hamiltonian}.

Let $\vA$ be an observable supported on a region $\mathsf{A}$. The interaction graph has one vertex for each set $X_\gamma$ and an edge between distinct vertices whose sets overlap. We write $\operatorname{dist}(\gamma,\mathsf{A})$ for the minimum graph distance from $\gamma$ to a vertex whose set intersects $\mathsf{A}$; it is infinite if no such path exists. Thus terms meeting $\mathsf{A}$ have distance zero. For an integer $\ell\geq1$, let $\vH_\ell$ retain precisely the terms at distance less than $\ell-1$ from $\mathsf{A}$. The corresponding site patch is $\mathsf{A}$ together with the full sets $X_\gamma$ of all retained terms. For a single-site jump $\vA^a$, we denote this patch by $S_a(\ell)$.

\begin{lem}[Lieb--Robinson truncation bound~{\cite[Lemma B.1]{chen2025learning}}]
\label{lem:basic-lr}
For a Hamiltonian with local terms as above, an observable $\vA$ supported on $\mathsf{A}$, every integer $\ell\geq1$, and every $t\in\mathbb R$,
\begin{align}
    \|\vA_{\vH_\ell}(t)-\vA_{\vH}(t)\|
    \lesssim \|\vA\|\min\left\{2,
    |\supp(\vA)|\frac{(2d|t|)^\ell}{\ell!}\right\}.
    \label{eq:basic-lr-bound}
\end{align}
\end{lem}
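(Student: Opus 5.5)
The plan is the standard interaction-picture comparison between the full evolution $e^{i\vH t}$ and the truncated evolution $e^{i\vH_\ell t}$, followed by a count of commutator paths. The trivial bound $2\|\vA\|$ holds because both evolutions are unitary, so the whole task is the second term in the minimum. Write
\begin{align}
\vA_{\vH}(t)-\vA_{\vH_\ell}(t)=\int_0^t \frac{\dd}{\dd s}\Bigl(e^{i\vH s}\bigl(\vA_{\vH_\ell}(t-s)\bigr)e^{-i\vH s}\Bigr)\dd s
=\int_0^t e^{i\vH s}\, i\bigl[\vH-\vH_\ell,\vA_{\vH_\ell}(t-s)\bigr]e^{-i\vH s}\dd s .
\end{align}
Taking norms gives $\|\vA_{\vH}(t)-\vA_{\vH_\ell}(t)\|\leq\int_0^{|t|}\|[\vH-\vH_\ell,\vA_{\vH_\ell}(s)]\|\dd s$. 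This reduces the problem to bounding how far $\vA$ spreads under the \emph{truncated} dynamics. Only terms at distance at least $\ell-1$ from $\mathsf A$ are missing from $\vH_\ell$.

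Next, expand $\vA_{\vH_\ell}(s)$ in the Dyson/Taylor series of nested commutators $\sum_m \frac{(is)^m}{m!}\mathrm{ad}_{\vH_\ell}^m(\vA)$. We use the standard Lieb--Robinson path-counting argument rather than convergence of this series, since the series need not converge. Concretely, with the interaction picture relative to the terms retained at each step, a nonzero commutator with a term $\vP_\gamma$ at distance $\geq\ell-1$ requires a chain of overlapping terms $\gamma_1,\dots,\gamma_{\ell-1}$ leading from $\mathsf A$ to $\gamma$. Each link contributes at most $2\|h_{\gamma_j}\vP_{\gamma_j}\|\leq 2$ and a time-ordered integral. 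The number of chains starting from a term meeting $\mathsf A$ is at most $|\supp(\vA)|\,d\cdot d^{\ell-1}$, by the degree bounds of \Cref{defn:local-Hamiltonian}. The time-ordered integrals give $|t|^{\ell}/\ell!$. Combining these factors yields the bound $\lesssim\|\vA\|\,|\supp(\vA)|\,(2d|t|)^\ell/\ell!$, with the extra factor of $d$ and absolute constants absorbed into $\lesssim$.

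The main obstacle is running the path-counting rigorously without an infinite series. I would do this via the standard iterative inequality for $C_\gamma(s):=\sup_{\|\vA\|=1}\|[\vP_\gamma,\vA_{\vH_\ell}(s)]\|$, namely $C_\gamma(s)\leq C_\gamma(0)+2\sum_{\gamma'\cap\gamma\neq\emptyset}\int_0^{s}C_{\gamma'}(r)\dd r$, where the sum runs over the terms $\gamma'$ of $\vH_\ell$ that overlap $\gamma$. Here $C_\gamma(0)=0$ unless $\gamma$ meets $\mathsf A$. Iterating this inequality $\ell-1$ times produces exactly the chain sum, and for a term at distance $\geq\ell-1$ the first $\ell-1$ iterates vanish. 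Summing over the missing terms adjacent to the truncation boundary, and using the outer $s$-integral, supplies the final factor $|t|/\ell$ in $|t|^\ell/\ell!$. Since this is precisely \cite[Lemma B.1]{chen2025learning}, I would cite that proof and only check that the distance and patch conventions match.
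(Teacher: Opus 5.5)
Your fallback of citing \cite[Lemma B.1]{chen2025learning} is exactly what the paper does, since it gives no proof of its own. Your Duhamel identity and your recursion for $C_\gamma$ are also correct, provided the supremum is restricted to $\vA$ supported on $\mathsf A$; over all norm-one $\vA$ one simply gets $C_\gamma\equiv2$. The self-contained sketch, however, has a gap at the claim that iterating $\ell-1$ times ``produces exactly the chain sum.'' After $\ell-1$ iterations the explicit terms do vanish. What is left is the remainder
\[
2^{\ell-1}\sum_{\gamma\to\gamma_1\to\cdots\to\gamma_{\ell-1}}\int_{0\le r_{\ell-1}\le\cdots\le r_1\le s}C_{\gamma_{\ell-1}}(r_{\ell-1})\,\dd r_{\ell-1}\cdots\dd r_1 ,
\]
taken over \emph{all} walks of length $\ell-1$ through retained terms, and $C_{\gamma_{\ell-1}}$ is generically nonzero. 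Suppose you bound it by $2\|\vA\|$. Then you count up to $d^{\ell-1}$ walks for each of the up to $|\mathsf A|d^\ell$ boundary terms, and you obtain $\|\vA\|\,|\mathsf A|\,(2d^2|t|)^\ell/(d\,\ell!)$. The extra factor $d^{\ell-1}$ cannot be absorbed into $\lesssim$, because those constants must not depend on $\ell$. Your earlier count, which keeps only minimal chains, is therefore a heuristic: it ignores walks that backtrack or linger.

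The repair is to iterate to convergence; for a single $\gamma$, the remainder after $m$ steps is at most $2\|\vA\|(2d|s|)^m/m!\to0$. Then only walks ending at a term meeting $\mathsf A$ survive. Count them in reverse: at most $d|\mathsf A|$ starting terms, and at most $d$ choices per step, including the final missing term. After the outer $s$-integral this gives
\[
\|\vA_{\vH}(t)-\vA_{\vH_\ell}(t)\|\leq\|\vA\|\,|\mathsf A|\sum_{m\geq\ell}\frac{(2d|t|)^m}{m!}.
\]
This tail contains all the non-minimal walks and grows like $e^{2d|t|}$, so it is not bounded by its first term uniformly in $t$. One more step is needed; put $x=2d|t|$.
\begin{itemize}
\item If $x\leq(\ell+1)/2$, then $\ell!/(\ell+j)!\leq(\ell+1)^{-j}$ gives $\sum_{m\geq\ell}x^m/m!\leq 2x^\ell/\ell!$.
\item If $x>(\ell+1)/2$, then AM--GM gives $\ell!\leq((\ell+1)/2)^\ell<x^\ell$, so $|\mathsf A|x^\ell/\ell!>1$. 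Hence the trivial bound $2\|\vA\|$ is already at most $2\|\vA\|\min\{2,|\mathsf A|x^\ell/\ell!\}$.
\end{itemize}
Together these cases give the lemma with an absolute constant. This case split is why the statement carries $\min\{2,\cdot\}$ and $\lesssim$.

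A related correction: the reason to avoid the Taylor series of $e^{is\,\mathrm{ad}_{\vH_\ell}}$ is not divergence, since it converges in finite dimension. The problem is that the nested-commutator count only yields $\|\mathrm{ad}_{\vH}^m(\vA)\|\leq(2d)^m m!$, which cancels the $1/m!$.
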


\makeatletter
\MT@newlabel{eq:termwise-perturbative-lr-basic}
\MT@newlabel{{eq:termwise-perturbative-lr-basic}}
\MT@newlabel{eq:perturbative-lr-basic}
\MT@newlabel{{eq:perturbative-lr-basic}}
\makeatother
\begin{lem}[Stability under local perturbations]
\label{lem:basic-lr-perturbation}
Let $\vF=\sum_{\gamma\in\Gamma}\vf_\gamma$ and $\vF'=\sum_{\gamma\in\Gamma}\vf'_\gamma$, where $\vf_\gamma$ and $\vf'_\gamma$ are Hermitian, supported on the same sets $X_\gamma$, and satisfy $\|\vf_\gamma\|,\|\vf'_\gamma\|\leq1$. Distances and the degree bound $d$ refer to the common interaction graph defined by these sets. Let $\vA$ be supported on $\mathsf{A}$ and set $r_\delta:=\operatorname{dist}(\delta,\mathsf{A})$. Then, for every real $t$,
\begin{align}
    &\|\vA_{\vF'}(t)-\vA_{\vF}(t)\|
    \nonumber\\
    &\quad\lesssim
    \frac{\|\vA\|\,|\mathsf{A}|}{d}
    \sum_{\substack{\delta\in\Gamma\\r_\delta<\infty}}
    \|\vf'_\delta-\vf_\delta\|
    \min\left\{2|t|,
    \frac{(2d|t|)^{r_\delta+1}}{(r_\delta+1)!}\right\}.
    \label{eq:termwise-perturbative-lr-basic}
\end{align}
Terms at infinite distance do not affect the evolution of $\vA$.
\end{lem}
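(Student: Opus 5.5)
We compare the two Heisenberg evolutions by interpolation. The Duhamel formula gives
\begin{align}
    \vA_{\vF'}(t)-\vA_{\vF}(t)
    =i\int_0^t e^{i(t-s)\vF'}\bigl[\vF'-\vF,\vA_{\vF}(s)\bigr]e^{-i(t-s)\vF'}\dd s .
\end{align}
Since conjugation by $e^{i(t-s)\vF'}$ is unitary, taking norms yields
\begin{align}
    \|\vA_{\vF'}(t)-\vA_{\vF}(t)\|\leq\sum_{\delta}\int_0^{|t|}\bigl\|[\vf'_\delta-\vf_\delta,\vA_{\vF}(s)]\bigr\|\dd s .
\end{align}
The problem is therefore reduced to bounding termwise commutators of a fixed local perturbation with the evolution under the single Hamiltonian $\vF$. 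Terms $\delta$ at infinite distance from $\mathsf{A}$ commute with every nested commutator of $\vA$ with terms of $\vF$, because they act on a disconnected component. Hence they commute with $\vA_{\vF}(s)$ and drop out, which proves the last sentence of the statement.

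For each $\delta$ with $r_\delta<\infty$, we use two bounds on the commutator and keep the smaller one. The trivial bound is $\|[\vf'_\delta-\vf_\delta,\vA_{\vF}(s)]\|\leq2\|\vf'_\delta-\vf_\delta\|\,\|\vA\|$; integrating it gives the $2|t|$ branch. For the decaying branch, we apply Lemma~\ref{lem:basic-lr} with $\ell=r_\delta+1$. The truncated Hamiltonian $\vF_\ell$ keeps only terms at distance less than $r_\delta$ from $\mathsf{A}$, so its patch does not contain $X_\delta$. This needs a short check against the distance convention, and it may require using $\ell=r_\delta$ and absorbing a factor $d$ into the constant. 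Then $[\vf'_\delta-\vf_\delta,\vA_{\vF_\ell}(s)]=0$, and
\begin{align}
    \|[\vf'_\delta-\vf_\delta,\vA_{\vF}(s)]\|
    \leq2\|\vf'_\delta-\vf_\delta\|\,\|\vA_{\vF}(s)-\vA_{\vF_\ell}(s)\|
    \lesssim\|\vf'_\delta-\vf_\delta\|\,\|\vA\|\,|\mathsf{A}|\frac{(2d s)^{r_\delta+1}}{(r_\delta+1)!}.
\end{align}
Integrating over $s\in[0,|t|]$ gives $(2d|t|)^{r_\delta+2}/[2d(r_\delta+2)!]$. This explains the prefactor $1/d$, and it is at most $(2d|t|)^{r_\delta+1}/(r_\delta+1)!$ times $|t|/(r_\delta+2)$. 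Combining with the trivial branch through the minimum, and noting that $|\mathsf{A}|\geq1$ lets us attach $|\mathsf{A}|/d$ to the trivial branch as well after adjusting constants, gives the claimed termwise sum.

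The main obstacle is bookkeeping rather than a conceptual difficulty. First, the index shift between the definition of $\vH_\ell$ (distance less than $\ell-1$) and the distance $r_\delta$ has to be matched carefully. Second, the integrated power $(2d|t|)^{r_\delta+2}/(2d(r_\delta+2)!)$ has to be reconciled with the stated $\min\{2|t|,(2d|t|)^{r_\delta+1}/(r_\delta+1)!\}$. I would handle both by a case split on $2d|t|\lesssim r_\delta+2$ versus larger. In the first case, the extra factor $|t|/(r_\delta+2)$ is bounded by $1/(2d)$, so the power-law branch is dominated by the claimed one. In the second case, the stated min is already governed by the $2|t|$ branch up to a factor of $d$. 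The dependence on $d$ is allowed, since $d$ is one of the fixed locality parameters absorbed into $\lesssim$.
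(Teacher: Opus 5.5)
Your route is the termwise interpolation argument of Ref.~\cite[Lemma~B.2]{chen2025learning} that the paper invokes: Duhamel interpolation between $\vF$ and $\vF'$, then a termwise Lieb--Robinson estimate via Lemma~\ref{lem:basic-lr}. Your handling of the trivial branch and of terms at infinite distance is fine.

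However, the main choice $\ell=r_\delta+1$ is wrong, not merely a bookkeeping issue. For $r_\delta\geq1$, a shortest path from $\delta$ to $\mathsf{A}$ passes through a term $\gamma$ adjacent to $\delta$ with $r_\gamma=r_\delta-1$. The truncation $\vF_{r_\delta+1}$ retains every term with $r_\gamma<r_\delta$, so $X_\delta$ always meets its patch. Hence $[\vf'_\delta-\vf_\delta,\vA_{\vF_{r_\delta+1}}(s)]$ need not vanish. The resulting power $(2d|t|)^{r_\delta+2}$ is one order too strong and is false. For $r_\delta=0$ it would give an $O(t^2)$ difference, although the first-order term $it[\vF'-\vF,\vA]$ is generically nonzero (e.g.\ $\vA=\vX_1$, $\vf_\delta=0$, $\vf'_\delta=\vZ_1$). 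Your case split therefore reconciles an invalid intermediate estimate with the statement.

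The fix is the alternative you mention: take $\ell=r_\delta$ for $r_\delta\geq1$.
\begin{itemize}
\item The truncation $\vF_\ell$ then retains only terms with $r_\gamma\leq r_\delta-2$. None of these can overlap $X_\delta$, since overlap would force $r_\delta\leq r_\gamma+1$.
\item Also $X_\delta\cap\mathsf{A}=\emptyset$, so the commutator with $\vA_{\vF_\ell}(s)$ vanishes.
\item Lemma~\ref{lem:basic-lr} gives $\|\vA_{\vF}(s)-\vA_{\vF_\ell}(s)\|\lesssim\|\vA\|\,|\mathsf{A}|\,(2d|s|)^{r_\delta}/r_\delta!$.
\item Integrating, $\int_0^{|t|}(2ds)^{r_\delta}/r_\delta!\,\dd s=(2d|t|)^{r_\delta+1}/[2d\,(r_\delta+1)!]$.
\end{itemize}
This is exactly the stated branch, including the prefactor $1/d$. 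No factor of $d$ has to be absorbed there, and no case split is needed. For $r_\delta=0$, use only the trivial branch. That suffices because the minimum then equals $2|t|$, at the cost of the fixed factor $d/|\mathsf{A}|\leq d$.
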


\begin{proof}[Proof]
Follow the termwise interpolation proof of Ref.~\cite[Lemma B.2]{chen2025learning}, retaining the factor $\|\vA\|\,|\mathsf{A}|$ from Lemma B.1 there to allow general support and normalization. Summing over the changed terms gives \eqref{eq:termwise-perturbative-lr-basic}.
\end{proof}

\clearpage
\begingroup
\makeatletter
\@for\BAppendixEquationLabel:=%
eq:local-metastability-basic,eq:intro-uniform-ADB-error,eq:relations-triangle%
\do{%
  \expandafter\MT@newlabel\expandafter{\BAppendixEquationLabel}%
  \edef\BAppendixEquationTag{{\BAppendixEquationLabel}}%
  \expandafter\MT@newlabel\expandafter{\BAppendixEquationTag}%
}
\makeatother
\endgroup
\section{Metastability and approximate stationarity}

\label{sec:adb-to-commutation}
This appendix introduces the notions of metastability and establishes the basic properties used in our learning analysis.

\Cref{subsec:metastability-notions} fixes the notation used throughout the appendices. \Cref{subsec:entropy-production-to-adb} bounds the approximate detailed balance (ADB) error, introduced in Ref.~\cite{bergamaschi2025metastability}, by a constant multiple of the entropy production rate. \Cref{subsec:thermal-lindbladian-to-adb} then bounds this error in terms of local and global metastability for use in the learning analysis. \Cref{subsec:adb-to-hamiltonian-stationarity} shows that metastable states are approximately stationary under Hamiltonian evolution. Finally, \Cref{sec:adb-appendix,subsec:learning-test-stationarity} establish the ADB and tested-commutator estimates for local operator products that we repeatedly use in the learning argument.

\subsection{Review of notions of metastability and their relations}
\label{subsec:metastability-notions}

The Gibbs state $\vrho$ describes thermal equilibrium and is an exact fixed point of the dynamics introduced in \Cref{subsec:thermal-lindbladian-definition}. Reaching it can nevertheless require a mixing time exponential in the system size. As discussed in \Cref{sec:main-results}, this motivates considering metastable states, whose evolution is slow even when they remain far from $\vrho$. Approximate stationarity can be attained without waiting for global equilibration. In particular, averaging the evolution over a uniformly chosen time in $[0,T]$ produces a state with global metastability error at most $2/T$~\cite[Lemma II.1]{bergamaschi2025metastability}. We introduce the metastability errors and local equilibrium conditions that quantify this behavior, following Ref.~\cite{bergamaschi2025metastability}.

\paragraph*{Global and local metastability.}
The global metastability error is defined by
\begin{align}
 {\epsilon_{\mathrm{ms}}^{\mathrm{glo}}}(\vsigma):=\|\CL[\vsigma]\|_1.
 \label{eq:metastability-definition-basic}
\end{align}
For the single-site Pauli jump family, the local metastability error is defined by the maximum over the thermal components,
\begin{align}
 {\epsilon_{\mathrm{ms}}^{\mathrm{loc}}}(\vsigma):=\max_{a\in\mathcal A}\|\CL_a[\vsigma]\|_1.
 \label{eq:local-metastability-basic}
\end{align}
Here locality refers to the quasi-local generators $\CL_a$. Each metastability error is measured in the full trace norm.

\paragraph*{Entropy production.}
The relative entropy $D(\vsigma\|\vrho)=\Tr[\vsigma(\log\vsigma-\log\vrho)]$ differs from $\beta$ times the free energy by a constant. Its decay therefore measures relaxation of the free energy. For a full-rank state, define
\begin{align}
 \mathsf{EP}_{\CL}[\vsigma]
 &:=-\left.\frac{d}{dt}D(e^{t\CL}[\vsigma]\|\vrho)\right|_{t=0}
 =-\Tr[\CL[\vsigma](\log\vsigma-\log\vrho)],
 \label{eq:entropy-production-definition}\\
 \mathsf{EP}_a[\vsigma]
 &:=-\Tr[\CL_a[\vsigma](\log\vsigma-\log\vrho)].
 \label{eq:local-entropy-production}
\end{align}
Each thermal component preserves $\vrho$, so its entropy production is nonnegative. Hamiltonian evolution leaves the relative entropy to $\vrho$ unchanged. Hence
\begin{align}
 \mathsf{EP}_{\CL}[\vsigma]=\sum_a\mathsf{EP}_a[\vsigma],
 \qquad \mathsf{EP}_a[\vsigma]\geq0.
 \label{eq:entropy-production-sum}
\end{align}
Ref.~\cite[Theorem B.1]{bergamaschi2025metastability} expresses each local entropy production rate $\mathsf{EP}_a$ as a nonnegative quadratic form in commutators of the relative logarithm $\log\vsigma-\log\vrho$. This representation connects dynamical metastability to the local equilibrium relation below. States with zero eigenvalues are treated by regularization in \Cref{subsec:thermal-lindbladian-to-adb}.

\paragraph*{Approximate detailed balance.}
In thermal equilibrium, a local transition and its reverse are related by a Boltzmann weight. Approximate detailed balance (ADB) expresses an approximate version of this relation for $\sqrt{\vsigma}$. We use the ADB error throughout the learning analysis to control the errors. With the Fourier convention of \eqref{eq:operator-fourier-transform-basic}, set $\vA^a(\omega,t)=(\widehat{\vA^a}_{\vH}(\omega))_{\vH}(t)$.
\begin{defn}
    [Approximate detailed balance~{\cite[Definition C.1]{bergamaschi2025metastability}}]\label{defn:adb}
    For a state $\vsigma$ and jump operator $\vA^a$, define
    \begin{equation}
        \mathsf{ADB}_a[\vsigma]
        :=\iint_{-\infty}^{\infty}\gamma(\omega)g(t)
        \times\Bigl\|\vA^a(\omega,t)\sqrt{\vsigma}
        -\sqrt{\vsigma}\vrho^{-\frac12}
          \vA^a(\omega,t)\vrho^{\frac12}\Bigr\|_2^2
          \,\rd\omega\,\rd t,
    \end{equation}
where $\gamma(\omega)$ is the shifted-Metropolis weight
    \begin{align}
         \gamma(\omega) = \exp\L(-\beta\max\left(\omega +\frac{\beta \sigma^2}{2},0\right)\R),\qquad
         g(t) = \frac{1}{\beta\cosh(2\pi t/ \beta)}.\label{eq:Metropolis}
    \end{align}
By Ref.~\cite[Lemma C.1]{bergamaschi2025metastability}, the corresponding time-independent defect satisfies
     \begin{align}
         \int_{-\infty}^{\infty} \lnormp{\vA^a(\omega)\sqrt{\vsigma}-\sqrt{\vsigma}\vrho^{-\frac{1}{2}}\vA^a(\omega)\vrho^{\frac{1}{2}}}{2}^2  \gamma(\omega) \rd \omega \lesssim \mathsf{ADB}_a[\vsigma].
     \end{align}
\end{defn}
We use the uniform single-site error
\begin{align}
 \epsilon_{\mathrm{ADB}}:=\max_a\mathsf{ADB}_a[\vsigma].
 \label{eq:intro-uniform-ADB-error}
\end{align}
The estimates extracted from this defect are collected in \Cref{sec:adb-appendix}.

\paragraph*{Hamiltonian stationarity.}
The stationarity error under Hamiltonian evolution is
\begin{align}
 \epsilon_H(\vsigma):=\|[\vH,\vsigma]\|_1.
 \label{eq:H-stationarity-basic}
\end{align}
It controls the change of the state under $\vH$ alone, as quantified in \Cref{lem:Hamiltonian-stationarity}.

For full-rank states, global metastability bounds every local entropy-production rate through the relative-logarithm estimate \eqref{eq:metastability-to-EP}; the same estimate with the corresponding single-jump metastability error bounds each rate separately. Small local entropy production implies ADB~\cite[Theorem C.1]{bergamaschi2025metastability}. Conversely, ADB implies local metastability~\cite[Theorem C.2]{bergamaschi2025metastability}. There is a constant $C_\beta\leq e^{\poly(\beta^{\pm1})}$ such that, for every state $\vsigma$ and every jump $\vA^a$,
\begin{align}
 \|\CL_a[\vsigma]\|_1\leq C_\beta\|\vA^a\|\sqrt{\mathsf{ADB}_a[\vsigma]}.
 \label{eq:adb-to-local-metastability}
\end{align}
The new implication is from local metastability to Hamiltonian stationarity, using the ADB estimates proved below. Together with
\begin{align}
 {\epsilon_{\mathrm{ms}}^{\mathrm{glo}}}(\vsigma)
 \leq\epsilon_H(\vsigma)+3n{\epsilon_{\mathrm{ms}}^{\mathrm{loc}}}(\vsigma),
 \label{eq:relations-triangle}
\end{align}
 this gives a quantitative local-to-global metastability bound once $\epsilon_H(\vsigma)$ is controlled by $\epsilon_{\mathrm{ms}}^{\mathrm{loc}}$. The reverse direction is established in Ref.~\cite[Remark C.1]{bergamaschi2025metastability}. \Cref{fig:metastability-relations} also shows how the ADB estimates enter the learning argument.

\begingroup
\makeatletter
\@for\BAppendixEquationLabel:=%
eq:EP-to-adb,eq:constant-fi-adb,%
eq:twirl-base-convolution,%
eq:twirled-adb-definition,%
eq:twirl-adb-weight,%
eq:twirl-original-fi,%
eq:twirl-gradient-identity,%
eq:twirl-base-comparison,%
eq:twirl-uniform-comparison,%
eq:twirl-fi-lower-envelope,%
eq:twirl-singular-part,%
eq:twirl-tail-part,%
eq:twirl-integrated-equivalence%
\do{%
  \expandafter\MT@newlabel\expandafter{\BAppendixEquationLabel}%
  \edef\BAppendixEquationTag{{\BAppendixEquationLabel}}%
  \expandafter\MT@newlabel\expandafter{\BAppendixEquationTag}%
}
\makeatother
\endgroup
\subsection{From entropy production rate to approximate detailed balance}
\label{subsec:entropy-production-to-adb}
For a full-rank state and a norm-one Hermitian jump, Ref.~\cite[Theorems B.1 and C.1]{bergamaschi2025metastability} proves
\begin{align}
 \mathsf{ADB}_a[\vsigma]\lesssim\mathsf{EP}_a[\vsigma]
 \left(1+\log\frac{\|\log\vsigma-\log\vrho\|^2}{\mathsf{EP}_a[\vsigma]}\right).
 \label{eq:EP-to-adb}
\end{align}
The logarithmic factor can diverge as eigenvalues of $\vsigma$ approach zero, so this estimate does not give a uniform constant-factor comparison. In \Cref{thm:constant-fi-adb}, we remove this factor and bound ADB by a constant multiple of entropy production, independently of the system size and the input state. The auxiliary functional below averages the original ADB time weight against the normalized Gaussian.

Throughout this subsection, the scalar Fourier width is $\sigma=\beta^{-1}$, and $f_\beta$ denotes $f_{\beta^{-1}}$ from \eqref{eq:operator-fourier-transform-basic}.

\begin{thm}[Constant comparison between entropy production and ADB]
\label{thm:constant-fi-adb}
There is a numerical constant $C_{\beta}\leq e^{\poly(\beta^{\pm 1})}$, independent of {$\vH$, the norm-one jump,} the system size and the input state, such that every full-rank density operator
$\vsigma$ satisfies
\begin{equation}
    \mathsf{ADB}_{a}(\vsigma)
    \leq C_{\beta}\mathsf{EP}_{a}
              [\vsigma].
    \label{eq:constant-fi-adb}
\end{equation}
{The entropy production is computed for the original single-jump Gaussian, shifted-Metropolis generator.}
\end{thm}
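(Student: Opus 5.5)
The plan is to start from the quadratic-form representation of entropy production in Ref.~\cite[Theorem B.1]{bergamaschi2025metastability}. For a full-rank $\vsigma$ and the relative logarithm $\vL:=\log\vsigma-\log\vrho$, it writes $\mathsf{EP}_a[\vsigma]$ as an integral over $\omega$ (weighted by $\gamma(\omega)$) of a nonnegative quadratic form. The integrand is built from operators of the form $\vA^a(\omega)\sqrt{\vsigma}-\sqrt{\vsigma}\vrho^{-1/2}\vA^a(\omega)\vrho^{1/2}$, after a change of variables. Concretely, in the joint eigenbasis of the left and right multiplication by $\vsigma$, with eigenvalues $s_i,s_j$ of $\vsigma$, the entropy production is a sum of terms
\[
(x_{ij}-y_{ij})(\log x_{ij}-\log y_{ij}),
\]
where $x_{ij}$ and $y_{ij}$ encode the "forward" and "backward" fluxes. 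The ADB defect, by contrast, is a sum of $(\sqrt{x_{ij}}-\sqrt{y_{ij}})^2$. The elementary inequality $(\sqrt x-\sqrt y)^2\le \tfrac14(x-y)(\log x-\log y)$ then gives an ADB-type bound pointwise, with \emph{no} dependence on $\|\vL\|$. This is the classical Hellinger-versus-log-difference comparison that replaces the logarithmic factor in \eqref{eq:EP-to-adb}.

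The difficulty is that the ADB defect in \Cref{defn:adb} is not literally this Hellinger-type quantity. It is a time-averaged quantity, with weight $g(t)$, of the Heisenberg-evolved Gaussian jumps $\vA^a(\omega,t)$. The EP representation naturally produces a different, "twirled" filter in $\omega$ and $t$. I would therefore introduce an auxiliary twirled ADB functional. It averages the original ADB time weight $g$ against the normalized Gaussian in $t$; the convolution structure hinted at by the labels (base convolution, twirl-adb-weight) suggests this is the right object. The proof then proceeds in two steps.

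\textbf{Step 1.} Show that the twirled ADB is bounded by a constant times $\mathsf{EP}_a$. Writing both in the Bohr-frequency basis, via \eqref{eq:bohr-decomposition-basic}, the twirled weight becomes a kernel $K(\nu,\omega)$. The EP weight is the product of a Metropolis-type factor and the operator-Fourier Gaussians. Pointwise domination of the twirled kernel by the EP kernel, combined with the Hellinger inequality above, gives the bound.

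\textbf{Step 2.} Compare the original ADB with the twirled ADB by a uniform lower envelope. The Gaussian convolution of $g$ is comparable to $g$ up to factors $e^{\poly(\beta)}$, because $g(t)=1/(\beta\cosh(2\pi t/\beta))$ decays only exponentially while the Gaussian has width $\beta$. I would split $\omega$ into a near-singular part, where the Metropolis kink at $\omega=-\beta\sigma^2/2$ matters, and a tail part. On each part I would bound the ratio of kernels explicitly. This yields $\mathsf{ADB}_a\le C_\beta\cdot\mathrm{twirledADB}_a$.

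I expect Step~1 to be the main obstacle. It requires checking that, after the KMS shift \eqref{eq:imaginary-time-negative-basic} converts $\vrho^{-1/2}\cdot\vrho^{1/2}$ into a frequency shift with factor $e^{\beta\omega/2}$, the EP kernel in each Bohr pair $(\nu,\nu')$ dominates the twirled ADB kernel uniformly in $\nu$ and $\nu'$. This uniformity must hold with no state dependence. The first thing I would try is to use Gaussian integral identities to compute both kernels in closed form. The Gaussian width $\sigma=\beta^{-1}$ makes the shifted Metropolis weight factorize nicely: $\gamma(\omega)e^{\beta\omega}$ becomes symmetric under $\omega\to-\omega-\beta\sigma^2$. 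Comparing the resulting Gaussian-times-exponential expressions should then give a constant $e^{\poly(\beta^{\pm1})}$ independent of $\vH$, $n$ and $\vsigma$.
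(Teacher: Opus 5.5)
Your Step~1 is where the whole theorem lives, and the mechanism you propose for it does not work for this generator. In the eigenbasis $\{s_i\}$ of $\vsigma$ the two functionals do split over pairs $(i,j)$, because $[\log\vrho,\vA_\nu]=-\beta\nu\vA_\nu$ for every Bohr component. However, the filtered jump is a superposition $\widehat{\vA}^a(\omega,t)=\sum_\nu\widehat f(\omega-\nu)e^{i\nu t}\vA_\nu$. The $(i,j)$ entry of the ADB defect is therefore $\sum_\nu\widehat f(\omega-\nu)e^{i\nu t}(\vA_\nu)_{ij}\bigl(\sqrt{s_j}-e^{\beta\nu/2}\sqrt{s_i}\bigr)$. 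The entropy gradient $[\widehat{\vA}^a(\omega,t),\log\vsigma-\log\vrho]$ has $\log s_j-\log s_i-\beta\nu$ in place of the bracket, and in both cases this sits inside the square.

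So there is no single pair $(x_{ij},y_{ij})$: the would-be $y$ is $e^{\beta\nu}s_i$, which depends on $\nu$. After the $\omega$ and $t$ integrals, both functionals are quadratic forms in $\bigl((\vA_\nu)_{ij}\bigr)_\nu$ with signed cross terms $\nu\neq\nu'$. The inequality $(\sqrt x-\sqrt y)^2\le\frac14(x-y)(\log x-\log y)$ controls at best the diagonal terms. Entrywise domination of one kernel in $(\nu,\nu')$ by another does not imply domination of the quadratic forms; that would need a positive-semidefinite ordering. Your picture is exactly right for a Davies generator with one Bohr frequency per jump, which is why it looks plausible. As it stands, Step~1 has no working argument, and the obstacle you anticipate, uniformity over Bohr pairs, is not the real one.

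The missing ingredient is the exact entropy-gradient representation of ADB in Ref.~\cite[Lemma~C.2]{bergamaschi2025metastability}. For full-rank $\vsigma$, $\mathsf{EP}_a$ (as the Fisher information, Theorem~B.1 there) and $\mathsf{ADB}_a$ are integrals over $(s,\omega,t)$ of the same nonnegative integrand $\|\vsigma^{(1+2s)/4}[\widehat{\vA}^a(\omega,t),\log\vsigma-\log\vrho]\vsigma^{(1-2s)/4}\|_2^2$. The weights are $h_s(\omega)g_s(t)$ and $h_s(\omega)g_s^{\mathsf{ADB}}(t)$ respectively, and a common nonnegative integrand is what makes a pointwise kernel comparison legitimate. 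The real obstruction is that $g_s^{\mathsf{ADB}}(t)\approx\frac{1}{2\pi\beta}\log\frac{4}{(2\pi t/\beta)^2+(\pi s)^2}$ diverges at $s=t=0$, while $g_s(t)$ stays near $1/\beta$. Cutting this singularity off is exactly what produces the logarithm in \eqref{eq:EP-to-adb}.

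You do land on the right auxiliary object, but the twirl is there to smooth this singular kernel, not to match a filter produced by EP. Applying the same identity to each rotated jump $e^{iu\vH}\vA^ae^{-iu\vH}$ gives the twirled ADB the kernel $|f_\beta|^2*g_s^{\mathsf{ADB}}$. Four facts then close the argument: $g_s^{\mathsf{ADB}}\le\cos(\pi s)\,g_0^{\mathsf{ADB}}$ (convexity of $x\mapsto\log\frac{1+x}{1-x}$); $g_s\ge\cos(\pi s)\,g$; local integrability of the logarithm on $|v|\le\beta$; and $g_0^{\mathsf{ADB}}\le\frac2\pi g$ on $|v|>\beta$. Together they yield $|f_\beta|^2*g_s^{\mathsf{ADB}}\le C_\beta\,g_s$ uniformly in $s$ and $t$.

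Your Step~2 is correct in substance but needs no splitting in $\omega$. The original and twirled ADB share both the integrand and the weight $\gamma(\omega)$. The pointwise bound $c_-g\le|f_\beta|^2*g\le c_+g$, which follows from $e^{-2\pi|u|/\beta}g(t)\le g(t-u)\le e^{2\pi|u|/\beta}g(t)$, already suffices.
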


To obtain \Cref{eq:EP-to-adb}, Ref.~\cite[Definition~B.1 and Theorem~B.1]{bergamaschi2025metastability} introduces the relative Fisher information, which equals the entropy production for full-rank states:
\begin{align}
    \mathsf{FI}_{a}^{\vH}[\vsigma\Vert\vrho_{\vH}]
    &=\int_{-\infty}^{\infty}\int_{-\infty}^{\infty}
      \int_{-1/2}^{1/2}h_s(\omega)g_s(t)
    \times\left\|
       \vsigma^{\frac{1+2s}{4}}
       [\widehat{\vA}_{\vH}^{a}(\omega,t),
          \log\vsigma-\log\vrho_{\vH}]
       \vsigma^{\frac{1-2s}{4}}
     \right\|_2^2
     \dd s\dd\omega\dd t.
    \label{eq:twirl-original-fi}
\end{align}
The scalar kernels used for FI and ADB are, for $|s|<1/2$,
\begin{align}
    h_s(\omega)
    &:=\exp\!\left(
          s\beta\omega-\frac{s^2}{2}-\frac18
          -\frac\beta2\left|\omega-\frac{s}{\beta}\right|
       \right),
    \label{eq:twirl-frequency-weight}\\
    g_s(t)
    &:=\frac{2\cos(\pi s)\cosh(2\pi t/\beta)}
       {\beta[\cosh(4\pi t/\beta)+\cos(2\pi s)]},
    \label{eq:twirl-fi-weight}\\
    g_s^{\mathsf{ADB}}(t)
    &:=\frac1{2\pi\beta}
       \log\frac{\cosh(2\pi t/\beta)+\cos(\pi s)}
                    {\cosh(2\pi t/\beta)-\cos(\pi s)}
      =\frac12\int_{-(1/2-|s|)}^{1/2-|s|}g_u(t)\dd u,
    \label{eq:twirl-adb-weight}
\end{align}
with $g_0=g$. By Ref.~\cite[Lemma~C.2]{bergamaschi2025metastability}, replacing $g_s(t)$ by $g_s^{\mathsf{ADB}}(t)$ in \eqref{eq:twirl-original-fi} gives ADB. Thus the comparison reduces to their kernel ratio, which diverges near $s=t=0$. Controlling this singularity by a cutoff produces the logarithmic factor in \Cref{eq:EP-to-adb}~\cite[Theorem~C.1]{bergamaschi2025metastability}.

\Cref{eq:constant-fi-adb} strengthens \Cref{eq:EP-to-adb} by replacing the logarithmic factor with a constant independent of the state and system size. To prove it, we introduce twirled ADB by Gaussian averaging of the ADB time weight. \Cref{lem:twirl-gradient} expresses twirled ADB using the same entropy gradients and frequency weights as FI, so only the time kernels need to be compared. \Cref{lem:twirl-base-comparison} shows that Gaussian averaging changes the original ADB weight by at most constant factors; hence a bound on twirled ADB also controls ADB. Finally, \Cref{lem:twirl-fi-weight-comparison} shows that the averaging controls the singularity: the resulting time kernel is bounded by a constant multiple of the FI kernel, uniformly in $s$ and $t$. Together with the identity between FI and entropy production, these comparisons give the constant bound in \Cref{eq:constant-fi-adb} and prove \Cref{thm:constant-fi-adb}.

\begin{defn}[Twirled approximate detailed balance]
\label{defn:twirled-adb}
Define
\begin{equation}
    \widetilde g(t)
    :=\int_{-\infty}^{\infty}|f_\beta(u)|^2g(t-u)\dd u.
    \label{eq:twirl-base-convolution}
\end{equation}
For any density operator $\vsigma$, its twirled ADB error is
\begin{align}
    \operatorname{tADB}_{\vH}^{a}(\vsigma)
    &:=\int_{-\infty}^{\infty}\int_{-\infty}^{\infty}
        \gamma(\omega)\widetilde g(t)
    \times\left\|
       \widehat{\vA}_{\vH}^{a}(\omega,t)\sqrt{\vsigma}
       -\sqrt{\vsigma}e^{\beta\vH/2}
          \widehat{\vA}_{\vH}^{a}(\omega,t)e^{-\beta\vH/2}
     \right\|_2^2
     \dd\omega\dd t.
    \label{eq:twirled-adb-definition}
\end{align}
The original $\operatorname{ADB}_{\vH}^{a}$ has the same integrand with
$g(t)$ in place of $\widetilde g(t)$.
\end{defn}

The twirled ADB error has a direct dynamical interpretation.
{Gaussian averaging of the original single-jump generator gives}
\begin{align}
    \widetilde{\mathcal L}_a^{\vH}[\vsigma]
    :=
    \int_{-\infty}^{\infty}|f_\beta(u)|^2
    e^{iu\vH}\mathcal L_a^{\vH}
    [e^{-iu\vH}\vsigma e^{iu\vH}]
    e^{-iu\vH}\dd u.
    \label{eq:twirled-thermalizing-generator}
\end{align}

The twirled ADB error is precisely the ADB error associated
with this generator, evaluated using its Gaussian mixture
of conjugated jump operators. Its ADB error is
\begin{align}
    \operatorname{tADB}_{\vH}^{a}(\vsigma)
    &=
    \int_{-\infty}^{\infty}\!
    \int_{-\infty}^{\infty}\!
    \int_{-\infty}^{\infty}
    |f_\beta(u)|^2\gamma(\omega)g(t)
    \times
    \left\|
    \widehat{\vA}_{\vH}^{a}(\omega,t+u)\sqrt{\vsigma}
    -
    \sqrt{\vsigma}e^{\beta\vH/2}
    \widehat{\vA}_{\vH}^{a}(\omega,t+u)e^{-\beta\vH/2}
    \right\|_2^2
    \dd t\dd\omega\dd u.
\end{align}
This recovers the definition of twirled ADB above by integrating over $u$.
Equivalently, unitary invariance of the Hilbert--Schmidt
norm gives
\begin{equation}
    \operatorname{tADB}_{\vH}^{a}(\vsigma)
    =
    \int_{-\infty}^{\infty}|f_\beta(u)|^2
    \operatorname{ADB}_{\vH}^{a}
    (e^{-iu\vH}\vsigma e^{iu\vH})\dd u.
\end{equation}
Thus, twirled ADB measures the original detailed-balance
defect averaged over Gaussian-distributed Hamiltonian
rotations of the input state.

The following lemma gives an entropy-gradient representation of twirled ADB, which we use to bound it by the Fisher information.

\begin{lem}[Entropy-gradient representation of twirled ADB]
\label{lem:twirl-gradient}
For a full-rank state $\vsigma$, define
\begin{equation}
    \widetilde g_s^{\mathsf{ADB}}(t)
    :=\int_{-\infty}^{\infty}
       |f_\beta(u)|^2g_s^{\mathsf{ADB}}(t-u)\dd u.
    \label{eq:twirl-gradient-convolution}
\end{equation}
Then
\begin{align}
    \operatorname{tADB}_{\vH}^{a}(\vsigma)
    &=\int_{-\infty}^{\infty}\int_{-\infty}^{\infty}
      \int_{-1/2}^{1/2}h_s(\omega)\widetilde g_s^{\mathsf{ADB}}(t)
    \times\left\|
       \vsigma^{\frac{1+2s}{4}}
       [\widehat{\vA}_{\vH}^{a}(\omega,t),
          \log\vsigma-\log\vrho_{\vH}]
       \vsigma^{\frac{1-2s}{4}}
     \right\|_2^2
     \dd s\dd\omega\dd t.
    \label{eq:twirl-gradient-identity}
\end{align}
\end{lem}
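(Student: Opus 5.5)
The plan is to reduce the statement to the untwirled identity of Ref.~\cite[Lemma~C.2]{bergamaschi2025metastability}, applied to Hamiltonian-rotated states, and then average over the Gaussian rotation. Recall the representation derived above,
\begin{equation}
    \operatorname{tADB}_{\vH}^{a}(\vsigma)
    =\int_{-\infty}^{\infty}|f_\beta(u)|^2
    \operatorname{ADB}_{\vH}^{a}(\vsigma_u)\dd u,
    \qquad
    \vsigma_u:=e^{-iu\vH}\vsigma e^{iu\vH}.
\end{equation}
Each $\vsigma_u$ is again a full-rank density operator. Ref.~\cite[Lemma~C.2]{bergamaschi2025metastability} then expresses $\operatorname{ADB}_{\vH}^{a}(\vsigma_u)$ as the FI-type integral \eqref{eq:twirl-original-fi}, with $g_s$ replaced by $g_s^{\mathsf{ADB}}$ and $\vsigma$ replaced by $\vsigma_u$.

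The next step is to undo the rotation inside that integrand. The Gibbs state $\vrho_{\vH}$ commutes with $e^{-iu\vH}$, so $\log\vsigma_u-\log\vrho_{\vH}=e^{-iu\vH}(\log\vsigma-\log\vrho_{\vH})e^{iu\vH}$, and likewise $\vsigma_u^{p}=e^{-iu\vH}\vsigma^{p}e^{iu\vH}$ for every power $p$. Conjugating the whole operator inside the Hilbert--Schmidt norm by $e^{iu\vH}$ leaves the norm unchanged. It turns $\widehat{\vA}_{\vH}^{a}(\omega,t)$ into $e^{iu\vH}\widehat{\vA}_{\vH}^{a}(\omega,t)e^{-iu\vH}=\widehat{\vA}_{\vH}^{a}(\omega,t+u)$, and restores $\vsigma$ everywhere else. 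The result is
\begin{equation}
    \operatorname{ADB}_{\vH}^{a}(\vsigma_u)
    =\iiint h_s(\omega)g_s^{\mathsf{ADB}}(t)
    \left\|\vsigma^{\frac{1+2s}{4}}[\widehat{\vA}_{\vH}^{a}(\omega,t+u),\log\vsigma-\log\vrho_{\vH}]\vsigma^{\frac{1-2s}{4}}\right\|_2^2\dd s\dd\omega\dd t.
\end{equation}

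Next, substitute $t\mapsto t-u$, so that the commutator involves $\widehat{\vA}_{\vH}^{a}(\omega,t)$ and the kernel becomes $g_s^{\mathsf{ADB}}(t-u)$. Then integrate against $|f_\beta(u)|^2$. Exchanging the $u$-integral with the $(s,\omega,t)$-integrals produces exactly $\widetilde g_s^{\mathsf{ADB}}(t)$ from \eqref{eq:twirl-gradient-convolution}, which gives \eqref{eq:twirl-gradient-identity}.

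The main obstacle is justifying the use of Tonelli's theorem for these interchanges. The integrand is nonnegative provided $g_s^{\mathsf{ADB}}\geq0$, and this holds by the explicit log formula in \eqref{eq:twirl-adb-weight}. The kernel $h_s\geq0$ is an exponential, and $|f_\beta|^2$ is a Gaussian.

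A second concern is the logarithmic singularity of $g_s^{\mathsf{ADB}}$ at $s=t=0$. It is integrable, but I will not need finiteness at all, since Tonelli applies to nonnegative measurable functions in $[0,\infty]$. On finite dimension, $\|\widehat{\vA}(\omega,t)\|$ is bounded and decays as a Gaussian in $\omega$, so the integrand is measurable and both sides are finite anyway.

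The only remaining check is that Lemma~C.2 of Ref.~\cite{bergamaschi2025metastability} uses the same conventions as here: shifted-Metropolis $\gamma$, $\sigma=\beta^{-1}$, and $e^{\beta\vH/2}$ conjugation, which matches $\vrho^{-1/2}$ up to normalization. Under these conventions it applies verbatim to $\vsigma_u$.
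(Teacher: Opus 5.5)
Your proposal is correct and follows the paper's argument: expand the Gaussian average, apply \cite[Lemma~C.2]{bergamaschi2025metastability} for each fixed $u$, shift $t$, and integrate over $u$ to produce $\widetilde g_s^{\mathsf{ADB}}$. The only difference is cosmetic. You put the rotation on the state $\vsigma_u$ and undo it by unitary invariance, whereas the paper applies Lemma~C.2 directly to the rotated Hermitian jump $e^{iu\vH}\vA^ae^{-iu\vH}$ with $\vsigma$ fixed; your version needs that lemma only for the original Pauli jump, and your Tonelli remark supplies the justification for interchanging integrals that the paper leaves implicit.
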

\begin{proof}
For each $u$, the rotated jump $e^{iu\vH}\vA^ae^{-iu\vH}$ is Hermitian
and has filtered evolution
$\widehat{\vA}_{\vH}^{a}(\omega,t+u)$.
Expanding \eqref{eq:twirl-base-convolution} in
\eqref{eq:twirled-adb-definition} and
applying the entropy-gradient identity of
\cite[Lemma~C.2]{bergamaschi2025metastability} separately to each
rotated jump gives
\begin{align}
    \int_{-\infty}^{\infty}|f_\beta(u)|^2
    \int_{-\infty}^{\infty}\int_{-\infty}^{\infty}
    \int_{-1/2}^{1/2}
     h_s(\omega)g_s^{\mathsf{ADB}}(t)
    \times\left\|
       \vsigma^{\frac{1+2s}{4}}
       [\widehat{\vA}_{\vH}^{a}(\omega,t+u),
          \log\vsigma-\log\vrho_{\vH}]
       \vsigma^{\frac{1-2s}{4}}
     \right\|_2^2
     \dd s\dd\omega\dd t\dd u.
    \label{eq:twirl-average-gradients}
\end{align}
Change variables from $t+u$ to $t$ and integrate over $u$.
The resulting weight is exactly
$\widetilde g_s^{\mathsf{ADB}}(t)$, proving
\eqref{eq:twirl-gradient-identity}.
\end{proof}

The next lemma compares the time weights of the original and twirled ADB and shows that they agree up to constant factors. This is the first step in relating the original ADB to Fisher information.

\begin{lem}[Comparison of the original and convolved time weights]
\label{lem:twirl-base-comparison}
There are {absolute} constants $0<c_-\leq c_+<\infty$ such that
\begin{equation}
    c_-g(t)\leq\widetilde g(t)\leq c_+g(t)
    \qquad\text{for every }t\in\mathbb R.
    \label{eq:twirl-base-comparison}
\end{equation}
\end{lem}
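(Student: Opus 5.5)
The claim is a two-sided comparison between $g(t)=1/(\beta\cosh(2\pi t/\beta))$ and its convolution $\widetilde g=|f_\beta|^2*g$. The convolving measure is a normalized Gaussian: with $\sigma=\beta^{-1}$ we have $|f_\beta(u)|^2=\sigma\sqrt{2/\pi}\,e^{-2\sigma^2u^2}$, which integrates to one and has width of order $\beta$. I will rescale $t=\beta x$ and $u=\beta y$. Then $\widetilde g(\beta x)/g(\beta x)$ equals
\begin{align}
\int_{\mathbb R}\sqrt{2/\pi}\,e^{-2y^2}\,\frac{\cosh(2\pi x)}{\cosh(2\pi(x-y))}\,\dd y .
\end{align}
This expression is independent of $\beta$, so it suffices to bound it above and below by absolute constants uniformly in $x\in\mathbb R$.

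\textbf{Pointwise estimate.} The key input is the elementary inequality
\begin{align}
e^{-2\pi|y|}\leq\frac{\cosh(2\pi x)}{\cosh(2\pi(x-y))}\leq 2e^{2\pi|y|},
\end{align}
valid for all $x,y$. It follows from $\tfrac12 e^{|z|}\leq\cosh z\leq e^{|z|}$ together with the triangle inequality $\bigl||x|-|x-y|\bigr|\leq|y|$. A slightly sharper form, $\cosh(a+b)\leq\cosh a\,e^{|b|}$, removes the factor $2$, but it is not needed.

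\textbf{Upper bound.} Integrating the upper pointwise bound against the Gaussian gives
\begin{align}
c_+=2\sqrt{2/\pi}\int_{\mathbb R} e^{-2y^2+2\pi|y|}\,\dd y<\infty .
\end{align}

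\textbf{Lower bound.} There are two routes. Integrating the lower pointwise bound gives
\begin{align}
c_-=\sqrt{2/\pi}\int_{\mathbb R} e^{-2y^2-2\pi|y|}\,\dd y>0 .
\end{align}
Alternatively, restrict the integral to $|y|\leq1$, where the ratio is at least $e^{-2\pi}$; this gives $c_-\geq e^{-2\pi}\sqrt{2/\pi}\int_{-1}^1e^{-2y^2}\,\dd y$.

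Both constants are absolute, since all dependence on $\beta$ was removed by the rescaling. The only point requiring care is that the ratio is controlled uniformly in $x$ as $|x|\to\infty$. This is exactly what the triangle-inequality form of the pointwise bound provides: the exponential tails of $g$ shift by at most $|y|$, and the Gaussian integrates the resulting factor $e^{2\pi|y|}$ to a finite constant. I expect no genuine obstacle in this proof.
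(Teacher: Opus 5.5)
Your proof is correct and follows the paper's route. The paper proves the pointwise shift comparison $e^{-2\pi|u|/\beta}g(t)\le g(t-u)\le e^{2\pi|u|/\beta}g(t)$ from the sharp inequality $\cosh x\le e^{|x-y|}\cosh y$ and integrates it against the normalized Gaussian; your rescaling just makes the $\beta$-independence of $c_\pm$ explicit, and your derivation differs only in the following slip: the crude bounds $\tfrac12e^{|z|}\le\cosh z\le e^{|z|}$ give only $\tfrac12e^{-2\pi|y|}$ as the lower pointwise bound, so either use the sharp form there or halve $c_-$, which is harmless.
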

\begin{proof}
The elementary inequality
$\cosh x\leq e^{|x-y|}\cosh y$, used in both directions, gives
\begin{equation}
    e^{-2\pi|u|/\beta}g(t)
    \leq g(t-u)
    \leq e^{2\pi|u|/\beta}g(t).
    \label{eq:twirl-shift-comparison}
\end{equation}
Multiply by $|f_\beta(u)|^2$ and integrate. The exponential factor given by $g$ could be suppressed by the Gaussian, so the integral is finite and could be bounded by a constant. This gives the desired comparison.
\end{proof}
The next lemma bounds the time weight in the entropy-gradient representation of twirled ADB by a constant multiple of the original FI time weight. This completes the comparison with Fisher information.

\begin{lem}[Uniform comparison with the FI time weight]
\label{lem:twirl-fi-weight-comparison}
There is a numerical constant ${C_{\beta}\leq e^{\poly(\beta^{\pm 1})}}$ such that
\begin{equation}
    \widetilde g_s^{\mathsf{ADB}}(t)\leq {C_{\beta}}g_s(t)
    \qquad
    \text{for every }|s|<1/2\text{ and }t\in\mathbb R.
    \label{eq:twirl-uniform-comparison}
\end{equation}
\end{lem}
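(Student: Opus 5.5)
The plan is to reduce \eqref{eq:twirl-uniform-comparison} to three elementary estimates on the explicit kernels. We use the variable $x:=2\pi t/\beta$ and the abbreviations $a:=\cos(\pi s)\in(0,1]$ and $c:=\cosh x\geq1$.

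\emph{Step 1: a lower bound on $g_s$.} Since $\cos(2\pi s)\leq1$, we have $\cosh(2x)+\cos(2\pi s)\leq\cosh(2x)+1=2\cosh^2x$. Substituting this into \eqref{eq:twirl-fi-weight} gives
\begin{equation*}
 g_s(t)\geq\frac{\cos(\pi s)}{\beta\cosh(2\pi t/\beta)}=\frac{a}{\beta c}
\end{equation*}
uniformly in $|s|<1/2$. The right-hand side is exactly $a\,g(t)$. It therefore suffices to show $\widetilde g_s^{\mathsf{ADB}}(t)\leq C_\beta\,a\,g(t)$.

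\emph{Step 2: a pointwise bound on $g_s^{\mathsf{ADB}}$, split into a regular part and a singular part.} Write $g_s^{\mathsf{ADB}}=\frac1{2\pi\beta}\log\bigl(1+\frac{2a}{c-a}\bigr)$. We split into two cases.
\begin{itemize}[leftmargin=*,topsep=2pt,itemsep=1pt]
\item If $c\geq2a$, then $c-a\geq c/2$. Using $\log(1+y)\leq y$ gives $g_s^{\mathsf{ADB}}(t)\leq\frac{2a}{\pi\beta c}=\frac{2a}{\pi}g(t)$.
\item If $c<2a$, then necessarily $a>1/2$, so $|s|<1/3$ and $\cosh x<2$, i.e.\ $|t|\lesssim\beta$. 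In this region we use $c-a\geq\cosh x-1\geq x^2/2$ and $c+a\leq 4$. This gives
\begin{equation*}
 g_s^{\mathsf{ADB}}(t)\leq\frac1{2\pi\beta}\Bigl(\log 8+2\log\frac{1}{|x|}\Bigr)\mathbbm 1_{\{|x|\leq x_0\}},
\end{equation*}
where $x_0:=\operatorname{arccosh}2$. This is a logarithmic singularity at $t=0$, integrable in $t$, and it is exactly the divergence of the kernel ratio at $s=t=0$ noted after \eqref{eq:twirl-adb-weight}.
\end{itemize}
Combining the two cases,
\begin{equation*}
 g_s^{\mathsf{ADB}}(t)\leq\frac{2a}{\pi}g(t)+\mathbbm 1_{\{a>1/2\}}\,\ell(t),
\end{equation*}
where $\ell(t)$ is the compactly supported logarithmic function above.

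\emph{Step 3: convolve each part with $|f_\beta|^2$.} For the regular part, the argument is the same as in \Cref{lem:twirl-base-comparison}: the shift inequality \eqref{eq:twirl-shift-comparison} yields $\int|f_\beta(u)|^2g(t-u)\,\dd u\leq c_+g(t)$. This gives a contribution $\leq\frac{2c_+}{\pi}a\,g(t)$.

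For the singular part, $\ell$ is supported in $|t-u|\leq\beta x_0/2\pi$ and $\|\ell\|_{L^1}=O(1)$, an absolute constant after rescaling $t$ by $\beta$. Hence
\begin{equation*}
 (|f_\beta|^2*\ell)(t)\leq\|\ell\|_{L^1}\sup_{|u-t|\lesssim\beta}|f_\beta(u)|^2\lesssim\frac1\beta\,e^{-2(|t|-O(\beta))_+^2/\beta^2}.
\end{equation*}
The Gaussian is dominated by a constant multiple of $e^{-2\pi|t|/\beta}\lesssim\beta\,g(t)$. This gives $(|f_\beta|^2*\ell)(t)\leq C\,g(t)$. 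Since this term only appears when $a>1/2$, we have $C\,g(t)\leq 2C\,a\,g(t)$. Adding the two parts and using Step 1 proves \eqref{eq:twirl-uniform-comparison} with a constant that is in fact absolute, hence certainly $\leq e^{\poly(\beta^{\pm1})}$. The scale $\beta$ cancels because $\sigma=\beta^{-1}$ makes every kernel a function of $t/\beta$ times $1/\beta$.

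\emph{Expected main obstacle.} The delicate point is the singular region $s\approx0$, $t\approx0$, where the kernel ratio $g_s^{\mathsf{ADB}}/g_s$ is unbounded. The proof must rely on the Gaussian averaging turning the $\log(1/|t|)$ singularity into a bounded quantity that is controlled by $g(t)$ uniformly in $s$. A second point to check is that near $|s|\to1/2$ the factor $a=\cos(\pi s)\to0$ appears on both sides. This is why the case split in Step 2 must keep the factor $a$ explicitly in the regular part, and why the singular part is confined to $a>1/2$.
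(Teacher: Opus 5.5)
Your proof is correct, and its skeleton matches the paper's. Both use the same lower envelope $g_s\geq\cos(\pi s)\,g$. Both then split the ADB kernel into two pieces: a part dominated by a multiple of $g$, handled by $\widetilde g\leq c_+g$, and a compactly supported, logarithmically singular part, handled by letting the Gaussian tail beat $e^{-2\pi|t|/\beta}\leq\beta g(t)$.

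The difference is in how you remove the $s$-dependence.

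\begin{itemize}
\item The paper uses convexity of $x\mapsto\log\frac{1+x}{1-x}$ on $[0,1)$, evaluated at $x=1/\cosh(2\pi t/\beta)$. This gives the pointwise bound $g_s^{\mathsf{ADB}}\leq\cos(\pi s)\,g_0^{\mathsf{ADB}}$, which survives convolution. After that, only the single $s$-independent kernel $g_0^{\mathsf{ADB}}$ needs analysis, and it is split at the fixed point $|v|=\beta$.
\item You instead split at the $s$-dependent threshold $\cosh x=2\cos(\pi s)$. You carry the factor $\cos(\pi s)$ explicitly in the regular part. You insert it for free in the singular part, since that part only occurs when $\cos(\pi s)>1/2$.
\end{itemize}

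The paper's convexity step is shorter and conceptually cleaner, since one monotone reduction eliminates $s$ entirely. Your case analysis is more hands-on, but it avoids the convexity argument and makes transparent why the degeneration $\cos(\pi s)\to0$ as $|s|\to1/2$ causes no trouble. Both routes yield an absolute constant, which is stronger than the stated $C_\beta\leq e^{\poly(\beta^{\pm1})}$.
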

\begin{proof}
We first reduce the comparison to the single weight
$g_0^{\mathsf{ADB}}$. Put $c=\cos(\pi s)\in(0,1]$.
We have
\begin{equation}
    g_s(t)
    =\frac{c\cosh(2\pi t/\beta)}
      {\beta[\cosh^2(2\pi t/\beta)-1+c^2]}
    \geq\frac{c}{\beta\cosh(2\pi t/\beta)}
    =c\,g(t).
    \label{eq:twirl-fi-lower-envelope}
\end{equation}
The function $x\mapsto\log((1+x)/(1-x))$ is convex on $[0,1)$
and vanishes at zero. Hence
 taking $x=1/\cosh(2\pi t/\beta)$ proves $g_s^{\mathsf{ADB}}(t)\leq c\,g_0^{\mathsf{ADB}}(t)$
for $t\neq0$, and hence almost everywhere. Convolution preserves
this inequality. Together with
\eqref{eq:twirl-fi-lower-envelope}, it follows that
\begin{equation}
    \frac{\widetilde g_s^{\mathsf{ADB}}(t)}{g_s(t)}
    \leq
    \frac{\displaystyle
       \int_{-\infty}^{\infty}
       |f_\beta(t-v)|^2g_0^{\mathsf{ADB}}(v)\dd v}{g(t)}.
    \label{eq:twirl-reduction-to-zero}
\end{equation}

We bound the numerator by separating $|v|\leq\beta$ and
$|v|>\beta$. The possible singularity occurs only in the first part.

For $|v|\leq\beta$, we have
$(t-v)^2\geq t^2/2-v^2\geq t^2/2-\beta^2$, and consequently
\begin{align}
    \int_{-\beta}^{\beta}
      |f_\beta(t-v)|^2g_0^{\mathsf{ADB}}(v)\dd v
    &\leq
      \frac{e^2}{\beta}\sqrt{\frac2\pi}
      e^{-t^2/\beta^2}
      \int_{-\beta}^{\beta}g_0^{\mathsf{ADB}}(v)\dd v
    \leq
      \frac14\sqrt{\frac2\pi}e^{2+\pi^2}g(t).
    \label{eq:twirl-singular-part}
\end{align}

For $|v|>\beta$, the number $x=1/\cosh(2\pi v/\beta)$ is
at most $1/2$. Since
$\log((1+x)/(1-x))\leq4x$ on this interval, we have $
    g_0^{\mathsf{ADB}}(v)\leq\frac2\pi g(v)$ for $|v|>\beta.$
Lemma~\ref{lem:twirl-base-comparison} therefore gives
\begin{align}
    &\int_{-\infty}^{-\beta}
        |f_\beta(t-v)|^2g_0^{\mathsf{ADB}}(v)\dd v
      +\int_{\beta}^{\infty}
        |f_\beta(t-v)|^2g_0^{\mathsf{ADB}}(v)\dd v
      \nonumber\\
    &\leq
      \frac2\pi\int_{-\infty}^{\infty}|f_\beta(t-v)|^2g(v)\dd v
      =\frac2\pi\widetilde g(t)
      \leq\frac{2c_+}{\pi}g(t).
    \label{eq:twirl-tail-part}
\end{align}
Combining \eqref{eq:twirl-singular-part} and
\eqref{eq:twirl-tail-part} proves
\eqref{eq:twirl-uniform-comparison}.
\end{proof}

\begin{proof}[Proof of Theorem~\ref{thm:constant-fi-adb}]
The original and twirled ADB definitions contain the same nonnegative
square-root-defect integrand. Integrating
\eqref{eq:twirl-base-comparison} against that integrand gives
\begin{equation}
    c_-\operatorname{ADB}_{\vH}^{a}(\vsigma)
    \leq\operatorname{tADB}_{\vH}^{a}(\vsigma)
    \leq c_+\operatorname{ADB}_{\vH}^{a}(\vsigma).
    \label{eq:twirl-integrated-equivalence}
\end{equation}
This part applies to every density operator, including singular states.
For full-rank $\vsigma$, use Lemma~\ref{lem:twirl-gradient} and
integrate the pointwise comparison
\eqref{eq:twirl-uniform-comparison} against the nonnegative
entropy-gradient integrand. By \eqref{eq:twirl-original-fi},
\begin{equation}
    \operatorname{tADB}_{\vH}^{a}(\vsigma)
    \leq {C_{\beta}}\mathsf{FI}_{a}^{\vH}[\vsigma\Vert\vrho_{\vH}].
\end{equation}
Combining the two inequalities yields a constant bound between the Fisher information and the ADB norm. At last, the entropy-dissipation identity
\cite[Theorem~B.1]{bergamaschi2025metastability} identifies this
Fisher information with $\mathsf{EP}_{a}^{\vH}[\vsigma]$.
{Enlarging $C_\beta\geq1$ to absorb $c_+$, $c_-^{-1}$, and the preceding comparison factors proves all claims.}
\end{proof}

\begingroup
\makeatletter
\@for\BAppendixEquationLabel:=%
eq:metastability-to-EP,%
eq:adb-to-local-metastability,%
eq:adb3-hamiltonian-result,%
eq:lindbladian-locality,%
eq:lindbladian-locality-split,%
eq:adb4-state-distance,%
eq:adb4-Delta,%
eq:adb4-bipartite-order,%
eq:adb4-gibbs-order,%
eq:adb4-u-definition,%
eq:u-ell,%
eq:adb4-extended-reference%
\do{%
  \expandafter\MT@newlabel\expandafter{\BAppendixEquationLabel}%
  \edef\BAppendixEquationTag{{\BAppendixEquationLabel}}%
  \expandafter\MT@newlabel\expandafter{\BAppendixEquationTag}%
}
\makeatother
\endgroup
\subsection{From dynamical metastability to approximate detailed balance}
\label{subsec:thermal-lindbladian-to-adb}
For a full-rank state, H\"older's inequality gives
\begin{align}
 \mathsf{EP}_{\CL}[\vsigma]
 \leq{\epsilon_{\mathrm{ms}}^{\mathrm{glo}}}(\vsigma)\|\log\vsigma-\log\vrho\|.
 \label{eq:metastability-to-EP}
\end{align}
The same estimate applies to each local component.

Without regularization, \Cref{eq:metastability-to-EP,eq:EP-to-adb} do not give a finite ADB bound when $\vsigma$ has zero eigenvalues, since $\|\log\vsigma\|$ is unbounded. Ref.~\cite[Appendix F, Lemma F.3]{bergamaschi2025metastability} handles this divergence by bounding ADB of a state regularized with the Gibbs state. In contrast, the two theorems below give bounds for ADB of the original state under the stated metastability assumptions, without requiring full rank. In particular, \Cref{thm:local-adb} uses locality and local regularization to obtain a bound independent of the system size, while \Cref{thm:global-adb} gives the corresponding global bound with linear dependence on $n$.
\begin{thm}[Uniform ADB from local metastability]
\label{thm:local-adb}
Let $\vH$ satisfy \Cref{defn:local-Hamiltonian}. For any density operator $\vsigma$, let ${\epsilon_{\mathrm{ms}}^{\mathrm{loc}}}=\max_a\|\mathcal L_a^{\vH}[\vsigma]\|_1$ be the local metastability error from \eqref{eq:local-metastability-basic}. If $0<{\epsilon_{\mathrm{ms}}^{\mathrm{loc}}}\leq1/2$, then
\begin{equation}
    \max_a\operatorname{ADB}_{\vH}^{a}(\vsigma)
    \leq {C_{\beta}}{\epsilon_{\mathrm{ms}}^{\mathrm{loc}}}
          \left[\log\frac e{{\epsilon_{\mathrm{ms}}^{\mathrm{loc}}}}\right]^{D+1}.
    \label{eq:main-adb}
\end{equation}
The constant depends only on $\beta,d,k,D$. {At ${\epsilon_{\mathrm{ms}}^{\mathrm{loc}}}=0$, the right-hand side is defined by its zero limit, and the bound remains valid.}
\end{thm}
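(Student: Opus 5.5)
Fix a jump $a$ and write $\epsilon:=\epsilon_{\mathrm{ms}}^{\mathrm{loc}}$. The approach is to run the entropy-production route of \Cref{thm:constant-fi-adb} on a \emph{locally} regularized state, and then transfer the bound back to $\vsigma$ using a continuity estimate for ADB.

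\textbf{Step 1 (local regularization).} I would fix a radius $\ell$, to be chosen later, and a patch $S=S_a(\ell)$ around the site of $\vA^a$. Let $\vrho_{S^c}$ denote a reference state on the complement, taken as a marginal-type Gibbs reference. Define
\begin{align}
\vsigma_r=(1-r)\vsigma+r\,\vtau_{S}\otimes\Tr_{S}\vsigma .
\end{align}
Alternatively, one could mix in $\vrho$ restricted so that only the patch is altered. The aim is an operator $\vsigma_r$ with three properties. It is full rank on the relevant region. It satisfies $\|\vsigma_r-\vsigma\|_1\le 2r$. And its relative logarithm, after replacing $\vH$ by the truncated $\vH_\ell$ and $\vrho$ by the local reference, obeys
\begin{align}
\|\log\vsigma_r-\log\vrho_{\mathrm{ref}}\|\lesssim \log(1/r)+|S|.
\end{align}
Since $|S|\sim\ell^D$, this log bound is independent of $n$.

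\textbf{Step 2 (ADB bound for the regularized state).} By \Cref{thm:constant-fi-adb}, applied to the truncated single-jump generator $\CL_a^{\vH_\ell}$, we get $\mathsf{ADB}_a\lesssim\mathsf{EP}_a$. Hölder's inequality, as in \eqref{eq:metastability-to-EP}, then gives
\begin{align}
\mathsf{EP}_a\le \|\CL_a^{\vH_\ell}[\vsigma_r]\|_1\,\|\log\vsigma_r-\log\vrho_{\mathrm{ref}}\|.
\end{align}
To bound the first factor, I would split
\begin{align}
\|\CL_a^{\vH_\ell}[\vsigma_r]\|_1\le \|\CL_a[\vsigma]\|_1+\|\CL_a\|_\diamond\,2r+\|\CL_a-\CL_a^{\vH_\ell}\|_\diamond .
\end{align}
The last term is the Lieb--Robinson truncation error from \Cref{lem:basic-lr}, integrated against the Gaussian filters. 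It decays like $e^{-c\ell^2}$, or at worst like $(2d t)^\ell/\ell!$ against a Gaussian in $t$. I also need to compare $\mathsf{ADB}_a$ for $\vH$ with that for $\vH_\ell$. This costs a quasi-local error of the same size, via \Cref{lem:basic-lr-perturbation} applied inside the Hilbert--Schmidt defect. Here one must use $\|\vX\sqrt{\vsigma}\|_2\le\|\vX\|$ and control $\vrho^{\mp1/2}$ conjugations using the frequency-shift identities \eqref{eq:imaginary-time-positive-basic}--\eqref{eq:imaginary-time-negative-basic}, together with a frequency cutoff and the decay of $\gamma$.

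\textbf{Step 3 (continuity).} I would prove a continuity lemma for ADB of the form
\begin{align}
|\mathsf{ADB}_a[\vsigma]-\mathsf{ADB}_a[\vsigma']|\lesssim \sqrt{\|\vsigma-\vsigma'\|_1}\ \text{or}\ \|\vsigma-\vsigma'\|_1\log(\cdot).
\end{align}
The route is the Powers--Størmer inequality $\|\sqrt{\vsigma}-\sqrt{\vsigma'}\|_2^2\le\|\vsigma-\vsigma'\|_1$, combined with boundedness of the filtered, conjugated jumps. Linear dependence is better, and I would aim for it by expanding the square and bounding cross terms by $\|\vX\|\,\|\vsigma-\vsigma'\|_1$ where possible. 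Combining the steps gives
\begin{align}
\mathsf{ADB}_a[\vsigma]\lesssim (\epsilon+r+e^{-c\ell})(\log(1/r)+\ell^D)+r .
\end{align}
Choosing $r=\epsilon$ and $\ell\sim\log(e/\epsilon)$ yields $\epsilon[\log(e/\epsilon)]^{D+1}$. The extra power of $\log$ comes from the product of $\log(1/r)$ with a patch-size factor, or from the $\ell$-dependence of constants.

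\textbf{Main obstacle.} I expect the hardest part to be Step 1 together with the continuity estimate. A naive square-root continuity bound would degrade the rate to $\sqrt{\epsilon}$. Moreover, the relative logarithm must be genuinely local: $\log\vsigma_r$ contains the global $\log\Tr_S\vsigma$, and this must cancel against the reference state. My plan is to choose the reference as $\vrho_{S}^{\ell}\otimes\Tr_S\vsigma$, where $\vrho_S^\ell$ is the Gibbs state of $\vH_\ell$. The commutator $[\widehat{\vA}^a(\omega,t),\log\vsigma_r-\log\vrho_{\mathrm{ref}}]$ then only sees the patch, up to tails. Establishing this cancellation rigorously is the key point.
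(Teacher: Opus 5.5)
Your architecture matches the paper's. You regularize with $\vrho_S\otimes\Tr_S\vsigma$, built from the Gibbs state of $\vH_\ell$, and bound entropy production for $\CL_a^{\vH_\ell}$ by H\"older with an $n$-independent relative logarithm. You then convert to ADB via \Cref{thm:constant-fi-adb}, compare $\vH$ with $\vH_\ell$ by Lieb--Robinson, and take $r=\epsilon$, $\ell\sim\log(e/\epsilon)$. The genuine gap is Step 3, on which your final inequality, with its additive $+r$, rests. Neither continuity statement you propose delivers it. A two-sided bound on $|\mathsf{ADB}_a[\vsigma]-\mathsf{ADB}_a[\vsigma']|$ cannot be linear, or of order $\|\vsigma-\vsigma'\|_1\log(\cdot)$, in the trace distance. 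The reason is that the cross term $\mathrm{Re}\Tr(\sqrt{\vsigma}\,\vX^\dagger\sqrt{\vsigma}\,\vY)$ produced by expanding the square is not Lipschitz near the boundary of state space, so it cannot be bounded by $\|\vX\|\,\|\vsigma-\vsigma'\|_1$. Already for $\vH=0$ on one qubit, $\mathsf{ADB}_{\vX}[\vsigma]$ equals a positive constant $c$ times $\frac12\|[\vX,\sqrt{\vsigma}]\|_2^2$. For $\vsigma=\mathrm{diag}(0,1)$ and $\vsigma_r=(1-r)\vsigma+r\vI/2$ this gives $c$ versus $c\bigl(1-2\sqrt{(r/2)(1-r/2)}\bigr)$. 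That is a gap of order $\sqrt r$, while $\|\vsigma-\vsigma_r\|_1=r$, and the unregularized state has the larger defect. So any difference estimate loses to $\sqrt{\epsilon}$, exactly as you feared.

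The missing idea is to compare square roots and then square asymmetrically. The defect $\vA^a(\omega,t)\sqrt{\vsigma}-\sqrt{\vsigma}\,\vrho^{-1/2}\vA^a(\omega,t)\vrho^{1/2}$ is linear in $\sqrt{\vsigma}$, so $\sqrt{\mathsf{ADB}_a}$ is a weighted $L^2$ norm of a linear map applied to $\sqrt{\vsigma}$. The reverse triangle inequality, Parseval in $\omega$, and $\gamma(\omega)e^{\beta\omega+1/2}\leq1$ for the conjugated branch then give $|\sqrt{\mathsf{ADB}_a[\vsigma]}-\sqrt{\mathsf{ADB}_a[\vsigma']}|\leq2\|\sqrt{\vsigma}-\sqrt{\vsigma'}\|_2\leq2\sqrt{\|\vsigma-\vsigma'\|_1}$, using Powers--St{\o}rmer (\Cref{lem:state-cont}). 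Hence $\mathsf{ADB}_a[\vsigma]\leq2\,\mathsf{ADB}_a[\vsigma_r]+16r$, which is linear in $r$ because the factor $2$ multiplies a quantity that is already small. Treat $\vH$ versus $\vH_\ell$ the same way (\Cref{lem:H-cont}).

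Separately, the cancellation you flag as the key point of Step 1 is exact, not up to tails. On $\mathcal H_S\otimes\operatorname{supp}(\Tr_S\vsigma)$, the reference is the Gibbs state of $\vH_\ell\otimes\vI-\beta^{-1}\vI\otimes\log\Tr_S\vsigma$. Its two commuting parts leave the filtered jumps, the coherent term, and imaginary-time conjugation of operators on $S$ identical to those built from $\vH_\ell$. Consequently $\CL_a^{\vH_\ell}$ annihilates the reference, the ADB functional relative to it is exactly $\mathsf{ADB}_{\vH_\ell}$, and \Cref{thm:constant-fi-adb} needs no locality of this reference Hamiltonian. Your relative-logarithm bound follows from two estimates: $\vsigma\leq2^{m}\vI_S\otimes\Tr_S\vsigma$ with $m=|S|$, and $\vrho_S\geq2^{-m}e^{-2\beta\|\vH_\ell\|}$. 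Together they sandwich the regularized state between $r$ and $2^{2m}e^{2\beta\|\vH_\ell\|}$ times the reference. Operator monotonicity of $\log$ finishes the bound.
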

For the ADB error under global metastability, we have the following bound.
\begin{thm}[Uniform ADB from global metastability]
\label{thm:global-adb}
Let $\vH$ satisfy \Cref{defn:local-Hamiltonian}. For any state $\vsigma$ with $0<\|\mathcal{L}^{\vH}[\vsigma]\|_1=\epsilon_{\mathrm{ms}}^{\mathrm{glo}}\leq1/2$, the uniform ADB error obeys
\begin{equation}
    \epsilon_{\mathrm{ADB}}
    \leq C_{\beta}\epsilon_{\mathrm{ms}}^{\mathrm{glo}}
    \left[n+\log\frac1{\epsilon_{\mathrm{ms}}^{\mathrm{glo}}}\right].
    \label{eq:global-ms-to-adb}
\end{equation}
The constant $C_{\beta}$ depends only on $\beta$ and the fixed locality parameters.
\end{thm}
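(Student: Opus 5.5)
We prove \Cref{thm:global-adb} by regularizing with the global Gibbs state, following the outline in \Cref{sec:proof-idea}. Set $r:=\epsilon_{\mathrm{ms}}^{\mathrm{glo}}\in(0,1/2]$ and consider the full-rank mixture $\vsigma_r:=(1-r)\vsigma+r\vrho$. Since $\CL[\vrho]=0$, linearity gives $\|\CL[\vsigma_r]\|_1=(1-r)\|\CL[\vsigma]\|_1\leq r$.

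The first step is to bound the relative logarithm. Operator monotonicity of the logarithm, together with $\vsigma_r\geq r\vrho$ and $\vsigma_r\leq\vsigma+\vrho$, gives a two-sided bound. On the lower side, $\log\vsigma_r-\log\vrho\geq\log r$. For the upper side we use $\log\vsigma_r\leq0$ and $-\log\vrho\leq\beta\|\vH\|+\log\Tr e^{-\beta\vH}\lesssim\beta\|\vH\|\lesssim n$, the last step holding because $\|\vH\|\leq|\Gamma|\lesssim n$. Together these give
\begin{align}
\|\log\vsigma_r-\log\vrho\|\lesssim n+\log(1/r).
\end{align}
Combining this with \eqref{eq:metastability-to-EP} and \eqref{eq:entropy-production-sum} yields
\begin{align}
\max_a\mathsf{EP}_a[\vsigma_r]\leq\mathsf{EP}_{\CL}[\vsigma_r]\lesssim r\,[n+\log(1/r)].
\end{align}
Applying \Cref{thm:constant-fi-adb} to the full-rank state $\vsigma_r$ then gives $\max_a\mathsf{ADB}_a[\vsigma_r]\lesssim r[n+\log(1/r)]$, with no extra logarithmic factor.

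The second step transfers this bound from $\vsigma_r$ back to $\vsigma$ using a continuity estimate for ADB; this is the promised \Cref{lem:state-cont}, which we would prove here. Write $\sqrt{\mathsf{ADB}_a[\vsigma]}$ as a weighted $L^2$ norm over $(\omega,t)$ of the linear map
\begin{align}
\vX\mapsto\vA^a(\omega,t)\vX-\vX\vrho^{-1/2}\vA^a(\omega,t)\vrho^{1/2}
\end{align}
applied to $\vX=\sqrt{\vsigma}$. By the triangle inequality,
\begin{align}
\sqrt{\mathsf{ADB}_a[\vsigma]}\leq\sqrt{\mathsf{ADB}_a[\vsigma_r]}+\Bigl(\iint\gamma g\,\|\vA^a(\omega,t)\vDelta-\vDelta\vrho^{-1/2}\vA^a(\omega,t)\vrho^{1/2}\|_2^2\Bigr)^{1/2},\qquad \vDelta:=\sqrt{\vsigma}-\sqrt{\vsigma_r}.
\end{align}
The first piece of the remainder is $\lesssim\|\vDelta\|_2$, since $\|\vA^a(\omega,t)\|$ is bounded and $\int\gamma\,\dd\omega$ and $\int g\,\dd t$ are finite. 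The second piece, $\vDelta\,\vrho^{-1/2}\vA^a(\omega,t)\vrho^{1/2}$, is the main obstacle, because the imaginary-time conjugation is unbounded at large $\omega$. We handle it with \eqref{eq:imaginary-time-negative-basic}, which converts the conjugation into the factor $e^{\beta\omega/2}$ times a frequency-shifted Fourier component. The Metropolis weight $\gamma(\omega)\sim e^{-\beta\omega}$ for $\omega>0$ exactly cancels the square $e^{\beta\omega}$. What remains is $\int\|\vDelta\widehat{\vA}(\omega')\|_2^2\,\dd\omega'$, which is $\lesssim\|\vDelta\|_2^2$ by the Bohr decomposition \eqref{eq:bohr-decomposition-basic} and the Gaussian Plancherel identity. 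This part is $n$-independent. For $\omega<0$ no cancellation is needed, since $e^{\beta\omega/2}$ is itself bounded.

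Finally, Powers--St{\o}rmer gives $\|\sqrt{\vsigma}-\sqrt{\vsigma_r}\|_2^2\leq\|\vsigma-\vsigma_r\|_1=r\|\vsigma-\vrho\|_1\leq2r$. Squaring the continuity estimate,
\begin{align}
\mathsf{ADB}_a[\vsigma]\leq2\,\mathsf{ADB}_a[\vsigma_r]+C_\beta r\lesssim r[n+\log(1/r)].
\end{align}
Maximizing over $a$ proves \eqref{eq:global-ms-to-adb}.
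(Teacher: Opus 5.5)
Your proposal follows the paper's proof of \Cref{thm:global-adb} essentially step for step. Both regularize with the Gibbs state, $\vsigma_r=(1-r)\vsigma+r\vrho$ with $r=\epsilon_{\mathrm{ms}}^{\mathrm{glo}}$, and bound $\|\log\vsigma_r-\log\vrho\|\lesssim n+\log(1/r)$ by operator monotonicity. Both then apply H\"older and \Cref{thm:constant-fi-adb} to the full-rank $\vsigma_r$, and return to $\vsigma$ through \Cref{lem:state-cont} and Powers--St{\o}rmer.

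However, one step of your re-derivation of the continuity estimate fails as written. Write $\vB:=\sqrt{\vsigma}-\sqrt{\vsigma_r}$. You bound the unconjugated piece $\iint\gamma(\omega)g(t)\|\vA^a(\omega,t)\vB\|_2^2\,\dd\omega\,\dd t$ by $\sup_\omega\|\vA^a(\omega,t)\|^2$ times $\int\gamma\,\dd\omega$. But the shifted-Metropolis weight equals $1$ for every $\omega\leq-\beta\sigma^2/2$, so $\int\gamma\,\dd\omega=\infty$ and this bound is vacuous. The repair is the same Gaussian Plancherel identity you already use for the conjugated piece. Drop the weight via $\gamma\leq1$ and note that, for each $t$,
\[
\int_{\mathbb R}\|\vA^a(\omega,t)\vB\|_2^2\,\dd\omega
=\int_{\mathbb R}|f_\sigma(u)|^2\,\|\vA^a_{\vH}(t+u)\vB\|_2^2\,\dd u
\leq\|\vB\|_2^2.
\]
This is exactly how the paper obtains the constant $2$ in \Cref{lem:state-cont}. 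Alternatively, the locality bound $\|\vA^a(\omega,t)\|\lesssim e^{-|\omega|/(4d)}$ from the proof of \Cref{lem:weak-high-frequency} makes $\int\|\vA^a(\omega,t)\|^2\,\dd\omega$ finite and $n$-independent, though the Plancherel route needs no locality.

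There is also a smaller slip. $\log\Tr e^{-\beta\vH}$ can be as large as $n\log2$ even when $\|\vH\|$ is small, so it is not $\lesssim\beta\|\vH\|$. Your conclusion $\lesssim n$ still holds, since $\log\Tr e^{-\beta\vH}\leq n\log2+\beta\|\vH\|$. With these corrections the argument is complete and coincides with the paper's.
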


The main argument is the proof of \Cref{thm:local-adb}, which we give first. \Cref{lem:state-cont} controls the change in the square root of ADB in terms of the trace-norm distance between density operators. By using this lemma we could bound the ADB error for states that are not even full rank. \Cref{lem:H-cont,lem:lindbladian-locality} then use locality to bound the changes in the ADB functional and the single-jump generator under Hamiltonian truncation. This allows us to relate local metastability to ADB using a Hamiltonian restricted to a neighborhood of the jump. \Cref{lem:adb4-relative-log,lem:adb4-regularized-ADB} regularize the state using the local thermal state tensored with the exterior marginal of the original state, rather than the global Gibbs state used in Ref.~\cite[Appendix F]{bergamaschi2025metastability}. This controls the relative logarithm without dependence on the full system size and yields an ADB bound for the regularized state. Finally, combining these estimates and choosing the truncation radius proves \Cref{thm:local-adb}.

\begin{lem}[Continuity in the input state]
\label{lem:state-cont}
For any density operators $\vsigma,\vxi$ and the fixed jump,
\begin{equation}
    \left|\sqrt{\operatorname{ADB}_{\vH}^{a}(\vsigma)}
          -\sqrt{\operatorname{ADB}_{\vH}^{a}(\vxi)}\right|
    \leq2\|\sqrt{\vsigma}-\sqrt{\vxi}\|_2
    \leq2\sqrt{\|\vsigma-\vxi\|_1}.
    \label{eq:state-cont}
\end{equation}
\end{lem}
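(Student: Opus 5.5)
The plan is to view $\sqrt{\operatorname{ADB}_{\vH}^{a}(\vsigma)}$ as a weighted $L^2$ norm of a map that is linear in $\vX:=\sqrt{\vsigma}$, and then apply the triangle inequality. For each $(\omega,t)$, define the linear map
\begin{align}
    \mathcal T_{\omega,t}[\vX]:=\vA^a(\omega,t)\vX-\vX\vrho^{-1/2}\vA^a(\omega,t)\vrho^{1/2}.
\end{align}
Then $\operatorname{ADB}_{\vH}^{a}(\vsigma)=\iint\gamma(\omega)g(t)\|\mathcal T_{\omega,t}[\sqrt{\vsigma}]\|_2^2\,\rd\omega\,\rd t$. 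This is the squared norm of $\mathcal T[\sqrt{\vsigma}]$ in the Hilbert space $L^2(\gamma(\omega)g(t)\,\rd\omega\,\rd t;\ \text{Hilbert--Schmidt})$. Minkowski's inequality in that space gives
\begin{align}
    \left|\sqrt{\operatorname{ADB}(\vsigma)}-\sqrt{\operatorname{ADB}(\vxi)}\right|\leq\Bigl(\iint\gamma g\,\|\mathcal T_{\omega,t}[\sqrt{\vsigma}-\sqrt{\vxi}]\|_2^2\Bigr)^{1/2}.
\end{align}
The task therefore reduces to showing that the right-hand side is at most $2\|\sqrt{\vsigma}-\sqrt{\vxi}\|_2$ for an arbitrary operator $\vY=\sqrt{\vsigma}-\sqrt{\vxi}$.

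For the bound on $\mathcal T$, I would split $\|\mathcal T[\vY]\|_2\le\|\vA^a(\omega,t)\vY\|_2+\|\vY\vrho^{-1/2}\vA^a(\omega,t)\vrho^{1/2}\|_2$. The imaginary-time factor cannot be bounded in operator norm pointwise in $\omega$, since $\|\vrho^{-1/2}\widehat{\vA}(\omega)\vrho^{1/2}\|$ grows like $e^{\beta\omega/2}$. The remedy is to integrate in $\omega$ before bounding, using the Bohr decomposition \eqref{eq:bohr-decomposition-basic}.

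For the first term, the quantity $\int\gamma(\omega)\,\vA^a(\omega,t)^\dagger\vA^a(\omega,t)\,\rd\omega$ is a positive operator. It is bounded above by $\int\widehat{\vA}(\omega)^\dagger\widehat{\vA}(\omega)\rd\omega=(\vA^{a})^2=\vI$, using $\gamma\le1$, Plancherel with $\int|f_\sigma|^2=1$, and the fact that $\vA^a$ is a Pauli. Hence the first term contributes at most $\|\vY\|_2^2\int g=\tfrac12\|\vY\|_2^2$.

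For the second term, I would use \eqref{eq:imaginary-time-negative-basic} to rewrite $\vrho^{-1/2}\vA^a(\omega,t)\vrho^{1/2}=e^{\beta\omega/2+\sigma^2\beta^2/4}\vA^a(\omega+\sigma^2\beta,t)$. The combined weight is then $\gamma(\omega)e^{\beta\omega+\sigma^2\beta^2/2}$. Substituting $\omega'=\omega+\sigma^2\beta$, this weight becomes $\gamma(\omega'-\sigma^2\beta)e^{\beta\omega'-\sigma^2\beta^2/2}$. By the shifted-Metropolis form of $\gamma$, this equals $\gamma(-\omega')$, which is at most $1$. This is exactly the KMS symmetry $\gamma(\omega)e^{\beta\omega+\beta^2\sigma^2/2}=\gamma(-\omega-\beta\sigma^2)$ that underlies detailed balance. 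The second term is then bounded exactly like the first, giving at most $\tfrac12\|\vY\|_2^2$ after the $t$-integration against $g$.

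I expect verifying this identity with the precise constants to be the main point requiring care. If the constants come out imperfect, I would absorb the slack, since the target constant $2$ leaves room. Combining the two terms, $\bigl(\iint\gamma g\|\mathcal T[\vY]\|_2^2\bigr)^{1/2}\le\sqrt{2\cdot(\tfrac12+\tfrac12)}\,\|\vY\|_2\le2\|\vY\|_2$, where the factor $2$ comes from $(x+y)^2\le2x^2+2y^2$. This proves the first inequality in \eqref{eq:state-cont}.

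The second inequality is the Powers--Størmer inequality, $\|\sqrt{\vsigma}-\sqrt{\vxi}\|_2^2\le\|\vsigma-\vxi\|_1$ for positive operators, which I would simply cite. No full-rank assumption enters anywhere, since $\vrho$ is always full rank and $\vsigma$ appears only through $\sqrt{\vsigma}$.
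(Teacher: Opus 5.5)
Your proposal is correct and follows the paper's proof essentially step for step. Both arguments use the following ingredients:
\begin{enumerate}
\item linearity of the defect in $\sqrt{\vsigma}$, with the reverse triangle inequality in the weighted Hilbert space;
\item the split via $(x+y)^2\leq 2x^2+2y^2$;
\item Plancherel with $\gamma\leq1$ for the unconjugated term;
\item the Gaussian imaginary-time shift with $\gamma(\omega)e^{\beta\omega+\sigma^2\beta^2/2}\leq1$ for the conjugated term;
\item Powers--St{\o}rmer for the second inequality.
\end{enumerate}
The differences are cosmetic: you phrase Plancherel at the operator level, identify the shifted weight exactly as $\gamma(-\omega')$, and use $\int g=1/2$ to obtain $\sqrt{2}$ rather than $2$. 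The one step you should state explicitly is that the conjugated term needs the reversed ordering $\int\widehat{\vA}(\omega)\widehat{\vA}(\omega)^\dagger\,\rd\omega=\vI$, which holds equally for a Pauli jump.
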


\begin{proof}
The operator equation is linear in its square-root input, so the reverse triangle inequality in Hilbert--Schmidt norm gives
\begin{align*}
    &\Bigl|\|\widehat{\vA}^{a}_{\vH}(\omega,t)
                    \sqrt{\vsigma}-\sqrt{\vsigma}e^{\beta\vH/2}
          \widehat{\vA}^{a}_{\vH}(\omega,t)e^{-\beta\vH/2}\|_2
    -\|\widehat{\vA}^{a}_{\vH}(\omega,t)
                    \sqrt{\vxi}-\sqrt{\vxi}e^{\beta\vH/2}
          \widehat{\vA}^{a}_{\vH}(\omega,t)e^{-\beta\vH/2}\|_2\Bigr|
          \\
    &\leq\left\|
          \widehat{\vA}^{a}_{\vH}(\omega,t)
                    (\sqrt{\vsigma}-\sqrt{\vxi})
          -(\sqrt{\vsigma}-\sqrt{\vxi})e^{\beta\vH/2}
          \widehat{\vA}^{a}_{\vH}(\omega,t)e^{-\beta\vH/2}
          \right\|_2.
\end{align*}
{Set $\vB=\sqrt{\vsigma}-\sqrt{\vxi}$. The reverse triangle inequality in the weighted Hilbert space yields}
\begin{align*}
    \left|\sqrt{\operatorname{ADB}_{\vH}^{a}(\vsigma)}
          -\sqrt{\operatorname{ADB}_{\vH}^{a}(\vxi)}\right|
    &\leq{\left[
      \int_{-\infty}^{\infty}\int_{-\infty}^{\infty}
          \gamma(\omega)g(t)\,
          \times\left\|
          \widehat{\vA}_{\vH}^{a}(\omega,t)\vB
          -\vB e^{\beta\vH/2}
          \widehat{\vA}_{\vH}^{a}(\omega,t)e^{-\beta\vH/2}
          \right\|_2^2\dd\omega\dd t
      \right]^{1/2}}.
\end{align*}
This is the weighted norm of the linear defect in $\vB$; it does not require $\vB$ to be positive. For the remaining estimates, abbreviate $\vrho=\vrho_{\vH}$, $\vA=\vA^a$, and $\vA^a(\omega,t)=\widehat{\vA}_{\vH}^{a}(\omega,t)$. At fixed $t$, expand the frequency integral:
\begin{align*}
    \int_{-\infty}^{\infty}
       \lnormp{\vA^a(\omega,t)\vB-\vB\vrho^{-\frac12}
                   \vA^a(\omega,t)\vrho^{\frac12}}{2}^2
       \gamma(\omega)\rd\omega
    \leq2\Bigg(\int_{-\infty}^{\infty}
       \lnormp{\vA^a(\omega,t)\vB}{2}^2
       \gamma(\omega)+
       \lnormp{\vB\vrho^{-\frac12}
       \vA^a(\omega,t)
       \vrho^{\frac12}}{2}^2\gamma(\omega)\rd\omega\Bigg).
\end{align*}
Since $0\leq\gamma(\omega)\leq1$, we can ignore the weight in the first integral. For the first term, we can expand $\|X\|_2^2=\Tr(XX^\dagger)$ and apply Parseval to obtain
\begin{align*}
    &\int_{-\infty}^{\infty}\Tr\Bigg[
       \Big(\frac1{\sqrt{2\pi}}\int_{\mathbb R}
       \vA_{\vH}(u+t)e^{-i\omega u}f_\sigma(u)\dd u\Big)
       \vB\vB^\dagger
       \times\Big(\frac1{\sqrt{2\pi}}\int_{\mathbb R}
       \vA_{\vH}(u'+t)e^{i\omega u^{{\prime}}}
       f_\sigma(u')\dd u'\Big)
       \Bigg]\rd\omega
    \\
    &=\int_{-\infty}^{\infty}|f_\sigma(u)|^2
       \Tr\Big(\vA_{\vH}(u+t)\vB\vB^\dagger
       \vA_{\vH}(u+t)\Big)\dd u
    \\
    &=\int_{-\infty}^{\infty}|f_\sigma(u)|^2
       \|\vA_{\vH}(u+t)\vB\|_2^2\dd u
    \leq\int_{-\infty}^{\infty}|f_\sigma(u)|^2
       \|\vB\|_2^2\dd u
       =\|\vB\|_2^2.
\end{align*}
The last inequality uses the fact that the Gaussian is normalized in $L^2$. For the conjugated contribution we have
\begin{align*}
    \int_{-\infty}^{\infty}
       \lnormp{\vB\vrho^{-\frac12}\vA^a(\omega,t)\vrho^{\frac12}}{2}^2
       \gamma(\omega)\rd\omega
    &=\int_{-\infty}^{\infty}
       \lnormp{\vB\vA^a(\omega+\beta^{-1},t)}{2}^2
       e^{\beta\omega+1/2}\gamma(\omega)\rd\omega
    \\
    &\leq\int_{-\infty}^{\infty}
       \lnormp{\vB\vA^a(\omega+\beta^{-1},t)}{2}^2\rd\omega
       \leq\|\vB\|_2^2.
\end{align*}
{Combining these estimates and using $\int_{-\infty}^{\infty}g(t)\dd t=1/2\leq1$ bounds the weighted squared defect by $4\|\vB\|_2^2$. Taking its square root proves the first inequality in \Cref{lem:state-cont}.}

{For the second inequality, apply the Powers--St{\o}rmer inequality}~\cite{powers1970free},
which holds for arbitrary positive semidefinite operators:
\begin{equation}
    \|\sqrt{\vsigma}-\sqrt{\vxi}\|_2^2
    =\tr\!\left(\vsigma+\vxi
       -2\sqrt{\vsigma}\sqrt{\vxi}\right)
    \leq\tr|\vsigma-\vxi|
    =\|\vsigma-\vxi\|_1.
\end{equation}
Neither commutativity nor a full-rank assumption is required.
\end{proof}

Both the single-jump thermalization generator and the corresponding ADB functional are quasi-local, suggesting a bound independent of the system size. We make this intuition precise using spatial truncation. To prove \Cref{thm:local-adb}, we use the following lemmas. For each single-site Pauli jump $\vA^a$ and integer radius $\ell\geq1$, let $\vH_\ell$ retain the Hamiltonian terms at interaction-graph distance less than $\ell-1$ from the jump site, using the convention in Appendix~\ref{sec:basic-notation}. Let $S_a(\ell)$ contain that site and the full support of every retained term. We first compare the ADB functionals constructed from $\vH$ and $\vH_\ell$.

\begin{lem}[Centered Hamiltonian comparison]
\label{lem:H-cont}
There exists $C_{\beta}\leq e^{\poly(\beta^{\pm 1})}$ such that for any density operator $\vsigma$,
\begin{align}
    \left|\sqrt{\operatorname{ADB}_{\vH}^{a}(\vsigma)}
          -\sqrt{\operatorname{ADB}_{\vH_{{\ell}}}^{a}(\vsigma)}\right|^2
    &\leq {C_{\beta}}
          \left(4^{-\ell}+e^{-\pi\ell/(4ed\beta)}\right).
    \label{eq:adb3-hamiltonian-result}
\end{align}
\end{lem}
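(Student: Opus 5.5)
The plan is to repeat the reverse-triangle argument of \Cref{lem:state-cont}, but now varying the Hamiltonian in the jump and in the Gibbs conjugation while keeping the state fixed. Both ADB functionals use the same nonnegative weights $\gamma(\omega)g(t)$. Hence $\sqrt{\operatorname{ADB}}$ is a weighted $L^2$ norm of the defect, and the reverse triangle inequality bounds $|\sqrt{\operatorname{ADB}_{\vH}^{a}(\vsigma)}-\sqrt{\operatorname{ADB}_{\vH_\ell}^{a}(\vsigma)}|^2$ by
\begin{align}
    2\iint\gamma(\omega)g(t)\Bigl(\bigl\|\Delta_1(\omega,t)\sqrt{\vsigma}\bigr\|_2^2+\bigl\|\sqrt{\vsigma}\,\Delta_2(\omega,t)\bigr\|_2^2\Bigr)\dd\omega\dd t .
\end{align}
The two differences are
\begin{align}
    \Delta_1&=\widehat{\vA}^a_{\vH}(\omega,t)-\widehat{\vA}^a_{\vH_\ell}(\omega,t),\\
    \Delta_2&=\vrho^{-1/2}\widehat{\vA}^a_{\vH}(\omega,t)\vrho^{1/2}-\vrho_{\ell}^{-1/2}\widehat{\vA}^a_{\vH_\ell}(\omega,t)\vrho_\ell^{1/2},
\end{align}
where $\vrho_\ell\propto e^{-\beta\vH_\ell}$. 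The Gibbs normalizations cancel in the conjugations, so only $e^{\pm\beta\vH/2}$ and $e^{\pm\beta\vH_\ell/2}$ enter.

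For $\Delta_2$, I would first apply the imaginary-time identity \eqref{eq:imaginary-time-negative-basic} with $\sigma=\beta^{-1}$. It holds for any Hamiltonian, in particular for both $\vH$ and $\vH_\ell$, and commutes with the real-time evolution. This gives $\Delta_2(\omega,t)=e^{\beta\omega/2+1/4}\Delta_1(\omega+\beta^{-1},t)$. The weight becomes $\gamma(\omega)e^{\beta\omega+1/2}$, which is bounded by an absolute constant because $\gamma$ exactly cancels $e^{\beta\omega}$ for $\omega\geq-\beta\sigma^2/2$. After a shift of $\omega$, both terms therefore reduce to
\begin{align}
    \int\bigl\|\Delta_1(\omega,t)\vY\bigr\|_2^2\dd\omega ,
\end{align}
with $\vY=\sqrt{\vsigma}$ on one side or the other; the left-multiplication case is handled by the same computation with adjoints.

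Next I would apply the Parseval computation from the proof of \Cref{lem:state-cont}, now to the difference of the two evolutions. This gives
\begin{align}
    \int\|\Delta_1(\omega,t)\sqrt{\vsigma}\|_2^2\dd\omega
    =\int|f_\sigma(u)|^2\bigl\|(\vA_{\vH}(u+t)-\vA_{\vH_\ell}(u+t))\sqrt{\vsigma}\bigr\|_2^2\dd u
    \leq\int|f_\sigma(u)|^2\,\|\vA_{\vH}(u+t)-\vA_{\vH_\ell}(u+t)\|^2\dd u,
\end{align}
using $\|\sqrt{\vsigma}\|_2=1$. The state thus disappears entirely, and the problem becomes a scalar estimate: bound
\begin{align}
    \iint g(t)|f_\sigma(u)|^2 \min\Bigl\{2,\tfrac{(2d|u+t|)^\ell}{\ell!}\Bigr\}^2\dd u\dd t
\end{align}
using \Cref{lem:basic-lr} with $|\supp(\vA^a)|=1$.

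The remaining step, which I expect to be the only delicate one, is choosing the cutoff. I would split at $|u+t|\leq T$ with $T=\ell/(8ed)$. On this region, $\ell!\geq(\ell/e)^\ell$ gives a Lieb--Robinson factor at most $2^{-\ell}$, whose square is $4^{-\ell}$; the remaining weights integrate to a constant. On the complement I use the bound $2$. Since $|u+t|>T$ forces $|t|>T/2$ or $|u|>T/2$, this region contributes at most
\begin{align}
    \int_{|t|>T/2}g+\int_{|u|>T/2}|f_\beta|^2 \lesssim e^{-\pi T/\beta}+e^{-T^2/(2\beta^2)} .
\end{align}
The Gaussian tail is dominated, up to $C_\beta$, by $e^{-\pi\ell/(8ed\beta)}$. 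To match the stated exponent $e^{-\pi\ell/(4ed\beta)}$ I would instead take $T=\ell/(4ed)$, at the cost of LR factor $(1/2)^{\ell}$ appearing only after absorbing constants, adjusting as needed; the constant prefactors are absorbed into $C_\beta\leq e^{\poly(\beta^{\pm1})}$, which gives \eqref{eq:adb3-hamiltonian-result}.
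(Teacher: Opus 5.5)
Your proof is correct and is essentially the paper's own argument: the same reverse-triangle split into $\Delta_1\sqrt{\vsigma}$ and $\sqrt{\vsigma}\,\Delta_2$, the same imaginary-time shift with $\gamma(\omega)e^{\beta\omega+1/2}\leq1$, Parseval to eliminate $\vsigma$, and Lieb--Robinson truncation. The one difference is that the paper merges the two time variables into $\widetilde g$ and invokes $\widetilde g\lesssim g$ from \Cref{lem:twirl-base-comparison}, whereas you use a union bound over $|t|>T/2$ or $|u|>T/2$. Your closing hedge is unnecessary: the choice $T=\ell/(4ed)$ gives $(2dT)^\ell/\ell!\leq(\ell/2e)^\ell(e/\ell)^\ell=2^{-\ell}$ directly, and a tail $\lesssim e^{-\pi T/\beta}=e^{-\pi\ell/(4ed\beta)}$, where the Gaussian tail is at most $e^{\pi^2/2}$ times this, which is exactly the stated bound.
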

\begin{proof}
The difference between the two operator defects is
\begin{align}
    &(\widehat{\vA}^{a}_{\vH}(\omega,t)\sqrt{\vsigma}
       -\sqrt{\vsigma}e^{\beta\vH/2}
           \widehat{\vA}^{a}_{\vH}(\omega,t)e^{-\beta\vH/2})
       -(\widehat{\vA}^{a}_{\vH_{{\ell}}}(\omega,t)\sqrt{\vsigma}
       -\sqrt{\vsigma}e^{\beta\vH_{{\ell}}/2}
           \widehat{\vA}^{a}_{\vH_{{\ell}}}(\omega,t)e^{-\beta\vH_{{\ell}}/2})
       \\
    &=\left(\widehat{\vA}^{a}_{\vH}(\omega,t)
             -\widehat{\vA}^{a}_{\vH_{{\ell}}}(\omega,t)\right)\sqrt{\vsigma}
    -e^{\beta\omega/2+1/4}\sqrt{\vsigma}
       \left(\widehat{\vA}^{a}_{\vH}(\omega+\beta^{-1},t)
             -\widehat{\vA}^{a}_{\vH_{{\ell}}}(\omega+\beta^{-1},t)\right).
    \label{eq:adb3-explicit-hamiltonian-difference}
\end{align}
For the first term, $\|\sqrt{\vsigma}\|_2=1$ {gives}
\begin{align}
    \int_{\mathbb R}\gamma(\omega)
       \left\|\left(\widehat{\vA}^{a}_{\vH}(\omega,t)
              -\widehat{\vA}^{a}_{\vH_{{\ell}}}(\omega,t)\right)
              \sqrt{\vsigma}\right\|_2^2\dd\omega
    &\leq\int_{\mathbb R}|f_\beta(u)|^2
       \left\|\left(\vA^a_{\vH}(t+u)-\vA^a_{\vH_{{\ell}}}(t+u)\right)
                   \sqrt{\vsigma}\right\|_2^2\dd u\nonumber\\
    &\leq\int_{\mathbb R}|f_\beta(u)|^2
       \|\vA^a_{\vH}(t+u)-\vA^a_{\vH_{{\ell}}}(t+u)\|^2\dd u.
    \label{eq:adb3-difference-parseval}
\end{align}
The second term satisfies the same bound after shifting $\omega\mapsto\omega+\beta^{-1}$, since $\gamma(\omega)e^{\beta\omega+1/2}\leq1$. The reverse triangle inequality and \eqref{eq:adb3-difference-parseval} therefore bound the left-hand side of \eqref{eq:adb3-hamiltonian-result} by
\[
    4\int_{\mathbb R}\widetilde g(t)
    \|\vA^a_{\vH}(t)-\vA^a_{\vH_\ell}(t)\|^2\dd t.
\]
By \Cref{lem:basic-lr}, the squared norm is at most $C_{\beta}4^{-\ell}$ for $|t|\leq\ell/(4ed)$, and at most $4$ otherwise. Since \Cref{lem:twirl-base-comparison} gives $\widetilde g(t)\lesssim g(t)\lesssim\beta^{-1}e^{-\pi|t|/\beta}$, splitting the integral at this cutoff proves \eqref{eq:adb3-hamiltonian-result}.
\end{proof}
\Cref{lem:H-cont} transfers a bound for the localized ADB functional to the one constructed from $\vH$. We localize the Hamiltonian entering the central single-jump generator while retaining the full state and its correlations with the exterior. The next lemma compares the single-jump metastability errors for $\vH$ and $\vH_\ell$, evaluated on the full state.

\begin{lem}[Locality of the single-jump thermal generator]
\label{lem:lindbladian-locality}
Let
$\mathcal L_{{a}}^{{\vH}}$ and $\mathcal L_{{a}}^{{\vH_\ell}}$ be the
corresponding Gaussian, shifted-Metropolis KMS-detailed-balanced
generators, with width $\beta^{-1}$ and the same unit jump rate.
Both generators act on the full system. There exists a constant $C_{\beta}\leq e^{\poly(\beta^{\pm 1})}$ such that
for {every integer} ${\ell}\geq\max\{1,4e^2d\beta\}$ and every density operator
$\vsigma$,
\begin{align}
    \left|
    \|\mathcal L_{{a}}^{{\vH}}(\vsigma)\|_1
    -\|\mathcal L_{{a}}^{{\vH_\ell}}(\vsigma)\|_1
    \right|
    &\leq
    \|(\mathcal L_{{a}}^{{\vH}}
       -\mathcal L_{{a}}^{{\vH_\ell}})(\vsigma)\|_1
    \leq {C_{\beta}}\left(2^{-{\ell}}+e^{-c{\ell}/(d\beta)}\right).
    \label{eq:lindbladian-locality}
\end{align}
The constants are independent of $n,{\ell}$, and $\vsigma$.
\end{lem}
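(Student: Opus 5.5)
The first inequality in \eqref{eq:lindbladian-locality} is the reverse triangle inequality for the trace norm, so the work is in bounding $\|(\mathcal L_a^{\vH}-\mathcal L_a^{\vH_\ell})(\vsigma)\|_1$ uniformly in $\vsigma$. I will write both generators in the form \eqref{eq:local-thermal-lindbladian}. The difference then splits into three pieces: the coherent parts $-i[\vC^a_{\vH}-\vC^a_{\vH_\ell},\vsigma]$, the sandwich parts $\int\gamma(\omega)\bigl(\widehat{\vA}(\omega)\vsigma\widehat{\vA}(\omega)^\dagger-\widehat{\vA}_\ell(\omega)\vsigma\widehat{\vA}_\ell(\omega)^\dagger\bigr)\dd\omega$, and the anticommutator parts built from $\int\gamma(\omega)\widehat{\vA}(\omega)^\dagger\widehat{\vA}(\omega)\dd\omega$. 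Here $\widehat{\vA}_\ell$ denotes the transform under $\vH_\ell$.

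\textbf{Anticommutator part.} The trace norm of this piece is at most $\|\vsigma\|_1=1$ times the operator norm of the difference of the two operators $\int\gamma\,\widehat{\vA}^\dagger\widehat{\vA}\,\dd\omega$. For each operator, I expand the Fourier transform and integrate over $\omega$ first. Since $\gamma$ is bounded, the operator becomes a double time integral
\[
\iint \vA_{\vH}(t_1)\vA_{\vH}(t_2)\,f(t_1)f(t_2)\,K_\gamma(t_1-t_2)\,\dd t_1\dd t_2 ,
\]
where $K_\gamma$ is built from the Fourier transform of $\gamma$. Because $\gamma$ is a Metropolis weight, $K_\gamma$ is only a distribution, so I will instead use the standard closed-form representation from \cite{chen2023exact}: the operator is a Gaussian-weighted integral of products $\vA_{\vH}(t_1)\vA_{\vH}(t_2)$ with an absolutely integrable kernel, whose $L^1$ norm is at most $C_\beta$, and the Gaussian factors $e^{-t^2/\beta^2}$ survive. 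Telescoping the difference of products, each term is bounded by $\|\vA_{\vH}(t)-\vA_{\vH_\ell}(t)\|$ weighted by a Gaussian in $t$.

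\textbf{Time splitting.} I split at $|t|\le \ell/(4ed)$. On that range, \Cref{lem:basic-lr} together with Stirling gives $(2d|t|)^\ell/\ell!\le (e/2)^{\ell}\cdot 2^{-\ell}\cdots$; choosing the cutoff $\ell/(4ed)$ gives $(1/2)^\ell$. On the complement, the Gaussian weight $e^{-t^2/\beta^2}$ is at most $e^{-c\ell^2/(d\beta)^2}$. Since $\ell\ge 4e^2d\beta$, this is at most $e^{-c\ell/(d\beta)}$.

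\textbf{Sandwich part.} I write $\widehat{\vA}\vsigma\widehat{\vA}^\dagger-\widehat{\vA}_\ell\vsigma\widehat{\vA}_\ell^\dagger=(\widehat{\vA}-\widehat{\vA}_\ell)\vsigma\widehat{\vA}^\dagger+\widehat{\vA}_\ell\vsigma(\widehat{\vA}-\widehat{\vA}_\ell)^\dagger$. Each term has trace norm at most $\|\widehat{\vA}(\omega)-\widehat{\vA}_\ell(\omega)\|\cdot\|\widehat{\vA}(\omega)\|$, but this crude bound is not integrable in $\omega$. I will use $\sqrt{\vsigma}$ factorization and Cauchy--Schwarz instead. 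This bounds the trace norm by
\[
\Bigl(\int\gamma\,\|(\widehat{\vA}-\widehat{\vA}_\ell)\sqrt{\vsigma}\|_2^2\,\dd\omega\Bigr)^{1/2}\Bigl(\int\gamma\,\|\widehat{\vA}\sqrt{\vsigma}\|_2^2\,\dd\omega\Bigr)^{1/2}.
\]
By the Parseval computation already carried out in \Cref{lem:H-cont}, the first factor is at most $\bigl(\int|f_\beta(u)|^2\|\vA_{\vH}(u)-\vA_{\vH_\ell}(u)\|^2\dd u\bigr)^{1/2}$, and the second factor is at most $1$. The same time splitting then gives the rate in \eqref{eq:lindbladian-locality}, possibly with $2^{-\ell}$ replaced by its square root; this loss is absorbed by adjusting constants and using $4^{-\ell}$ from the squared bound.

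\textbf{Coherent term (main obstacle).} The coherent term $\vC^a$ contains the singular kernel $c(t)=1/(\beta\sinh(2\pi t/\beta))$ under a principal value. I plan to use the explicit representation of $\vC^a$ from \cite{chen2023exact}: after performing the $\omega$ integral, $\vC^a$ equals an absolutely convergent double integral $\iint b_1(t)b_2(t')\,\vA_{\vH}(\beta t)\vA_{\vH}(-\beta t')\,\dd t\,\dd t'$ with $\|b_1\|_{L^1},\|b_2\|_{L^1}\le C_\beta$ and Gaussian decay in $t$ (up to $e^{-c|t|/\beta}$ tails). Granting this, telescoping again gives $\|\vC^a_{\vH}-\vC^a_{\vH_\ell}\|\le C_\beta\int |b(t)|\,\|\vA_{\vH}(t)-\vA_{\vH_\ell}(t)\|\,\dd t$. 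Splitting at $\ell/(4ed)$ gives $2^{-\ell}$ on the inner range. On the outer range the decay of $b$ is only exponential, $e^{-c|t|/\beta}$, which gives $e^{-c\ell/(d\beta)}$; this is where that term of the bound comes from. Verifying the decay properties of $b_1,b_2$ is the step I expect to require the most care.
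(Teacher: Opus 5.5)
Your decomposition, the sandwich estimate (split $\vsigma=\sqrt{\vsigma}\sqrt{\vsigma}$, Cauchy--Schwarz in $\omega$, Parseval as in \Cref{lem:H-cont}), and the split at $|t|=\ell/(4ed)$ using the Gaussian tail and $\ell\geq4e^2d\beta$ all match the paper.

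\textbf{The anticommutator term.} This is the step that fails: the integrable-kernel representation you invoke does not exist. Set $\bar\nu=(\nu_1+\nu_2)/2$. The decay operator is $\int\gamma(\omega)\widehat{\vA}(\omega)^\dagger\widehat{\vA}(\omega)\dd\omega=\sum_{\nu_1,\nu_2}\alpha(\nu_1,\nu_2)\vA_{\nu_2}^\dagger\vA_{\nu_1}$, with
$\alpha(\nu_1,\nu_2)=e^{-(\nu_1-\nu_2)^2/(8\sigma^2)}\int\gamma(\omega)\widehat f_\sigma(\omega-\bar\nu)^2\dd\omega$.
Since $\widehat f_\sigma^{\,2}$ is a probability density, at fixed $\nu_1-\nu_2$ this symbol tends to $e^{-(\nu_1-\nu_2)^2/(8\sigma^2)}$ as $\bar\nu\to-\infty$ and to $0$ as $\bar\nu\to+\infty$.
\begin{itemize}
\item The transform of an $L^1$ kernel such as $f(t_1)f(t_2)K(t_1-t_2)$ vanishes at infinity (Riemann--Lebesgue).
\item Equal-time terms only add functions of $\nu_1-\nu_2$, which have the same limit at both ends.
\end{itemize}
So no representation of the form you describe reproduces this symbol. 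The principal-value singularity you noticed is intrinsic, not an artifact.

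The repair uses a tool you already have. Write $\widehat{\vA}^\dagger\widehat{\vA}-\widehat{\vA}_\ell^\dagger\widehat{\vA}_\ell=(\widehat{\vA}-\widehat{\vA}_\ell)^\dagger\widehat{\vA}+\widehat{\vA}_\ell^\dagger(\widehat{\vA}-\widehat{\vA}_\ell)$. Bound matrix elements between unit vectors $\phi,\psi$ by Cauchy--Schwarz in $\omega$ with $0\leq\gamma\leq1$. Then apply Parseval to the vectors $\widehat{\vA}(\omega)\psi$, rather than to operator norms. This bounds the operator norm of the difference by $2\bigl(\int|f_\beta(t)|^2\|\vA_{\vH}(t)-\vA_{\vH_\ell}(t)\|^2\dd t\bigr)^{1/2}$, and hence the anticommutator contribution in trace norm by the same quantity. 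Added to your sandwich bound, this is the paper's estimate with constant $4$, i.e.\ the quadratic-jump estimate of \cite[Lemma~A.2]{chen2025GibbsMarkov}.

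\textbf{The coherent term.} It has the same problem in the form you grant it. Its Bohr symbol is proportional to $\tanh(\beta(\nu_1-\nu_2)/4)\,\alpha(\nu_1,\nu_2)$. At fixed $\nu_1-\nu_2\neq0$ this again has different limits as $\bar\nu\to\pm\infty$. Hence, for the shifted-Metropolis weight, $\vC^a$ is not an absolutely convergent $\iint b_1(t)b_2(t')(\cdots)$ with $b_1,b_2\in L^1$. The kernel in the time difference $u$ of the two factors carries the principal-value $1/u$ of a smoothed step, plus an equal-time term. That singularity, not the large-time decay of $b_1,b_2$, is the delicate point.

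Your telescoping step therefore needs one more ingredient near $u=0$.
\begin{itemize}
\item At coinciding times both products equal $\vA_{\vK}(s)^\dagger\vA_{\vK}(s)=\vI$ for $\vK=\vH,\vH_\ell$, so the difference of integrands vanishes at $u=0$.
\item Its $u$-derivative is controlled by Lieb--Robinson bounds for the local operator $[\vH,\vA]$, which equals $[\vH_\ell,\vA]$ once $\ell\geq2$.
\end{itemize}
Together these make the principal-value integral absolutely convergent with the claimed rate. Alternatively, import the coherent-term bound from \cite[Corollary~A.2]{chen2025GibbsMarkov}, as the paper does. With these two repairs your outline proves the lemma.
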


\begin{proof}
Let $\vC_{\vK}^{a}$ denote the coherent correction in
$\mathcal L_{{a}}^{{\vK}}$, for $\vK=\vH,\vH_{{\ell}}$.
Expanding the quadratic jump terms {and} applying Cauchy--Schwarz in frequency gives
\begin{align}
    \|(\mathcal L_{{a}}^{{\vH}}
          -\mathcal L_{{a}}^{{\vH_\ell}})(\vsigma)\|_1
    &\leq
    4\left(
      \int_{-\infty}^{\infty}|f_\beta(t)|^2
      \|\vA_{\vH}^{a}(t)-\vA_{\vH_{{\ell}}}^{a}(t)\|^2
      \dd t
    \right)^{1/2}
    +2\|\vC_{\vH}^{a}-\vC_{\vH_{{\ell}}}^{a}\|.
    \label{eq:lindbladian-locality-split}
\end{align}
Here the dissipative estimate uses only
$0\leq\gamma\leq1$ and $\int_{-\infty}^{\infty}|f_\beta(t)|^2
\dd t=1$; it is the quadratic-jump estimate in
\cite[Lemma~A.2]{chen2025GibbsMarkov}. Parseval is applied to the
Fourier-transformed vectors, not to operator norms.

Split the integral in \eqref{eq:lindbladian-locality-split} at $|t|=\ell/(4ed)$. By \Cref{lem:basic-lr}, the Gaussian tail, and $\ell\geq4e^2d\beta$, its square root is at most $C_{\beta}(2^{-\ell}+e^{-\ell/(4d\beta)})$. The coherent correction is bounded by $C_{\beta}(2^{-\ell}+e^{-c\ell/(d\beta)})$ by Ref.~\cite[Corollary~A.2]{chen2025GibbsMarkov}. Substituting these bounds into \eqref{eq:lindbladian-locality-split} and decreasing $c$ if necessary proves the second inequality in \eqref{eq:lindbladian-locality}; the first is the reverse triangle inequality.
\end{proof}

{It remains to bound the localized ADB functional. We use a local version of the regularization argument in Ref.~\cite{bergamaschi2025metastability}. Unlike the global regularization, we use the local reference $\vtau_a=\vrho_S\otimes\vsigma_E$.}

\begin{lem}[Local regularization and its relative logarithm]
\label{lem:adb4-relative-log}
Fix a jump $a$, a centered patch $S=S_a(\ell)$, and $E=\Lambda\setminus S$.
Write $m=|S|$. Define
\begin{equation}
    \vrho_S:=\frac{e^{-\beta\vH_{{\ell}}}}
                        {\tr_S(e^{-\beta\vH_{{\ell}}})},
    \qquad\vsigma_E:=\tr_S\vsigma,
    \qquad \vtau_a:=\vrho_S\otimes\vsigma_E,
    \qquad\vsigma_{a,r}:=(1-r)\vsigma+r\vtau_a,
    \label{eq:adb4-regularization}
\end{equation}
for $0<r\leq1/2$. On
$\mathcal H_S\otimes\operatorname{supp}(\vsigma_E)$, both
$\vtau_a$ and $\vsigma_{a,r}$ are strictly positive, and
\begin{align}
    \|\vsigma_{a,r}-\vsigma\|_1&\leq2r,
    \label{eq:adb4-state-distance}\\
    \|\log\vsigma_{a,r}-\log\vtau_a\|
    &\leq\Delta_{\ell,r}
       :=1+\log\frac1r+2m\log2+2\beta\|\vH_{{\ell}}\|.
    \label{eq:adb4-Delta}
\end{align}
In particular, $\Delta_{\ell,r}$ does not involve the dimension of $E$
or the smallest nonzero eigenvalue of $\vsigma_E$.
\end{lem}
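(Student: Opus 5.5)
The plan is to derive all three claims from two-sided operator inequalities $r\vtau_a\leq\vsigma_{a,r}\leq K\vtau_a$, with $K:=4^m e^{2\beta\|\vH_\ell\|}$, and then use operator monotonicity of the logarithm. Everything depends only on $m=|S|$ and $\|\vH_\ell\|$, never on $E$.

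\emph{Step 1: trace distance.} We have $\vsigma_{a,r}-\vsigma=r(\vtau_a-\vsigma)$. Both $\vtau_a$ and $\vsigma$ are states, so this has trace norm at most $2r$, which is \eqref{eq:adb4-state-distance}.

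\emph{Step 2: domination of $\vsigma$ by the local reference.} First I would show $\vsigma\leq 2^m\,\vI_S\otimes\vsigma_E$. Twirl over the $4^m$ Pauli strings $\vU_j$ on $S$:
\[
\frac{1}{4^m}\sum_j \vU_j\vsigma\vU_j^\dagger=\frac{\vI_S}{2^m}\otimes\vsigma_E .
\]
Every summand is positive semidefinite, so the $j=\vI$ term gives $\vsigma\leq 4^m\cdot 2^{-m}\,\vI_S\otimes\vsigma_E$. Next, $\vH_\ell$ is supported in $S$ by the choice of patch. Hence $\vrho_S$ is a Gibbs state on $2^m$ dimensions, with eigenvalues at least $e^{-2\beta\|\vH_\ell\|}2^{-m}$, since $\tr_S e^{-\beta\vH_\ell}\leq 2^m e^{\beta\|\vH_\ell\|}$. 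Thus $\vI_S\leq 2^m e^{2\beta\|\vH_\ell\|}\vrho_S$. Tensoring with $\vsigma_E\geq0$ gives $\vsigma\leq K\vtau_a$. It also follows that $\supp\vsigma\subseteq\mathcal H_S\otimes\supp(\vsigma_E)$, so every operator involved lives on this subspace.

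\emph{Step 3: positivity and the relative logarithm.} On $\mathcal H_S\otimes\supp(\vsigma_E)$, $\vtau_a$ is strictly positive because $\vrho_S>0$. From $\vsigma_{a,r}=(1-r)\vsigma+r\vtau_a$ and Step 2 we get
\[
r\vtau_a\leq\vsigma_{a,r}\leq\bigl((1-r)K+r\bigr)\vtau_a\leq K\vtau_a ,
\]
so $\vsigma_{a,r}$ is strictly positive there too. Operator monotonicity of $\log$ then gives $\log r+\log\vtau_a\leq\log\vsigma_{a,r}\leq\log K+\log\vtau_a$. This is an operator inequality for the difference, so
\[
\log r\cdot\vI\leq\log\vsigma_{a,r}-\log\vtau_a\leq\log K\cdot\vI .
\]
The difference is Hermitian, so its norm is at most $\max\{\log(1/r),\,2m\log2+2\beta\|\vH_\ell\|\}\leq\Delta_{\ell,r}$, which proves \eqref{eq:adb4-Delta}.

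The only genuinely delicate point is that $\vsigma_{a,r}$ and $\vtau_a$ do not commute, so one cannot compare eigenvalues directly. This is exactly why I go through the operator inequalities and operator monotonicity of $\log$, rather than through spectra. The remaining care is restricting to $\supp(\vsigma_E)$, so that the logarithms are defined, and checking that Step 2 holds on that subspace.
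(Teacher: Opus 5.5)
Your proof is correct and follows essentially the paper's argument: the same sandwich $r\vtau_a\leq\vsigma_{a,r}\leq 4^m e^{2\beta\|\vH_\ell\|}\vtau_a$ on $\mathcal H_S\otimes\operatorname{supp}(\vsigma_E)$, followed by operator monotonicity of the logarithm. The only cosmetic difference is how you get $\vsigma\leq 2^m\vI_S\otimes\vsigma_E$: you use the Pauli twirl on $S$, while the paper uses a block-matrix (pinching/Cauchy--Schwarz) inequality in a basis of $S$; both give the same constant.
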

\begin{proof}
Choose an orthonormal basis $\{|i\rangle\}_{i=1}^{2^m}$ of $S$ and write
$\vsigma=\sum_{i,j}|i\rangle\langle j|\otimes\vsigma_{ij}$.
Positivity and Cauchy--Schwarz imply the inequality
\begin{equation}
    \vsigma
    \leq2^m\sum_i|i\rangle\langle i|\otimes\vsigma_{ii}
    \leq2^m\vI_S\otimes\vsigma_E.
    \label{eq:adb4-bipartite-order}
\end{equation}

The second inequality in \eqref{eq:adb4-bipartite-order} uses
$0\leq\vsigma_{ii}\leq\sum_j\vsigma_{jj}=\vsigma_E$.
In particular, \eqref{eq:adb4-bipartite-order} shows that
$\operatorname{supp}(\vsigma)$ is contained in
$\mathcal H_S\otimes\operatorname{supp}(\vsigma_E)$.
All subsequent logarithms are evaluated on this subspace.

The finite-region Gibbs state obeys
\begin{equation}
    e^{-\beta\vH_{{\ell}}}\geq e^{-\beta\|\vH_{{\ell}}\|}\vI_S,
    \qquad \tr_S(e^{-\beta\vH_{{\ell}}})\leq2^m e^{\beta\|\vH_{{\ell}}\|},
    \qquad \vrho_S\geq2^{-m}e^{-2\beta\|\vH_{{\ell}}\|}\vI_S.
    \label{eq:adb4-gibbs-order}
\end{equation}
Combining \eqref{eq:adb4-bipartite-order} and \eqref{eq:adb4-gibbs-order}
yields
\begin{equation}
    \vsigma\leq2^{2m}e^{2\beta\|\vH_{{\ell}}\|}\vtau_a,
    \qquad
    r\vtau_a\leq\vsigma_{a,r}
         \leq2^{2m}e^{2\beta\|\vH_{{\ell}}\|}\vtau_a.
    \label{eq:adb4-order-sandwich}
\end{equation}
The last upper bound uses
$(1-r)2^{2m}e^{2\beta\|\vH_{{\ell}}\|}+r\leq
2^{2m}e^{2\beta\|\vH_{{\ell}}\|}$.
Operator monotonicity of the logarithm therefore gives
\begin{equation}
    (\log r)\vI
    \leq\log\vsigma_{a,r}-\log\vtau_a
    \leq\bigl(2m\log2+2\beta\|\vH_{{\ell}}\|\bigr)\vI.
    \label{eq:adb4-log-sandwich}
\end{equation}
The operator monotonicity of the logarithm follows from its
integral representation
\[
    \log\vA
    =
    \int_0^\infty
    \left(\frac{I}{1+t}-(\vA+tI)^{-1}\right)\dd t,
    \qquad \vA>0,
\]
which is obtained by applying the scalar identity to each
eigenvalue of $\vA$. {Since} inversion reverses the operator order, for
$0<\vA\leq\vB$ and every $t\geq0$ we have
\[
    \vA+tI\leq\vB+tI
    \quad\Longrightarrow\quad
    (\vA+tI)^{-1}\geq(\vB+tI)^{-1}.
\]
Therefore, the integral representation of the logarithm gives
\[
    \log\vB-\log\vA
    =
    \int_0^\infty
    \bigl[(\vA+tI)^{-1}-(\vB+tI)^{-1}\bigr]\dd t
    \geq0.
\]
Thus $\vA\leq\vB$ implies $\log\vA\leq\log\vB$,
without any commutativity assumption.

This proves \eqref{eq:adb4-Delta} without assuming that the two states
commute. Finally,
$\|\vsigma_{a,r}-\vsigma\|_1=r\|\vtau_a-\vsigma\|_1\leq2r$.
\end{proof}

Combining \Cref{lem:lindbladian-locality,lem:adb4-relative-log} yields the following ADB bound for the locally regularized state $\vsigma_{a,r}$:

\begin{lem}[ADB of the locally regularized state]
\label{lem:adb4-regularized-ADB}
Use the setting of Lemma~\ref{lem:adb4-relative-log}, and assume
{that $\ell$ is an integer with} $\ell\geq{\max\{1,}4e^2\beta d{\}}$. There are constants
${C_{\beta}}\leq e^{\poly(\beta^{\pm 1})},c_{\mathrm L}>0$, independent of $n$, such that,
with
\begin{equation}
    u_\ell:={\epsilon_{\mathrm{ms}}^{\mathrm{loc}}}
       +{C_{\beta}}
          \left(e^{-c_{\mathrm L}\ell/(d\beta)}+2^{-\ell}\right),
    \label{eq:adb4-u-definition}
\end{equation}
one has
\begin{equation}
    \|\mathcal L_a^{\vH_{{\ell}}}(\vsigma)\|_1\leq u_\ell,
    \qquad
    \|\mathcal L_a^{\vH_{{\ell}}}(\vsigma_{a,r})\|_1\leq u_\ell,
    \label{eq:u-ell}
\end{equation}
and
\begin{equation}
    \operatorname{ADB}_{\vH_{{\ell}}}^{a}(\vsigma_{a,r})
    \leq {C_{\beta}}u_\ell\Delta_{\ell,r}.
    \label{eq:adb4-regularized-bound}
\end{equation}
The generator and ADB functional act on the full system. Only the
metastability error of the central jump $a$ is used.
\end{lem}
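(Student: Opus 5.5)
The plan is to establish the two metastability bounds in \eqref{eq:u-ell} first, and then feed them into an entropy-production argument for the truncated generator. The first bound in \eqref{eq:u-ell} follows from \Cref{lem:lindbladian-locality}: since $\ell\geq\max\{1,4e^2d\beta\}$, we have $\|\mathcal L_a^{\vH_\ell}(\vsigma)\|_1\leq\|\mathcal L_a^{\vH}(\vsigma)\|_1+C_\beta(2^{-\ell}+e^{-c\ell/(d\beta)})\leq u_\ell$, taking $c_{\mathrm L}=c$.

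For the regularized state, write $\mathcal L_a^{\vH_\ell}(\vsigma_{a,r})=(1-r)\mathcal L_a^{\vH_\ell}(\vsigma)+r\,\mathcal L_a^{\vH_\ell}(\vtau_a)$. The key observation is that $\mathcal L_a^{\vH_\ell}$ annihilates $\vtau_a$. Indeed, the jump $\vA^a$, its filtered evolutions under $\vH_\ell$, and the coherent correction $\vC^a_{\vH_\ell}$ are all supported on $S$. On $S$, the generator is exactly KMS-detailed-balanced with respect to $\vrho_S\propto e^{-\beta\vH_\ell}$, so $\mathcal L_a^{\vH_\ell}=\mathcal L_{a,S}\otimes\mathrm{id}_E$ with $\mathcal L_{a,S}[\vrho_S]=0$ by \eqref{eq:thermal-KMS-balance}. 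Hence $\mathcal L_a^{\vH_\ell}(\vsigma_{a,r})=(1-r)\mathcal L_a^{\vH_\ell}(\vsigma)$, whose trace norm is at most $u_\ell$. This gives the second bound in \eqref{eq:u-ell}.

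For \eqref{eq:adb4-regularized-bound}, I would work on the invariant subspace $\mathcal H_S\otimes\operatorname{supp}(\vsigma_E)$, where $\vsigma_{a,r}$ and $\vtau_a$ are full rank by \Cref{lem:adb4-relative-log}. The generator acts trivially on $E$, so it preserves this subspace, and the entropy production relative to the reference $\vtau_a$ is well defined:
\[
\mathsf{EP}=-\Tr\bigl[\mathcal L_a^{\vH_\ell}(\vsigma_{a,r})(\log\vsigma_{a,r}-\log\vtau_a)\bigr]\leq u_\ell\,\Delta_{\ell,r},
\]
by H\"older's inequality and \eqref{eq:adb4-Delta}. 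Now I apply \Cref{thm:constant-fi-adb} to the truncated Hamiltonian $\vH_\ell$, with $\vtau_a$ in the role of the Gibbs reference. This requires checking that the ADB functional is unchanged when $\vrho_{\vH_\ell}$ is replaced by $\vtau_a$, which holds because the imaginary-time conjugations $\vrho^{\mp1/2}(\cdot)\vrho^{\pm1/2}$ of $S$-supported operators are the same for $e^{-\beta\vH_\ell}$ and for $\vrho_S\otimes\vsigma_E$: the $E$ factor commutes with them. The same substitution must hold in the Fisher-information/EP identity \cite[Theorem~B.1]{bergamaschi2025metastability}, where $\log\vrho$ is replaced by $\log\vtau_a=\log\vrho_S\otimes\vI+\vI\otimes\log\vsigma_E$. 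Granting this, we obtain $\operatorname{ADB}^a_{\vH_\ell}(\vsigma_{a,r})\leq C_\beta\,\mathsf{EP}\leq C_\beta u_\ell\Delta_{\ell,r}$.

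The main obstacle is this last step: justifying that the entropy-dissipation identity and \Cref{thm:constant-fi-adb} hold for a product reference state $\vrho_S\otimes\vsigma_E$ that is not a Gibbs state of a global Hamiltonian. I would argue it in one of two ways. One option is to view $\vtau_a$ as the Gibbs state of $\vH_\ell+\vK_E$ with $\vK_E=-\beta^{-1}\log\vsigma_E$ restricted to its support. Since $\vK_E$ commutes with every operator on $S$, the filtered jumps and the generator are unchanged, so all prior results, which are stated for arbitrary Hamiltonians and are independent of $n$, apply verbatim. The alternative is to rederive the identity directly, using that the proof of \cite[Theorem~B.1]{bergamaschi2025metastability} only uses KMS detailed balance of the generator with respect to the reference state.
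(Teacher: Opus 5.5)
Your proposal is correct and follows the paper's proof essentially step for step. You get the first bound from \Cref{lem:lindbladian-locality} and the second from $\mathcal L_a^{\vH_\ell}(\vtau_a)=0$. You then bound ADB by H\"older with $\Delta_{\ell,r}$ together with \Cref{thm:constant-fi-adb}, applied to the reference $\vtau_a$ regarded as the Gibbs state of $\vH_\ell\otimes\vI_E-\beta^{-1}\vI_S\otimes\log\vsigma_E$ on $\mathcal H_S\otimes\operatorname{supp}(\vsigma_E)$.

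Your first option for the ``main obstacle'' is exactly how the paper resolves it. The real- and imaginary-time evolutions of $S$-supported operators are unchanged under this extended Hamiltonian, so the rebuilt generator is exactly $\mathcal L_a^{\vH_\ell}$ and the ADB functional is exactly $\operatorname{ADB}^a_{\vH_\ell}$.
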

\begin{proof}

By \Cref{lem:lindbladian-locality}, we have
\begin{align}
    \|\mathcal L_a^{\vH_{{\ell}}}(\vsigma)\|_1
    &\leq\|\mathcal L_a^{\vH}(\vsigma)\|_1
        +\|(\mathcal L_a^{\vH_{{\ell}}}-\mathcal L_a^{\vH})(\vsigma)\|_1
     \leq u_\ell.
    \label{eq:adb4-centered-metastability-error}
\end{align}
Because $\mathcal L_a^{\vH_{{\ell}}}$ acts on $S$ and fixes $\vrho_S$,
\begin{equation}
    \mathcal L_a^{\vH_{{\ell}}}(\vtau_a)=0,
    \qquad
    \mathcal L_a^{\vH_{{\ell}}}(\vsigma_{a,r})
       =(1-r)\mathcal L_a^{\vH_{{\ell}}}(\vsigma).
    \label{eq:adb4-regularized-metastability-error}
\end{equation}
This proves \eqref{eq:u-ell}.

The reference for entropy production must be $\vtau_a$, not the global
Gibbs state. On the support from Lemma~\ref{lem:adb4-relative-log},
\begin{equation}
    \vtau_a=
    \frac{
      e^{-\beta(\vH_{{\ell}}\otimes\vI_E
                  -\beta^{-1}\vI_S\otimes\log\vsigma_E)}}
      {\tr\!\left[e^{-\beta(\vH_{{\ell}}\otimes\vI_E
                  -\beta^{-1}\vI_S\otimes\log\vsigma_E)}\right]}.
    \label{eq:adb4-extended-reference}
\end{equation}
The two terms in this Hamiltonian act on distinct tensor factors, and
$\vA^a$ is supported on $S$. Hence
\begin{align}
    e^{it(\vH_{{\ell}}\otimes\vI_E-\beta^{-1}\vI_S\otimes\log\vsigma_E)}
       (\vA^a\otimes\vI_E)
     e^{-it(\vH_{{\ell}}\otimes\vI_E-\beta^{-1}\vI_S\otimes\log\vsigma_E)}
       &=\bigl(e^{it\vH_{{\ell}}}\vA^ae^{-it\vH_{{\ell}}}\bigr)\otimes\vI_E.
    \label{eq:adb4-exterior-cancellation-time}
\end{align}
Thus rebuilding the Gaussian jumps and their KMS coherent correction
from \eqref{eq:adb4-extended-reference} gives exactly
$\mathcal L_a^{\vH_{{\ell}}}$. Similarly, for every operator $\vX$ on $S$,
\begin{align}
    \vtau_a^{-1/2}(\vX\otimes\vI_E)\vtau_a^{1/2}
    &=(\vrho_S^{-1/2}\vX\vrho_S^{1/2})\otimes\vI_E
     =(e^{\beta\vH_{{\ell}}/2}\vX e^{-\beta\vH_{{\ell}}/2})\otimes\vI_E.
    \label{eq:adb4-exterior-cancellation-imaginary}
\end{align}
It follows that the ADB functional with reference $\vtau_a$ is exactly
$\operatorname{ADB}_{\vH_{{\ell}}}^{a}$, still evaluated on $\vsigma_{a,r}$.

Define the scalar entropy production
\begin{equation}
    F_{a,S}:=-\tr\!\left[
       \mathcal L_a^{\vH_{{\ell}}}(\vsigma_{a,r})
       (\log\vsigma_{a,r}-\log\vtau_a)\right].
    \label{eq:adb4-EP}
\end{equation}
It is nonnegative because $\mathcal L_a^{\vH_{{\ell}}}$ preserves $\vtau_a$
and its semigroup decreases relative entropy to that reference.
By \eqref{eq:u-ell}, Lemma~\ref{lem:adb4-relative-log}, and H\"older's
inequality,
\begin{equation}
    0\leq F_{a,S}\leq u_\ell\Delta_{\ell,r}.
    \label{eq:adb4-EP-bound}
\end{equation}
{\Cref{thm:constant-fi-adb} gives}
\begin{equation}
    \operatorname{ADB}_{\vH_{{\ell}}}^{a}(\vsigma_{a,r})
    \leq {C_{\beta}}F_{a,S}.
    \label{eq:adb4-imported-EP}
\end{equation}
Its right-hand side is zero when $F_{a,S}=0$.
This input uses only the relative logarithm; its constant is independent
of the dimension and does not require locality of the reference
Hamiltonian.
\end{proof}

Combining \Cref{lem:adb4-regularized-ADB,lem:H-cont,lem:state-cont} transfers the ADB bound from the locally regularized state back to the original state and proves \Cref{thm:local-adb}.
\begin{proof}[{Proof of Theorem~\ref{thm:local-adb}}]
Insert the intermediate quantity
$\sqrt{\operatorname{ADB}_{\vH_{{\ell}}}^{a}(\vsigma)}$ and use
Lemmas~\ref{lem:H-cont} and \ref{lem:state-cont}:
\begin{align}
    \sqrt{\operatorname{ADB}_{\vH}^{a}(\vsigma)}
    &\leq\sqrt{\operatorname{ADB}_{\vH_{{\ell}}}^{a}(\vsigma)}
        +\left[{C_{\beta}}
            \left(4^{-\ell}+e^{-\pi\ell/(4ed\beta)}\right)\right]^{1/2}
        \nonumber\\
    &\leq\sqrt{\operatorname{ADB}_{\vH_{{\ell}}}^{a}(\vsigma_{a,r})}
        +{2}\sqrt{\|\vsigma-\vsigma_{a,r}\|_1}
        +\left[{C_{\beta}}
            \left(4^{-\ell}+e^{-\pi\ell/(4ed\beta)}\right)\right]^{1/2}.
    \label{eq:adb4-return-calculation}
\end{align}
{Squaring with $(x+y+z)^2\leq3x^2+3y^2+3z^2$, and applying \eqref{eq:adb4-state-distance} and \Cref{lem:adb4-regularized-ADB}, gives}
\begin{equation}
    {\operatorname{ADB}_{\vH}^{a}(\vsigma)
    \leq {C_{\beta}}\left[u_\ell\Delta_{\ell,r}+r+4^{-\ell}
                     +e^{-\pi\ell/(4ed\beta)}\right].}
\end{equation}
{For $0<{\epsilon_{\mathrm{ms}}^{\mathrm{loc}}}\leq1/2$, choose}
\begin{equation}
    {r={\epsilon_{\mathrm{ms}}^{\mathrm{loc}}},\qquad
    \ell=\left\lceil {C_{\beta}}
                     \log\frac e{{\epsilon_{\mathrm{ms}}^{\mathrm{loc}}}}\right\rceil.}
\end{equation}
Then $u_\ell=O({\epsilon_{\mathrm{ms}}^{\mathrm{loc}}})$.
{Thus \eqref{eq:adb4-Delta} implies}
\begin{equation}
    {\Delta_{\ell,r}
    \leq {C_{\beta}}\left[1+\log\frac1r+(1+\ell)^D\right]
    \leq {C_{\beta}}\left[\log\frac e{{\epsilon_{\mathrm{ms}}^{\mathrm{loc}}}}\right]^{D+1}.}
\end{equation}
Substitution yields \Cref{thm:local-adb}.
\end{proof}

We now prove \Cref{thm:global-adb} using Gibbs-state regularization and the continuity estimate in \Cref{lem:state-cont}.

\begingroup
\begin{proof}[Proof of Theorem~\ref{thm:global-adb}]
Following the Gibbs-state regularization in Ref.~\cite[Appendix F]{bergamaschi2025metastability}, take $0<r\leq1/2$ and set
\begin{equation*}
    \vsigma_r=(1-r)\vsigma+r\vrho,
    \qquad \|\vsigma_r-\vsigma\|_1\leq2r.
\end{equation*}
Since $\CL[\vrho]=0$, we have $\|\CL[\vsigma_r]\|_1=(1-r)\epsilon_{\mathrm{ms}}^{\mathrm{glo}}$. Operator monotonicity of the logarithm, applied to $r\vrho\leq\vsigma_r\leq\vI$, and the bound $\|\vH\|\leq dn$ give
\begin{equation*}
    \|\log\vsigma_r-\log\vrho\|
    \leq\log\frac1r+n\log2+2\beta\|\vH\|
    \leq C_{\beta}\left[n+\log\frac1r\right].
\end{equation*}
By \Cref{thm:constant-fi-adb}, the nonnegativity of each local entropy production rate, and H\"older's inequality,
\begin{align*}
    \mathsf{ADB}_a[\vsigma_r]
    \leq C_{\beta}\mathsf{EP}_a[\vsigma_r]
    \leq C_{\beta}\mathsf{EP}_{\CL}[\vsigma_r]\leq C_{\beta}\epsilon_{\mathrm{ms}}^{\mathrm{glo}}
    \left[n+\log\frac1r\right].
\end{align*}
Finally, \Cref{lem:state-cont} and $(x+y)^2\leq2x^2+2y^2$ yield
\begin{equation*}
    \mathsf{ADB}_a[\vsigma]
    \leq2\mathsf{ADB}_a[\vsigma_r]+8\|\vsigma-\vsigma_r\|_1
    \leq2\mathsf{ADB}_a[\vsigma_r]+16r.
\end{equation*}
Choose $r=\epsilon_{\mathrm{ms}}^{\mathrm{glo}}$ and take the maximum over $a$.
\end{proof}
\endgroup

\begingroup
\makeatletter
\@for\BAppendixEquationLabel:=%
eq:adb-implies-approximate-commutation,%
eq:low-frequency-part-commutation,%
eq:filtered-Hamiltonian-scalar,%
eq:filtered-Hamiltonian-local-sum,%
eq:sqrt-state-commutator-bound%
\do{%
  \expandafter\MT@newlabel\expandafter{\BAppendixEquationLabel}%
  \edef\BAppendixEquationTag{{\BAppendixEquationLabel}}%
  \expandafter\MT@newlabel\expandafter{\BAppendixEquationTag}%
}
\makeatother
\endgroup
\subsection{From ADB to Hamiltonian stationarity}
\label{subsec:adb-to-hamiltonian-stationarity}
In \Cref{subsec:thermal-lindbladian-to-adb}, we proved that both local and global metastability imply ADB. We next show that ADB controls the Hamiltonian commutator. Together, these results show that a state with sufficiently small dynamical metastability error approximately commutes with the Hamiltonian. The proof uses the local-product estimate in \Cref{lem:low_freq_bound}, proved in the following subsection and reused in the learning analysis.
\begin{thm}[From approximate detailed balance to approximate commutation]
\label{thm:approximate-commute-deduce}
Let $\vH$ be a local Hamiltonian as in \Cref{defn:local-Hamiltonian}, with Gibbs state $\vrho$, and suppose that the uniform single-site ADB error is bounded by
$\epsilon_{\mathrm{ADB}}$ as in
\eqref{eq:intro-uniform-ADB-error}. Then
\begin{align}
    \epsilon_H
    :=\norm{[\vH,\vsigma]}_1
    \leq
    n\,e^{\poly(\beta^{\pm 1})}
    \epsilon_{\mathrm{ADB}}^{\frac{1}{32\beta d+2}}.
    \label{eq:adb-implies-approximate-commutation}
\end{align}
\end{thm}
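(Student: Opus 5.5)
We plan to bound $\|[\vH,\vsigma]\|_1$ by splitting $\vH=\sum_\gamma h_\gamma\vP_\gamma$ into $O(n)$ local terms, which produces the factor $n$. Each local commutator is then controlled by ADB at nearby sites. Concretely, we would use trace duality: $\|[\vH,\vsigma]\|_1=\sup_{\|\vO\|\le1}|\Tr(\vO[\vH,\vsigma])|$. It therefore suffices to show, for every bounded $\vO$ and every single-site Pauli $\vA^a$ touching a term, an estimate of the form
\begin{align}
\bigl\|[\vH,\vsigma]\bigr\|_1\leq \sum_{i\in\Lambda}\bigl\|[\vH^{(i)},\vsigma]\bigr\|_1,
\end{align}
where $\vH^{(i)}$ collects terms assigned to site $i$ (each term assigned to one site of its support). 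The $O(1)$ single-site Paulis at $i$ generate all local operators there, so each $\vP_\gamma$ is a product of at most $k$ jumps. This reduces the problem to bounding $\|[\vP_\gamma,\vsigma]\|_1$, or more precisely the low-frequency structure that ADB gives.

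The key mechanism uses the Bohr decomposition \eqref{eq:bohr-decomposition-basic}. ADB says that $\widehat{\vA}^a(\omega)\sqrt{\vsigma}\approx\sqrt{\vsigma}\,\vrho^{-1/2}\widehat{\vA}^a(\omega)\vrho^{1/2}=e^{\beta\omega/2+1/4}\sqrt{\vsigma}\,\widehat{\vA}^a(\omega+\beta^{-1})$ in $\gamma$-weighted $L^2(\dd\omega)$ Hilbert--Schmidt norm. We would write $[\vH,\cdot]$ on the filtered jump as multiplication by $\omega$ up to $O(\sigma)$ errors: $[\vH,\widehat{\vA}(\omega)]\approx\omega\widehat{\vA}(\omega)$. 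Combining this with ADB and the reconstruction identity \eqref{eq:operator-ft-reconstruction-basic} yields $\vA\sqrt{\vsigma}\approx\sqrt{\vsigma}\,e^{\beta\vH/2}\vA e^{-\beta\vH/2}$ restricted to $|\omega|\le\Omega$. The high-frequency part is cut off with Gaussian tails, and the cutoff costs $e^{\beta\Omega}$ in the imaginary-time factor \eqref{eq:imaginary-time-norm-basic}.

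Applying this twice, together with its adjoint, gives a local statement of the form: $\sqrt{\vsigma}$ approximately intertwines $\vA$ with its Gibbs-conjugate. Iterating over the factors of $\vP_\gamma$ and over the terms $\vB$ produces the tested-commutator estimate $|\Tr(\vsigma[\vH,\vB])|\lesssim C^q\epsilon_{\mathrm{ADB}}^{\alpha}$. This is precisely \Cref{lem:low_freq_bound}, which the text says is proved next and which we would invoke. To upgrade from tested quantities to trace norm, we would instead bound $\|[\vH^{(i)},\sqrt{\vsigma}]\|_2$, write $[\vH,\vsigma]=[\vH,\sqrt{\vsigma}]\sqrt{\vsigma}+\sqrt{\vsigma}[\vH,\sqrt{\vsigma}]$, and use $\|\vX\vY\|_1\le\|\vX\|_2\|\vY\|_2$ with $\|\sqrt{\vsigma}\|_2=1$. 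So the target per site is $\|[\vH^{(i)},\sqrt{\vsigma}]\|_2\lesssim\epsilon_{\mathrm{ADB}}^{\alpha}$.

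The main obstacle is the non-locality of $\vH$ inside $\vrho^{\pm1/2}$ and the Heisenberg evolution. The low-frequency filtered Hamiltonian term, for example $\int\omega\,\widehat{\vA}^a(\omega)^\dagger\widehat{\vA}^a(\omega)\,\dd\omega$, must be compared with $[\vH,\cdot]$ restricted near $i$. To handle this, we would truncate with \Cref{lem:basic-lr} at radius $\ell$ and choose frequency cutoff $\Omega\sim\ell/\beta$. Errors $e^{-c\ell}$ then compete against amplification $e^{\beta\Omega}\,\epsilon_{\mathrm{ADB}}^{1/2}$. Optimizing $\ell\sim\log(1/\epsilon_{\mathrm{ADB}})/(\beta d)$ yields the power $\epsilon_{\mathrm{ADB}}^{1/(32\beta d+2)}$ with an $e^{\poly(\beta^{\pm1})}$ prefactor, and summing over the $O(n)$ sites gives \eqref{eq:adb-implies-approximate-commutation}.
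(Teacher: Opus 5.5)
\textbf{The gap.} Your reduction to per-site quantities fails. You aim to show $\|[\vH^{(i)},\sqrt{\vsigma}]\|_2\lesssim\epsilon_{\mathrm{ADB}}^{\alpha}$ for each site, and earlier to bound $\|[\vP_\gamma,\vsigma]\|_1$. These bounds are false. Take $\vsigma=\vrho$, so every ADB defect vanishes exactly. Take the non-commuting Hamiltonian $\vH=\vX_1+\vZ_1\vZ_2$, with $\vX_1$ assigned to site $1$ and $\vZ_1\vZ_2$ to site $2$. Then $\vH^{(1)}=\vX_1$. Restricted to $\vZ_2=\pm1$, $e^{-\beta\vH/2}$ is a nontrivial function of $\vX_1\pm\vZ_1$, so $[\vX_1,\sqrt{\vrho}]\neq0$.

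Only the full Hamiltonian commutes with the Gibbs state. Any argument that applies the triangle inequality to local commutators before using ADB therefore cannot succeed. Lieb--Robinson truncation cannot repair this, because the obstruction is already present at $\epsilon_{\mathrm{ADB}}=0$.

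\textbf{The missing idea.} The triangle inequality must be applied to a quantity that vanishes term by term at $\vsigma=\vrho$. That quantity is the Gibbs-twisted low-frequency defect
$\sqrt{\vsigma}(\vP_\gamma)_{\leq\Delta}-\sqrt{\vrho}(\vP_\gamma)_{\leq\Delta}\sqrt{\vrho^{-1}}\sqrt{\vsigma}$, which \Cref{lem:low_freq_bound} bounds by $e^{\poly(\beta^{\pm1})}\epsilon_{\mathrm{ADB}}^{1/(32\beta d+2)}$. Note that this lemma is the product ADB estimate; the tested-commutator bound is \Cref{thm:weak}, which you mislabel. The defect is linear in the operator, so summing with weights $h_\gamma$ gives the defect of $\vH_{\leq\Delta}$.

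Two observations then turn this summed defect into a commutator:
\begin{itemize}
\item $\vH_{\vH}(t)=\vH$, so $\widehat{\vH}_{\vH}(\omega)=\widehat f_\sigma(\omega)\vH$. Hence $\vH_{\leq\Delta}=\sqrt{2\sigma\sqrt{2\pi}}\,\erf(\Delta/2\sigma)\,\vH$ exactly.
\item $[\vrho,\vH]=0$, so the Gibbs conjugation of $\vH_{\leq\Delta}$ is trivial.
\end{itemize}
The summed defect is therefore a positive scalar times $\sqrt{\vsigma}\vH-\vH\sqrt{\vsigma}$.

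Consequently, no high-frequency tail of $\vH$ has to be controlled, and no spatial truncation is needed. Choosing $\Delta=1/\beta$ makes the scalar of order $\beta^{-1/2}$. The factor $n$ enters only through $\sum_\gamma|h_\gamma|\leq|\Gamma|=O(n)$ in the triangle inequality over defects. The exponent $1/(32\beta d+2)$ comes from the frequency balancing inside \Cref{lem:low_freq_bound}, not from optimizing a Lieb--Robinson radius. Your last step, $\|[\vH,\vsigma]\|_1\leq2\|[\vH,\sqrt{\vsigma}]\|_2$ via $\|\sqrt{\vsigma}\|_2=1$, matches the paper's proof.
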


\begin{proof}
The proof uses the fact that the Hamiltonian commutes with its Gibbs state $\vrho$ and is invariant under its own real-time evolution.
We use the same Gaussian Fourier convention as in
\eqref{eq:operator-fourier-transform-basic}, {with scalar width $\sigma=\beta^{-1}$}. For clarity, define the
low-frequency component of an operator $\vB$ by
\begin{align}
    \vB_{\leq\Delta}
    :=
    \int_{-\Delta}^{\Delta}
    \widehat{\vB}_{\vH}(\omega)\,\dd\omega.
    \label{eq:low-frequency-part-commutation}
\end{align}

Since $\vH$ is invariant under its own Heisenberg evolution, \(\vH_{\vH}(t)=\vH\). It follows from the Gaussian Fourier transform that
\begin{align}
    \widehat{\vH}_{\vH}(\omega)
    =
    \widehat f_{{\sigma}}(\omega)\vH =
    \frac{e^{-\omega^2/(4{\sigma}^2)}}
    {\sqrt{{\sigma}\sqrt{2\pi}}}\vH.
\end{align}
{Consequently, with a positive scalar factor,}
\begin{align}
    \vH_{\leq\Delta}
    ={\sqrt{2{\sigma}\sqrt{2\pi}}\,
    \erf\!\left(\frac{\Delta}{2{\sigma}}\right)}\vH.
    \label{eq:filtered-Hamiltonian-scalar}
\end{align}

We fix \(\Delta_0:={\sigma}=\frac{1}{\beta}\). Then $[\sqrt{2{\sigma}\sqrt{2\pi}}\,\erf(1/2)]^{-1}=O(\sqrt{\beta})$.
By linearity of the Fourier filter,
\begin{align}
    \vH_{\leq\Delta_0}
    =
    \sum_{\gamma\in\Gamma}h_\gamma
    (\vP_\gamma)_{\leq\Delta_0}.
    \label{eq:filtered-Hamiltonian-local-sum}
\end{align}

For each nonidentity Pauli string $\vP_\gamma$, apply
\Cref{lem:low_freq_bound} to its single-site Pauli factors. Since
$\vP_\gamma$ has weight at most $k$ and every one of its single-site factors
has ADB defect at most $\epsilon_{\mathrm{ADB}}$, we obtain
\begin{align}
    \Big\|
    \sqrt{\vsigma}(\vP_\gamma)_{\leq\Delta_0}
    -
    \sqrt{\vrho}(\vP_\gamma)_{\leq\Delta_0}
    \sqrt{\vrho^{-1}}\sqrt{\vsigma}
    \Big\|_2
    \leq e^{\poly(\beta^{\pm 1})}
    \epsilon_{\mathrm{ADB}}^{\frac{1}{32\beta d+2}}.
    \label{eq:local-Pauli-filtered-adb-bound}
\end{align}
An identity term, if present, contributes zero and may be omitted.

Because $\vrho$ is a function of $\vH$, the two operators commute and $\sqrt{\vrho}\vH\sqrt{\vrho^{-1}}=\vH$.
Using \eqref{eq:filtered-Hamiltonian-scalar},
\eqref{eq:filtered-Hamiltonian-local-sum}, and the triangle inequality, we
obtain
\begin{align}
    &\sqrt{2{\sigma}\sqrt{2\pi}}\,\erf(1/2)
    \norm{\sqrt{\vsigma}\vH-\vH\sqrt{\vsigma}}_2
    =\Big\|
    \sqrt{\vsigma}\vH_{\leq\Delta_0}
    -
    \sqrt{\vrho}\vH_{\leq\Delta_0}
    \sqrt{\vrho^{-1}}\sqrt{\vsigma}
    \Big\|_2 \\
    &\quad \leq
    \sum_{\gamma\in\Gamma}|h_\gamma|
    \Big\|
    \sqrt{\vsigma}(\vP_\gamma)_{\leq\Delta_0}
    -
    \sqrt{\vrho}(\vP_\gamma)_{\leq\Delta_0}
    \sqrt{\vrho^{-1}}\sqrt{\vsigma}
    \Big\|_2 \nonumber\\
    &\quad\leq \left(\sum_{\gamma\in\Gamma}|h_\gamma|\right)
    e^{\poly(\beta^{\pm 1})}
    \epsilon_{\mathrm{ADB}}^{\frac{1}{32\beta d+2}}.
    \label{eq:sqrt-state-commutator-bound}
\end{align}

It remains to convert the Hilbert--Schmidt commutator with
$\sqrt{\vsigma}$ into the trace-norm commutator with $\vsigma$. We have
\begin{align}
    [\vsigma,\vH]
    &=
    \sqrt{\vsigma}
    \bigl(\sqrt{\vsigma}\vH-\vH\sqrt{\vsigma}\bigr) +\bigl(\sqrt{\vsigma}\vH-\vH\sqrt{\vsigma}\bigr)
    \sqrt{\vsigma}.
\end{align}
Hence H\"older's inequality for Schatten norms gives
\begin{align}
    \norm{[\vH,\vsigma]}_1
    &=\norm{[\vsigma,\vH]}_1 \nonumber\\
    &\leq
    2\norm{\sqrt{\vsigma}}_2
    \norm{\sqrt{\vsigma}\vH-\vH\sqrt{\vsigma}}_2 \nonumber\\
    &=2\norm{\sqrt{\vsigma}\vH-\vH\sqrt{\vsigma}}_2,
    \label{eq:trace-commutator-from-sqrt-commutator}
\end{align}
where \(\norm{\sqrt{\vsigma}}_2^2=\Tr(\vsigma)=1\).
Finally, the assumption $|h_\gamma|{\leq}1$ implies \(\sum_{\gamma\in\Gamma}|h_\gamma| \leq|\Gamma|\). Bounded interaction degree implies \(|\Gamma| \leq c_{\mathrm{int}}(d,k)n\) for a system-size-independent constant $c_{\mathrm{int}}(d,k)$. Substituting
these estimates into \eqref{eq:sqrt-state-commutator-bound} and absorbing the scalar prefactor into $e^{\poly(\beta^{\pm 1})}$ proves
\eqref{eq:adb-implies-approximate-commutation}.
\end{proof}

Combining \Cref{thm:local-adb,thm:approximate-commute-deduce} gives a commutator bound linear in $n$ when the metastable error is given for every local thermal generator. \Cref{sec:lower-bounds} shows that this linear dependence on the system size is optimal for the global stationarity estimate.

\begingroup
\makeatletter
\@for\BAppendixEquationLabel:=%
eq:product-raw-cutoff-time,%
eq:product-outer-cutoff-bound,%
eq:product-high-low-balance%
\do{%
  \expandafter\MT@newlabel\expandafter{\BAppendixEquationLabel}%
  \edef\BAppendixEquationTag{{\BAppendixEquationLabel}}%
  \expandafter\MT@newlabel\expandafter{\BAppendixEquationTag}%
}
\makeatother
\endgroup
\subsection{Approximate detailed balance for time-evolved local products}\label{sec:adb-appendix}
The {ADB condition} in \Cref{defn:adb} supplies the following estimates for time-evolved local observables and low-frequency products. They provide the local-product estimate used in \Cref{thm:approximate-commute-deduce}, and will also be used for ADB-based linearization and local faithfulness.

\begin{lem}[ADB bound for a time-evolved operator]\label{lem:time-evolved-adb}
{Let $\vH$ satisfy \Cref{defn:local-Hamiltonian}. For each single-site Pauli jump $\vA^a$ and every real $t$,}
\begingroup
\begin{align}
    \int_{-\infty}^{\infty}
    \lnormp{\vA^a(\omega,t)\sqrt{\vsigma}
      -\sqrt{\vsigma}\vrho^{-\frac12}\vA^a(\omega,t)\vrho^{\frac12}}{2}^2
      \gamma(\omega){\rd\omega} &\lesssim e^{2\pi|t|/\beta}\mathsf{ADB}_a[\vsigma].
\end{align}
\endgroup
\end{lem}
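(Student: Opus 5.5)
The plan is to use the fact that the ADB defect at time $t$ is the defect at time $0$ for a rotated input state, and then reduce to the definition of $\mathsf{ADB}_a$ through the time weight $g$. Write $\vU_t=e^{-it\vH}$. Since $\vrho$ commutes with $\vU_t$ and $\vA^a(\omega,t)=\vU_t^\dagger\vA^a(\omega)\vU_t$, unitary invariance of $\|\cdot\|_2$ gives
\begin{align}
\lnormp{\vA^a(\omega,t)\sqrt{\vsigma}-\sqrt{\vsigma}\vrho^{-\frac12}\vA^a(\omega,t)\vrho^{\frac12}}{2}
=\lnormp{\vA^a(\omega)\sqrt{\vsigma_t}-\sqrt{\vsigma_t}\vrho^{-\frac12}\vA^a(\omega)\vrho^{\frac12}}{2},
\end{align}
where $\vsigma_t:=\vU_t\vsigma\vU_t^\dagger$. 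Thus the left-hand side of the lemma is the time-independent defect $D(\vsigma_t)$ of \Cref{defn:adb}, evaluated at $\vsigma_t$. The same identity shows that $\mathsf{ADB}_a[\vsigma_t]=\iint\gamma(\omega)g(s)\,\|\cdots\|_2^2(\omega,s+t)\,\dd\omega\,\dd s$. This is the original ADB integrand with the time weight shifted to $g(s-t)$.

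Next, apply \cite[Lemma~C.1]{bergamaschi2025metastability}, quoted in \Cref{defn:adb}, to the state $\vsigma_t$. This gives $D(\vsigma_t)\lesssim\mathsf{ADB}_a[\vsigma_t]$ with a constant independent of the state.

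It then remains to compare the shifted weight with the original one. The shift inequality \eqref{eq:twirl-shift-comparison} from the proof of \Cref{lem:twirl-base-comparison} gives $g(s-t)\leq e^{2\pi|t|/\beta}g(s)$ pointwise. Integrating this against the nonnegative integrand $\gamma(\omega)\|\cdots\|_2^2(\omega,s)$ yields $\mathsf{ADB}_a[\vsigma_t]\leq e^{2\pi|t|/\beta}\mathsf{ADB}_a[\vsigma]$. Chaining this with the previous step proves the claim.

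The one point to check carefully is that Lemma~C.1 of the cited work holds for arbitrary (not necessarily full-rank) states, with a constant independent of $\vH$ and $\vsigma$. This should hold, since that lemma compares kernels linearly in the square-root defect and is used that way in \Cref{defn:adb}. If it were unavailable, I would instead reprove it directly. For this, I would write $\vA^a(\omega)$ as a Gaussian-weighted average of $\vA^a(\omega,s)$ using the Bohr decomposition \eqref{eq:bohr-decomposition-basic}, and bound the result by Cauchy--Schwarz against $g$. That route costs only $\beta$-dependent constants, but it is the step most likely to need care.
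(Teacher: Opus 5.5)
Your proof is correct and is essentially the paper's argument. The paper instead treats the left-hand side as the time-independent defect of the evolved Hermitian jump $\vA^a_{\vH}(t)$, bounds $\mathsf{ADB}_{[\vA^a_{\vH}(t)]}[\vsigma]$ by the same shift inequality $g(s-t)\leq e^{2\pi|t|/\beta}g(s)$, and then invokes Ref.~\cite[Lemma~C.1]{bergamaschi2025metastability}; you move the evolution onto the state, which is equivalent by unitary invariance, so you only apply Lemma~C.1 to the original single-site jump, and the rank concern you raise does not arise because $\vsigma_t$ is unitarily equivalent to $\vsigma$.
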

\begin{proof}
Compare the shifted and unshifted ADB integrals and use $g(t)=1/[\beta\cosh(2\pi t/\beta)]$.
By definition,
\begingroup
\begin{align}
    \mathsf{ADB}_{[{\vA^a_{\vH}(t)}]}[\vsigma]
    &=\iint_{-\infty}^{\infty}\gamma(\omega)g(t') \times\lnormp{\vA^a(\omega,t'+t)\sqrt{\vsigma}
      -\sqrt{\vsigma}\vrho^{-\frac12}\vA^a(\omega,t'+t)\vrho^{\frac12}}{2}^2
      \rd\omega\rd t'.
\end{align}
\endgroup
After the change of variables $t'\mapsto t'-t$,
\begingroup
\begin{align}
    &\iint_{-\infty}^{\infty}\gamma(\omega)g(t'-t)
      \lnormp{\vA^a(\omega,t')\sqrt{\vsigma}
      -\sqrt{\vsigma}\vrho^{-\frac12}\vA^a(\omega,t')\vrho^{\frac12}}{2}^2
      \rd\omega\rd t' \nonumber\\
    &\leq e^{2\pi|t|/\beta}\iint_{-\infty}^{\infty}\gamma(\omega)g(t')
      \lnormp{\vA^a(\omega,t')\sqrt{\vsigma}
      -\sqrt{\vsigma}\vrho^{-\frac12}\vA^a(\omega,t')\vrho^{\frac12}}{2}^2
      \rd\omega\rd t' =e^{2\pi|t|/\beta}\mathsf{ADB}_a[\vsigma].
\end{align}
\endgroup
{Apply the time-independent Gaussian-ADB estimate from Ref.~\cite[Lemma C.1]{bergamaschi2025metastability}, recalled in \Cref{defn:adb}, to the Hermitian jump $\vA^a_{\vH}(t)$. This proves the claim.}
\end{proof}

We next bound the low-frequency component of a local product of Pauli operators.
\begin{lem}\label{lem:low_freq_bound}
{Let $\vH$ satisfy \Cref{defn:local-Hamiltonian}, let $\Delta>0$, and write $\vB=\prod_{i=1}^q\vB_i$ in a fixed order, with $q\geq1$ norm-one single-site Pauli factors. Sites may repeat. Set $\epsilon_B:=\max_{1\leq i\leq q}\mathsf{ADB}_{\vB_i}[\vsigma]$, and use the unnormalized cutoff from \eqref{eq:low-frequency-part-commutation}.}
\begin{itemize}
    \item Its low-frequency component obeys
    \begingroup
\begin{align}
    \|\sqrt{\vsigma}\vB_{\leq\Delta}
        -\sqrt{\vrho}\vB_{\leq\Delta}\sqrt{\vrho^{-1}}\sqrt{\vsigma}\|_2 &\leq{\epsilon_B}^{\frac{1}{32\beta d+2}}
        e^{\frac{8\beta^2d+1/4}{16\beta d+1}\Delta}
        {e^{q\,\poly(\beta^{\pm1})}}.
\end{align}
\endgroup
    \item For $\vB$ evolved for time $t_0$, the bound is
    \begingroup
\begin{align}
    \|\sqrt{\vsigma}\vB(t_0)_{\leq\Delta}
        -\sqrt{\vrho}\vB(t_0)_{\leq\Delta}\sqrt{\vrho^{-1}}\sqrt{\vsigma}\|_2 &\leq{\epsilon_B}^{\frac{1}{32\beta d+2}}
        e^{\frac{8\beta^2d+1/4}{16\beta d+1}\Delta}
        {e^{q\,\poly(\beta^{\pm1})}}
        e^{\frac{\pi|t_0|}{\beta(16\beta d+1)}}.
\end{align}
\endgroup
\end{itemize}
{The polynomial is independent of $q$, the sites, $\Delta$, $t_0$, and the system size. The empty product has zero defect.}
\end{lem}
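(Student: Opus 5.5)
Both bullets will follow from one telescoping argument. The single-site factors are passed through $\sqrt{\vsigma}$ one at a time, each passage paying a Gibbs conjugation controlled by \Cref{lem:time-evolved-adb}. Frequency and time cutoffs are then balanced against the resulting exponential blow-ups.

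\emph{Step 1: pass to time-domain jumps.} Using the reconstruction identity \eqref{eq:operator-ft-reconstruction-basic}, I write each factor as $\vB_i=c_\sigma\int\widehat{\vB_i}(\omega_i)\,\dd\omega_i$. I then split each frequency integral at a cutoff $|\omega_i|\leq\Omega$; the Gaussian tail $\widehat f_\sigma$ makes the high-frequency remainder at most $e^{-\Omega^2\beta^2/8}$ in a suitable norm. For the low-frequency part, \eqref{eq:imaginary-time-norm-basic} bounds every conjugated factor $\vrho^{\mp1/2}\widehat{\vB_i}(\omega_i)\vrho^{\pm1/2}$ by $e^{\beta\Omega/2}$. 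The outer filter $(\cdot)_{\leq\Delta}$ is handled the same way: it is a time integral $\int \vB(s) \hat\chi(s)\,\dd s$ with a kernel decaying like a sinc/Gaussian. Truncating $|s|\leq T$ and noting that $\vrho^{1/2}\vB_{\leq\Delta}\vrho^{-1/2}$ has norm $\lesssim e^{\beta\Delta/2}\times$ (norm of the filtered product) explains the $\Delta$-dependent exponential factor.

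\emph{Step 2: telescoping.} For a product $\vX_1\cdots\vX_q$ with $\vX_i=\widehat{\vB_i}(\omega_i,s)$, I write
\[
\sqrt{\vsigma}\,\vX_1\cdots\vX_q-\vrho^{1/2}\vX_1\cdots\vX_q\vrho^{-1/2}\sqrt{\vsigma}
=\sum_{j}\vrho^{1/2}\vX_1\cdots\vX_{j-1}\vrho^{-1/2}\Bigl(\sqrt{\vsigma}\vX_j-\vrho^{1/2}\vX_j\vrho^{-1/2}\sqrt{\vsigma}\Bigr)\vX_{j+1}\cdots\vX_q .
\]
Each bracket is the adjoint form of the ADB defect for the Hermitian jump $\vB_j$, with the roles of $\omega$ and $-\omega$ exchanged by taking adjoints. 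After integrating in $\omega_j$ with Cauchy--Schwarz, \Cref{lem:time-evolved-adb} bounds it by $e^{\pi|s|/\beta}\sqrt{\epsilon_B}$ times the weight $\gamma^{-1/2}$. The weight $\gamma(\omega)^{-1}\leq e^{\beta(\Omega+1/\beta)}$ is bounded on the truncated window. The prefactors $\vrho^{1/2}(\cdots)\vrho^{-1/2}$ are products of at most $q$ conjugated filtered operators. This gives $e^{q\beta\Omega/2}\sigma^{-q/2}$, hence the $e^{q\poly(\beta^{\pm1})}$ dependence once $\Omega$ is chosen. For the time-evolved version $\vB(t_0)$, the filter time $s$ is simply shifted to $s+t_0$, which produces the extra factor $e^{\pi|t_0|/\beta}$ raised to the final interpolation exponent.

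\emph{Step 3: balancing.} The total bound has the shape
\[
e^{q\beta\Omega}e^{\beta\Delta/2}e^{\pi(T+|t_0|)/\beta}\sqrt{\epsilon_B}+e^{-c\Omega^2}+\text{(time tail)}.
\]
I expect this balancing to be the main obstacle. The time tail of the filtered product is only exponential: the imaginary-time conjugation of a time-evolved operator grows like a Lieb--Robinson light cone, $e^{8\beta d|s|}$-type, via \Cref{lem:basic-lr}, rather than Gaussian. Optimizing $T\sim\log(1/\epsilon_B)/(16\beta d+1)$ against $e^{-\pi T/\beta}\cdot e^{O(\beta d T)}$ yields the exponent $\epsilon_B^{1/(32\beta d+2)}$. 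The same optimization produces the $\Delta$- and $t_0$-exponents $\frac{8\beta^2 d+1/4}{16\beta d+1}$ and $\frac{\pi}{\beta(16\beta d+1)}$.

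The delicate point is to control $\vrho^{1/2}\vB(s)_{\text{filtered}}\vrho^{-1/2}$ for large $|s|$ without $n$-dependence. For this I would use the frequency-shift identity \eqref{eq:imaginary-time-positive-basic} on the whole product rather than on the factors, paying only $e^{\beta\Delta/2}$. The empty product is trivial.
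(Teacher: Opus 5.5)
There is a genuine gap. Your Step~2 telescoping is the right skeleton: it is the identity the paper uses, together with the adjoint/frequency-reversal form of \Cref{lem:time-evolved-adb} and Cauchy--Schwarz against $\gamma^{-1}$ on a bounded window. The frequency splitting in Step~1 fails, however, and with it the claimed form of the bound.

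First, the Gaussian $\widehat f_\sigma$ does not make the high-frequency remainder $\lesssim e^{-\Omega^2\beta^2/8}$ uniformly in $n$. The identity $\widehat{\vB_i}(\omega)=\sum_\nu(\vB_i)_\nu\widehat f_\sigma(\omega-\nu)$ only localizes $\omega$ near the Bohr frequencies $\nu$, and for a many-body $\vH$ these spread over $[-2\|\vH\|,2\|\vH\|]$. An $n$-independent tail must come from locality. The bounded-degree commutator expansion gives $\|e^{z\vH}\vB_ie^{-z\vH}\|\leq2$ for $|z|\leq1/(4d)$, and the Gaussian-shift identity then gives $\|\widehat{\vB_i}(\omega)\|\lesssim e^{-|\omega|/(4d)}$, as in the proof of \Cref{lem:weak-high-frequency}. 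The decay is exponential, not Gaussian.

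Second, with an exponential tail your per-factor cutoffs $|\omega_i|\leq\Omega$ cannot give the stated $q$-independent exponent. The conjugated prefixes cost up to $e^{(q-1)\beta\Omega/2}$, so the low part carries a factor $e^{cq\beta\Omega}$. Since $\Omega$ must now be of order $\log(1/\epsilon_B)$, this factor is a negative power of $\epsilon_B$, not $e^{q\poly(\beta^{\pm1})}$. Balancing against $e^{-\Omega/(4d)}$ then yields an $\epsilon_B$-exponent of order $1/(1+q\beta d)$.

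The missing idea is to cut on the \emph{total} frequency, $U_\Omega=\{\sum_i|\omega_i|\leq\Omega\}$. On this simplex every conjugated prefix costs at most $C_\beta^{q}e^{\beta\Omega/2}$. The inactive frequencies have volume $(2\Omega)^{q-1}/(q-1)!\leq(2/\beta)^{q-1}e^{\beta\Omega}$, so the low part is $C_\beta^q\sqrt{\epsilon_B}\,e^{2\beta\Omega}e^{\pi|t|/\beta}$. The complement still factorizes through $e^{-\sum_i|\omega_i|/(4d)}\leq e^{-\Omega/(8d)}\prod_ie^{-|\omega_i|/(8d)}$ and has operator norm at most $C_\beta^qe^{-\Omega/(8d)}$. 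Your whole-product shift identity is exactly what handles the outer cutoff on this complement, at cost $e^{\beta\Delta/2}$. It cannot substitute for the simplex inside the telescoping, where the conjugated prefixes carry no outer filter.

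Step~3 also misidentifies the obstacle: there is no time tail to fight. Real-time $\vH$-evolution commutes with $\vrho^{\pm1/2}$, so Gibbs conjugation of $\vB(s)$ has no Lieb--Robinson growth. The outer filter has the exact representation $\frac{2}{\sqrt{2\pi}}\int(\cdots)\frac{\sin(t\Delta)}{t}f_\sigma(t)\,\dd t$. Its Gaussian kernel absorbs the $e^{\pi|t|/\beta}$ from \Cref{lem:time-evolved-adb} at cost $\lesssim\Delta\lesssim e^{\Delta/4}$, leaving only $e^{\pi|t_0|/\beta}$.

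No time truncation $T$ is needed. The proposed optimization of $T$ against $e^{-\pi T/\beta}e^{O(\beta dT)}$ is not a bound, since for $\beta^2d\gtrsim1$ the growth beats the decay; it only reverse-engineers the exponents. They actually come from balancing, in $\Omega$,
\[
C_\beta^q\Bigl[e^{-\Omega/(8d)}e^{\beta\Delta/2}+\sqrt{\epsilon_B}\,e^{2\beta\Omega}e^{\Delta/4}e^{\pi|t_0|/\beta}\Bigr].
\]
This gives $\frac12\cdot\frac{1/(8d)}{1/(8d)+2\beta}=\frac{1}{32\beta d+2}$. The weights $\frac{16\beta d}{16\beta d+1}$ and $\frac{1}{16\beta d+1}$ then give the stated $\Delta$- and $t_0$-exponents; note that only the high part carries $e^{\beta\Delta/2}$. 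When the balancing cutoff would be negative, the trivial outer-cutoff bound on the norm-one product already suffices.
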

For a fixed frequency cutoff, a state-independent bound on the low-frequency ADB defect follows directly from the operator-norm bound under Gibbs conjugation. The point of this lemma is to retain dependence on the local Pauli ADB errors, with a prefactor that grows exponentially in the number of Pauli factors for fixed temperature, cutoff, and evolution time.
\begin{proof}
{We prove the time-evolved estimate; setting $t_0=0$ gives the first bound. Throughout, $\sigma=\beta^{-1}$, $\vB(t)=\vB_{\vH}(t)$, and $q$ counts factors, not support sites. Let ${C_{\beta}}\geq1$ denote a constant, allowed to increase between estimates, independent of $q$, the cutoffs, the times, and $n$, with ${C_{\beta}}\leq e^{\poly(\beta^{\pm1})}$.}

{The exact time representation of the unnormalized cutoff is}
\begin{align}
    &\sqrt{\vsigma}\vB(t_0)_{\leq\Delta}
       -\sqrt{\vrho}\vB(t_0)_{\leq\Delta}\sqrt{\vrho^{-1}}\sqrt{\vsigma}
       \nonumber\\
    =&{\frac{2}{\sqrt{2\pi}}}
       \int_{{-\infty}}^{{\infty}}
       \Bigl(\sqrt{\vsigma}\vB(t+t_0)
       -\sqrt{\vrho}\vB(t+t_0)\sqrt{\vrho^{-1}}\sqrt{\vsigma}\Bigr)
       \frac{\sin(t\Delta)}{t}
       {f_\sigma(t)}\,{\dd t}.
    \label{eq:product-raw-cutoff-time}
\end{align}
{Write $\hat{\vB}_i(\omega_i,t)=\widehat{\vB_i}_{\vH}(\omega_i,t)$.} {The reconstruction identity in Appendix~\ref{sec:basic-notation} gives}
\begin{align}
    \vB(t)
    &={{\frac{1}{(2\sigma\sqrt{2\pi})^{q/2}}}}
       \int_{{-\infty}}^{{\infty}}\!\cdots\!
       \int_{{-\infty}}^{{\infty}}
       {\prod_{i=1}^q}\hat{\vB}_i(\omega_i,t)\dd\vomega,
       \qquad \vomega=(\omega_1,\ldots,\omega_q).
    \label{eq:product-exact-reconstruction}
\end{align}
Define the low-total-frequency region
$\vU_{\vOmega}=\{\vomega:\sum_{i=1}^q|\omega_i|\leq\vOmega\}$
{for $\vOmega\geq0$}, and let $\vU_{\vOmega}^c$ denote its complement. Decompose
\begingroup
\begin{align}
    \vB_{\in\vU_{\vOmega}}
    &={\frac{1}{(2\sigma\sqrt{2\pi})^{q/2}}}\int_{\vomega\in\vU_{\vOmega}}
       {\prod_{i=1}^q}\hat{\vB}_i(\omega_i)\dd\vomega,\qquad \vB_{\notin\vU_{\vOmega}}
    ={\frac{1}{(2\sigma\sqrt{2\pi})^{q/2}}}\int_{\vomega\in\vU_{\vOmega}^c}
       {\prod_{i=1}^q}\hat{\vB}_i(\omega_i)\dd\vomega.
\end{align}
\endgroup
We estimate these two contributions separately.

{For the low-total-frequency part, telescope the product. The summand at position $i=0,\ldots,q-1$, before the common reconstruction factor and frequency integral, is}
\begingroup
\begin{align}
    &\sqrt{\vrho}{\prod_{j=1}^i}\hat{\vB}_j(\omega_j,t)
      \sqrt{\vrho^{-1}}\sqrt{\vsigma}
      {\prod_{j=i+1}^q}\hat{\vB}_j(\omega_j,t) -\sqrt{\vrho}{\prod_{j=1}^{i+1}}\hat{\vB}_j(\omega_j,t)
      \sqrt{\vrho^{-1}}\sqrt{\vsigma}
      {\prod_{j=i+2}^q}\hat{\vB}_j(\omega_j,t) \nonumber\\
    =&\sqrt{\vrho}{\prod_{j=1}^i}\hat{\vB}_j(\omega_j,t)\sqrt{\vrho^{-1}}
      \Bigl(\sqrt{\vsigma}\hat{\vB}_{i+1}(\omega_{i+1},t)
      -\sqrt{\vrho}\hat{\vB}_{i+1}(\omega_{i+1},t)
       \sqrt{\vrho^{-1}}\sqrt{\vsigma}\Bigr) {{}\times{}}{\prod_{j=i+2}^q}\hat{\vB}_j(\omega_j,t).
\end{align}
\endgroup
Empty products are identities. By the Gaussian conjugation and norm bounds in Appendix~\ref{sec:basic-notation}, the remaining $q-1$ factors satisfy, on $\vU_{\vOmega}$,
\begin{align}
    &\|\sqrt{\vrho}{\prod_{j=1}^i}\hat{\vB}_j(\omega_j,t)\sqrt{\vrho^{-1}}\|
     \|{\prod_{j=i+2}^q}\hat{\vB}_j(\omega_j,t)\|
     \leq{{C_{\beta}}^{q-1}}e^{\beta\vOmega/2}.
\end{align}
{For the active factor $i$, enlarge its frequency interval to $[-\Omega,\Omega]$ and integrate the other $q-1$ frequencies:}
\begin{align}
    &\int_{\vomega\in\vU_{\vOmega}}
      \bigl\|\sqrt{\vsigma}\hat{\vB}_i(\omega_i,t)
      -\vrho^{1/2}\hat{\vB}_i(\omega_i,t)\vrho^{-1/2}\sqrt{\vsigma}\bigr\|_2\dd\vomega
      \nonumber\\
    \leq&{\int_{\sum_{j\ne i}|\omega_j|\leq\Omega}
       \prod_{j\ne i}\dd\omega_j}\,
       \int_{-\Omega}^{\Omega}
      \bigl\|\sqrt{\vsigma}\hat{\vB}_i(\omega_i,t)
      -\vrho^{1/2}\hat{\vB}_i(\omega_i,t)\vrho^{-1/2}\sqrt{\vsigma}\bigr\|_2\dd\omega_i
      \nonumber\\
    =&\frac{(2\Omega)^{{q-1}}}{({q-1})!}
       \int_{-\Omega}^{\Omega}
      \bigl\|\sqrt{\vsigma}\hat{\vB}_i(\omega_i,t)
      -\vrho^{1/2}\hat{\vB}_i(\omega_i,t)\vrho^{-1/2}\sqrt{\vsigma}\bigr\|_2\dd\omega_i.
\end{align}
{For $q=1$, the zero-dimensional volume is one. Since $\hat{\vB}_i(\omega_i,t)^\dagger=\hat{\vB}_i(-\omega_i,t)$, \Cref{lem:time-evolved-adb}, adjoint/frequency reversal, and Cauchy--Schwarz give}
\begingroup
\begin{align}
    &\int_{-\Omega}^{\Omega}
      \bigl\|\sqrt{\vsigma}\hat{\vB}_i(\omega_i,t)
      -\vrho^{1/2}\hat{\vB}_i(\omega_i,t)\vrho^{-1/2}\sqrt{\vsigma}\bigr\|_2\dd\omega_i \nonumber\\
    \leq&\left[\int_{-\Omega}^{\Omega}\gamma(-\omega_i)
      \bigl\|\sqrt{\vsigma}\hat{\vB}_i(\omega_i,t)
      -\vrho^{1/2}\hat{\vB}_i(\omega_i,t)\vrho^{-1/2}\sqrt{\vsigma}\bigr\|_2^2\dd\omega_i\right]^{1/2} \times{\left[\int_{-\Omega}^{\Omega}
      \frac{\dd\omega_i}{\gamma(-\omega_i)}\right]^{1/2}}\\ \leq&{C_{\beta}\epsilon_B^{1/2}}
      e^{\pi|t|/\beta}\frac{e^{\beta\vOmega/2}}{\beta^{1/2}}.
\end{align}
\endgroup
{Summing the $q$ telescoping terms and restoring the reconstruction factor bounds the low-total-frequency defect at time $t$ by}
\begingroup
\begin{align}
    {q  C_{\beta}^{q}}
      e^{\beta\vOmega/2}\frac{(2\Omega)^{{q-1}}}{({q-1})!}
      {\epsilon_B^{1/2}}e^{\pi|t|/\beta}
      \frac{e^{\beta\vOmega/2}}{\beta^{1/2}} &\leq{C_{\beta}^{q}\sqrt{\epsilon_B}}
      e^{2\beta\vOmega}e^{\pi|t|/\beta}.
\end{align}
\endgroup
Applying \eqref{eq:product-raw-cutoff-time} to this component gives a norm bound of
\begingroup
\begin{align}
    {\frac{2}{\sqrt{2\pi}}}
      \int_{{-\infty}}^{{\infty}}
      {C_{\beta}^{q}\sqrt{\epsilon_B}}e^{2\beta\vOmega}
      e^{\pi(|t|+|t_0|)/\beta}
      {\left|\frac{\sin(t\Delta)}{t}\right|f_\sigma(t)}\dd t &\leq{C_{\beta}^{q}\sqrt{\epsilon_B}}
      e^{2\beta\vOmega}{\Delta}\,e^{\pi|t_0|/\beta} \leq{C_{\beta}^{q}\sqrt{\epsilon_B}}
      e^{2\beta\vOmega}e^{{\Delta/4}}e^{\pi|t_0|/\beta}.
\end{align}
\endgroup
The time growth is absorbed after integration against the Gaussian.

For the complementary contribution $\vB_{\notin\vU_{\vOmega}}$, first note that
\begin{align}
    \|\vB_{\notin\vU_{\vOmega}}\|
    &\leq{{\frac{1}{(2\sigma\sqrt{2\pi})^{q/2}}}}\int_{\vomega\in\vU_{\vOmega}^c}
      {\prod_{i=1}^q}\|\hat{\vB}_i(\omega_i)\|\dd\vomega.
\end{align}
{The unitaries generated by $\vH$ commute with $\vrho$, so the same norm bounds hold after real-time evolution. The bounded-degree commutator expansion and Gaussian shift (also detailed in the proof of \Cref{lem:weak-high-frequency}) give}
\begin{align}
    \|\hat{\vB}_i(\omega_i)\|
    &\leq\frac{2e^{-|\omega_i|/(4d)+\sigma^2/(16d^2)}}{\sqrt{\sigma}(2\pi)^{1/4}}.
\end{align}
{Factorizing the full frequency integral avoids any power of the cutoff:}
\begingroup
\begin{align}
    \|\vB_{\notin\vU_{\vOmega}}\|\leq C_{\beta}^q\int_{\vomega\in\vU_{\vOmega}^c}
       e^{-\sum_i|\omega_i|/(4d)}\dd\vomega
    &\leq C_{\beta}^q e^{-\vOmega/(8d)}
       \prod_{i=1}^q\int_{-\infty}^{\infty}e^{-|\omega_i|/(8d)}\dd\omega_i =C_{\beta}^q e^{-\vOmega/(8d)}.
\end{align}
\endgroup
For the outer cutoff, the Gaussian shift in Appendix~\ref{sec:basic-notation} gives, for any operator $\vX$ and either sign,
\begin{align}
    {\vrho^{\pm1/2}\vX_{\leq\Delta}\vrho^{\mp1/2}}
    &{{}=\int_{-\Delta}^{\Delta}
       e^{\mp\beta\omega/2+1/4}
       \widehat{\vX}_{\vH}(\omega\mp\beta^{-1})\dd\omega.}
    \label{eq:imaginary-time-minus-kernel-adb}
\end{align}
{Using $\|\widehat{\vX}_{\vH}(\omega)\|\leq\sqrt{\beta}(2\pi)^{-1/4}\|\vX\|$ in this integral and in the unconjugated cutoff shows that}
\begin{align}
    {\|\vX_{\leq\Delta}\|
      +\|\vrho^{\pm1/2}\vX_{\leq\Delta}\vrho^{\mp1/2}\|}
    &{{}\leq {C_{\beta}}e^{\beta\Delta/2}\|\vX\|.}
    \label{eq:product-outer-cutoff-bound}
\end{align}
{Apply this with $\vX=\vB_{\notin\vU_{\vOmega}}(t_0)$. Since $\|\sqrt{\vsigma}\|_2=1$, the high-total-frequency defect is bounded by}
\begingroup
\begin{align}
    \|\sqrt{\vsigma}(\vB_{\notin\vU_{\vOmega}}){(t_0)}_{\leq\Delta}
      -\sqrt{\vrho}(\vB_{\notin\vU_{\vOmega}}){(t_0)}_{\leq\Delta}
       \sqrt{\vrho^{-1}}\sqrt{\vsigma}\|_2 &\leq\|(\vB_{\notin\vU_{\vOmega}}){(t_0)}_{\leq\Delta}\|
      +\|\sqrt{\vrho}(\vB_{\notin\vU_{\vOmega}}){(t_0)}_{\leq\Delta}\sqrt{\vrho^{-1}}\| \nonumber\\
    &\leq e^{-\vOmega/(8d)}e^{\beta\Delta/2}{C_{\beta}^q}.
\end{align}
\endgroup
{Combining the two contributions gives}
\begingroup
\begin{align}
    \|\sqrt{\vsigma}\vB(t_0)_{\leq\Delta}
      -\sqrt{\vrho}\vB(t_0)_{\leq\Delta}\sqrt{\vrho^{-1}}\sqrt{\vsigma}\|_2 &\leq{C_{\beta}^q}\Bigl[
      e^{-\vOmega/(8d)}e^{\beta\Delta/2}
      +{\sqrt{\epsilon_B}}e^{2\beta\vOmega}
       e^{{\Delta/4}}e^{\pi|t_0|/\beta}\Bigr].
    \label{eq:product-high-low-balance}
\end{align}
\endgroup
{For $0<\epsilon_B\leq1$, balance the two bracketed terms at the formal cutoff}
\begingroup
\begin{align}
    {\Omega_*}
    &{:=\frac{4d}{16\beta d+1}\log\frac1{\epsilon_B}
      +\frac{2d(2\beta-1)}{16\beta d+1}\Delta
      -\frac{8\pi d}{\beta(16\beta d+1)}|t_0|,}\qquad {\vOmega}{:=\max\{0,\Omega_*\}.}
    \label{eq:product-optimizing-cutoff}
\end{align}
\endgroup
{If $\Omega_*\geq0$, substitution gives}
\begin{align}
    {{C_{\beta}}^q\epsilon_B^{\frac{1}{32\beta d+2}}}
    e^{\frac{8\beta^2d+1/4}{16\beta d+1}\Delta}
    e^{\frac{\pi|t_0|}{\beta(16\beta d+1)}}.
\end{align}
{If $\Omega_*<0$, then $\epsilon_B^{1/2}e^{\Delta/4+\pi|t_0|/\beta}>e^{\beta\Delta/2}$. Their weighted geometric mean, with respective weights $1/(16\beta d+1)$ and $16\beta d/(16\beta d+1)$, is the preceding bound without ${C_{\beta}}^q$ and is at least $e^{\beta\Delta/2}$. Thus \eqref{eq:product-outer-cutoff-bound} applied to the full product, whose norm is one, proves the same conclusion. The same case split also applies when $\epsilon_B>1$.}
\end{proof}

\begingroup
\makeatletter
\@for\BAppendixEquationLabel:=%
eq:weak-product,%
eq:weak-final,%
eq:weak-commutator-identity,%
eq:weak-commutator-defects,%
eq:weak-scalar-kernel,%
eq:weak-cutoff,%
eq:weak-single-frequency-envelopes,%
eq:weak-low-frequency,%
eq:weak-combined-cutoff-bound%
\do{%
  \expandafter\MT@newlabel\expandafter{\BAppendixEquationLabel}%
  \edef\BAppendixEquationTag{{\BAppendixEquationLabel}}%
  \expandafter\MT@newlabel\expandafter{\BAppendixEquationTag}%
}
\makeatother
\endgroup
\subsection{Approximate stationarity for learning tests}\label{subsec:learning-test-stationarity}

The estimates in \Cref{sec:adb-appendix} yield the global commutator bound of \Cref{thm:approximate-commute-deduce}, which contains a factor of the system size $n$. For learning, we only need this commutator tested against products $\vB$ of evolved local observables. By cyclicity of the trace,
\begin{equation*}
    \left|\tr\!\left(\vsigma[\vH,\vB]\right)\right|
    =\left|\tr\!\left(\vB[\vsigma,\vH]\right)\right|
    \leq\|\vB\|\,\|[\vH,\vsigma]\|_1.
\end{equation*}
Using the global bound here would retain the factor $n$. We therefore bound the left-hand side directly using local ADB, obtaining an estimate independent of $n$ for a fixed number of factors and fixed evolution times. This is the form needed to transfer Hamiltonian dynamics in \Cref{subsec:tested-dynamics-transfer}.

\begin{thm}[Tested commutators of evolved Pauli products]\label{thm:weak}
{Let $\vH$ satisfy \Cref{defn:local-Hamiltonian}, and let $\epsilon_{\mathrm{ADB}}$ be the uniform single-site error in \eqref{eq:intro-uniform-ADB-error}. Consider an ordered product with $q\geq1$,}
\begin{equation}
    \vB=\prod_{j=1}^{q}(\vA_j)_{\vH}(t_j)
    \label{eq:weak-product}
\end{equation}
{where each $\vA_j$ is a norm-one single-site Pauli and $T=\max_{1\leq j\leq q}|t_j|$. The sites may repeat. There are fixed positive polynomials, denoted by $\poly(\beta^{\pm1})$ and allowed to differ between occurrences, such that}
\begin{equation}
    \left|\tr\!\left(\vsigma[\vH,\vB]\right)\right|
    \leq e^{\poly(\beta^{\pm 1})\cdot q}\,
    \epsilon_{\mathrm{ADB}}^{\frac{1}{\poly(\beta^{\pm 1})}}
    \exp\!\left(
        \frac{\pi T}
             {\beta(16\beta{d}+1)}
    \right).
    \label{eq:weak-final}
\end{equation}
{The polynomial coefficients depend only on the fixed locality parameters and are independent of $n,q$, the times, and the factor locations.}
\end{thm}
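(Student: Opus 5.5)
The plan is to avoid the global commutator $[\vH,\vsigma]$ altogether. Instead, I would rewrite $\tr(\vsigma[\vH,\vB])$ as a difference of two Gibbs-conjugated expectations, each of which the local ADB defect of the $q$ single-site factors can control. The guiding identity is spectral. For a Bohr component $\vB_\nu$ one has $[\vH,\vB_\nu]=\nu\vB_\nu$ and $\vrho^{\pm1/2}\vB_\nu\vrho^{\mp1/2}=e^{\mp\beta\nu/2}\vB_\nu$. Hence, for any operator $\vX$,
\begin{equation*}
\vrho^{1/2}\vX\vrho^{-1/2}-\vrho^{-1/2}\vX\vrho^{1/2}=\sum_\nu\bigl(-2\sinh(\beta\nu/2)\bigr)\vX_\nu .
\end{equation*}
Choose $\vX$ to be the low-frequency filter of $\vB$ with spectral weight $\phi(\nu)=-\nu/(2\sinh(\beta\nu/2))$, which is smooth, bounded and positive near $\nu=0$. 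Then $\vrho^{1/2}\vX\vrho^{-1/2}-\vrho^{-1/2}\vX\vrho^{1/2}$ equals the low-frequency part $[\vH,\vB]_{\leq\Delta}$ of the commutator, up to the smoothing of the cutoff. This is the Gibbs-conjugation identity I would record first, in the spirit of \Cref{lem:weak-commutator-identity}. I would implement the weight as a time-domain kernel acting on $\vB_{\vH}(t)$, as in \eqref{eq:product-raw-cutoff-time}. The kernel $\phi(\nu)$ times a Gaussian cutoff at $\Delta$ has an integrable Fourier transform with $L^1$ norm at most $C_\beta e^{\beta\Delta/2}$.

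Next, I would show that each of the two Gibbs-conjugated expectations is close to $\tr(\vsigma\vX)$. Write $\tr(\vsigma\vX)=\tr(\sqrt{\vsigma}\,\sqrt{\vsigma}\vX)$ and use the ADB transfer $\sqrt{\vsigma}\vX\approx\vrho^{1/2}\vX\vrho^{-1/2}\sqrt{\vsigma}$ in Hilbert--Schmidt norm. Since $\|\sqrt{\vsigma}\|_2=1$, this gives $\tr(\vsigma\vX)\approx\tr(\vsigma\,\vrho^{1/2}\vX\vrho^{-1/2})$. Applying the adjoint version, obtained via frequency reversal $\hat{\vB}(\omega)^\dagger=\hat{\vB}(-\omega)$, to $\vX\sqrt{\vsigma}$ gives $\tr(\vsigma\vX)\approx\tr(\vsigma\,\vrho^{-1/2}\vX\vrho^{1/2})$. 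Subtracting the two relations shows that $|\tr(\vsigma[\vH,\vB]_{\leq\Delta})|$ is bounded by twice the low-frequency ADB defect of $\vB$, integrated against the kernel.

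For that defect I would invoke \Cref{lem:low_freq_bound}. The lemma is stated for a product evaluated at a common time $t_0$, whereas here the factors carry different times $t_j$. I would therefore rerun its proof with the reconstruction \eqref{eq:product-exact-reconstruction} applied factorwise to $\hat{\vB}_j(\omega_j,t+t_j)$. The telescoping step then uses \Cref{lem:time-evolved-adb} for the active factor at time $t+t_j$, which produces a factor of at most $e^{\pi(|t|+T)/\beta}$. The remaining factors are controlled by the Gaussian conjugation bounds, which are invariant under real-time evolution since it commutes with $\vrho$. The high-total-frequency part is handled by the exponential frequency envelopes $e^{-|\omega_j|/(4d)}$. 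Balancing $\Omega$ as in \eqref{eq:product-optimizing-cutoff} gives a bound of the form $C_\beta^q\,\epsilon_{\mathrm{ADB}}^{1/(32\beta d+2)}e^{c\Delta}e^{\pi T/(\beta(16\beta d+1))}$, which matches the $T$-dependence in \eqref{eq:weak-final}.

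The remaining piece, which I expect to be the main obstacle, is the high-frequency part $[\vH,\vB]_{>\Delta}$. This is where the factor $n$ must be avoided: $\|[\vH,\vB]\|$ itself is not $O(1)$ in general, because $\vB$ involves evolved operators with quasi-local tails. My first attempt would use Lieb--Robinson truncation (\Cref{lem:basic-lr}) to replace each $(\vA_j)_{\vH}(t_j)$ by its evolution under a truncated Hamiltonian on a ball of radius $\ell\sim T+\log(1/\epsilon)$. This leaves only $O(q\,\ell^D)$ terms of $\vH$ with nonzero commutator, and bounds the truncation error in trace pairing with $\vsigma$ by $C^q e^{-\ell}$. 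The high-frequency tail of the truncated commutator is then bounded through the envelopes $e^{-|\omega|/(4d)}$, in the manner of \Cref{lem:weak-high-frequency}, by $C_\beta^q\,\poly(\ell)\,e^{-\Delta/(8d)}$. Finally I would choose $\Delta$ and $\ell$ proportional to $\log(1/\epsilon_{\mathrm{ADB}})$ with small $\beta$-dependent constants, so that the $e^{c\Delta}$ loss from the low-frequency part is offset against the $e^{-\Delta/(8d)}$ gain from the tail. This balance determines the exponent $1/\poly(\beta^{\pm1})$ in \eqref{eq:weak-final}, with all constants independent of $n$.
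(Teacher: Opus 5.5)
The low-frequency half of your argument is sound and tracks the paper. The weight $\phi(\nu)=-\nu/(2\sinh(\beta\nu/2))$ is the frequency-domain form of the kernel $g_{\mathrm{id}}$ in \Cref{lem:weak-commutator-identity}, and your two Hilbert--Schmidt transfers are exactly the defects in \eqref{eq:weak-commutator-defects}. The paper instead cuts at the product level, using $\vB^{\mathrm{prod}}_{\leq\Omega}$ over $U_\Omega$ with no outer filter. After telescoping it then needs \Cref{lem:low_freq_bound} only for a single Pauli with budget $\Delta_j=\Omega-r_j$, so it never has to rerun that lemma with distinct times. Your variant also works, because the kernel of $\phi\chi$ decays like $e^{-2\pi|s|/\beta}$ and absorbs the growth from \Cref{lem:time-evolved-adb}.

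The gap is in the high-frequency step.

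\emph{The premise is false.} Since $[\vH,(\vA_j)_{\vH}(t_j)]=([\vH,\vA_j])_{\vH}(t_j)$, you have $\|[\vH,\vB]\|\leq 2dq$ with no dependence on $n$, as the paper notes right after the statement.

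\emph{The truncation removes what makes the envelope work.} The bound $\|\widehat{\vA}_{j,\vH}(\omega,t)\|\lesssim e^{-|\omega|/(4d)}$, uniform in $t$, rests on two facts:
\begin{itemize}
\item $e^{z\vH}(\vA_j)_{\vH}(t)e^{-z\vH}=(e^{z\vH}\vA_je^{-z\vH})_{\vH}(t)$, so the real-time evolution and the shift \eqref{eq:weak-gaussian-shift} are generated by the same $\vH$;
\item the single-site count \eqref{eq:weak-nested-commutators}.
\end{itemize}
The filter must stay with respect to $\vH$, since that is what pairs with $\vrho$. But $(\vA_j)_{\vH_\ell}(t_j)$ is evolved by a different generator and spread over $\sim\ell^D$ sites. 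For such an operator the nested-commutator count gives only one of two bounds.
\begin{itemize}
\item $\|e^{z\vH}(\cdot)e^{-z\vH}\|\leq 2^{O(\ell^D)}$ at $|z|=1/(4d)$. This is superpolynomial in $1/\epsilon_{\mathrm{ADB}}$ once $D\geq2$, and since $\ell\gtrsim T$ it replaces the stated $T$-rate by $2^{O(T^D)}$.
\item A decay rate of order $1/(d\ell^D)$. This forces $\Delta\gtrsim\ell^D\log(1/\epsilon_{\mathrm{ADB}})$, and the $e^{c\Delta}$ loss in your low-frequency bound then swamps $\epsilon_{\mathrm{ADB}}^{1/(32\beta d+2)}$.
\end{itemize}
Recovering the single-site envelope would require comparing back to the untruncated factors, which already have it. Separately, your truncation-error estimate would need $\|[\vH,\vB-\vB_\ell]\|$ small, not just $\|\vB-\vB_\ell\|$, since $[\vH,\cdot]$ has norm $O(n)$.

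\emph{The fix is simply not to truncate.} By Leibniz, $e^{z\vH}[\vH,\vB]e^{-z\vH}$ is a sum of $q$ words. Each word has $q-1$ factors $(e^{z\vH}\vA_ie^{-z\vH})_{\vH}(t_i)$ and one factor $(e^{z\vH}[\vH,\vA_j]e^{-z\vH})_{\vH}(t_j)$. So \eqref{eq:weak-small-imaginary-time} and its companion bound give norm at most $8dq\,2^{q-1}$ for $|z|\leq1/(4d)$. The Gaussian shift with $z=\operatorname{sgn}(\omega)/(4d)$ then yields $\|[\vH,\vB]_{>\Delta}\|\leq C_\beta^q e^{-\Delta/(4d)}$, uniformly in $n$ and the $t_j$. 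The paper does the product-level version of this in \Cref{lem:weak-high-frequency}.

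For this to work, $\chi$ must be the Gaussian-smoothed box underlying \eqref{eq:product-raw-cutoff-time}. A pure Gaussian window $e^{-\nu^2/(2\Delta^2)}$ leaves a complement of size $O(\Delta^{-2})$ near $\nu=0$, which is not exponentially small.
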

The operator norm $\|[\vH,\vB]\|$ is at most $2dq$, a bound that does not use approximate detailed balance. To control $|\Tr(\vsigma[\vH,\vB])|$ in terms of ADB, we first use \Cref{lem:weak-commutator-identity} to rewrite the commutator using Gibbs conjugations. After taking its expectation in $\vsigma$, this gives Hilbert--Schmidt norms of differences with the same structure as ADB. We then split the product into high- and low-frequency parts and estimate them separately. \Cref{lem:weak-high-frequency} uses locality to bound the high-frequency contribution by a quantity that decays exponentially with the cutoff. For the low-frequency part, \Cref{lem:weak-low-frequency} expands the product difference into single-factor differences and controls them by the local ADB error. Finally, choosing the cutoff to balance these two bounds proves \Cref{thm:weak}.

\begin{lem}[Commutator identity]\label{lem:weak-commutator-identity}
{Let $\beta>0$ and use the signed identifiability kernel, distinct from the positive ADB weight $g$ in \Cref{defn:adb},}
\begin{equation}
    g_{{\mathrm{id}}}(t):=-\frac{\pi^{3/2}}
    {2\sqrt{2}\bigl(1+\cosh(\pi t)\bigr)}.
    \label{eq:weak-g-kernel}
\end{equation}
For any operator $\vB$ and Hermitian $\vH$ on a
finite-dimensional Hilbert space,
\begin{equation}
    [\vH,\vB]
    =-\frac{2}{\beta\sqrt{2\pi}}
    \int_{-\infty}^{\infty}
       \left(
       e^{\beta\vH/2}\vB_{\beta\vH/2}(t)e^{-\beta\vH/2}
       -e^{-\beta\vH/2}\vB_{\beta\vH/2}(t)e^{\beta\vH/2}
       \right)
       g_{{\mathrm{id}}}(t)\dd t,
    \label{eq:weak-commutator-identity}
\end{equation}
where
$\vB_{\beta\vH/2}(t)
=e^{i\beta t\vH/2}\vB e^{-i\beta t\vH/2}$.
Consequently, for every density operator $\vsigma$,
\begin{align}
    \left|\tr\!\left(\vsigma[\vH,\vB]\right)\right|
    \leq{}&
    \frac{2}{\beta\sqrt{2\pi}}
    \left\|\int_{-\infty}^{\infty}
       \left(
       \sqrt{\vsigma}\vB_{\beta\vH/2}(t)
       -e^{-\beta\vH/2}\vB_{\beta\vH/2}(t)
        e^{\beta\vH/2}\sqrt{\vsigma}
       \right)
       g_{{\mathrm{id}}}(t)\dd t
    \right\|_2
    \nonumber\\
    &+
    \frac{2}{\beta\sqrt{2\pi}}
    \left\|\int_{-\infty}^{\infty}
       \left(
       \vB_{\beta\vH/2}(t)\sqrt{\vsigma}
       -\sqrt{\vsigma}\,e^{\beta\vH/2}
        \vB_{\beta\vH/2}(t)e^{-\beta\vH/2}
       \right)
       g_{{\mathrm{id}}}(t)\dd t
    \right\|_2.
    \label{eq:weak-commutator-defects}
\end{align}
\end{lem}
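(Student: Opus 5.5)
The identity \eqref{eq:weak-commutator-identity} is a linear statement in $\vB$, so I would verify it Bohr component by Bohr component and then reassemble. Write $\vH=\sum_E E\vP_E$ and $\vB=\sum_\nu \vB_\nu$ as in \eqref{eq:bohr-decomposition-basic}. Then
\[
\vB_{\beta\vH/2}(t)=\sum_\nu e^{i\beta\nu t/2}\vB_\nu,
\qquad
e^{\pm\beta\vH/2}\vB_\nu e^{\mp\beta\vH/2}=e^{\pm\beta\nu/2}\vB_\nu .
\]
Hence the integrand equals $\sum_\nu 2\sinh(\beta\nu/2)\,e^{i\beta\nu t/2}\vB_\nu\, g_{\mathrm{id}}(t)$, and since $[\vH,\vB]=\sum_\nu\nu\vB_\nu$, it suffices to prove, for every real $\nu$, the scalar identity
\[
-\frac{2}{\beta\sqrt{2\pi}}\cdot 2\sinh(\beta\nu/2)\int_{\mathbb R}g_{\mathrm{id}}(t)e^{i\beta\nu t/2}\dd t=\nu .
\]
Finite dimensionality makes the sum finite. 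The exponential decay of $g_{\mathrm{id}}$ makes every integral absolutely convergent, so interchanging the sum and the integral is harmless.

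The scalar computation is the only real content, and it is where I expect the bookkeeping of constants to be the main risk. Use $1+\cosh(\pi t)=2\cosh^2(\pi t/2)$ together with the standard transform
\[
\int_{\mathbb R}e^{ikt}\operatorname{sech}^2(at)\dd t=\frac{\pi k}{a^2\sinh(\pi k/(2a))}.
\]
This can be checked by contour integration, or by differentiating the known transform of $\tanh$. With $a=\pi/2$ it gives $\int e^{ikt}\,\frac{\dd t}{1+\cosh\pi t}=\frac{2k}{\pi\sinh k}$. Setting $k=\beta\nu/2$ and multiplying by $-\pi^{3/2}/(2\sqrt2)$ yields
\[
\int g_{\mathrm{id}}(t)e^{i\beta\nu t/2}\dd t=-\frac{\sqrt{\pi}\,\beta\nu}{2\sqrt2\,\sinh(\beta\nu/2)} .
\]
Substituting, the $\sinh$ factors cancel and the prefactor becomes $\frac{2}{\beta\sqrt{2\pi}}\cdot\frac{\sqrt\pi\beta\nu}{\sqrt2}=\nu$. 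The case $\nu=0$ holds by continuity, since both sides then vanish.

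For the consequence \eqref{eq:weak-commutator-defects}, set $\vY:=\int\vB_{\beta\vH/2}(t)g_{\mathrm{id}}(t)\dd t$, so that $[\vH,\vB]=-\frac{2}{\beta\sqrt{2\pi}}\bigl(e^{\beta\vH/2}\vY e^{-\beta\vH/2}-e^{-\beta\vH/2}\vY e^{\beta\vH/2}\bigr)$. Insert and subtract the symmetric term $\tr(\sqrt{\vsigma}\,\vY\sqrt{\vsigma})$, which appears with opposite signs in the two pieces. Using cyclicity, this gives
\begin{align*}
\tr\bigl(\vsigma e^{\beta\vH/2}\vY e^{-\beta\vH/2}\bigr)-\tr\bigl(\sqrt{\vsigma}\vY\sqrt{\vsigma}\bigr)
&=\tr\Bigl(\sqrt{\vsigma}\bigl(\sqrt{\vsigma}e^{\beta\vH/2}\vY e^{-\beta\vH/2}-\vY\sqrt{\vsigma}\bigr)\Bigr),\\
\tr\bigl(\sqrt{\vsigma}\vY\sqrt{\vsigma}\bigr)-\tr\bigl(\vsigma e^{-\beta\vH/2}\vY e^{\beta\vH/2}\bigr)
&=\tr\Bigl(\bigl(\sqrt{\vsigma}\vY-e^{-\beta\vH/2}\vY e^{\beta\vH/2}\sqrt{\vsigma}\bigr)\sqrt{\vsigma}\Bigr).
\end{align*}
Cauchy--Schwarz in Hilbert--Schmidt norm, together with $\|\sqrt{\vsigma}\|_2=1$, bounds each piece by the corresponding $\|\cdot\|_2$ norm. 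By linearity of the integral, $\vY$ can be moved back inside the integrals, and the triangle inequality then gives the two defect terms exactly as displayed.
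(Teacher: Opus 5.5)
Your proof is correct, and it reaches the identity by a different route from the paper.

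\textbf{The consequence.} For \eqref{eq:weak-commutator-defects} you do essentially what the paper does. You insert and subtract $\tr(\sqrt{\vsigma}\vY\sqrt{\vsigma})$, which is the paper's step ``trace cyclicity expresses the trace of the integrand as the difference of the two defect traces''. You then apply Hilbert--Schmidt Cauchy--Schwarz with $\|\sqrt{\vsigma}\|_2=1$.

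\textbf{The identity.} For \eqref{eq:weak-commutator-identity} itself, the paper computes nothing. It specializes the general two-Hamiltonian operator identity of Ref.~\cite[Lemma III.3]{chen2025learning}, the one underlying \Cref{lem:state-identifiability}, to $\vH_1=\beta\vH/2$ and $\vH_2=0$; the left-hand side then becomes $(\beta/2)[\vH,\vB]$. You instead verify this special case from scratch. Because only one Hamiltonian appears, every operator in the integrand is diagonal in the Bohr decomposition of $\vH$. The claim therefore reduces to the scalar multiplier identity $\nu=\frac{\nu}{\sinh(\beta\nu/2)}\cdot\sinh(\beta\nu/2)$, with the first factor realized through the Fourier transform of $\operatorname{sech}^2$. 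Your constants check out: $\int e^{ikt}(1+\cosh\pi t)^{-1}\dd t=2k/(\pi\sinh k)$, and the prefactors combine to exactly $\nu$. The case $\nu=0$ is immediate from $\sinh 0=0$, so no continuity argument is needed.

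\textbf{Comparison.} The paper's route takes one line and keeps the kernel and normalization tied to the same lemma that drives the identifiability equation. Your route is self-contained and independently confirms the constants $-\pi^{3/2}/(2\sqrt2)$ and $2/(\beta\sqrt{2\pi})$. However, your scalar reduction relies on having a single Hamiltonian, so it would not by itself give the noncommuting two-Hamiltonian identity used in \Cref{lem:state-identifiability}. That limitation does not matter for this lemma.
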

\begin{proof}
{Apply Ref.~\cite[Lemma III.3]{chen2025learning}, the operator identity underlying \Cref{lem:state-identifiability}, with $\vA=\vB$, $\vH_1=\beta\vH/2$, and $\vH_2=0$. Its left-hand side is $(\beta/2)[\vH,\vB]$, which gives \eqref{eq:weak-commutator-identity}.}

{For \eqref{eq:weak-commutator-defects}, trace cyclicity expresses the trace of the integrand in \eqref{eq:weak-commutator-identity} against $\vsigma$ as the difference of the two defect traces, each paired with $\sqrt{\vsigma}$. Apply the triangle inequality and Schatten Cauchy--Schwarz after integration, using $\|\sqrt{\vsigma}\|_2=1$.}
\end{proof}

For the product \eqref{eq:weak-product}, introduce notation analogous to that in \Cref{sec:adb-appendix} for a frequency cutoff $\Omega\geq0$,
\begin{align}
    \vB_{\leq\Omega}^{\mathrm{prod}}
    &:=\left(\frac{\beta}{2\sqrt{2\pi}}\right)^{q/2}
       \int_{U_\Omega}
       \prod_{j=1}^{q}\widehat{\vA}_{j,\vH}(\omega_j,t_j)
       \dd\boldsymbol\omega,\qquad
    \vB_{>\Omega}^{\mathrm{prod}}
    :=\left(\frac{\beta}{2\sqrt{2\pi}}\right)^{q/2}
       \int_{\mathbb R^q\setminus U_\Omega}
       \prod_{j=1}^{q}\widehat{\vA}_{j,\vH}(\omega_j,t_j)
       \dd\boldsymbol\omega.
    \label{eq:weak-cutoff}
\end{align}
We estimate the high- and low-frequency contributions separately. We first use the operator Fourier transform to show that the high-frequency contribution decays exponentially with the cutoff $\Omega$.
\begin{lem}[High-frequency part]\label{lem:weak-high-frequency}
For the decomposition \eqref{eq:weak-cutoff}, uniformly in $t_1,\ldots,t_q$,
\begin{equation}
    \left\|[\vH,\vB_{>\Omega}^{\mathrm{prod}}]\right\|
    \leq e^{\poly(\beta^{\pm 1})\cdot q}e^{-\Omega/(8{d})}.
    \label{eq:weak-high-frequency}
\end{equation}
\end{lem}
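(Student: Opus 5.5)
The plan is to avoid expanding $[\vH,\cdot]$ term by term over the whole lattice. Instead we use the fact that commutation with $\vH$ acts on each Fourier-filtered factor essentially as multiplication by its frequency. Write $\vB_j(\omega_j):=\widehat{\vA}_{j,\vH}(\omega_j,t_j)$. Since $\vB_j(\omega_j)$ is built from Heisenberg evolution under $\vH$, differentiating the Fourier integral \eqref{eq:operator-fourier-transform-basic} in time and integrating by parts gives
\begin{equation*}
[\vH,\vB_j(\omega_j)]=\frac{1}{\sqrt{2\pi}}\int_{\mathbb R}\bigl(-i\partial_s\bigr)(\vA_j)_{\vH}(s+t_j)\,e^{-i\omega_j s}f_\sigma(s)\,\dd s=\omega_j\vB_j(\omega_j)+\vR_j(\omega_j).
\end{equation*}
The remainder is $\vR_j(\omega_j)=\frac{i}{\sqrt{2\pi}}\int(\vA_j)_{\vH}(s+t_j)e^{-i\omega_j s}f_\sigma'(s)\,\dd s$. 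It is again a Gaussian-type filtered operator, with kernel $f_\sigma'(s)=-2\sigma^2 s f_\sigma(s)$. Equivalently, in the Bohr picture \eqref{eq:bohr-decomposition-basic}, $[\vH,\vB_j(\omega_j)]=\sum_\nu\nu(\vA_j)_\nu\widehat f_\sigma(\omega_j-\nu)$. This is the cleaner form: it reads $\omega_j\vB_j(\omega_j)$ plus the same sum with $\widehat f_\sigma$ replaced by $(\nu-\omega_j)\widehat f_\sigma(\omega_j-\nu)$, which is still a Gaussian profile.

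By the Leibniz rule, $[\vH,\prod_j\vB_j(\omega_j)]=\sum_i\prod_{j<i}\vB_j(\omega_j)\,[\vH,\vB_i(\omega_i)]\,\prod_{j>i}\vB_j(\omega_j)$. We bound each factor in operator norm using the single-frequency envelope already quoted in the proof of \Cref{lem:low_freq_bound}, $\|\vB_j(\omega_j)\|\leq C e^{-|\omega_j|/(4d)}$. This envelope comes from the bounded-degree nested-commutator expansion $\|\mathrm{ad}_{\vH}^m\vA_j\|\leq m!(2d)^m$ combined with the Gaussian frequency shift. For the modified factor we also need $\|[\vH,\vB_i(\omega_i)]\|\leq C(1+|\omega_i|)e^{-|\omega_i|/(4d)}$. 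The $\omega_i\vB_i(\omega_i)$ part satisfies this directly. The remainder is handled by rerunning the same commutator-expansion and shift argument with the kernel $(\nu-\omega)\widehat f_\sigma(\omega-\nu)$, which is a Hermite-weighted Gaussian of the same width. Its analytic continuation obeys the same exponential shift bound up to a polynomial factor.

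Integrating over $\mathbb R^q\setminus U_\Omega$ then proceeds exactly as in the high-total-frequency estimate of \Cref{lem:low_freq_bound}. We factor $e^{-\sum|\omega_j|/(4d)}\leq e^{-\Omega/(8d)}\prod_j e^{-|\omega_j|/(8d)}$ and absorb the factor $(1+|\omega_i|)$ into one $e^{-|\omega_i|/(16d)}$. The one-dimensional integrals are then bounded by constants $C_\beta$. The prefactor $(\beta/2\sqrt{2\pi})^{q/2}$ and the sum over $q$ Leibniz terms contribute at most $q\,C_\beta^q\leq e^{\poly(\beta^{\pm1})q}$.

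Uniformity in $t_j$ holds because real-time evolution under $\vH$ is a unitary that commutes with $\vH$, so every norm bound is invariant. The step I expect to be the main obstacle is the envelope for the remainder $\vR_i$, that is, showing that the derivative-kernel filter still decays like $e^{-|\omega|/(4d)}$. I would first try the contour-shift proof: expand $\vA_{\vH}(s)$ in nested commutators, which converges for $|\mathrm{Im}\,s|<1/(2d)$, then shift the $s$-integral to $\mathrm{Im}\,s=\mp 1/(4d)$. The polynomial factor $s$ in $f_\sigma'$ only contributes a harmless $\poly(\beta)$ factor.
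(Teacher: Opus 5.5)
Your argument is correct, and its skeleton is the paper's: the Leibniz rule, single-frequency envelopes $e^{-|\omega|/(4d)}$ obtained from the bounded-degree nested-commutator count plus the Gaussian shift, and the factorized integral over $\mathbb R^q\setminus U_\Omega$ that retains $e^{-\Omega/(8d)}$.

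The one genuine difference is how you handle the commutator factor. The paper never splits $[\vH,\widehat{\vA}_{j,\vH}(\omega,t)]$. Because the Gaussian filter commutes with $\operatorname{ad}_{\vH}$, this operator is simply the filtered transform of the local operator $[\vH,\vA_j]$. The paper applies the same shift identity with $|z|=1/(4d)$ and the bound $\|e^{z\vH}[\vH,\vA_j]e^{-z\vH}\|\leq 2d\sum_{m\geq0}(m+1)(2d|z|)^m\leq 8d$, which is the nested-commutator series shifted by one order. This gives the envelope $8d\sqrt{\beta}(2\pi)^{-1/4}e^{1/(16d^2\beta^2)-|\omega|/(4d)}$ directly, with no remainder term and no polynomial factor in $\omega$.

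Your splitting $[\vH,\widehat{\vA}(\omega)]=\omega\widehat{\vA}(\omega)+(\text{derivative-kernel term})$ instead moves the derivative from imaginary time onto the real-time kernel. The step you flagged as the main obstacle does close. In the Bohr form, write $\nu-\omega=(\nu-\omega')-2z\sigma^2$ with $\omega'=\omega-2z\sigma^2$. The shift identity then expresses the remainder as $e^{-z\omega+z^2\sigma^2}$ times a difference of two terms at $\omega'$: the derivative-kernel transform of $e^{z\vH}\vA_je^{-z\vH}$, minus $2z\sigma^2$ times its ordinary transform. Both terms have operator norm at most a constant polynomial in $\beta^{\pm1}$ times $\|e^{z\vH}\vA_je^{-z\vH}\|\leq 2$, since $\int|f_\sigma'|<\infty$. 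The resulting factor $(1+|\omega_i|)$ is absorbed exactly as you describe, at the cost of a $d$-dependent constant.

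So your route is valid but strictly longer. What it buys is that it needs imaginary-time control only of $\vA_j$ rather than of $[\vH,\vA_j]$; in this setting both come from the same combinatorial count.

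One small point: derive the envelope for $\widehat{\vA}_j(\omega)$ here, as you already sketch, rather than citing \Cref{lem:low_freq_bound}, because that lemma refers back to this proof for the same bound.
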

\begin{proof}
For a single-site Pauli, the bounded-degree nested-commutator count gives
\begin{equation}
    \Bigl\|
       \underbrace{[\vH,[\vH,\ldots,[\vH,\vA_j]]\ldots]}_
                  {m\ \text{commutators}}
    \Bigr\|\leq(2d)^m m!.
    \label{eq:weak-nested-commutators}
\end{equation}
The first term in a nonzero commutator chain must meet the site of $\vA_j$, giving at most $d$ choices. At the $r$th step, a contributing term must meet the preceding union of supports, giving at most $rd$ choices. Each commutator costs at most a factor of two. Therefore, for real $|z|\leq1/(4d)$,
\begin{align}
    \left\|e^{z\vH}[\vH,\vA_j]e^{-z\vH}\right\|
    &\leq2d\sum_{m\geq0}(m+1)(2{d}|z|)^m
      =\frac{2{d}}{(1-2{d}|z|)^2}\leq8{d}.
    \label{eq:weak-small-imaginary-time}
\end{align}
{The same expansion also gives the companion bound}
\begin{align}
    {\|e^{z\vH}\vA_j e^{-z\vH}\|}
    &{{}\leq\sum_{m=0}^{\infty}(2d|z|)^m
      =\frac1{1-2d|z|}\leq2,\qquad |z|\leq\frac1{4d}.}
\end{align}
{Both estimates are uniform in the real-time argument, since evolution by $\vH$ commutes with these conjugations and preserves operator norms.}

The Gaussian shift identity, valid for any operator $\vX$, reads
\begin{align}
    \widehat{\vX}_{\vH}(\omega,t)
    &=e^{-z\omega+z^2/\beta^2}
       \widehat{\bigl(e^{z\vH}\vX e^{-z\vH}\bigr)}_{\vH}
       (\omega-2z/\beta^2,t).
    \label{eq:weak-gaussian-shift}
\end{align}
It follows by completing the square in the scalar Gaussian transform of each Bohr component; see also \cite[Lemma IX.2]{chen2025GibbsMarkov}. {Take $z=\operatorname{sgn}(\omega)/(4d)$ and use the respective bounds above for $\vX=\vA_j$ and $\vX=[\vH,\vA_j]$.} Since the Fourier transform commutes with $[\vH,\cdot]$,
\begin{align}
    {\|\widehat{\vA}_{j,\vH}(\omega,t)\|}
    &{{}\leq\frac{2\sqrt\beta}{(2\pi)^{1/4}}
      e^{\frac1{16d^2\beta^2}-\frac{|\omega|}{4d}},}
      \nonumber\\
    \left\|[\vH,\widehat{\vA}_{j,\vH}(\omega,t)]\right\|
    &\leq\frac{8d\sqrt\beta}{(2\pi)^{1/4}}
       e^{\frac{1}{16d^2\beta^2}-\frac{|\omega|}{4d}}.
    \label{eq:weak-single-frequency-envelopes}
\end{align}
{Applying Leibniz's rule to the product commutator gives}
\begin{align}
    &\left\|[\vH,\vB_{>\Omega}^{\mathrm{prod}}]\right\|\nonumber\\
    &\qquad\leq
       (q+1)e^{\poly(\beta^{\pm 1})\cdot q}
       \int_{\sum_j|\omega_j|>\Omega}
       e^{-\sum_j|\omega_j|/(4{d})}\dd\boldsymbol\omega\nonumber\\
    &\qquad\leq
       (q+1)e^{\poly(\beta^{\pm 1})\cdot q}e^{-\Omega/(8{d})}
       {\prod_{j=1}^q\int_{-\infty}^{\infty}e^{-|\omega_j|/(8d)}\dd\omega_j}\nonumber\\
    &\qquad=
       (q+1)e^{\poly(\beta^{\pm 1})\cdot q}{(16d)^q}e^{-\Omega/(8{d})}
       \leq e^{\poly(\beta^{\pm 1})\cdot q}e^{-\Omega/(8{d})}.
    \label{eq:weak-high-frequency-calculation}
\end{align}
\end{proof}
Here the last inequality enlarges the fixed polynomial to absorb $(q+1)(16d)^q$ for $q\geq1$. We next use the ADB error to bound the low-frequency contribution.

\begin{lem}[Low-frequency part]\label{lem:weak-low-frequency}
{With the dimensionless time and kernel from \Cref{lem:weak-commutator-identity}, the low-frequency part satisfies}
\begin{align}
    &\left\|\int_{{-\infty}}^{{\infty}}
       \left(
       \sqrt{\vsigma}(\vB_{\leq\Omega}^{\mathrm{prod}})_{{\beta\vH/2}}(t)
       -e^{-\beta\vH/2}(\vB_{\leq\Omega}^{\mathrm{prod}})_{{\beta\vH/2}}(t)
        e^{\beta\vH/2}\sqrt{\vsigma}
       \right)
       g_{{\mathrm{id}}}(t)\dd t\right\|_2
       \nonumber\\
    &\quad+
    \left\|\int_{{-\infty}}^{{\infty}}
       \left(
       (\vB_{\leq\Omega}^{\mathrm{prod}})_{{\beta\vH/2}}(t)\sqrt{\vsigma}
       -\sqrt{\vsigma}\,e^{\beta\vH/2}
        (\vB_{\leq\Omega}^{\mathrm{prod}})_{{\beta\vH/2}}(t)e^{-\beta\vH/2}
       \right)
       g_{{\mathrm{id}}}(t)\dd t\right\|_2
       \nonumber\\
    &\leq {C_{\beta}}^{q+1}
       \epsilon_{\mathrm{ADB}}^{\frac1{32\beta d+2}}
       e^{\frac{16\beta^2d+1/2}{16\beta{d}+1}\Omega}
       e^{\frac{\pi\max_j|t_j|}{\beta(16\beta{d}+1)}}.
    \label{eq:weak-low-frequency}
\end{align}
{The constant satisfies $1\leq {C_{\beta}}\leq e^{\poly(\beta^{\pm1})}$ and is independent of $q$, $\Omega$, the times, and $n$.}
\end{lem}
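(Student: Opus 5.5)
The plan is to reduce both defects in \eqref{eq:weak-low-frequency} to the single-factor ADB defects, following the telescoping argument of \Cref{lem:low_freq_bound}. Two differences must be handled: the kernel $g_{\mathrm{id}}$ with dimensionless time evolution $(\cdot)_{\beta\vH/2}(t)$, and the individual real times $t_j$.

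First, for each dimensionless time $t$, write $(\vB^{\mathrm{prod}}_{\leq\Omega})_{\beta\vH/2}(t)$ as the frequency integral over $U_\Omega$ of the product $\prod_j\widehat{\vA}_{j,\vH}(\omega_j,t_j+\beta t/2)$. This uses the fact that Heisenberg evolution by $\vH$ commutes with the operator Fourier transform and shifts each time argument. Treat the first defect, with $\sqrt{\vsigma}$ on the left, and the conjugation $e^{-\beta\vH/2}(\cdot)e^{\beta\vH/2}=\vrho^{1/2}(\cdot)\vrho^{-1/2}$. Insert the same telescoping sum as in \Cref{lem:low_freq_bound}, moving $\sqrt{\vsigma}$ past one factor at a time. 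The $i$th summand is a product of three pieces: a Gibbs-conjugated prefix $\vrho^{1/2}\prod_{j<i}\widehat{\vA}_j\vrho^{-1/2}$, the single-factor defect $\sqrt{\vsigma}\widehat{\vA}_i-\vrho^{1/2}\widehat{\vA}_i\vrho^{-1/2}\sqrt{\vsigma}$, and an unconjugated suffix. On $U_\Omega$, the prefix and suffix have operator norm at most $C_\beta^{q-1}e^{\beta\Omega/2}$ by \eqref{eq:imaginary-time-norm-basic}.

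For the active factor, enlarge its frequency range to $[-\Omega,\Omega]$ and integrate out the remaining $q-1$ frequencies, which contributes a simplex volume $(2\Omega)^{q-1}/(q-1)!$. Then apply Cauchy--Schwarz against $\gamma(-\omega_i)$, together with adjoint/frequency reversal. \Cref{lem:time-evolved-adb} at time $t_i+\beta t/2$ bounds this by $C_\beta\sqrt{\epsilon_{\mathrm{ADB}}}\,e^{\pi|t_i|/\beta}e^{\pi|t|/2}e^{\beta\Omega/2}$. The second defect, with $\sqrt{\vsigma}$ on the right and the opposite conjugation, is handled the same way. Use the ADB defect directly in the form of \Cref{defn:adb}, telescoping from the right, so that each active factor appears as $\widehat{\vA}_i\sqrt{\vsigma}-\sqrt{\vsigma}\vrho^{-1/2}\widehat{\vA}_i\vrho^{1/2}$.

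Next, integrate over $t$ against $g_{\mathrm{id}}$. Since $|g_{\mathrm{id}}(t)|\lesssim e^{-\pi|t|}$, it beats the growth $e^{\pi|t|/2}$ and leaves a constant. The combined bound is $C_\beta^{q+1}\sqrt{\epsilon_{\mathrm{ADB}}}\,e^{2\beta\Omega}e^{\pi T/\beta}$, where $T=\max_j|t_j|$.

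The stated form, with exponent $\epsilon_{\mathrm{ADB}}^{1/(32\beta d+2)}$ and reduced growth in $\Omega$ and $T$, should then follow by interpolation with a trivial bound. Each defect is also at most $C_\beta^q e^{\beta\Omega/2}$, since $\|\vB^{\mathrm{prod}}_{\leq\Omega}\|$ and its conjugates are bounded by the same envelope estimates as in \Cref{lem:weak-high-frequency}. Take the weighted geometric mean of the two bounds, with weights $1/(16\beta d+1)$ on the ADB bound and $16\beta d/(16\beta d+1)$ on the trivial bound. This gives the exponents
\[
\frac{2\beta\Omega+16\beta d\cdot\beta\Omega/2}{16\beta d+1}\ \text{on }\Omega,
\qquad
\frac{\pi T/\beta}{16\beta d+1}\ \text{on }T,
\qquad
\frac{1}{2(16\beta d+1)}\ \text{on }\epsilon_{\mathrm{ADB}},
\]
and the $\Omega$ exponent is at most the one stated in \eqref{eq:weak-low-frequency}.

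The main obstacle is the trivial bound. One has to verify that the Gibbs conjugation of the truncated product costs only $C_\beta^q e^{\beta\Omega/2}$, with no factor growing with $\Omega^{q}$ or $n$. I would establish this by conjugating factorwise inside the $U_\Omega$ integral using \eqref{eq:imaginary-time-positive-basic}, and controlling the frequency integral with the exponential envelopes of \eqref{eq:weak-single-frequency-envelopes} instead of the simplex volume. The $q$-dependence is then absorbed into $C_\beta^{q+1}$.
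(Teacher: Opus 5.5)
Your proof is correct, but it distributes the work differently from the paper.

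\textbf{The paper's route.} After the same telescoping, the paper keeps the norm outside the active frequency integral. For fixed other frequencies with $r_j=\sum_{i\neq j}|\omega_i|\leq\Omega$, it integrates $\omega_j$ exactly over $[-\Delta_j,\Delta_j]$, where $\Delta_j=\Omega-r_j$. It then applies the already-balanced single-factor case of \Cref{lem:low_freq_bound}, with cutoff $\Delta_j$ and time $t_j+\beta t/2$. The prefix and suffix are bounded pointwise by $C_\beta^{q-1}e^{(\beta/2-1/(4d))r_j}$. Using $r_j+\Delta_j=\Omega$ leaves an exponentially decaying weight in the other frequencies, so neither a simplex volume nor a separate interpolation appears.

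\textbf{Your route.} You redo the raw $\sqrt{\epsilon_{\mathrm{ADB}}}$ estimate for the whole word; this is the low-total-frequency step from inside the proof of \Cref{lem:low_freq_bound}. You then replace that lemma's high/low balancing by a single geometric-mean interpolation against the trivial bound $C_\beta^q e^{\beta\Omega/2}$. That trivial bound is available here precisely because the word is already restricted to $U_\Omega$. Your derivation of it, via factorwise Gaussian shifts and $e^{(\beta/2-1/(4d))r}\leq e^{\beta\Omega/2}e^{-r/(4d)}$, is right.

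One detail to state explicitly: for the prefix in the raw bound, use the exact factor $e^{-\beta\omega_i/2}$ from \eqref{eq:imaginary-time-positive-basic}, multiplied over $i$. The per-factor cutoff bound \eqref{eq:imaginary-time-norm-basic} would give $e^{(q-1)\beta\Omega/2}$ instead of $e^{\beta\Omega/2}$.

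\textbf{What each buys.} Your route is more self-contained. It needs no case split in $\epsilon_{\mathrm{ADB}}$, since $\min(R,S)\leq R^\theta S^{1-\theta}$ always holds. It even gives the smaller $\Omega$-exponent $(8\beta^2d+2\beta)/(16\beta d+1)$, which lies below the stated one because $8\beta^2d-2\beta+1/2>0$ for $d\geq1$. The paper's route is modular. It yields a fixed-time ordered-word estimate with the active-factor/$\Delta_j$ structure, which it reuses for the Dyson terms in \Cref{lem:dyson_term_adb_bound_m}.
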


\begin{proof}
{First bound the first norm. Inserting \eqref{eq:weak-cutoff}, advance each factor time by $\beta t/2$ and telescope the ordered product:}
\begin{align}
    &\sqrt{\vsigma}\prod_{i=1}^q
       \widehat{\vA}_{i,\vH}(\omega_i,t_i+{\beta t/2})
    -e^{-\beta\vH/2}
       \left(\prod_{i=1}^q\widehat{\vA}_{i,\vH}(\omega_i,t_i+{\beta t/2})\right)
       e^{\beta\vH/2}\sqrt{\vsigma}
       \nonumber\\
    &=\sum_{j=1}^q e^{-\beta\vH/2}
       \left(\prod_{i<j}\widehat{\vA}_{i,\vH}(\omega_i,t_i+{\beta t/2})\right)
       e^{\beta\vH/2}
       \nonumber\\
    &\quad\times\left(
       \sqrt{\vsigma}\,\widehat{\vA}_{j,\vH}(\omega_j,t_j+{\beta t/2})
       -e^{-\beta\vH/2}\widehat{\vA}_{j,\vH}(\omega_j,t_j+{\beta t/2})
        e^{\beta\vH/2}\sqrt{\vsigma}
       \right)
       \prod_{i>j}\widehat{\vA}_{i,\vH}(\omega_i,t_i+{\beta t/2}).
    \label{eq:weak-explicit-telescope}
\end{align}
{For completeness, we establish the needed ordered-word estimate directly. This is the fixed-time estimate used in \eqref{eq:dyson-low-frequency-word-bound}, with $q=m+1$ factors and before any Dyson time integration. Set}
\begin{align}
    {r_j:=\sum_{i\ne j}|\omega_i|,\quad
       \Delta_j:=\Omega-r_j,\quad s_j:=t_j+\beta t/2.}
\end{align}
{For each summand, fix the other frequencies with $r_j\leq\Omega$ and integrate $\omega_j$ over $[-\Delta_j,\Delta_j]$ before taking the norm. Apply \Cref{lem:low_freq_bound} to this single active Pauli, with outer cutoff $\Delta_j$ and time $s_j$. The Gaussian shifts and \eqref{eq:weak-single-frequency-envelopes} bound the remaining factors by ${C_{\beta}}^{q-1}e^{(\beta/2-1/(4d))r_j}$. Since $\beta/2-1/(4d)\leq {\frac{8\beta^2d+1/4}{16\beta d+1}}$ and $r_j+\Delta_j=\Omega$,}
\begingroup
\begin{align}
    {e^{(\beta/2-1/(4d))r_j}e^{{\frac{8\beta^2d+1/4}{16\beta d+1}}\Delta_j}}
    &{{}\leq e^{2{\frac{8\beta^2d+1/4}{16\beta d+1}}\Omega}e^{-{\frac{8\beta^2d+1/4}{16\beta d+1}} r_j},}
    \nonumber\\
    {\int_{r_j\leq\Omega}e^{-{\frac{8\beta^2d+1/4}{16\beta d+1}} r_j}\prod_{i\ne j}\dd\omega_i}
    &{{}\leq\prod_{i\ne j}\int_{-\infty}^{\infty}
       e^{-{\frac{8\beta^2d+1/4}{16\beta d+1}}|\omega_i|}\dd\omega_i
       =\left({\frac{2(16\beta d+1)}{8\beta^2d+1/4}}\right)^{q-1}.}
\end{align}
\endgroup
{Restoring the reconstruction factor and summing over the $q$ positions gives}
\begin{align}
    &\Biggl\|\left(\frac{\beta}{2\sqrt{2\pi}}\right)^{q/2}
       \int_{U_\Omega}\Biggl(
       \sqrt{\vsigma}\prod_{i=1}^q
        \widehat{\vA}_{i,\vH}(\omega_i,t_i+{\beta t/2})
       -e^{-\beta\vH/2}
        \left(\prod_{i=1}^q\widehat{\vA}_{i,\vH}(\omega_i,t_i+{\beta t/2})\right)
        e^{\beta\vH/2}{\sqrt{\vsigma}}
       \Biggr){\dd\boldsymbol\omega}\Biggr\|_2
       \nonumber\\
    &\leq(q+1){C_{\beta}}^{q+1}
       \epsilon_{\mathrm{ADB}}^{\frac1{32\beta d+2}}
       e^{\frac{16\beta^2d+1/2}{16\beta{d}+1}\Omega}
       e^{\frac{\pi(\max_j|t_j|+{\beta|t|/2})}
                    {\beta(16\beta{d}+1)}}.
    \label{eq:weak-ordered-word-bound}
\end{align}
{The opposite orientation follows by applying this estimate to the adjoint word $\vB^\dagger=(\vA_q)_{\vH}(t_q)\cdots(\vA_1)_{\vH}(t_1)$. Indeed, $\widehat{\vA}_{j,\vH}(\omega_j,t_j)^\dagger=\widehat{\vA}_{j,\vH}(-\omega_j,t_j)$; reverse the order and change every $\omega_j$ to $-\omega_j$. The frequency region, maximum factor time, and real kernel are unchanged.}

{Finally, integrate against $|g_{\mathrm{id}}(t)|\leq {C_{\beta}} e^{-\pi|t|}$. For the first orientation, the norm is at most}
\begin{align}
    \int_{-\infty}^{\infty}(q+1){C_{\beta}}^{q+1}
       \epsilon_{\mathrm{ADB}}^{\frac1{32\beta d+2}}
       e^{{\frac{16\beta^2d+1/2}{16\beta d+1}}\Omega}
       e^{\frac{\pi({T}+{\beta|t|/2})}{\beta(16\beta{d}+1)}}
       {|g_{\mathrm{id}}(t)|}\dd t
    &\leq {C_{\beta}}^{q+1}
       \epsilon_{\mathrm{ADB}}^{\frac1{32\beta d+2}}
       e^{\frac{16\beta^2d+1/2}{16\beta{d}+1}\Omega}
       e^{\frac{\pi{T}}{\beta(16\beta{d}+1)}}.
\end{align}
{The time integral is finite because $\pi/[2(16\beta d+1)]<\pi$. Enlarging ${C_{\beta}}$ absorbs this integral, $q+1$, and the sum of the two orientations, proving \eqref{eq:weak-low-frequency}.}
\end{proof}

Finally, we choose the cutoff $\Omega$ to balance the two contributions: the high-frequency bound decays exponentially in $\Omega$, while the low-frequency bound grows exponentially. This gives a cutoff that depends logarithmically on the ADB error $\epsilon_{\mathrm{ADB}}$.

\begin{proof}[{Proof of \Cref{thm:weak}}]
{Combine \Cref{lem:weak-high-frequency,lem:weak-low-frequency} with \eqref{eq:weak-commutator-defects}. Write}
\begin{align}
    {a:=\frac1{8d},\qquad
       b:=\frac{16\beta^2d+1/2}{16\beta d+1},\qquad
       \alpha:=\frac1{32\beta d+2},\qquad
       \lambda:=\frac{\pi}{\beta(16\beta d+1)}.}
\end{align}
{Absorbing the fixed prefactor $2/(\beta\sqrt{2\pi})$ into ${C_{\beta}}^q$ gives}
\begin{align}
    {|\Tr(\vsigma[\vH,\vB])|}
    &{{}\leq {C_{\beta}}^q\left[e^{-a\Omega}
       +\epsilon_{\mathrm{ADB}}^\alpha e^{b\Omega}e^{\lambda T}\right].}
    \label{eq:weak-combined-cutoff-bound}
\end{align}
{For $0<\epsilon_{\mathrm{ADB}}\leq1$, choose the logarithmic cutoff}
\begin{align}
    {\Omega:=\frac{\alpha}{a+b}\log\frac1{\epsilon_{\mathrm{ADB}}}.}
    \label{eq:weak-logarithmic-cutoff}
\end{align}
{Then}
\begin{align}
    {|\Tr(\vsigma[\vH,\vB])|}
    &{{}\leq2{C_{\beta}}^q\epsilon_{\mathrm{ADB}}^{\alpha a/(a+b)}e^{\lambda T},}
       \nonumber\\
    {\frac{\alpha a}{a+b}}
    &{{}={\frac{1}{\poly(\beta^{\pm1})}}.}
\end{align}
{This proves \eqref{eq:weak-final}. For $\epsilon_{\mathrm{ADB}}=0$, take $\Omega\to\infty$ in \eqref{eq:weak-combined-cutoff-bound}. For $\epsilon_{\mathrm{ADB}}>1$, the bound $\|[\vH,\vB]\|\leq2dq$ gives the theorem after enlarging the fixed prefactor.}
\end{proof}
The resulting transfer bound is used in \Cref{subsec:tested-dynamics-transfer}.

\clearpage
\makeatletter
\MT@newlabel{eq:dyson-high-frequency-kms-integral}
\MT@newlabel{{eq:dyson-high-frequency-kms-integral}}
\MT@newlabel{eq:dyson-high-frequency-operator-bound}
\MT@newlabel{{eq:dyson-high-frequency-operator-bound}}
\MT@newlabel{eq:dyson-low-frequency-active-factor}
\MT@newlabel{{eq:dyson-low-frequency-active-factor}}
\MT@newlabel{eq:dyson-low-frequency-ordered-bound}
\MT@newlabel{{eq:dyson-low-frequency-ordered-bound}}
\MT@newlabel{eq:dyson-low-frequency-telescoping}
\MT@newlabel{{eq:dyson-low-frequency-telescoping}}
\MT@newlabel{eq:dyson-low-frequency-time-integral}
\MT@newlabel{{eq:dyson-low-frequency-time-integral}}
\MT@newlabel{eq:dyson-low-frequency-trace-to-defect}
\MT@newlabel{{eq:dyson-low-frequency-trace-to-defect}}
\MT@newlabel{eq:dyson-low-frequency-word-bound}
\MT@newlabel{{eq:dyson-low-frequency-word-bound}}
\MT@newlabel{eq:identification-F}
\MT@newlabel{{eq:identification-F}}
\MT@newlabel{eq:identification-F-cut}
\MT@newlabel{{eq:identification-F-cut}}
\MT@newlabel{eq:truncated-identifiability-integral}
\MT@newlabel{{eq:truncated-identifiability-integral}}
\MT@newlabel{eq:linearization-error-term}
\MT@newlabel{{eq:linearization-error-term}}
\MT@newlabel{eq:linearized-I}
\MT@newlabel{{eq:linearized-I}}
\MT@newlabel{eq:measurable-Q}
\MT@newlabel{{eq:measurable-Q}}
\MT@newlabel{eq:measurement-operator}
\MT@newlabel{{eq:measurement-operator}}
\MT@newlabel{eq:metastable-identifiability}
\MT@newlabel{{eq:metastable-identifiability}}
\MT@newlabel{eq:move6-order-bound}
\MT@newlabel{{eq:move6-order-bound}}
\MT@newlabel{eq:move6-smallness}
\MT@newlabel{{eq:move6-smallness}}
\MT@newlabel{eq:move6-tested-duhamel}
\MT@newlabel{{eq:move6-tested-duhamel}}
\MT@newlabel{eq:outer-cutoff-operator-bounds}
\MT@newlabel{{eq:outer-cutoff-operator-bounds}}
\MT@newlabel{eq:outer-cutoff-trace-bound}
\MT@newlabel{{eq:outer-cutoff-trace-bound}}
\MT@newlabel{eq:spatial-truncation-inside-conjugation}
\MT@newlabel{{eq:spatial-truncation-inside-conjugation}}
\MT@newlabel{eq:spatially-truncated-dyson}
\MT@newlabel{{eq:spatially-truncated-dyson}}
\MT@newlabel{eq:state-motion-identity}
\MT@newlabel{{eq:state-motion-identity}}
\MT@newlabel{eq:transfer-state-difference}
\MT@newlabel{{eq:transfer-state-difference}}
\MT@newlabel{eq:transferred-Q}
\MT@newlabel{{eq:transferred-Q}}
\MT@newlabel{eq:assembled-identification-budgets}
\MT@newlabel{{eq:assembled-identification-budgets}}
\makeatother
\section{From the identifiability equation to measurable tests}
\label{sec:identifiability-bounds}

In this section, we use the tools of \Cref{sec:adb-to-commutation} to construct observables whose expectations approximate pairings with the commutator $[\vA,\vH-\vH']$, where $\vH'$ is a candidate Hamiltonian and $\vA$ is a single-site Pauli operator. These observables will be used in \Cref{sec:learning-protocol} to iteratively search for the coefficients of the unknown Hamiltonian. We work with the $\vsigma$-weighted KMS norm of this commutator; in \Cref{subsec:faithfulness}, we will show that this norm controls the usual operator norm for operators of bounded support. We therefore study the pairing
\begin{align}
 \mathcal F_\sigma(\vO,\vA;\vH,\vH')
 :=\frac{\beta}{2}\langle\vO,[\vA,\vH-\vH']\rangle_{\vsigma}.
 \label{eq:identification-F}
\end{align}
Taking $\vO=[\vA,\vH-\vH']$ gives $(\beta/2)\|[\vA,\vH-\vH']\|_{\vsigma}^2$.

Throughout this section, $\vH,\vH'$ and the auxiliary $\vG$ are local Hamiltonians as in \Cref{defn:local-Hamiltonian}. The input state has sufficiently small local ADB error as in \Cref{defn:adb,eq:intro-uniform-ADB-error}. Let $a_\beta=e^{-P(\beta^{\pm1})}$ for a prescribed positive polynomial $P$.
The test class consists of local operators $\vO$ with support size bounded by $k$, and the Hamiltonians have interaction degree bounded by $d$. We retain explicit norms in the intermediate estimates. The parameters $d,k,D$ denote the interaction degree, Pauli weight, and lattice dimension, respectively, as in \Cref{defn:local-Hamiltonian}; we use $\sigma=\beta^{-1}$, distinct from the state $\vsigma$, and write $\eta=\|h-h'\|_\infty$.

The following theorem is the main result of this section. It constructs a family of measurable tests $Q_\sigma(\vO,\vA;\vG,\vH')$ that are linear in $\vsigma$ and approximate the identifiability pairing $\mathcal F_\sigma(\vO,\vA;\vH,\vH')$ with controlled error when the auxiliary Hamiltonian is $\vH$. These tests allow us to search the Hamiltonian parameter space for an accurate candidate.

\begin{thm}[Measurable identification tests]\label{lem:tested-identification}
For arbitrary $a_{\beta}=e^{-P(\beta^{\pm1})}$ for a prescribed positive polynomial $P$, there exists a family of measurable tests $Q_\sigma(\vO,\vA;\vG,\vH')$, with the following properties for normalized $\vO$ in the fixed bounded-support test class and single-site Pauli $\vA$. The allowed local Hamiltonians satisfy \Cref{defn:local-Hamiltonian}, and the input state satisfies the conditions above, with $\epsilon_{\mathrm{ADB}}\leq e^{-\poly(\beta^{\pm 1})}$.
\begin{enumerate}
\item There is an operator $\vM_{\vO,\vA;\vG,\vH'}$, determined by the selected Hamiltonians, test operators and known parameters, such that
\begin{align}
 Q_\sigma=\Tr(\vM_{\vO,\vA;\vG,\vH'}\vsigma),
 \qquad \|\vM_{\vO,\vA;\vG,\vH'}\|\leq K_\beta.
 \label{eq:summary-measurability}
\end{align}
The Hermitian real and imaginary parts of $\vM$ can be measured separately.
\item For every allowed auxiliary $\vG$, the true candidate satisfies the following bound on its test value
\begin{align}
 |Q_\sigma(\vO,\vA;\vG,\vH)|\leq C_\beta\epsilon_{\mathrm{ADB}}^{{\frac{1}{\poly(\beta^{\pm1})}}}.
 \label{eq:summary-completeness}
\end{align}
\item Whenever $\|h-h'\|_\infty\leq2$,
\begin{align}
 \left|Q_\sigma(\vO,\vA;\vH,\vH')-\mathcal F_\sigma(\vO,\vA;\vH,\vH')\right|
 \leq a_\beta\|h-h'\|_\infty+C_\beta\epsilon_{\mathrm{ADB}}^{{\frac{1}{\poly(\beta^{\pm1})}}}.
 \label{eq:summary-identification}
\end{align}
\end{enumerate}
{Here $K_\beta,C_\beta=e^{\poly(\beta^{\pm 1})}$.} The polynomials and cutoffs may depend on the prescribed $P$ and the fixed test-support bound, but not on $n$.
\end{thm}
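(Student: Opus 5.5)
We will build $Q_\sigma$ by starting from the state identifiability equation (the extension of the Gibbs identity to arbitrary $\vsigma$, \Cref{lem:state-identifiability}, i.e.\ \eqref{eq:intro-identifiability}). It expresses $\mathcal F_\sigma(\vO,\vA;\vH,\vH')$ as a $g_{\mathrm{id}}$-weighted time integral of a KMS pairing. That pairing involves imaginary-time conjugations by $\vrho^{\pm1/2}\vrho'^{\mp1/2}$ and real-time evolution under both $\vH'$ and $\vH$.

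\emph{Step 1: regularize.} Following \cite{chen2025learning} (\Cref{thm:double-frequency-truncation}), we first apply double frequency and spatial truncation. This makes every imaginary-time conjugation a bounded quasi-local operator, with norm $e^{\poly(\beta^{\pm1})}$ at cutoff $\Omega$. The truncation error is at most $a_\beta\|h-h'\|_\infty$, which vanishes identically when $\vH'=\vH$. We will choose the cutoffs and Lieb--Robinson radii depending on $P$ but not on $n$. The restriction $\|h-h'\|_\infty\leq 2$ keeps the Dyson expansions and shift estimates uniformly convergent.

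\emph{Step 2: linearize in $\vsigma$.} We expand the candidate evolution $\vA_{\beta\vH'/2}(t)$ and the conjugations $\vrho'^{\pm1/2}$ around $\vH$ in a Dyson series in $\vH'-\vH$. The series is truncated at finite order, and the remainder is absorbed into $a_\beta\eta$. Each Dyson word is then a product of $q=O(1)$ single-site Paulis evolved under $\vH$ for bounded times, together with low-frequency filtering. For such words, \Cref{lem:low_freq_bound} (the time-evolved version) lets us pass $\sqrt{\vsigma}$ through the word at the cost of a Gibbs conjugation, with error $C_\beta\epsilon_{\mathrm{ADB}}^{1/\poly}$. This merges $\sqrt{\vsigma}\,\cdot\,\sqrt{\vsigma}$ into a single $\vsigma$ on one side, which is \Cref{thm:kms-to-gns-reduction}. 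The result is a trace $\Tr(\vX\vsigma)$ in which $\vX$ is still evolved under the unknown $\vH$ in the outer time $t$.

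\emph{Step 3: transfer the dynamics.} We move the outer $\vH$-evolution off the state and onto $\vO$. By Duhamel, $\Tr(\vsigma\,\vY_{\vH}(s))-\Tr(\vsigma\vY)=i\int_0^s\Tr(\vsigma[\vH,\vY_{\vH}(u)])\dd u$. Each integrand is an evolved Pauli product after the same finite Dyson and Fourier expansion, so \Cref{thm:weak} bounds it by $e^{q\poly}\epsilon_{\mathrm{ADB}}^{1/\poly}e^{\lambda T}$. The $g_{\mathrm{id}}$ decay controls the growth in $T$, giving an $n$-independent error. Next we replace the $\vH$ acting on $\vO$ by the auxiliary $\vG$, which defines $Q_\sigma(\vO,\vA;\vG,\vH')=\Tr(\vM\vsigma)$. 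Here $\vM$ is an explicit bounded quasi-local operator, truncated to a finite patch, with $\|\vM\|\leq K_\beta$ by the frequency cutoffs. This proves (1).

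\emph{Steps 4 and 5: identification and completeness.} Property (3) follows because at $\vG=\vH$ the replacement in Step 3 is exact, so the only errors are those from Steps 1--3. For (2), set $\vH'=\vH$. Then $\vrho'=\vrho$, the Dyson series is trivial, and $\vM$ becomes the $\vG$-evolved version of the two-term difference. Pushing the evolution back onto the state, legitimately by \Cref{thm:weak} applied again to $\vH$-evolved words, produces the two KMS-conjugate terms, which cancel up to ADB errors as in \Cref{lem:weak-commutator-identity}.

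\emph{Main obstacle.} I expect the hardest part to be this completeness step for arbitrary $\vG$. For $\vG\neq\vH$ the evolution under $\vG$ is not controlled by ADB. The plan is to keep the $\vG$-dependence entirely inside a bounded, state-independent observable $\vO_{\vG}(t)$. The cancellation is then shown by exhibiting the $\vH'=\vH$ integrand as $\langle\vO_{\vG},\cdot\rangle$ paired with an expression whose $\vsigma$-expectation is a tested commutator, so that it is bounded by ADB uniformly in $\vG$. Finally, all errors are balanced by logarithmic choices of the cutoffs in $\epsilon_{\mathrm{ADB}}$, which is possible under the assumption $\epsilon_{\mathrm{ADB}}\leq e^{-\poly}$.
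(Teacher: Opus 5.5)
\textbf{Gap in completeness (2) for $\vG\neq\vH$.} Your route to (1) and (3) is essentially the paper's. It consists of double-frequency truncation, then ADB linearization (\Cref{thm:kms-to-gns-reduction} and its coarse version), then local transfer of the $\vH$-evolution onto $\vO$ via Duhamel and \Cref{thm:weak}, and finally the replacement of $\vH$ by $\vG$ on the test operator, which is exact at $\vG=\vH$. The problem is (2) when $\vG\neq\vH$.

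Your Step~5 pushes the evolution back onto the state. That is only legitimate when the evolution acting on $\vO$ is generated by $\vH$. For $\vG=\vH$ it merely reproduces (3) at $\eta=0$, where $\mathcal F_\sigma(\vO,\vA;\vH,\vH)=0$.

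Your fallback in the last paragraph has the right structure: leave $\vO_{\vG}(t-t_1)$ as a bounded, state-independent factor. But it uses the wrong tool. The $\vH'=\vH$ integrand is an ADB defect, not a commutator with $\vH$, and \Cref{lem:weak-commutator-identity} is not what makes its two terms cancel. More importantly, uniformity in $\vG$ requires a trace-norm bound on whatever multiplies $\vO_{\vG}(t-t_1)$. \Cref{thm:weak} bounds only specific expectation values $\Tr(\vsigma[\vH,\vB])$ with $\vB$ a short product of $\vH$-evolved single-site Paulis. It cannot supply that trace-norm bound, and $\vO_{\vG}(t-t_1)$ is not such a word. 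The only observable-uniform commutator bound is $\|[\vH,\vsigma]\|_1$. It carries the factor $n$ of \Cref{thm:approximate-commute-deduce}, and \Cref{prop:commutation-error-lower-bound} shows that factor is unavoidable.

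\textbf{The missing observation.} At $\vH'=\vH$ you need neither a transfer nor any commutator estimate. Since $\vrho'=\vrho$, integrating out $t_2$ reassembles the kernels $h_\pm$ into Gibbs conjugations by the true Gibbs state. Write $\vA_{\leq}:=\vA(t)_{\leq\Omega_2}$ for the low-frequency part of $\vA_{\vH}(t)$. The test becomes a $g(t)w_{\Omega_1}(t_1)$-weighted integral of $\Tr[\vO^\dagger_{\vG}(t-t_1)\,\vD(t)]$, where
\[
\vD(t)=\sqrt{\vrho}\vA_{\leq}\sqrt{\vrho}^{-1}\vsigma-\vsigma\sqrt{\vrho}^{-1}\vA_{\leq}\sqrt{\vrho}
=\bigl(\sqrt{\vrho}\vA_{\leq}\sqrt{\vrho}^{-1}\sqrt{\vsigma}-\sqrt{\vsigma}\vA_{\leq}\bigr)\sqrt{\vsigma}
+\sqrt{\vsigma}\bigl(\vA_{\leq}\sqrt{\vsigma}-\sqrt{\vsigma}\sqrt{\vrho}^{-1}\vA_{\leq}\sqrt{\vrho}\bigr).
\]
Both brackets are ADB defects, bounded in Hilbert--Schmidt norm by \Cref{lem:low_freq_bound} and its adjoint. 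H\"older's inequality with $\|\sqrt{\vsigma}\|_2=1$ then gives
\[
\|\vD(t)\|_1\leq e^{\poly(\beta^{\pm1})}\epsilon_{\mathrm{ADB}}^{1/(32\beta d+2)}e^{\pi|t|/(\beta(16\beta d+1))}.
\]
$\vG$ enters only through a unitary conjugation, so $\|\vO_{\vG}(t-t_1)\|=\|\vO\|$. Integrating the kernels therefore gives (2) uniformly in $\vG$; this is \Cref{lem:q-exact-candidate}.

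\textbf{A smaller point in Step~2.} Do not Dyson-expand the conjugations $\vrho'^{\pm1/2}$ around $\vH$. That expansion runs in imaginary time and produces words with uncontrolled norms, to which \Cref{lem:low_freq_bound} does not apply. The inner $\vH'$-frequency filter already turns the $\vrho'$-conjugation into an exact frequency shift (the kernels $h_\pm$). Only the real-time evolution $\vA_{\vH'}(t+t_2)$ needs to be expanded.
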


To construct these tests, we first express the commutator pairing in a concrete integral form. The following identifiability equation, valid for any input state, writes it in terms of time evolution and Gibbs conjugations.

\begin{lem}[Identifiability for an arbitrary input state]\label{lem:state-identifiability}
Let $\vrho$ and $\vrho'$ be the Gibbs states of $\vH$ and $\vH'$ at inverse temperature $\beta$. For every state $\vsigma$ and operators $\vO,\vA$,
\begin{align}
\mathcal F_\sigma
={}&\frac{1}{\sqrt{2\pi}}\int_{\mathbb R}g_{\mathrm{id}}(t)
\Tr\!\left[\vO^\dagger\sqrt{\vsigma}
\left(
\vrho^{-1/2}\vrho'^{1/2}\vA_{\beta\vH'/2}(t)\vrho'^{-1/2}\vrho^{1/2}
\right.\right.\nonumber\\[-2pt]
&\hspace{4cm}\left.\left.
-\vrho^{1/2}\vrho'^{-1/2}\vA_{\beta\vH'/2}(t)\vrho'^{1/2}\vrho^{-1/2}
\right)_{\beta\vH/2}(-t)\sqrt{\vsigma}\right]\dd t,
\label{eq:metastable-identifiability}
\end{align}
where $g_{\mathrm{id}}(t)=-\pi^{3/2}/[2\sqrt{2}(1+\cosh(\pi t))]$.
\end{lem}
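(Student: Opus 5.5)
The plan is to read \eqref{eq:metastable-identifiability} as a pointwise operator identity, sandwiched between $\vO^\dagger\sqrt{\vsigma}$ and $\sqrt{\vsigma}$. The state $\vsigma$ enters only through the outer factors, and the KMS pairing in \eqref{eq:identification-F} is
\begin{align}
\mathcal F_\sigma=\frac{\beta}{2}\Tr\bigl(\vO^\dagger\sqrt{\vsigma}\,[\vA,\vH-\vH']\sqrt{\vsigma}\bigr).
\end{align}
It therefore suffices to prove the state-independent operator identity
\begin{align}
\frac{\beta}{2}[\vA,\vH-\vH']
=\frac{1}{\sqrt{2\pi}}\int_{\mathbb R}g_{\mathrm{id}}(t)\Bigl(\vrho^{-1/2}\vrho'^{1/2}\vA_{\beta\vH'/2}(t)\vrho'^{-1/2}\vrho^{1/2}
-\vrho^{1/2}\vrho'^{-1/2}\vA_{\beta\vH'/2}(t)\vrho'^{1/2}\vrho^{-1/2}\Bigr)_{\beta\vH/2}(-t)\,\dd t .
\end{align}
Once this holds, we multiply on the left by $\vO^\dagger\sqrt{\vsigma}$ and on the right by $\sqrt{\vsigma}$, take the trace, and exchange trace and integral. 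The exchange is justified because everything is finite-dimensional and $g_{\mathrm{id}}$ decays exponentially, while the integrand is bounded uniformly in $t$: the conjugations by $\vrho^{\pm1/2}$ and $\vrho'^{\pm1/2}$ are fixed bounded operators, and the real-time evolutions are unitary.

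The operator identity is exactly Ref.~\cite[Lemma III.3]{chen2025learning}, applied with $\vH_1=\beta\vH/2$ and $\vH_2=\beta\vH'/2$; its Gibbs-state version has the same form. The Gibbs normalizations cancel in every conjugation, so $\vrho^{\mp1/2}\vrho'^{\pm1/2}=e^{\pm\vH_1}e^{\mp\vH_2}$. I would cite this lemma directly. For a self-contained check, take the special case $\vH_2=0$ of \Cref{lem:weak-commutator-identity}. There, passing to Bohr components of $\vH_1$, the identity reduces to the scalar Fourier statement
\begin{align}
\frac{1}{\sqrt{2\pi}}\int g_{\mathrm{id}}(t)e^{-i\nu t}\,\dd t=-\frac{\nu}{2\sinh(\nu/2)}\cdot\text{const}.
\end{align}
This holds because $1/(1+\cosh\pi t)$ has Fourier transform proportional to $\nu/\sinh\nu$ up to rescaling. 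Combined with the factor $e^{\nu/2}-e^{-\nu/2}$ produced by the two Gibbs conjugations, this yields $\nu$ times the Bohr component, i.e., the commutator.

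For general $\vH_2$, I would follow the same interpolation used in Ref.~\cite{chen2025learning}. Set $\vX=e^{\vH_1}e^{-\vH_2}\vA_{\vH_2}(t)e^{\vH_2}e^{-\vH_1}$. The first term is then $\vX$ evolved backward under $\vH_1$, and the second term is its mirror image.

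The main obstacle is making sure the ordering of the mixed conjugations matches. Conjugation by $e^{\mp\vH_2}$ does not commute with $e^{\pm\vH_1}$, so the scalar Fourier argument alone does not apply. Rather than reprove this, I would rely on the cited lemma, which is stated for arbitrary pairs of Hermitian operators. The only genuinely new point here is that nothing in it uses properties of $\vsigma$, so the statement holds for every state, including states that are not full rank.
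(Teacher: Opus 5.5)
Your proposal is correct and follows the paper's proof. Both arguments reduce the claim to the state-independent operator identity of Ref.~\cite[Lemma III.3]{chen2025learning}, note that the Gibbs normalizations cancel inside each conjugation, and then take the KMS pairing $\Tr[\vO^\dagger\sqrt{\vsigma}\,(\cdot)\,\sqrt{\vsigma}]$, which uses nothing about $\vsigma$. One bookkeeping correction: in the form of that lemma quoted in the paper, the left-hand side is $[\vA,\vH_2]-[\vA,\vH_1]$, with $\vH_1$ generating the inner evolution and $\vH_2$ the outer one, so the substitution is $\vH_1=\beta\vH'/2$, $\vH_2=\beta\vH/2$ rather than your reversed (though internally consistent) choice; also, in your optional Fourier aside the phase should read $e^{-i\nu t/2}$ when $\nu$ is a Bohr frequency of $\beta\vH$.
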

\begin{proof}
Ref.~\cite[Lemma III.3]{chen2025learning} gives, for Hermitian $\vH_1,\vH_2$,
\begin{align}
 [\vA,\vH_2]-[\vA,\vH_1]
 =\frac1{\sqrt{2\pi}}\int_{\mathbb R}g_{\mathrm{id}}(t)
 \Bigl(&e^{\vH_2}e^{-\vH_1}\vA_{\vH_1}(t)e^{\vH_1}e^{-\vH_2}\nonumber\\
 &-e^{-\vH_2}e^{\vH_1}\vA_{\vH_1}(t)e^{-\vH_1}e^{\vH_2}\Bigr)_{\vH_2}(-t)\dd t.
 \label{eq:underlying-operator-identifiability}
\end{align}
Set $\vH_1=\beta\vH'/2$ and $\vH_2=\beta\vH/2$. The Gibbs normalization constants cancel within each conjugation, so the right-hand side becomes the operator integral in \eqref{eq:metastable-identifiability}. Taking the KMS pairing with $\vO$, namely $\langle\vO,\vY\rangle_{\vsigma}=\Tr[\vO^\dagger\sqrt{\vsigma}\vY\sqrt{\vsigma}]$, proves the claim.
\end{proof}

\paragraph*{The four obstacles and their resolution.}
\Cref{fig:identification-flow} summarizes the four steps that turn \eqref{eq:metastable-identifiability} into measurable tests. In \Cref{subsec:double-frequency-truncation}, double-frequency truncation controls amplification by the two Gibbs conjugations. In \Cref{subsec:adb-linearization}, ADB-based linearization combines the separated square-root state factors into a single state factor. In \Cref{subsec:tested-dynamics-transfer}, tested stationarity allows the remaining true-Hamiltonian dynamics to be transferred to the local test operator. Finally, \Cref{subsec:auxiliary-tests} replaces that unknown generator by a selected auxiliary Hamiltonian to define a measurable family. The true candidate has a small test value for every auxiliary choice, while the choice equal to the true Hamiltonian retains the identification guarantee.

 The main technical difficulty arises in the second step because the available ADB estimate concerns true-Hamiltonian dynamics, whereas the expression to be linearized contains candidate dynamics. The perturbative argument in \Cref{subsec:adb-linearization} bridges this mismatch.

\begin{figure}[htbp]
\centering
\begin{tikzpicture}[box/.style={draw,rounded corners,inner sep=7pt,font=\small}]
\node[box] (f) at (0,0) {\begin{tabular}{@{}p{4.9cm}p{10.0cm}@{}}Exact identification & $\mathcal F_\sigma=\frac\beta2\langle\vO,[\vA,\vH-\vH']\rangle_{\vsigma}$\quad\eqref{eq:identification-F}\end{tabular}};
\node[box] (fc) at (0,-1.65) {\begin{tabular}{@{}p{4.9cm}p{10.0cm}@{}}Frequency-regularized pairing & $\mathcal F_{\mathrm{cut}}(\vO,\vA;\vH,\vH')$\quad\eqref{eq:identification-F-cut}\end{tabular}};
\node[box] (i) at (0,-3.3) {\begin{tabular}{@{}p{4.9cm}p{10.0cm}@{}}Linear in the input state & $\mathcal I_\sigma(\vO,\vA;\vH,\vH')$\quad\eqref{eq:linearized-I}\end{tabular}};
\node[box] (qh) at (0,-4.95) {\begin{tabular}{@{}p{4.9cm}p{10.0cm}@{}}Target dynamics acts on $\vO$ & $Q_\sigma(\vO,\vA;\vH,\vH')$\quad\eqref{eq:transferred-Q}\end{tabular}};
\node[box] (qg) at (0,-6.6) {\begin{tabular}{@{}p{4.9cm}p{10.0cm}@{}}Measurable auxiliary tests & $Q_\sigma(\vO,\vA;\vG,\vH')=\Tr(\vM\vsigma)$\quad\eqref{eq:measurable-Q}\end{tabular}};
\draw[->] (f) -- node[right,font=\scriptsize]{\ref{subsec:double-frequency-truncation}: double-frequency truncation} (fc);
\draw[->] (fc) -- node[right,font=\scriptsize]{\ref{subsec:adb-linearization}: ADB-based linearization} (i);
\draw[->] (i) -- node[right,font=\scriptsize]{\ref{subsec:tested-dynamics-transfer}: transfer of Hamiltonian dynamics} (qh);
\draw[->] (qh) -- node[right,font=\scriptsize]{\ref{subsec:auxiliary-tests}: replace the test evolution $\vH$ by $\vG$} (qg);
\end{tikzpicture}
\caption{Construction of the measurable tests from the identifiability equation \eqref{eq:metastable-identifiability}. The first three steps produce the expression with true auxiliary dynamics. The fourth defines the family with selectable dynamics. ADB-based linearization is the main technical step.}
\label{fig:identification-flow}
\end{figure}

\subsection{Double-frequency truncation}
\label{subsec:double-frequency-truncation}
We next prove \Cref{thm:double-frequency-truncation}, which regularizes $\langle\vO,[\vA,\vH-\vH']\rangle_{\vsigma}$. The truncation error is tracked in terms of the current coefficient error $\eta$ between $\vH$ and $\vH'$.

We introduce cutoffs for both frequency variables.
{The single-transform reconstruction factor from Appendix~A is $(2\sigma\sqrt{2\pi})^{-1/2}$, so a double reconstruction has factor $1/(2\sigma\sqrt{2\pi})$.}
With the normalization in \Cref{thm:double-frequency-truncation}, the approximation to \eqref{eq:identification-F} is
\begin{align}
 \mathcal F_{\mathrm{cut}}(\vO,\vA;\vH,\vH')
 :=\frac{\beta}{{4\sigma\sqrt{2\pi}}}
 \int_{-\Omega_2}^{\Omega_2}\int_{-\Omega_1}^{\Omega_1}
 \left\langle\vO,[\vA(\omega_2),\vH-\vH'](\omega_1)\right\rangle_{\vsigma}
 \dd\omega_1\dd\omega_2.
 \label{eq:identification-F-cut}
\end{align}
Here the inner frequency refers to $\vH'$ and the outer frequency to $\vH$.

\begin{thm}[Double-frequency truncation]
\label{thm:double-frequency-truncation}
Let $\vH,\vH'$ be local Hamiltonians as in \Cref{defn:local-Hamiltonian}, let $\vA$ be a {single-site} Pauli operator, and suppose every Pauli coefficient of $\vH-\vH'$ has magnitude at most $\eta$.
{For every prescribed polynomial $P(\beta^{\pm1})$, there are cutoffs $\Omega_1,\Omega_2=\poly(\beta^{\pm1})$, depending also on the fixed locality parameters and $P$, such that}
\begin{align}
 {\Bigg|}\ipc{\vO}{[\vA,\vH-\vH']}_{\vsigma}
 -{\frac{1}{2\sigma\sqrt{2\pi}}}
 \int_{-\Omega_2}^{\Omega_2}\int_{-\Omega_1}^{\Omega_1}
 \ipc{\vO}{[\vA(\omega_2),\vH-\vH'](\omega_1)}_{\vsigma}
 \,\dd\omega_1\dd\omega_2{\Bigg|}
    \leq\|\vO\|\|\vA\|\eta e^{-{P}(\beta^{\pm1})}.
\end{align}
\end{thm}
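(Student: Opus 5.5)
The plan is to use the exact reconstruction identity \eqref{eq:operator-ft-reconstruction-basic} twice, once in each frequency variable. The truncation error then becomes a sum of two frequency tails. Each tail will be bounded in operator norm with an explicit factor $\eta$, and the KMS pairing will be handled by $|\langle\vO,\vX\rangle_{\vsigma}|\leq\|\vO\|\,\|\vX\|$, which follows from \eqref{eq:basic-schatten-estimates}. This estimate does not depend on $\vsigma$.

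Write $\vV:=\vH-\vH'=\sum_\gamma v_\gamma\vP_\gamma$ with $|v_\gamma|\leq\eta$. Here $\vA(\omega_2)=\widehat{\vA}_{\vH}(\omega_2)$, and $\vY(\omega_1)$ denotes the transform of $\vY$ under $\vH'$. Integrating over all of $\mathbb R^2$ with the factor $1/(2\sigma\sqrt{2\pi})$ reproduces $[\vA,\vV]$ exactly, by linearity and by reconstructing first in $\omega_1$ and then in $\omega_2$. The error therefore splits into two pieces. The first is an outer tail $(2\sigma\sqrt{2\pi})^{-1/2}\int_{|\omega_2|>\Omega_2}[\widehat{\vA}_{\vH}(\omega_2),\vV]\,\dd\omega_2$, in which the inner integral has already been reconstructed exactly. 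The second is an inner tail $\frac{1}{2\sigma\sqrt{2\pi}}\int_{|\omega_2|\leq\Omega_2}\int_{|\omega_1|>\Omega_1}\vY_{\omega_2}(\omega_1)\,\dd\omega_1\dd\omega_2$, where $\vY_{\omega_2}:=[\widehat{\vA}_{\vH}(\omega_2),\vV]$.

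For the outer tail, I will use the Gaussian shift \eqref{eq:weak-gaussian-shift} with $z=\operatorname{sgn}(\omega_2)/(4d)$. This reduces the estimate to bounding $\|[e^{z\vH}\vA e^{-z\vH},\vV]\|$ uniformly in real time. I expand $e^{z\vH}\vA e^{-z\vH}$ in nested commutators, as in the proof of \Cref{lem:weak-high-frequency}. A term $\vP_\gamma$ at interaction distance $r$ from $\supp\vA$ commutes with every nested commutator of order less than $r$. It therefore contributes at most $\eta\sum_{m\geq r}(2d|z|)^m\cdot(\text{count})$. The combinatorial count of paths reaching distance $r$ is absorbed by $(2d|z|)^m\leq 2^{-m}$, so the sum over $\gamma$ converges to $C\eta$ independently of $n$. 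This gives $\|[\widehat{\vA}_{\vH}(\omega_2),\vV]\|\leq C_\beta\,\eta\, e^{-|\omega_2|/(4d)}$. Integrating over $|\omega_2|>\Omega_2$ then gives $C_\beta\eta e^{-\Omega_2/(4d)}$.

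For the inner tail, I treat the operator $\vY_{\omega_2}$, which is quasi-local around the site of $\vA$, and apply the Gaussian shift under $\vH'$ with $z=\operatorname{sgn}(\omega_1)/(4d)$. I then need the bound $\|e^{z\vH'}\vY_{\omega_2}e^{-z\vH'}\|\leq C_\beta\,\eta\, e^{\beta\Omega_2/2}$. The plan is to write $\vY_{\omega_2}$ via its time representation with the Gaussian $f_\sigma$ and expand everything in nested commutators of $\vH$ and $\vH'$. Each surviving term must then contain one factor $\vP_\gamma$ of $\vV$, which gives the factor $\eta$, and locality makes the sum over $\gamma$ converge as before. The real-time spreading under $\vH$ is controlled by \Cref{lem:basic-lr}: it costs at most a polynomial factor $(1+|t|)^{D}$, which the Gaussian $f_\sigma$ absorbs. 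With this bound, the $\omega_1$-tail gives $C_\beta\eta e^{\beta\Omega_2/2}e^{-\Omega_1/(4d)}$, up to the $2\Omega_2$ volume of the $\omega_2$ integration.

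Finally, I choose $\Omega_2=4d\,[P(\beta^{\pm1})+\log C_\beta+1]$ and then $\Omega_1\geq 4d\,[\beta\Omega_2/2+P+\log(C_\beta\Omega_2)+1]$. Both are $\poly(\beta^{\pm1})$, and with these choices the two tails sum to at most $\|\vO\|\|\vA\|\eta e^{-P}$.

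I expect the main obstacle to be the inner-tail estimate. It requires a single bound that simultaneously keeps the factor $\eta$, is independent of $n$ (the operator $\vV$ is extensive), and survives imaginary-time conjugation by $\vH'$ of an operator that is only quasi-local under $\vH$. If the direct nested-commutator count does not close, I would first fall back on truncating $\vV$ to an $\ell$-neighbourhood with \Cref{lem:basic-lr-perturbation}, handling the far terms by their LR tails times $\eta$, and then taking $\ell=\poly(\beta^{\pm1})$.
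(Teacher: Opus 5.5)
\paragraph{The gap: the $\omega_1$-tail of $\vY_{\omega_2}$.}
Your second piece needs decay of the $\vH'$-frequency transform of $\vY_{\omega_2}=[\widehat{\vA}_{\vH}(\omega_2),\vV]$. You reduce this to $\|e^{z\vH'}\vY_{\omega_2}e^{-z\vH'}\|\le C_\beta\eta e^{\beta\Omega_2/2}$ at $|z|=1/(4d)$, and the justification you give for that bound does not work.

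Lieb--Robinson makes the \emph{commutator} with $\vV$ cheap, $\|[\vA_{\vH}(t),\vV]\|\lesssim\eta(1+|t|)^D$. It gives nothing polynomial for \emph{imaginary-time conjugation} of the spread operator by a different Hamiltonian. Since $[\vH,\vH']\neq0$, $e^{z\vH'}$ does not commute with the $\vH$-evolution. So the Gaussian-shift cancellation that makes $e^{z\vH}\widehat{\vA}_{\vH}(\omega)e^{-z\vH}$ harmless is unavailable.

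For an operator on $s$ sites, the nested-commutator series only gives $\sum_m(2d|z|)^m\binom{s+m-1}{m}=(1-2d|z|)^{-s}=2^{s}$, which is exponential in the volume. This growth is genuine for generic operators: $\prod_{i\in S}\vZ_i$ conjugated under $\sum_i\vX_i$ has norm $e^{2|z||S|}$.

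The Lieb--Robinson pieces of $\vA_{\vH}(t)$ at radius $\ell$ have norm $\lesssim(2d|t|)^\ell/\ell!$ but conjugation cost $2^{O(\ell^D)}$. For $D\ge2$ the resulting bound is not summable in $\ell$, hence not uniform in $n$. Even after splitting off the part generated by $\vH$, the perturbation acting on a region of volume $\sim|t|^D$ still counts as $e^{O(\eta|t|^D)}$. The Gaussian $f_\sigma(t)\propto e^{-t^2/\beta^2}$ cannot absorb this for $D\ge3$, and for $D=2$ only when $\eta\beta^2$ is small.

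Your fallback, as stated, does not supply the missing mechanism. It does not say how the conjugation by $\vH'$ would then be controlled. Its truncation errors also sit inside the unbounded region $|\omega_1|>\Omega_1$, where an operator-norm Lieb--Robinson bound cannot be integrated without the very $\omega_1$-decay that is missing.

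\paragraph{The fix: reverse the order of truncation.}
That estimate is unnecessary; the paper truncates in the opposite order.
\begin{itemize}
\item The commutator $\vX=[\vA,\vH-\vH']$ is strictly local with $\|\vX\|\le2d\eta\|\vA\|$. Its tail under the outer filter, over $|\omega_1|>\Omega_1$, follows from the standard local Fourier bound and is $\lesssim C_\beta\eta\|\vO\|\|\vA\|e^{-\Omega_1/(c_0d)}$.
\item Inside the bounded window $|\omega_1|\le\Omega_1$, reconstruct $\vX$ from the inner frequency and discard $|\omega_2|>\Omega_2$. The outer filter now needs no locality. For arbitrary $\vB$ one has $\|\widehat{\vB}(\omega_1)\|\le\|\vB\|/(\sqrt\sigma(2\pi)^{1/4})$, and the window has length $2\Omega_1$.
\item The remaining integrand is exactly your (correct) first-piece estimate $\|[\widehat{\vA}(\omega_2),\vV]\|\lesssim\eta e^{-|\omega_2|/(4d)}$.
\end{itemize}
This gives a total error of $C_\beta\eta\|\vO\|\|\vA\|(e^{-\Omega_1/(c_0d)}+\Omega_1e^{-\Omega_2/(4d)})$. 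Choose $\Omega_1$ first, then $\Omega_2$.

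In your notation, this amounts to writing $\int_{|\omega_2|\le\Omega_2}=\int_{\mathbb R}-\int_{|\omega_2|>\Omega_2}$ in your second piece. The $\int_{\mathbb R}$ part becomes the tail of the local $\vX$. The rest combines with your first piece into the double tail over $|\omega_2|>\Omega_2$, $|\omega_1|\le\Omega_1$.

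Separately, you have the frequency convention reversed: in the paper $\omega_2$ is the $\vH'$-frequency and $\omega_1$ the $\vH$-frequency. This is harmless, because the hypotheses are symmetric under $\vH\leftrightarrow\vH'$ up to the sign of $\vV$.
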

\begin{proof}
{Set $\vX=[\vA,\vH-\vH']$. The case $\eta=0$ is immediate. Otherwise write $\vH'-\vH=\eta\vV$, with every coefficient of $\vV$ bounded by one. Only terms meeting $\supp(\vA)$ contribute to $\vX$, so $\|\vX\|\leq2d|\supp(\vA)|\eta\|\vA\|$. Expanding this commutator into bounded-weight Pauli terms and using the local Fourier bound gives, with $C_{\beta}=e^{\poly(\beta^{\pm1})}$ and fixed $c_0>0$,}
\begin{align}
 &\ipc{\vO}{\vX}_{\vsigma}
 -{\frac{1}{\sqrt{2\sigma\sqrt{2\pi}}}}\int_{-\Omega_1}^{\Omega_1}
       \ipc{\vO}{\vX_{\vH}(\omega_1)}_{\vsigma}\,\dd\omega_1
 ={\frac{1}{\sqrt{2\sigma\sqrt{2\pi}}}}\int_{|\omega_1|>\Omega_1}
       \ipc{\vO}{\vX_{\vH}(\omega_1)}_{\vsigma}\,\dd\omega_1,
 \nonumber\\
 &{\left|\frac{1}{\sqrt{2\sigma\sqrt{2\pi}}}\int_{|\omega_1|>\Omega_1}
       \ipc{\vO}{\vX_{\vH}(\omega_1)}_{\vsigma}\,\dd\omega_1\right|
 \leq C_{\beta}\eta\|\vO\|\|\vA\|e^{-\Omega_1/(c_0d)}.}
\end{align}
{For the inner tail, the operator-norm estimate in the proof of Ref.~\cite[Lemma III.5]{chen2025learning}, applied to the coefficient-normalized $\vV$, gives}
\begin{align}
 \|[\vA_{\vH'}(\omega_2),\vH-\vH']\|
 &\leq{C_{\beta}\eta\|\vA\|e^{-|\omega_2|/(4d)}}.
\end{align}
{Here $C_{\beta}$ includes the degree and Gaussian prefactors in that estimate; it is independent of the system size and $\eta$. Gaussian filtering also gives $\|\vB_{\vH}(\omega_1)\|\leq\|\vB\|/[\sqrt{\sigma}(2\pi)^{1/4}]$ for any $\vB$. Therefore the double-filtered quantity minus the single-filtered quantity is}
\begin{align}
 &{\frac{1}{2\sigma\sqrt{2\pi}}}\int_{-\Omega_2}^{\Omega_2}\int_{-\Omega_1}^{\Omega_1}
 \ipc{\vO}{[\vA(\omega_2),\vH-\vH'](\omega_1)}_{\vsigma}
 \,\dd\omega_1\dd\omega_2
 -{\frac{1}{\sqrt{2\sigma\sqrt{2\pi}}}}\int_{-\Omega_1}^{\Omega_1}
 \ipc{\vO}{\vX_{\vH}(\omega_1)}_{\vsigma}\,\dd\omega_1
 \nonumber\\
 &={-\frac{1}{2\sigma\sqrt{2\pi}}}
 \int_{|\omega_2|>\Omega_2}\int_{-\Omega_1}^{\Omega_1}
 \ipc{\vO}{[\vA(\omega_2),\vH-\vH'](\omega_1)}_{\vsigma}
 \,\dd\omega_1\dd\omega_2.
\end{align}
{Its absolute value is at most}
\begin{align}
 {\frac{\Omega_1}{\sigma^{3/2}(2\pi)^{3/4}}
 \|\vO\|\int_{|\omega_2|>\Omega_2}
 \|[\vA_{\vH'}(\omega_2),\vH-\vH']\|\,\dd\omega_2}
    {\leq C_{\beta}\eta\|\vO\|\|\vA\|
 \Omega_1e^{-\Omega_2/(4d)}.}
\end{align}
{Choose $\Omega_1$ first and then $\Omega_2$ so that each tail is at most half the prescribed error. Both cutoffs are polynomial in $\beta^{\pm1}$ for fixed locality parameters. For the pairing $\mathcal F_\sigma$, absorb the additional factor $\beta/2$ into the chosen accuracy polynomial.}
\end{proof}

\subsection{Linearization via approximate detailed balance}
\label{subsec:adb-linearization}

{
From this point onward, time is physical time.
\begin{align}
 {g(t):=\frac2\beta g_{\mathrm{id}}(2t/\beta),\qquad
 |g(t)|\leq C_{\beta}\beta^{-1}e^{-2\pi|t|/\beta}.}
\end{align}
}

After frequency truncation, the expression is still nonlinear in the input state $\vsigma$ because the KMS pairing contains square-root state factors:
\begin{align}
 \mathcal F_{\mathrm{cut}}
 ={}&\frac{1}{4\pi\sigma}
 \int_{-\Omega_2}^{\Omega_2}\int_{-\Omega_1}^{\Omega_1}\int_{-\infty}^{\infty}
 \Tr\Bigl[\vO^\dagger\sqrt{\vsigma}
 \Bigl(\vrho^{-1/2}\vrho'^{1/2}\vA_{\vH'}(t,\omega_2)
 \vrho'^{-1/2}\vrho^{1/2}
 \nonumber\\
 &\hspace{2.5cm}
 -\vrho^{1/2}\vrho'^{-1/2}\vA_{\vH'}(t,\omega_2)
 \vrho'^{1/2}\vrho^{-1/2}\Bigr)_{\vH}(-t,\omega_1)
 \sqrt{\vsigma}\Bigr]g(t)\,\dd t\dd\omega_1\dd\omega_2.
 \label{eq:truncated-identifiability-integral}
\end{align}
We next use the ADB property to obtain an expression linear in $\vsigma$. This is one of the main technical steps in constructing measurable tests.

\begin{thm}[Refined ADB-based linearization]\label{thm:kms-to-gns-reduction}
    Let $\vH,\vH'$ satisfy \Cref{defn:local-Hamiltonian}. Assume that the coefficient of each Pauli term in $\vH-\vH'$ is bounded in magnitude by $\eta$.
    {Fix a prescribed positive polynomial $\poly_1(\beta^{\pm1})$.} There exist polynomials $\poly$ and $\poly'$ in the same variables such that, whenever
        \begin{align}
            \epsilon_{\mathrm{ADB}}<e^{-\poly(\beta^{\pm 1})}
            \quad\text{and}\quad
            \eta<e^{-\poly'(\beta^{\pm 1})},
        \end{align}
        the following term, denoted $E_{\mathrm{lin}}^{+}$,
        \begin{align}
            \frac{1}{{4\pi\sigma}}
            \int_{-\Omega_2}^{\Omega_2}\int_{-\Omega_1}^{\Omega_1}
            \int_{-\infty}^{\infty}\Tr\Bigg[
            \vO^\dagger
            \Bigg(
            &\sqrt{\vsigma}\Big(\sqrt{\vrho}^{-1}\sqrt{\vrho'}\vA_{\vH'}(t,{\omega_2})\sqrt{\vrho'^{-1}}\sqrt{\vrho}\Big)_{\vH}(-t,{\omega_1})\sqrt{\vsigma}\\
            &-
            \Big(\sqrt{\vrho'}\vA_{\vH'}(t,{\omega_2})\sqrt{\vrho'^{-1}}\Big)_{\vH}(-t,{\omega_1})\vsigma
            \Bigg)\Bigg]g(t)\dd t
            \dd\omega_1\dd\omega_2
            \label{eq:linearization-error-term}
\end{align}
        {satisfies}
\begin{align}
 {|E_{\mathrm{lin}}^+|
 \leq\|\vO\|\|\vA\|\left(\eta e^{-\poly_1(\beta^{\pm1})}
 +e^{\poly(\beta^{\pm1})}\epsilon_{\mathrm{ADB}}^{\frac{1}{32\beta d+2}}\right).}
\end{align}
\end{thm}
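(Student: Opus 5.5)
The plan is to reduce the error term $E_{\mathrm{lin}}^+$ to ADB defects of low-frequency products of operators evolved under the true Hamiltonian $\vH$, and then apply \Cref{lem:low_freq_bound}. Write $\vY:=\sqrt{\vrho'}\vA_{\vH'}(t,\omega_2)\sqrt{\vrho'^{-1}}$, so that the inner operator in the first term is $\vrho^{-1/2}\vY\vrho^{1/2}$. Since conjugation by $\vrho^{\pm1/2}$ commutes with $\vH$-evolution and $\vH$-frequency filtering, the integrand becomes
\[
\Tr\Bigl[\vO^\dagger\Bigl(\sqrt{\vsigma}\,\vrho^{-1/2}\vZ\vrho^{1/2}-\vZ\sqrt{\vsigma}\Bigr)\sqrt{\vsigma}\Bigr],\qquad \vZ:=\vY_{\vH}(-t,\omega_1).
\]
By Schatten Cauchy--Schwarz and $\|\sqrt{\vsigma}\|_2=1$, this is at most $\|\vO\|$ times the Hilbert--Schmidt ADB-type defect of $\vZ$ in the orientation of \Cref{defn:adb}. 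The main difficulty is that $\vZ$ involves the candidate dynamics $\vH'$ and the candidate Gibbs conjugation. The ADB estimates, however, only control operators built from $\vH$-evolved single-site Paulis.

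First I split $\vY$ into an $\vH$-part and a remainder. Writing $\vH'=\vH+\eta\vV$, I would expand both $\vA_{\vH'}(t)$ and the imaginary-time conjugation $\sqrt{\vrho'}(\cdot)\sqrt{\vrho'^{-1}}$ (after frequency filtering, so that it is a bounded frequency shift) in Dyson series around $\vH$. The zeroth-order term is $\vrho^{1/2}\widehat{\vA}_{\vH}(t,\omega_2)\vrho^{-1/2}$. Each order-$m$ term is a time integral of an ordered word of $m+1$ single-site Paulis: the factors of $\vV$ are sums of Pauli strings, whose factors I split into single sites, and each is evolved under $\vH$ and multiplied by $\eta^m$. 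Lieb--Robinson truncation (\Cref{lem:basic-lr,lem:basic-lr-perturbation}) restricts $\vV$ to terms near $\supp(\vA)$ up to a tail that decays quickly in the truncation radius. Since the dependence on $\eta$ starts at first order, the total weight of orders $m\ge M$ is at most $(C_\beta\eta)^M$ times $m!$-compensated time integrals. Summing the geometric series then bounds the whole remainder by $\eta e^{-\poly_1}$, provided $\eta<e^{-\poly'}$ is small enough and the operator-norm defect of every word is used trivially. This trivial bound is where the $\eta e^{-\poly_1(\beta^{\pm1})}$ term comes from.

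For the leading words, up to a fixed order $M=\poly(\beta^{\pm1})$, I bound the defect using ADB. The outer filter $(\cdot)_{\vH}(-t,\omega_1)$, combined with the cutoff $|\omega_1|\le\Omega_1$ and the frequency shift from the $\vrho$-conjugation (as in \eqref{eq:imaginary-time-positive-basic}), places each word in the form handled by the second bullet of \Cref{lem:low_freq_bound}. That lemma gives the bound $C_\beta^{q}\epsilon_{\mathrm{ADB}}^{1/(32\beta d+2)}e^{c\Delta}e^{\pi|t_0|/(\beta(16\beta d+1))}$ with $q\le M+1$. The growth in time is integrable against $|g(t)|\lesssim e^{-2\pi|t|/\beta}$ and against the Dyson time variables, which are confined to intervals of length at most $|t|+\beta$. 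The frequency integrals over $[-\Omega_1,\Omega_1]\times[-\Omega_2,\Omega_2]$ only contribute factors of $e^{\poly(\beta^{\pm1})}$.

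I expect the main obstacle to be the imaginary-time Dyson expansion of $\sqrt{\vrho'}\vA\sqrt{\vrho'^{-1}}$. Its terms must stay bounded uniformly, with no dependence on $n$. I would handle this by performing the expansion only on the frequency-truncated jump $\vA_{\vH'}(\omega_2)$, writing $\vrho'^{1/2}\vrho^{-1/2}$ as an imaginary-time-ordered exponential of $\eta\vV$ with locality. If this does not control the growth directly, I would fall back on the Gaussian-shift bounds from the proof of \Cref{lem:weak-high-frequency}, which give bounds of the form $e^{\beta\Omega_2/2}$ per factor. These can be absorbed, because $\Omega_2=\poly(\beta^{\pm1})$ is fixed before $\poly'$ is chosen.
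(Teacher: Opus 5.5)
Your overall architecture matches the paper's: spatial truncation of $\vH'-\vH$, a real-time Dyson expansion around $\vH$, \Cref{lem:low_freq_bound}-type ADB bounds for low-frequency Pauli words, and smallness of $\eta$ for the rest. The gap is the step you yourself single out as the main obstacle, the candidate conjugation $\sqrt{\vrho'}(\cdot)\sqrt{\vrho'^{-1}}$. You leave it unresolved, and your first-choice treatment of it would fail.

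\emph{Why the imaginary-time route fails.} Writing $\vrho'^{1/2}\vrho^{-1/2}$ as an imaginary-time-ordered exponential of $\eta\vV$ needs norm control of $e^{-s\vH}\vV_\ell e^{s\vH}$ for all $0\le s\le\beta/2$. The nested-commutator bound in the proof of \Cref{lem:weak-high-frequency} only controls such evolutions for $|s|\le 1/(4d)$, so for $\beta>1/(2d)$ this gives no bound uniform in $n$. In addition, the split $\sqrt{\vrho'}\vX\sqrt{\vrho'^{-1}}=(\vrho'^{1/2}\vrho^{-1/2})(\vrho^{1/2}\vX\vrho^{-1/2})(\vrho^{1/2}\vrho'^{-1/2})$ leaves a $\vrho$-conjugation of the $\vH'$-filtered operator $\vX=\vA_{\vH'}(t,\omega_2)$. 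That is not a bounded frequency shift.

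\emph{The missing idea.} No expansion of the conjugation is needed at all. Equation \eqref{eq:imaginary-time-positive-basic}, with $\vH'$ in place of $\vH$, is an exact identity: $\sqrt{\vrho'}\vA_{\vH'}(t,\omega_2)\sqrt{\vrho'^{-1}}=e^{-\beta\omega_2/2+1/4}\vA_{\vH'}(t,\omega_2-\beta^{-1})$, and the scalar factor does not depend on $\vH'$. Integrating over $|\omega_2|\le\Omega_2$ gives exactly $\int_{\mathbb R}\vA_{\vH'}(t+t_2)h_+(t_2)\,\dd t_2$, with a Gaussian-decaying kernel $h_+$. So $\vrho'$ disappears, and only real-time $\vH'$-dynamics remains to be Dyson-expanded. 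Your parenthetical ``bounded frequency shift'' and your fallback both point at this fact. However, you use it only as a norm bound of size $e^{\beta\Omega_2/2}$. A bound does not suffice, because the low-order terms must be rewritten exactly as $\vH$-evolved Pauli words before any ADB estimate can be applied to them.

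\emph{Distinct Dyson times.} The second bullet of \Cref{lem:low_freq_bound} concerns a fixed product evolved at one common time $t_0$. The Dyson words $\vP_m(s_m)\cdots\vP_1(s_1)\vA_{\vH}(u)$ have distinct, integrated times, so the lemma does not apply verbatim. Its proof does go through when each factor carries its own time, with growth $e^{\pi\max_j|s_j|/(\beta(16\beta d+1))}$ replacing $e^{\pi|t_0|/(\beta(16\beta d+1))}$. The paper carries out the analogous total-frequency-cutoff telescoping in \Cref{lem:dyson_term_adb_bound_m}.

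\emph{Your tail bookkeeping.} Once these points are fixed, your handling of the tail is a valid alternative to the paper's. Bound all orders $m\ge M$ trivially in operator norm, using the cutoff $\Omega_1$ via \eqref{eq:outer-cutoff-operator-bounds} to control $\vrho^{-1/2}(\cdot)\vrho^{1/2}$. This already works with $M=2$: it gives $O(e^{\poly(\beta^{\pm1})}\eta^2)\le\eta e^{-\poly_1(\beta^{\pm1})}$ for $\eta<e^{-\poly'(\beta^{\pm1})}$, and orders $0$ and $1$ go into the ADB floor. The paper instead bounds every order $m\ge1$ by $(C_\beta\eta)^m\bigl(e^{-\Omega/(8d)}+C_\beta\epsilon_{\mathrm{ADB}}^{1/(32\beta d+2)}e^{\frac{16\beta^2d+1}{16\beta d+1}\Omega}\bigr)$, with an auxiliary cutoff $\Omega$. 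It then absorbs the entire sum into $\eta e^{-\poly_1}$ by choosing $\Omega$ and the ADB threshold. Your split needs no ADB threshold in this step, at the price of requiring $\eta\lesssim e^{-\poly_1(\beta^{\pm1})}$ rather than only $\eta C_\beta\le 1/2$.
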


\Cref{thm:kms-to-gns-reduction} bounds the difference between one branch of the truncated KMS pairing in \eqref{eq:truncated-identifiability-integral} and a term containing only a single factor $\vsigma$. The latter is linear in the state and has the form of an observable expectation. Applying the same replacement to both conjugation branches yields \eqref{eq:linearized-I}; \Cref{subsec:tested-dynamics-transfer,subsec:auxiliary-tests} then remove the dependence on the unknown Hamiltonian dynamics to make the tests measurable.

The main difficulty is the $\vH'$-dynamics inside the expression, whereas the available ADB estimates apply to $\vH$-dynamics. We use the Dyson expansion to write the former as a sum of Pauli products evolved under $\vH$, allowing us to apply the local-product ADB estimate of \Cref{lem:low_freq_bound} in \Cref{sec:adb-appendix}. \Cref{lem:kms-gns-perturbation-truncation} first restricts the Hamiltonian perturbation to a local region, so that the number of contributing Pauli products grows at most exponentially with the Dyson order, independently of the system size. \Cref{lem:zeroth-order-dyson,lem:dyson_term_adb_bound_m} then control the zeroth-order term and all positive-order terms, respectively.

Summing these bounds requires $\eta$ to be sufficiently small, so that the factors $\eta^m$ suppress the growth with the Dyson order. For general $0\leq\eta\leq2$, we instead truncate time and the Dyson series to obtain the following coarse estimate.
\begin{thm}[Coarse ADB-based linearization]\label{thm:coarse-adb-linearization}
Let $\vH,\vH'$ be local Hamiltonians on the fixed-dimensional lattice of \Cref{defn:local-Hamiltonian}, with coefficient distance bounded by two, and let $\vA$ be a single-site Pauli and $\vO$ a fixed-support test with $\|\vO\|\leq1$. For every positive polynomial $P(\beta^{\pm1})$, there is a positive polynomial $P'(\beta^{\pm1})$ such that, if $\epsilon_{\mathrm{ADB}}\leq e^{-P'(\beta^{\pm1})}$, the linearization error in \eqref{eq:linearization-error-term} {satisfies $|E_{\mathrm{lin}}^+|\leq e^{-P(\beta^{\pm1})}$}, {for any fixed polynomial observable cutoffs $\Omega_1,\Omega_2$. The polynomial $P'$ may depend on these cutoffs.}
\end{thm}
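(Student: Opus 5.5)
The approach is to write $E_{\mathrm{lin}}^+$ as a difference in which the two square-root factors can be merged, via ADB, once the candidate dynamics is rewritten in terms of the true dynamics. The coefficient distance may now be as large as two, so the perturbative Dyson sum used for \Cref{thm:kms-to-gns-reduction} cannot be summed geometrically. Instead, we truncate everything to a finite, $\poly(\beta^{\pm1})$-sized budget.

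First, truncate the time integral at $|t|\leq T_0$. Because $|g(t)|\lesssim\beta^{-1}e^{-2\pi|t|/\beta}$, the tail is controlled once we bound the two operators inside the trace for fixed $t$. For $|\omega_1|\leq\Omega_1$ and $|\omega_2|\leq\Omega_2$, the Gibbs conjugations by $\vrho'^{\pm1/2}$ acting on $\vA_{\vH'}(t,\omega_2)$ cost at most $e^{\beta\Omega_2/2+1/4}$ by \eqref{eq:imaginary-time-norm-basic}. The outer conjugation by $\vrho^{\mp1/2}$ is absorbed by the $\vH$-filter at frequency $\omega_1$, so both branches are bounded by $e^{\poly(\beta^{\pm1})}$ uniformly in $t$ for the fixed polynomial cutoffs. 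Choosing $T_0=\poly(\beta^{\pm1})$ then makes the tail at most $\tfrac13 e^{-P}$.

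Second, for $|t|\leq T_0$, replace $\vA_{\vH'}(t)$ by its Lieb--Robinson truncation $\vA_{\vH'_\ell}(t)$, using \Cref{lem:basic-lr} with $\ell=\poly(\beta^{\pm1})$. Outside $\vH'$-frequencies, the same is done for the Gaussian filter, since its time integral concentrates on $|s|\lesssim\beta\poly$. The resulting error after conjugations and filters is bounded as in the first step, since the operator-norm amplification is again $e^{\poly}$.

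The operator $\sqrt{\vrho'}\vA_{\vH'_\ell}(t,\omega_2)\sqrt{\vrho'^{-1}}$ is now an operator supported on a patch of size $\poly(\beta^{\pm1})^D$. The Gibbs conjugation with $\vH'$ can itself be handled via the Gaussian shift identity \eqref{eq:imaginary-time-positive-basic}, which only rescales and shifts the frequency. Expand this operator in the Pauli basis of the patch. It becomes a sum of at most $4^{|{\rm patch}|}=e^{\poly}$ Pauli strings $\vP$ of bounded weight $q\leq\poly$, with coefficients bounded by $e^{\poly}$. Each $\vP$ is a product of $q$ single-site Paulis.

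For each string, the branch difference is
\[
\Tr\!\bigl[\vO^\dagger\bigl(\sqrt{\vsigma}\,(\vrho^{-1/2}\vP\vrho^{1/2})_{\vH}(-t,\omega_1)\sqrt{\vsigma}-\vP_{\vH}(-t,\omega_1)\vsigma\bigr)\bigr].
\]
Using \eqref{eq:imaginary-time-negative-basic}, we have $\vrho^{-1/2}\vP(\omega_1)\vrho^{1/2}=e^{\beta\omega_1/2+1/4}\vP(\omega_1+\beta^{-1})$. Integrating over $|\omega_1|\leq\Omega_1$ therefore produces a low-frequency component of the time-evolved product $\vP(-t)$ of exactly the type controlled by \Cref{lem:low_freq_bound}. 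Concretely, the difference equals $\Tr[\vO^\dagger(\sqrt{\vsigma}\vY-\vrho^{1/2}\vY\vrho^{-1/2}\sqrt{\vsigma})\sqrt{\vsigma}]$ with $\vY=\vrho^{-1/2}\vP(-t)_{\leq\cdot}\vrho^{1/2}$. This is bounded by $\|\vO\|\,e^{q\poly}\epsilon_{\mathrm{ADB}}^{1/(32\beta d+2)}e^{\pi T_0/(\beta(16\beta d+1))}\,e^{O(\Omega_1)}$. The shifted window simply enlarges $\Delta$ by $\beta^{-1}$.

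Summing over the $e^{\poly}$ strings and the bounded time and frequency ranges gives $|E_{\mathrm{lin}}^+|\leq \tfrac23 e^{-P}+e^{\poly(\beta^{\pm1})}\epsilon_{\mathrm{ADB}}^{1/(32\beta d+2)}$. Choosing $P'$ so that the second term is below $\tfrac13e^{-P}$ completes the argument.

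The main obstacle I expect is the second step: making the Lieb--Robinson truncation of the $\vH'$-filtered operator compatible with the $\vrho'^{\pm1/2}$ conjugations, which are global and do not localize automatically. I would first try to absorb these conjugations into a frequency shift before truncating. If that fails, I would use the decay estimate of Ref.~\cite[Lemma III.5]{chen2025learning} for $\vrho'^{1/2}\vA(\omega_2)\vrho'^{-1/2}$ with respect to the truncated $\vH'_\ell$.
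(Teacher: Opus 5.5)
\paragraph{Step 2 is in the wrong order.} Your argument works, and it takes a genuinely different route from the paper, but Step 2 must be carried out in the order of your first fallback, not as written. As stated, $\sqrt{\vrho'}\vA_{\vH'_\ell}(t,\omega_2)\sqrt{\vrho'^{-1}}$ is not supported on a patch. The reason is that $\vrho'$ is the global Gibbs state, and the Gaussian-shift identity \eqref{eq:imaginary-time-positive-basic} converts a conjugation into a frequency shift only when the filter and the conjugation use the same Hamiltonian.

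So apply the shift with respect to the untruncated $\vH'$ first. After integrating over $\omega_2$, this is exactly the paper's kernel representation $\int_{-\Omega_2}^{\Omega_2}\sqrt{\vrho'}\vA_{\vH'}(t,\omega_2)\sqrt{\vrho'^{-1}}\,\dd\omega_2=\int_{\mathbb R}\vA_{\vH'}(t+t_2)h_+(t_2)\,\dd t_2$. Then:
\begin{enumerate}
\item cut off $|t_2|$ using the Gaussian decay of $h_+$;
\item replace $\vA_{\vH'}(t+t_2)$ by $\vA_{\vH'_\ell}(t+t_2)$ using \Cref{lem:basic-lr};
\item control the outer $\vrho^{\mp1/2}$ conjugation of the resulting small, nonlocal difference with the outer-cutoff bound \eqref{eq:outer-cutoff-operator-bounds}.
\end{enumerate}
The Lemma~III.5 fallback is then unnecessary.

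\paragraph{A smaller fix in the last step.} Do not apply \Cref{lem:low_freq_bound} to $\vY=\vrho^{-1/2}\vP(-t)_{\leq\Omega_1}\vrho^{1/2}$. Its shifted window carries the weight $e^{\beta\omega/2+1/4}$, so it is not the lemma's uniform cutoff. Instead, note that $\sqrt{\vsigma}\vrho^{-1/2}\vP(-t)_{\leq\Omega_1}\vrho^{1/2}-\vP(-t)_{\leq\Omega_1}\sqrt{\vsigma}$ is the adjoint of the lemma's defect for the Hermitian operator $\vP(-t)_{\leq\Omega_1}$, exactly as in \Cref{lem:zeroth-order-dyson}.

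\paragraph{Comparison with the paper.} Your Step~1 is the same time truncation the paper uses. The two arguments differ in how they handle the candidate dynamics.
\begin{itemize}
\item \emph{The paper} keeps the global $\vH$ as background and localizes only the perturbation (\Cref{lem:kms-gns-perturbation-truncation}). It Dyson-expands $\vA_{\vH+\eta\vV_\ell}$ around $\vH$ as in \eqref{eq:spatially-truncated-dyson}, truncating at polynomial order via the factorial decay in $\eta\|\vV_\ell\||t+t_2|$. It then bounds the retained multi-time products with the estimates of \Cref{lem:dyson_term_adb_bound_m}, using an auxiliary total-frequency cutoff $\Omega$, together with \Cref{lem:zeroth-order-dyson}.
\item \emph{Your route} localizes the entire candidate dynamics and absorbs it into scalar Pauli coefficients on the patch $S$. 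Only single-time, $\vH$-evolved static strings remain, so \Cref{lem:low_freq_bound} applies directly. There is no Dyson series, no time-ordered simplex, no auxiliary $\Omega$, and no expansion of $\vH'$ around $\vH$.
\end{itemize}

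\paragraph{What your route costs.} You pay a factor $4^{|S|}C_\beta^{|S|}$, exponential in the patch volume $|S|\lesssim\ell^D$. This is $e^{\poly(\beta^{\pm1})}$ only because $\ell$ is a fixed polynomial. In addition, your Lieb--Robinson error is absolute rather than proportional to $\eta$. That is harmless here, since the target is an absolute $e^{-P}$. However, it means your argument does not directly give the $\eta$-proportional error required in \Cref{thm:kms-to-gns-reduction}. That requirement is why the paper builds the Dyson machinery and then simply reuses it for the coarse case.
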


{The inner kernels and outer window are}
\begin{align}
 {h_+(u)}&{:=\frac{f_\sigma(u)}{\sqrt{2\pi}}
 \int_{-\Omega_2}^{\Omega_2}e^{-\beta\omega/2+\sigma^2\beta^2/4}
 e^{-i(\omega-\sigma^2\beta)u}\,\dd\omega,}\nonumber\\
 {h_-(u)}&{:=\frac{f_\sigma(u)}{\sqrt{2\pi}}
 \int_{-\Omega_2}^{\Omega_2}e^{\beta\omega/2+\sigma^2\beta^2/4}
 e^{-i(\omega+\sigma^2\beta)u}\,\dd\omega
 =\overline{h_+(u)},}\nonumber\\
 {w_{\Omega_1}(u)}&{:=f_\sigma(u)\frac{2\sin(\Omega_1u)}{u},
 \qquad w_{\Omega_1}(0)=2\Omega_1f_\sigma(0).}
 \label{eq:outer-time-weight}
\end{align}
{In particular, $|h_\pm(u)|\leq C_{\beta}\sqrt{\sigma}\,\beta^{-1}e^{\beta\Omega_2}e^{-\sigma^2u^2}$. The inner Fourier factor is included in $h_\pm$, whereas the outer Fourier factor is not included in $w_{\Omega_1}$. The coefficient of the physical-time, double-frequency identifiability integral is $1/(4\pi\sigma)$; after converting the outer filter to time, it is $1/(4\pi\sigma\sqrt{2\pi})$.}

{The kernels just defined give the exact representation}
\begin{align}
 \int_{-\Omega_2}^{\Omega_2}\sqrt{\vrho'}\vA_{\vH'}(t,{\omega_2})
 \sqrt{\vrho'^{-1}}\,\dd\omega_2
 =\int_{\mathbb R}\vA_{\vH'}(t+t_2)h_+(t_2)\,\dd t_2.
\end{align}
{The opposite Gibbs conjugation has the same representation with $h_-$. We first establish the spatial and Dyson estimates, and then assemble the linearization bound.}

Write $\vH'=\vH+\eta\vV$, with every Pauli coefficient of $\vV$ bounded by one.
If $\eta=0$, set $\vV=0$. Following Appendix~A's patch convention, let $\vV_\ell$ retain terms at interaction-graph distance less than $\ell-1$ from $\supp(\vA)$. Use the common support graph of $\vH$ and $\vH'$. The hybrid Hamiltonian $\vH+\eta\vV_\ell$ has the coefficients of $\vH'$ inside this patch and those of $\vH$ outside, so each local term still has norm at most one.
By the perturbative Lieb--Robinson bound in \Cref{lem:basic-lr-perturbation}, truncating the Hamiltonian difference $\vH-\vH'$ to a neighborhood of radius $\ell$ around $\supp(\vA)$ introduces an error proportional to $\eta$ that decays exponentially in $\ell$. The following lemma states this spatial-truncation estimate for the perturbation.

\begin{lem}[Spatial truncation of the perturbation]
\label{lem:kms-gns-perturbation-truncation}
For every polynomial $\poly(\beta^{\pm1})$, there exists a polynomial $\poly'(\beta^{\pm1})$ such that, whenever $\ell>\poly'(\beta^{\pm1})$, replacing $\vH'$ by $\vH+\eta\vV_\ell$ in the regularized KMS-to-GNS expression changes its value by at most
$\eta{\|\vO\|\|\vA\|}e^{-\poly(\beta^{\pm1})}$.
{The already selected cutoffs and kernels remain fixed in this replacement.}
\end{lem}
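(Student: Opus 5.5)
The plan is to compare, term by term, the regularized KMS-to-GNS expression \eqref{eq:linearization-error-term} built from $\vH'=\vH+\eta\vV$ with the same expression built from $\vH'_\ell:=\vH+\eta\vV_\ell$. In both branches, $\vH'$ enters only through the inner block $\int_{-\Omega_2}^{\Omega_2}\vrho'^{\pm1/2}\vA_{\vH'}(t,\omega_2)\vrho'^{\mp1/2}\dd\omega_2$. By the exact kernel representation above, this block equals $\int_{\mathbb R}\vA_{\vH'}(t+t_2)h_\pm(t_2)\dd t_2$. The outer conjugations by $\vrho^{\pm1/2}$ and the outer filter $(\cdot)_{\vH}(-t,\omega_1)$ involve only the unchanged $\vH$, and the cutoffs and kernels stay fixed. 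The key consequence is that the whole difference is linear in
\[
\vD(s):=\vA_{\vH'}(s)-\vA_{\vH'_\ell}(s),\qquad s=t+t_2 .
\]

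The first step is to bound $\|\vD(s)\|$ with \Cref{lem:basic-lr-perturbation}, applied to $\vF=\vH'$ and $\vF'=\vH'_\ell$. These share supports and have local terms of norm at most one. They differ only on terms $\delta$ with $r_\delta\ge\ell-1$, each by at most $\eta$. Since the number of terms at distance $r$ is at most $Cd^{r+1}$ (or $C r^{D-1}$ on the lattice), this gives
\[
\|\vD(s)\|\lesssim \eta\,\|\vA\|\sum_{r\ge \ell-1}C d^{r}\min\Bigl\{2|s|,\tfrac{(2d|s|)^{r+1}}{(r+1)!}\Bigr\}.
\]
This is at most $\eta\|\vA\|e^{-c\ell}$ for $|s|\le \ell/(4ed^2)$, and at most $\eta\|\vA\|\,C|s|\,\poly(\ell)$ otherwise. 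The crude bound is also linear in $\eta$, since $\min\{2|s|,\cdot\}$ carries the factor $\eta$. This linearity in $\eta$ is what makes the final error proportional to $\eta$.

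The second step propagates this bound through the fixed outer operations. The outer Gibbs conjugation of a filtered operator can be converted by \eqref{eq:imaginary-time-positive-basic}--\eqref{eq:imaginary-time-negative-basic} into a frequency shift. Combined with the outer window $|\omega_1|\le\Omega_1$, this costs at most $e^{\beta\Omega_1/2}C_\beta$, as in \eqref{eq:product-outer-cutoff-bound}. After this, the pairings with $\vO$ and $\sqrt{\vsigma}$ or $\vsigma$ are bounded using $\|\sqrt{\vsigma}\|_2=1$. This leaves
\[
|\text{change}|\le C_\beta e^{\beta(\Omega_1+\Omega_2)}\|\vO\|\iint |g(t)|\,|h_\pm(t_2)|\,\|\vD(t+t_2)\|\,\dd t\,\dd t_2 .
\]
Next I split the integral at $|t|+|t_2|\le \ell/(4ed^2)$. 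In the inner region the integrand is at most $\eta e^{-c\ell}$. In the outer region, the kernels decay as $|g(t)|\lesssim e^{-2\pi|t|/\beta}$ and $|h_\pm(t_2)|\lesssim e^{\beta\Omega_2}e^{-t_2^2/\beta^2}$, which against the polynomial growth of $\|\vD\|$ gives $\eta\,\poly(\ell)e^{-c'\ell/\beta}$.

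The main obstacle is the exponential amplification $e^{\beta(\Omega_1+\Omega_2)}$ from the Gibbs conjugations, together with the $t$-integral in the outer region. The cutoffs are already fixed polynomials in $\beta^{\pm1}$, so this prefactor is $e^{\poly(\beta^{\pm1})}$. I therefore choose $\ell>\poly'(\beta^{\pm1})$ large enough that $e^{\poly(\beta^{\pm1})}(e^{-c\ell}+\poly(\ell)e^{-c'\ell/\beta})\le e^{-\poly(\beta^{\pm1})}$ for the prescribed polynomial. With that choice, both regions together give the claimed bound $\eta\|\vO\|\|\vA\|e^{-\poly(\beta^{\pm1})}$.
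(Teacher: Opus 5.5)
Your proposal is correct and follows essentially the paper's route. You rewrite the inner $\vrho'$-conjugated block through the fixed kernel $h_\pm$ and bound $\|\vA_{\vH'}(s)-\vA_{\vH+\eta\vV_\ell}(s)\|$ by the perturbative Lieb--Robinson estimate, linearly in $\eta$. You control the outer Gibbs conjugation only through the outer window $|\omega_1|\le\Omega_1$, and then integrate against the decay of $g$ and $h_\pm$. The paper instead first integrates into a single operator $\vB$ and splits the time integral shell by shell at $|t|+|t_2|=r/(16ed)$; this is a cosmetic difference from your single $\ell$-dependent split.

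One correction. Your large-time bound should read $\|\vD(s)\|\lesssim\eta\|\vA\|(1+|s|)^{D+1}$, not $\eta\|\vA\|C|s|\poly(\ell)$. This follows from the lattice shell count $S(r,\vA)\lesssim(1+r)^{D-1}$. With the general count $Cd^{r}$ that you display, the bound would grow exponentially in $|s|$, and for large $\beta$ it could overwhelm the $e^{-2\pi|t|/\beta}$ decay of $g$. So the lattice geometry is genuinely needed at this step, just as it is in the paper's shell sum.
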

\begin{proof}
{We first verify how the spatial truncation enters the outer-filtered KMS-to-GNS difference. Set}
\begin{align}
 {\vB:=\int_{\mathbb R^2}
 \big(\vA_{\vH'}(t+t_2)-\vA_{\vH+\eta\vV_\ell}(t+t_2)\big)_{\vH}(-t)
 g(t)h_+(t_2)\,\dd t\dd t_2.}
\end{align}
{Subtract \eqref{eq:linearization-error-term} with $\vH'$ replaced by $\vH+\eta\vV_\ell$ from the original expression. The inner-kernel representation, with the same $g$ and $h_+$, gives}
\begingroup
\begin{align*}
 &\frac{1}{4\pi\sigma}\int_{-\Omega_1}^{\Omega_1}\int_{\mathbb R^2}
 \Tr\Biggl[\vO^\dagger\Biggl(
 \sqrt{\vsigma}\Bigl(\sqrt{\vrho}^{-1}
 \bigl(\vA_{\vH'}(t+t_2)-\vA_{\vH+\eta\vV_\ell}(t+t_2)\bigr)
 \sqrt{\vrho}\Bigr)_{\vH}(-t,\omega_1)\sqrt{\vsigma}
 \\
 &\qquad\qquad
 -\bigl(\vA_{\vH'}(t+t_2)-\vA_{\vH+\eta\vV_\ell}(t+t_2)\bigr)_{\vH}(-t,\omega_1)\vsigma
 \Biggr)\Biggr]g(t)h_+(t_2)\,\dd t\dd t_2\dd\omega_1
 \\
 &=\frac{1}{4\pi\sigma}\int_{-\Omega_1}^{\Omega_1}
 \Tr\Biggl[\vO^\dagger\Biggl(
 \sqrt{\vsigma}\Bigl(\sqrt{\vrho}^{-1}\vB\sqrt{\vrho}\Bigr)_{\vH}(\omega_1)\sqrt{\vsigma}
 -\vB_{\vH}(\omega_1)\vsigma\Biggr)\Biggr]\,\dd\omega_1.
\end{align*}
\endgroup
{For the last equality, $\vH$-evolution commutes with Gibbs conjugation and the outer frequency filter. Their linearity allows both time integrals to be moved inside, where they give exactly $\vB$. These rearrangements are justified by the integrability of $|g(t)h_+(t_2)|$ and the boundedness of the real-time evolved operators.}

Write $S(r,\vA)$ for the number of allowed interaction terms at graph distance $r$ from $\supp(\vA)$ in the common interaction graph.
{The kernel bound gives $|g(t)h_+(t_2)|\leq e^{\poly(\beta^{\pm1})}e^{-\pi(|t|+|t_2|)/\beta}$. Apply \Cref{lem:basic-lr-perturbation} shell by shell to the terms at distance $r\geq\ell-1$. Split each time integral at ${|t|+|t_2|=r/(16ed)}$. On the inner region, the factorial term decays exponentially in $r$; on the outer region, use its $|t+t_2|$ bound and integrate the exponential kernel tail. This yields}
\begin{align}
 \|\vB\|
 &\leq\int_{\mathbb R^2}
 \|\vA_{\vH'}(t+t_2)-\vA_{\vH+\eta\vV_\ell}(t+t_2)\|
 |g(t)h_+(t_2)|\,\dd t\dd t_2\nonumber\\
 &\leq{\eta\|\vA\|e^{\poly(\beta^{\pm1})}
 \sum_{r\geq\ell-1}S(r,\vA)
 \left(e^{-c_1r}+e^{-c_2r/(d\beta)}\right)}
 \leq\eta\|\vA\|\exp\!\left[
 \poly(\beta^{\pm1})-\frac{c\ell}{{d}(1+\beta)}\right].
 \label{eq:spatial-truncation-inside-conjugation}
\end{align}
Here $c,c_1,c_2>0$ depend only on the fixed locality parameters.

However, this exponential spatial estimate cannot be passed directly through unrestricted Gibbs conjugation; the outer frequency cutoff controls that step. We now combine the spatial truncation with the outer
frequency cutoff. For any operator $\vB$, Gaussian
filtering and $\sigma=\beta^{-1}$ give, for
$|\omega_1|\leq\Omega_1$,
\begin{align}
    \|\vB_{\vH}(\omega_1)\|
    \leq
    \frac{\|\vB\|}{\sqrt{\sigma}(2\pi)^{1/4}},
    \qquad \left\|
        \Big(\sqrt{\vrho}^{-1}\vB\sqrt{\vrho}\Big)_{\vH}
        (\omega_1)
    \right\|
    \leq
    \frac{e^{\beta\Omega_1/2+1/4}}
    {\sqrt{\sigma}(2\pi)^{1/4}}\|\vB\|.
    \label{eq:outer-cutoff-operator-bounds}
\end{align}
Neither bound requires $\vB$ to be local.
By Schatten H\"older's inequality and
$\Tr(\vsigma)=1$, the KMS and GNS contributions therefore
satisfy
\begin{align}
    &\frac{1}{{4\pi\sigma}}
    \int_{-\Omega_1}^{\Omega_1}
    \left|
    \Tr\left[
        \vO^\dagger
        \left(
            \sqrt{\vsigma}
            \Big(\sqrt{\vrho}^{-1}\vB\sqrt{\vrho}\Big)_{\vH}
            (\omega_1)
            \sqrt{\vsigma}
            -
            \vB_{\vH}(\omega_1)\vsigma
        \right)
    \right]
    \right|
    \,\dd\omega_1
    \nonumber\\
    &\leq
    \frac{\|\vO\|}{{4\pi\sigma}}
    \int_{-\Omega_1}^{\Omega_1}
    \left(
        \left\|
            \Big(\sqrt{\vrho}^{-1}\vB\sqrt{\vrho}\Big)_{\vH}
            (\omega_1)
        \right\|
        +
        \|\vB_{\vH}(\omega_1)\|
    \right)
    \,\dd\omega_1
    \leq
    {\frac1{4\pi\sigma}}\,
    \frac{4\Omega_1 e^{\beta\Omega_1/2+1/4}}{{\sqrt{\sigma}(2\pi)^{1/4}}}
    \|\vO\|\|\vB\|.
    \label{eq:outer-cutoff-trace-bound}
\end{align}
The outer frequency cutoff thus allows us to control
the conjugated contribution by the unconjugated operator
norm, uniformly in the system size.

Apply \eqref{eq:outer-cutoff-trace-bound} to $\vB$ above. By \eqref{eq:spatial-truncation-inside-conjugation}, the change is at most
\begin{align}
 \eta\|\vO\|\|\vA\|
 \exp\!\left[\poly(\beta^{\pm1})-\frac{c\ell}{{d}(1+\beta)}\right].
 \label{eq:spatial-truncation-full-expression}
\end{align}
{Choose the polynomial radius $\ell$ after $\Omega_1,\Omega_2$ and the desired accuracy have been fixed. The outer-cutoff prefactor is then absorbed into $e^{\poly(\beta^{\pm1})}$. The radius is independent of $n$ and $\eta$; the background Hamiltonian $\vH$ is unchanged. The same estimate applies to $h_-$ and the opposite Gibbs conjugation.}
\end{proof}

We next control the spatially truncated contribution by
expanding the perturbed dynamics around $\vH$:
\begin{align}
    \vA_{\vH+\eta\vV_\ell}(t)
    &=
    \vA_{\vH}(t)
    +
    \sum_{m=1}^{\infty}(i\eta)^m
    \int_0^t\int_0^{s_1}\cdots\int_0^{s_{m-1}}
    \nonumber\\
    &\qquad\times
    \big[
        \vV_\ell(s_m),
        \big[
            \vV_\ell(s_{m-1}),\ldots,
            \big[
                \vV_\ell(s_1),\vA_{\vH}(t)
            \big]
            \ldots
        \big]
    \big]
    \,\dd s_m\cdots\dd s_1,
    \label{eq:spatially-truncated-dyson}
\end{align}
where $\vV_\ell(s):=e^{i\vH s}\vV_\ell e^{-i\vH s}$.
The series converges in operator norm for each fixed
real $t$, with oriented integrals when $t<0$.

The intuition behind the next two lemmas is that the ADB error of a low-frequency product of $q$ evolved single-site Pauli factors is bounded by a local ADB contribution multiplied by a factor exponential in $q$. At Dyson order $m$, the number of factors grows linearly in $m$, while spatial truncation ensures that the number of summands grows at most exponentially in $m$.

When the current coefficient error $\eta$ is smaller than a constant depending only on $\beta$ and the fixed locality parameters, the factor $\eta^m$ suppresses this exponential growth, making the integrated error bounds summable. \Cref{lem:zeroth-order-dyson,lem:dyson_term_adb_bound_m} make this intuition precise.

When $\eta$ is not sufficiently small, we instead truncate space, time, and the Dyson expansion, as described in \Cref{thm:coarse-adb-linearization}.

We first estimate the zeroth-order contribution to
\Cref{thm:kms-to-gns-reduction}, using the same
low-frequency ADB bound that will control the
higher-order terms.

\begin{lem}[Zeroth-order Dyson contribution]
\label{lem:zeroth-order-dyson}
For a single-site Pauli operator $\vA$ and any operator
$\vO$, define the zeroth-order contribution by
\begin{align}
    E_0
    &:=
    \frac{1}{{4\pi\sigma}}
    \int_{-\Omega_1}^{\Omega_1}
    \int_{-\infty}^{\infty}
    \int_{-\infty}^{\infty}
    \Tr\Big[
        \vO^\dagger
        \Big(
            \sqrt{\vsigma}
            \big(
                \vrho^{-1/2}
                \vA_{\vH}(t+t_2)
                \vrho^{1/2}
            \big)_{\vH}(-t,\omega_1)
            \sqrt{\vsigma}
    \nonumber\\
    &\hspace{5cm}
            -
            \big(\vA_{\vH}(t+t_2)\big)_{\vH}
            (-t,\omega_1)\vsigma
        \Big)
    \Big]
    g(t)h_+(t_2)\,\dd t_2\,\dd t\,\dd\omega_1.
    \label{eq:zeroth-order-contribution}
\end{align}
For the already fixed cutoffs $\Omega_1,\Omega_2$,
\begin{align}
    |E_0|
    \leq
    \|\vO\|e^{\poly(\beta^{\pm 1})}
    \mathsf{ADB}_{\vA}[\vsigma]^{\frac{1}{32\beta d+2}}.
    \label{eq:zeroth-order-adb-bound}
\end{align}
\end{lem}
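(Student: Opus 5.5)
The plan is to collapse $E_0$ into a single low-frequency ADB defect and then apply \Cref{lem:low_freq_bound} with $q=1$.

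\emph{Step 1: collapse the inner integrals.} The operators $\vrho^{\pm1/2}$, the $\vH$-evolution and the outer frequency filter all commute with one another. Hence
\[
\bigl(\vrho^{-1/2}\vA_{\vH}(t+t_2)\vrho^{1/2}\bigr)_{\vH}(-t)=\vrho^{-1/2}\vA_{\vH}(t_2)\vrho^{1/2},
\]
so the $t$-dependence drops out of the integrand, and the $t$-integral only produces the scalar $\int g(t)\,\dd t$. This scalar is bounded by $C_\beta$ because $|g(t)|\le C_\beta\beta^{-1}e^{-2\pi|t|/\beta}$. Interchanging the integrals, which is justified by absolute integrability of $g h_+$, we then define
\[
\vX:=\int_{\mathbb R}\vA_{\vH}(t_2)\,h_+(t_2)\,\dd t_2 .
\]
Taking the outer filter $\int_{-\Omega_1}^{\Omega_1}\cdot\,\dd\omega_1$ gives $\vX_{\le\Omega_1}$ in the notation of \eqref{eq:low-frequency-part-commutation}. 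Up to the prefactor $1/(4\pi\sigma)$, this yields
\[
E_0\propto \Tr\bigl[\vO^\dagger\bigl(\sqrt{\vsigma}\,\vrho^{-1/2}\vX_{\le\Omega_1}\vrho^{1/2}\sqrt{\vsigma}-\vX_{\le\Omega_1}\vsigma\bigr)\bigr].
\]
Writing $\vX_{\le\Omega_1}\vsigma=(\vX_{\le\Omega_1}\sqrt{\vsigma})\sqrt{\vsigma}$ and using cyclicity and H\"older with $\|\sqrt{\vsigma}\|_2=1$, we obtain
\[
|E_0|\lesssim\|\vO\|\,\bigl\|\sqrt{\vsigma}\,\vrho^{-1/2}\vX_{\le\Omega_1}\vrho^{1/2}-\vX_{\le\Omega_1}\sqrt{\vsigma}\bigr\|_2 .
\]

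\emph{Step 2: reduce to \Cref{lem:low_freq_bound}.} Set $\vY:=\vrho^{-1/2}\vX_{\le\Omega_1}\vrho^{1/2}$. The defect then reads $\sqrt{\vsigma}\,\vY-\vrho^{1/2}\vY\vrho^{-1/2}\sqrt{\vsigma}$, which has exactly the orientation of \Cref{lem:low_freq_bound}. By the Gaussian shift \eqref{eq:imaginary-time-negative-basic}, $\vY$ is again a frequency window of $\vX$, namely $\int_{-\Omega_1}^{\Omega_1}e^{\beta\omega/2+1/4}\widehat{\vX}_{\vH}(\omega+\beta^{-1})\,\dd\omega$. The bounded weight $e^{\beta\omega/2}\le e^{\beta\Omega_1/2}$ does not change the structure of the argument in \Cref{lem:low_freq_bound}. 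That argument uses the exact time representation of the cutoff, and a shifted, reweighted window has a kernel with the same Gaussian decay in time. Alternatively, one can pass to the adjoint, as in the proof of \Cref{lem:weak-low-frequency}, and work with the frequency-reversed form.

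\emph{Step 3: average over time.} The operator $\vX$ is a time average of $\vA_{\vH}(t_2)$ against $h_+$. By linearity and the triangle inequality, the defect is therefore at most
\[
\int |h_+(t_2)|\,D(t_2)\,\dd t_2 ,
\]
where $D(t_2)$ denotes the defect of the single evolved Pauli $\vA(t_2)$ at cutoff $\Omega_1$. The second bullet of \Cref{lem:low_freq_bound} with $q=1$ gives
\[
D(t_2)\le e^{\poly(\beta^{\pm1})}\,e^{c\Omega_1}\,\mathsf{ADB}_{\vA}[\vsigma]^{1/(32\beta d+2)}\,e^{\pi|t_2|/(\beta(16\beta d+1))}.
\]
This exponential growth in $|t_2|$ is dominated by the Gaussian envelope of $h_+$, since $|h_+(u)|\le C_\beta e^{\beta\Omega_2}e^{-\sigma^2u^2}$, so the $t_2$-integral is finite. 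Because $\Omega_1,\Omega_2=\poly(\beta^{\pm1})$ are fixed, the factors $e^{\beta\Omega_2}$, $e^{c\Omega_1}$ and the $\Omega_1$-length of the window are all absorbed into $e^{\poly(\beta^{\pm1})}$. This proves \eqref{eq:zeroth-order-adb-bound}.

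The main obstacle is Step 2: the extra Gibbs conjugation and the reweighted window must be shown to fit the hypotheses of \Cref{lem:low_freq_bound}. If this causes trouble, the fallback is to rerun the proof of \Cref{lem:low_freq_bound} directly. That means splitting at a total frequency $\Omega$, bounding the low part by Cauchy--Schwarz against $\gamma$ via \Cref{lem:time-evolved-adb}, bounding the high part by the exponential envelope, and optimizing $\Omega$. This reproduces the same exponent $1/(32\beta d+2)$.
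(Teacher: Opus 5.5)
Your proposal is correct and is essentially the paper's proof: collapse the $t$-dependence using $[\vH,\vrho]=0$, bound the trace by $\|\vO\|$ times the Hilbert--Schmidt defect of $\vA_{\vH}(t_2)_{\le\Omega_1}$, apply the time-evolved bullet of \Cref{lem:low_freq_bound} with $q=1$, $\Delta=\Omega_1$, $t_0=t_2$, and integrate against $|g(t)h_+(t_2)|$. The orientation issue you flag as the main obstacle is resolved by your ``alternative'', which is the paper's one-line step: $\vA_{\vH}(t_2)_{\le\Omega_1}$ is Hermitian, so the defect in $E_0$ is minus the adjoint of the defect bounded in the lemma and has the same Hilbert--Schmidt norm, and no extension of the lemma to reweighted, shifted windows is needed.
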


\begin{proof}
Since $\vrho$ commutes with $\vH$, and frequency
filtering commutes with the $\vH$-evolution, we have
\begin{align}
    E_0
    =
    \frac{1}{{4\pi\sigma}}
    \int_{-\infty}^{\infty}
    \int_{-\infty}^{\infty}
    \Tr\Big[
        \vO^\dagger
        \Big(
            \sqrt{\vsigma}\vrho^{-1/2}
            \vA_{\vH}(t_2)_{\leq\Omega_1}\vrho^{1/2}
            -
            \vA_{\vH}(t_2)_{\leq\Omega_1}\sqrt{\vsigma}
        \Big)
        \sqrt{\vsigma}
    \Big]
    g(t)h_+(t_2)\,\dd t_2\,\dd t.
    \label{eq:zeroth-order-cancellation}
\end{align}

The symmetric frequency cutoff preserves Hermiticity.
Thus, taking the adjoint in the time-evolved assertion
of \Cref{lem:low_freq_bound}, with $\Delta=\Omega_1$,
gives
\begin{align}
    \left\|
        \sqrt{\vsigma}\vrho^{-1/2}
        \vA_{\vH}(t_2)_{\leq\Omega_1}\vrho^{1/2}
        -
        \vA_{\vH}(t_2)_{\leq\Omega_1}\sqrt{\vsigma}
    \right\|_2
    \leq
    e^{\poly(\beta^{\pm 1})}
    e^{\frac{\pi|t_2|}{\beta(16\beta d+1)}}
    \mathsf{ADB}_{\vA}[\vsigma]^{\frac{1}{32\beta d+2}},
    \label{eq:zeroth-order-filtered-adb}
\end{align}
where the cutoff-dependent factor has been absorbed
into the polynomial, since $\Omega_1$ is polynomially
bounded in $k,\beta^{\pm1}$.

Schatten H\"older's inequality and
$\|\sqrt{\vsigma}\|_2=1$ now imply
\begin{align}
    |E_0|
    \leq
    \frac{\|\vO\|}{{4\pi\sigma}}
    e^{\poly(\beta^{\pm 1})}
    \mathsf{ADB}_{\vA}[\vsigma]^{\frac{1}{32\beta d+2}}
    \times
    \int_{-\infty}^{\infty}
    \int_{-\infty}^{\infty}
    |g(t)h_+(t_2)|
    e^{\frac{\pi|t_2|}{\beta(16\beta d+1)}}
    \,\dd t_2\,\dd t
    \leq
    \|\vO\|e^{\poly(\beta^{\pm 1})}
    \mathsf{ADB}_{\vA}[\vsigma]^{\frac{1}{32\beta d+2}}.
\end{align}
For the last inequality, use the previously established
kernel bounds
\[
    \int_{-\infty}^{\infty}|g(t)|\,\dd t\leq {C_{\beta}},
    \qquad
    |h_+(t_2)|
    \leq
    {C_{\beta}}\frac{\sqrt{\sigma}}{\beta}
    e^{\beta\Omega_2}e^{-\sigma^2t_2^2}.
\]
All remaining prefactors are absorbed into
$e^{\poly(\beta^{\pm 1})}$.
\end{proof}

The zeroth-order Dyson contribution is independent of the current coefficient error $\eta$. It therefore leaves an ADB-dependent error floor in the learning guarantee, whereas this contribution vanishes for an exact Gibbs state.

We now turn to the Dyson terms of order $m\geq1$, each
of which carries an explicit factor $\eta^m$.
{The spatial truncation leaves at most $N_\ell\leq C_{\beta}(1+\ell)^D$ local Pauli terms in $\vV_\ell$. The order-$m$ expansion therefore contains at most $N_\ell^m$ coefficient-weighted nested commutators.}
Each nested commutator further expands into at most
$2^m$ ordered products. We therefore estimate the
contribution of each ordered tuple, keeping the
dependence on $m$ explicit for the subsequent
summation with the factor $\eta^m$.

For local Pauli terms $\vP_1,\ldots,\vP_m$ drawn from
$\vV_\ell$, write $\mathbf{P}=(\vP_1,\ldots,\vP_m)$ and
define
\begingroup
\begin{align}
    \operatorname{ad}^{(m)}_{\mathbf{P}}[\vA](0,t)
    :=
    \int_0^t\int_0^{t_1}\cdots\int_0^{t_{m-1}}
    \big[
        \vP_m(t_m),
        \ldots,
        \big[
            \vP_1(t_1),\vA_{\vH}(t)
        \big]
        \ldots
    \big]
    \,\dd t_m\cdots\dd t_1 .
\end{align}
\endgroup
All time evolutions in this definition are generated
by $\vH$. For $t<0$, the integrals retain their oriented
limits: reversing all limits contributes an overall
factor $(-1)^m$, which disappears in norm estimates.
For either sign of $t$, the ordered integration region
has volume $|t|^m/m!$.

With this notation, the $m$th-order Dyson term is
\[
    i^m\eta^m
    \sum_{\mathbf{P}}
    \operatorname{ad}^{(m)}_{\mathbf{P}}[\vA](0,t).
\]
Here and below, $\sum_{\mathbf{P}}$ includes the product
of the Pauli coefficients corresponding to
$\vP_1,\ldots,\vP_m$ in $\vV_\ell$.

The next lemma bounds the full coefficient-weighted contribution at each positive order.

\begin{lem}
    {Fix the cutoffs and radius selected above. For every integer $m\geq1$ and every auxiliary cutoff $\Omega\geq0$, denote the absolute order-$m$ contribution by $|E_m|$:}
    \begin{align}
    {|E_m|=}{}&\frac{\eta^m}{{4\pi\sigma}}
    \Bigg|
    \sum_{\mathbf P}
    \int_{-\Omega_1}^{\Omega_1}
    \int_{-\infty}^{\infty}
    \int_{-\infty}^{\infty}
    \Tr\Bigg[
        \vO^\dagger
        \Bigg(
            \sqrt{\vsigma}\vrho^{-1/2}
            \Big(
                \operatorname{ad}^{(m)}_{\mathbf P}
                [\vA](0,t+t_2)
            \Big)_{\vH}(-t,\omega_1)
            \vrho^{1/2}\sqrt{\vsigma}
    \nonumber\\
    &\hspace{5cm}
            -
            \Big(
                \operatorname{ad}^{(m)}_{\mathbf P}
                [\vA](0,t+t_2)
            \Big)_{\vH}(-t,\omega_1)\vsigma
        \Bigg)
    \Bigg]
    g(t)h_+(t_2)\,
    \dd t_2\,\dd t\,\dd\omega_1
    \Bigg| .
\end{align}
    {There is a constant $C_\beta\leq e^{\poly(\beta^{\pm1})}$, independent of $m,\eta,\Omega$ and $n$, such that}
\begin{align}
 {|E_m|}\leq{}&\frac{1}{{4\pi\sigma}}
 \|\vO\|\|\vA\|(\eta{C_{\beta}})^m
 \left(e^{-\Omega/(8d)}+
 {C_{\beta}\epsilon_{\mathrm{ADB}}^{\frac{1}{32\beta d+2}}}
 e^{\frac{16\beta^2d+1}{16\beta d+1}\Omega}\right).
\end{align}
\label{lem:dyson_term_adb_bound_m}
\end{lem}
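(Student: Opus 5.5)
The plan is to treat each ordered tuple $\mathbf P$ separately and reduce its contribution to the low-frequency product estimate of \Cref{lem:low_freq_bound}. That lemma is used through the split into high and low total frequency, exactly as in \Cref{lem:weak-high-frequency,lem:weak-low-frequency}.

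\emph{Reduction to ordered words.} First, expand the nested commutator in $\operatorname{ad}^{(m)}_{\mathbf P}[\vA](0,t+t_2)$ into at most $2^m$ ordered words. Each word has the form $\prod_{j}(\vA_j)_{\vH}(s_j)$ with $q\leq km+1$ single-site Pauli factors, obtained by splitting every $\vP_i(t_i)$ into its at most $k$ single-site factors. Since $\vrho$ commutes with $\vH$-evolution and with the outer filter, the $(-t,\omega_1)$ evolution and filtering pass onto each word. The outer integral $\int_{-\Omega_1}^{\Omega_1}$ then produces $(\text{word})_{\leq\Omega_1}$ after a shift of all times by $-t$. The integrand is thus exactly of the form
\[
\Tr\bigl[\vO^\dagger(\sqrt{\vsigma}\vrho^{-1/2}\vW_{\leq\Omega_1}\vrho^{1/2}-\vW_{\leq\Omega_1}\sqrt{\vsigma})\sqrt{\vsigma}\bigr],
\]
as in \eqref{eq:zeroth-order-cancellation}. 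Schatten H\"older's inequality and $\|\sqrt{\vsigma}\|_2=1$ bound it by $\|\vO\|$ times the Hilbert--Schmidt defect of $\vW_{\leq\Omega_1}$.

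\emph{Splitting each word.} Next, decompose the word into its product-frequency parts $\vW^{\mathrm{prod}}_{\leq\Omega}$ and $\vW^{\mathrm{prod}}_{>\Omega}$ as in \eqref{eq:weak-cutoff}, with the auxiliary cutoff $\Omega$.

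For the high part, the single-factor envelopes \eqref{eq:weak-single-frequency-envelopes} give $\|\vW^{\mathrm{prod}}_{>\Omega}\|\leq C_\beta^{q}e^{-\Omega/(8d)}$. The outer-cutoff bound \eqref{eq:product-outer-cutoff-bound} controls both the conjugated and unconjugated terms by $e^{\beta\Omega_1/2}$ times this norm. This yields the $e^{-\Omega/(8d)}$ term.

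For the low part, telescope over the $q$ factors as in \eqref{eq:weak-explicit-telescope}. Apply \Cref{lem:low_freq_bound} to the single active factor, with its own residual cutoff, and bound the remaining factors via the Gaussian shifts. This gives $C_\beta^{q}\epsilon_{\mathrm{ADB}}^{1/(32\beta d+2)}e^{b'\Omega}e^{\lambda\max_j|s_j|}$, with $b'$ as in \eqref{eq:weak-ordered-word-bound}. The outer cutoff $\Omega_1$ is fixed and absorbed into $C_\beta$.

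\emph{Integrating over times and counting.} Then integrate over the Dyson times and over $t,t_2$. The ordered simplex has volume $|t+t_2|^m/m!$, and the factor times satisfy $|s_j|\leq|t|+|t_2|$. The kernels obey $|g(t)|\lesssim e^{-2\pi|t|/\beta}$ and $|h_+(t_2)|\lesssim e^{-t_2^2/\beta^2}$. The integral $\int |t+t_2|^m e^{\lambda(|t|+|t_2|)}|g||h_+|/m!$ is bounded by $C_\beta^m$, because $\lambda<2\pi/\beta$ and $\int |u|^m e^{-cu}du/m!=c^{-m-1}$.

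Finally, multiply by the number of terms: $N_\ell^m\leq (C_\beta(1+\ell)^D)^m$ tuples and $2^m$ words each, with coefficients bounded by one. The radius $\ell$ is already fixed polynomially, so this count is absorbed into $C_\beta^m$. Together with $C_\beta^{q}\leq C_\beta^{km+1}$, this gives the claimed $(\eta C_\beta)^m$ bound.

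\emph{Main obstacle.} The step I expect to be hardest is keeping every $m$-dependence purely exponential. Three places need care. The low-frequency telescoping must not produce a power of $\Omega$ growing with $q$; factorizing the frequency integrals as in \Cref{lem:weak-low-frequency} avoids this. The time growth $e^{\lambda T}$ must be integrated jointly with the simplex volume rather than bounded crudely. And the ADB exponent must stay fixed at $1/(32\beta d+2)$, independent of $m$. For the last point, I would apply \Cref{lem:low_freq_bound} only to the single active Pauli, never to the whole word, so its exponent does not degrade with $q$.
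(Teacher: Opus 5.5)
Your proposal is correct and follows essentially the same route as the paper. Both arguments expand the nested commutator into $2^m$ ordered words and split at total frequency $\Omega$. The high part is controlled by the single-factor envelopes and the outer-cutoff norm bound, and the low part is telescoped with \Cref{lem:low_freq_bound} applied to the active factor only. Both then integrate the simplex volume $|t+t_2|^m/m!$ against the kernels and absorb the $C_\beta^m$ tuple count.

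The differences are minor. You split each $\vP_i$ into single-site factors, whereas the paper keeps each Pauli string as one factor. The paper also makes explicit that the outer filter is handled through its time kernel $f_\sigma(t_1)\,2\sin(\Omega_1 t_1)/t_1$. Consequently the factor times are bounded by $|t|+|t_1|+|t_2|$, not $|t|+|t_2|$ as you wrote; the Gaussian in $t_1$ absorbs the extra growth. Finally, \Cref{lem:low_freq_bound} must be applied to the active factor alone because the word's factors carry different times, while the lemma treats a product evolved at a common time. Its ADB exponent does not degrade with $q$, so that is not the reason.
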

\emph{Proof.}
We first estimate the low-frequency contribution. For real
endpoints $u,v$, define
\begin{align}
    \operatorname{ad}^{(m)}_{\mathbf P}[\vA](u,v;\Omega)
    &:=
    \frac{1}{(2\sigma\sqrt{2\pi})^{(m+1)/2}}
    \int_u^v\int_u^{s_1}\cdots\int_u^{s_{m-1}}
    \int_{\sum_{j=0}^{m}|\omega_j|\leq\Omega}
    \nonumber\\
    &\quad\times
    \big[
        \vP_m(s_m,\omega_m),\ldots,
        \big[
            \vP_1(s_1,\omega_1),\vA_{\vH}(v,\omega_0)
        \big]
        \ldots
    \big]
    \,\dd\vomega\,\dd s_m\cdots\dd s_1 .
    \label{eq:dyson-low-frequency-definition}
\end{align}
Here $\vomega=(\omega_0,\ldots,\omega_m)$ includes the
frequency of $\vA$, all evolutions are generated by $\vH$,
and the time integrals retain their oriented limits.
The prefactor is the Fourier reconstruction normalization for the $m+1$ factors.
{For arbitrary endpoints, $\operatorname{ad}^{(m)}_{\mathbf P}[\vA](u,v)$ denotes the same oriented time integral without frequency filtering, equivalently the integral over all $\mathbb R^{m+1}$ with this same reconstruction factor. Define the complementary term by}
\begin{align}
 {\operatorname{ad}^{(m)}_{\mathbf P}[\vA](u,v;>\Omega)
 :=\operatorname{ad}^{(m)}_{\mathbf P}[\vA](u,v)
 -\operatorname{ad}^{(m)}_{\mathbf P}[\vA](u,v;\Omega).}
\end{align}
{Two arguments $(s,\omega)$ on an operator denote its real-time evolution followed by its Gaussian frequency filter; $\mathbf P$ is an ordered tuple.}

Throughout this part, ${C_{\beta}}\geq1$ is independent of $m$, $\Omega$, the
time variables, and the system size, and satisfies
${C_{\beta}}\leq e^{\poly(\beta^{\pm 1})}$; its value may increase
between estimates. It suffices to consider $\|\vA\|=1$.

{At fixed ordered times and frequencies, expand the nested commutator into at most $2^m$ ordered products and estimate each product separately.} Consider one such product
and write
\[
    \vX_j:=\vQ_{j,\vH}(s_j,\omega_j),
    \qquad j=0,\ldots,m,
\]
where $\vQ_j$ are $\vA,\vP_1,\ldots,\vP_m$ in the
order of that product, with their corresponding times
and frequencies relabeled accordingly.  Telescope the product defect:
\begin{align}
    &\sqrt{\vsigma}\vrho^{-1/2}
    (\prod_{i=0}^{m}\vX_i)\vrho^{1/2}
    -(\prod_{i=0}^{m}\vX_i)\sqrt{\vsigma}
    =
    \sum_{j=0}^{m}
    (\prod_{i<j}\vX_i)
    \Big(
        \sqrt{\vsigma}\vrho^{-1/2}\vX_j\vrho^{1/2}
        -\vX_j\sqrt{\vsigma}
    \Big)
    \vrho^{-1/2}
    (\prod_{i>j}\vX_i)\vrho^{1/2}
    \label{eq:dyson-low-frequency-telescoping}
\end{align}
All products retain their displayed order, and empty
products are the identity.

Fix $j$ and all frequencies except $\omega_j$. Define
    $r_j:=\sum_{i\neq j}|\omega_i|$ and $\Delta_j:=\Omega-r_j.$
For $r_j\leq\Omega$, the remaining frequency ranges over
$[-\Delta_j,\Delta_j]$. The prefix and suffix in
\eqref{eq:dyson-low-frequency-telescoping} are independent
of $\omega_j$, so we integrate this frequency before
taking the norm. The middle factor becomes
\begin{align}
    \int_{-\Delta_j}^{\Delta_j}
    \Big(
        \sqrt{\vsigma}\vrho^{-1/2}\vX_j\vrho^{1/2}
        -\vX_j\sqrt{\vsigma}
    \Big)\,\dd\omega_j
    =
    \sqrt{\vsigma}\vrho^{-1/2}
    \vQ_{j,\vH}(s_j)_{\leq\Delta_j}\vrho^{1/2}
    -\vQ_{j,\vH}(s_j)_{\leq\Delta_j}\sqrt{\vsigma},
\end{align}
where the subscript $\leq\Delta_j$ denotes the unnormalized
frequency integral. Since this symmetric cutoff preserves
Hermiticity, the adjoint of the time-evolved estimate in
\Cref{lem:low_freq_bound} gives
\begin{align}
    \left\|
        \sqrt{\vsigma}\vrho^{-1/2}
        \vQ_{j,\vH}(s_j)_{\leq\Delta_j}\vrho^{1/2}
        -\vQ_{j,\vH}(s_j)_{\leq\Delta_j}\sqrt{\vsigma}
    \right\|_2
    \leq
    {C_{\beta}}\epsilon_{\mathrm{ADB}}^{\frac{1}{32\beta d+2}}
    e^{\frac{8\beta^2d+1/4}{16\beta d+1}\Delta_j}
    e^{\frac{\pi|s_j|}{\beta(16\beta d+1)}}.
    \label{eq:dyson-low-frequency-active-factor}
\end{align}
We now integrate over the frequency simplex while keeping the other factors in their original order.
\begin{align}
    &\frac{1}{(2\sigma\sqrt{2\pi})^{(m+1)/2}}
    \Bigg\|
    \int_{\sum_{i=0}^{m}|\omega_i|\leq\Omega}
    \Bigg[
        \sqrt{\vsigma}\vrho^{-1/2}
        \Big(\prod_{i=0}^{m}\vX_i\Big)\vrho^{1/2}
        -
        \Big(\prod_{i=0}^{m}\vX_i\Big)\sqrt{\vsigma}
    \Bigg]\,\dd\vomega
    \Bigg\|_2\\
    &\leq
    \frac{1}{(2\sigma\sqrt{2\pi})^{(m+1)/2}}
    \sum_{j=0}^{m}
    \int_{r_j\leq\Omega}
    \|\prod_{i<j}\vX_i\|
    \|
        \vrho^{-1/2}
        \Big(\prod_{i>j}\vX_i\Big)\vrho^{1/2}
    \|\\
    &\qquad\qquad\times
    \left\|
        \sqrt{\vsigma}\vrho^{-1/2}
        \vQ_{j,\vH}(s_j)_{\leq\Delta_j}\vrho^{1/2}
        -
        \vQ_{j,\vH}(s_j)_{\leq\Delta_j}\sqrt{\vsigma}
    \right\|_2
    \prod_{i\neq j}\dd\omega_i\\
\end{align}
Applying \Cref{eq:dyson-low-frequency-telescoping} {and}
integrating $\omega_j$ before taking the norm, we obtain the following bound for the equation above:
\begin{align}
    &\quad
    {C_{\beta}}^{m+1}
    \epsilon_{\mathrm{ADB}}^{\frac{1}{32\beta d+2}}
    \sum_{j=0}^{m}
    \int_{r_j\leq\Omega}
    e^{\frac{16\beta^2d+1/2}{16\beta d+1}r_j}
    e^{\frac{16\beta^2d+1/2}{16\beta d+1}\Delta_j}
    e^{\frac{\pi|s_j|}{\beta(16\beta d+1)}}
    e^{-\frac{8\beta^2d+1/4}{16\beta d+1}r_j}\prod_{i\neq j}\dd\omega_i
    \nonumber\\
    &\quad\leq
    (m+1){C_{\beta}}^{m+1}
    \epsilon_{\mathrm{ADB}}^{\frac{1}{32\beta d+2}}
    e^{\frac{16\beta^2d+1/2}{16\beta d+1}\Omega}
    e^{\frac{\pi\max_j|s_j|}{\beta(16\beta d+1)}}.
    \label{eq:dyson-low-frequency-word-bound}
\end{align}
For the remaining factors, the local Fourier bound combined with Gibbs conjugation gives
\begin{align}
 {\|\prod_{i<j}\vX_i\|
 \| \vrho^{-1/2}\Big(\prod_{i>j}\vX_i\Big)\vrho^{1/2}\|
 \leq C_{\beta}^m e^{\max\{0,\beta/2-1/(4d)\}r_j}
 \leq C_{\beta}^m e^{\frac{8\beta^2d+1/4}{16\beta d+1} r_j}.}
\end{align}
{Together with \eqref{eq:dyson-low-frequency-active-factor} and $\Delta_j\geq0$, this gives the inequality above, after absorbing the reconstruction factor into $C_{\beta}^{m+1}$.}
{The inequality uses $r_j+\Delta_j=\Omega$ and the following estimate of the remaining frequency integral:}
\begin{align}
 \int_{r_j\leq\Omega}e^{-{\frac{8\beta^2d+1/4}{16\beta d+1}} r_j}\prod_{i\neq j}\dd\omega_i
 &\leq{\prod_{i\neq j}\int_{\mathbb R}
 e^{-\frac{8\beta^2d+1/4}{16\beta d+1}|\omega_i|}\,\dd\omega_i
 =\left(\frac{2(16\beta d+1)}{8\beta^2d+1/4}\right)^m\leq C_{\beta}^m.}
\end{align}

We now apply this estimate inside the outer frequency
filter. For any operator $\vB$,
\begin{align}
    \int_{-\Omega_1}^{\Omega_1}
    \vB_{\vH}(-t,\omega_1)\,\dd\omega_1
    =
    \frac{1}{\sqrt{2\pi}}
    \int_{-\infty}^{\infty}
    \vB_{\vH}(t_1-t)f_\sigma(t_1)
    \frac{2\sin(\Omega_1t_1)}{t_1}\,\dd t_1.
\end{align}
Since $[\vH,\vrho]=0$, this identity applies inside the
Gibbs conjugation, and the common evolution shifts the
Dyson interval according to
\[
    \Big(
        \operatorname{ad}^{(m)}_{\mathbf P}[\vA]
        (0,t+t_2;\Omega)
    \Big)_{\vH}(t_1-t)
    =
    \operatorname{ad}^{(m)}_{\mathbf P}[\vA]
    (-t+t_1,t_1+t_2;\Omega).
\]
The ordered time region has volume $|t+t_2|^m/m!$,
and every factor in the shifted product has time
argument bounded by
$|t|+|t_1|+|t_2|$. Summing the $2^m$ products in the
nested commutator and applying
\eqref{eq:dyson-low-frequency-word-bound} therefore gives
\begin{align}
    &\left\|
        \sqrt{\vsigma}\vrho^{-1/2}
        \operatorname{ad}^{(m)}_{\mathbf P}[\vA]
        (-t+t_1,t_1+t_2;\Omega)\vrho^{1/2}
        -\operatorname{ad}^{(m)}_{\mathbf P}[\vA]
        (-t+t_1,t_1+t_2;\Omega)\sqrt{\vsigma}
    \right\|_2
    \nonumber\\
    &\leq
    \|\vA\|\,2^m(m+1){C_{\beta}}^{m+1}
    \frac{|t+t_2|^m}{m!}
    \epsilon_{\mathrm{ADB}}^{\frac{1}{32\beta d+2}}
    e^{\frac{16\beta^2d+1/2}{16\beta d+1}\Omega}
    e^{\frac{\pi{(|t|+|t_1|+|t_2|)}}{\beta(16\beta d+1)}}.
    \label{eq:dyson-low-frequency-ordered-bound}
\end{align}
The same estimate holds for either orientation of the
Dyson interval.

We now substitute the low-frequency Dyson terms into
the regularized KMS-to-GNS expression. By the preceding
time-domain representation of the outer frequency
filter, the common $\vH$-evolution shifts the Dyson
interval from $(0,t+t_2)$ to $(-t+t_1,t_1+t_2)$.

Factoring out the rightmost $\sqrt{\vsigma}$, the
absolute value of the low-frequency contribution is
\begin{align}
    &\frac{\eta^m}
    {{4\pi\sigma}\sqrt{2\pi}}
    \Bigg|
    \sum_{\mathbf P}
    \int_{-\infty}^{\infty}
    \int_{-\infty}^{\infty}
    \int_{-\infty}^{\infty}
    \Tr\Bigg[
        \vO^\dagger
        \Bigg(
            \sqrt{\vsigma}\vrho^{-1/2}
            \operatorname{ad}^{(m)}_{\mathbf P}[\vA]
            (-t+t_1,t_1+t_2;\Omega)\vrho^{1/2}
    \nonumber\\
    &\hspace{3cm}
            -
            \operatorname{ad}^{(m)}_{\mathbf P}[\vA]
            (-t+t_1,t_1+t_2;\Omega)\sqrt{\vsigma}
        \Bigg)
        \sqrt{\vsigma}
    \Bigg]
    g(t)h_+(t_2)f_\sigma(t_1)
    \frac{2\sin(\Omega_1t_1)}{t_1}
    \,\dd t_2\,\dd t_1\,\dd t
    \Bigg|
    \nonumber\\
    &\leq
    \frac{\eta^m\|\vO\|}
    {{4\pi\sigma}\sqrt{2\pi}}
    \int_{-\infty}^{\infty}
    \int_{-\infty}^{\infty}
    \int_{-\infty}^{\infty}
    \Bigg\|
    \sum_{\mathbf P}
    \Bigg(
            \sqrt{\vsigma}\vrho^{-1/2}
            \operatorname{ad}^{(m)}_{\mathbf P}[\vA]
            (-t+t_1,t_1+t_2;\Omega)\vrho^{1/2}
    \nonumber\\
    &\hspace{3cm}
            -
            \operatorname{ad}^{(m)}_{\mathbf P}[\vA]
            (-t+t_1,t_1+t_2;\Omega)\sqrt{\vsigma}
    \Bigg)
    \Bigg\|_2
    \left|
        g(t)h_+(t_2)f_\sigma(t_1)
        \frac{2\sin(\Omega_1t_1)}{t_1}
    \right|
    \,\dd t_2\,\dd t_1\,\dd t .
    \label{eq:dyson-low-frequency-trace-to-defect}
\end{align}
Here we used $|i^m|=1$ and
$\|\vO\sqrt{\vsigma}\|_2\leq\|\vO\|$.
The additional factor $1/\sqrt{2\pi}$ comes from
the outer Fourier transform.

Each summand inside the norm is precisely the defect
bounded in \eqref{eq:dyson-low-frequency-ordered-bound}.
Applying that estimate termwise and using
$|t+t_2|\leq |t|+|t_1|+|t_2|$, we obtain
\begin{align}
    &\Bigg\|
    \sum_{\mathbf P}
    \Bigg(
        \sqrt{\vsigma}\vrho^{-1/2}
        \operatorname{ad}^{(m)}_{\mathbf P}[\vA]
        (-t+t_1,t_1+t_2;\Omega)\vrho^{1/2}
        -
        \operatorname{ad}^{(m)}_{\mathbf P}[\vA]
        (-t+t_1,t_1+t_2;\Omega)\sqrt{\vsigma}
    \Bigg)
    \Bigg\|_2
    \nonumber\\
    &\leq
    \|\vA\|(m+1){C_{\beta}}^{m+1}
    \frac{(|t|+|t_1|+|t_2|)^m}{m!}
    \epsilon_{\mathrm{ADB}}^{\frac{1}{32\beta d+2}}
    e^{\frac{16\beta^2d+1/2}{16\beta d+1}\Omega}
    e^{\frac{\pi{(|t|+|t_1|+|t_2|)}}{\beta(16\beta d+1)}}.
\end{align}
The Pauli coefficient products have magnitude at most
one, and the number of ordered tuples is at most
${C_{\beta}}^m$. Their contribution and the powers of two in
\eqref{eq:dyson-low-frequency-ordered-bound} are absorbed
by increasing ${C_{\beta}}$.

Substituting this bound into
\eqref{eq:dyson-low-frequency-trace-to-defect}, and
absorbing the fixed Fourier normalization into ${C_{\beta}}$,
bounds the low-frequency contribution by
\begin{align}
    &\frac{\|\vO\|\|\vA\|}{{4\pi\sigma}}
    \eta^m(m+1){C_{\beta}}^{m+1}
    \epsilon_{\mathrm{ADB}}^{\frac{1}{32\beta d+2}}
    e^{\frac{16\beta^2d+1/2}{16\beta d+1}\Omega}
    \nonumber\\
    &\quad\times
    \int_{-\infty}^{\infty}
    \int_{-\infty}^{\infty}
    \int_{-\infty}^{\infty}
    \frac{(|t|+|t_1|+|t_2|)^m}{m!}
    \left|
        g(t)h_+(t_2)f_\sigma(t_1)
        \frac{2\sin(\Omega_1t_1)}{t_1}
    \right|
    e^{\frac{\pi{(|t|+|t_1|+|t_2|)}}{\beta(16\beta d+1)}}
    \,\dd t_2\,\dd t_1\,\dd t .
    \label{eq:dyson-low-frequency-time-integral}
\end{align}
{Set $R:=|t|+|t_1|+|t_2|$.} The established kernel bounds, $\sigma=\beta^{-1}$,
and $|\sin(\Omega_1t_1)/t_1|\leq\Omega_1$ imply the kernel of the integral is bounded by ${C_{\beta}} e^{-\pi R/\beta}$,
because the growth rate is at most $\pi/\beta$, while
$|g(t)|\leq {C_{\beta}} e^{-2\pi|t|/\beta}$ and the other two
kernels have Gaussian decay.
The remaining time moment is
\begin{align}
    &\int_{-\infty}^{\infty}
    \int_{-\infty}^{\infty}
    \int_{-\infty}^{\infty}
    \frac{R^m}{m!}e^{-\pi R/\beta}
    \,\dd t_2\,\dd t_1\,\dd t
    =
    4(m+1)(m+2)
    \left(\frac{\beta}{\pi}\right)^{m+3}.
\end{align}
Substituting into
\eqref{eq:dyson-low-frequency-time-integral}, absorbing
polynomial factors in $m$ into the order-$m$ constant,
and using $m\geq1$, we obtain the bound
\begin{align}
    \frac{\|\vO\|\|\vA\|}{{4\pi\sigma}}
    \Big(\eta e^{\poly(\beta^{\pm 1})}\Big)^m
    e^{\poly(\beta^{\pm 1})}
    \epsilon_{\mathrm{ADB}}^{\frac{1}{32\beta d+2}}
    e^{\frac{{16}\beta^2d+{1}}{16\beta d+1}\Omega}.
    \label{eq:dyson-low-frequency-final}
\end{align}
In the last step we also enlarged the coefficient of
$\Omega$ to match the statement of the lemma.
This bounds the low-frequency contribution.

For the high-frequency contribution, we directly estimate
the operator norm. We use the complementary cutoff
$\sum_{i=0}^{m}|\omega_i|>\Omega$.
Throughout, $m\geq1$ and ${C_{\beta}}\geq1$
may change from line to line. It depends only on the local
parameters and the already fixed cutoffs $\Omega_1,\Omega_2$,
with ${C_{\beta}}\leq e^{\poly(\beta^{\pm 1})}$, and is independent
of $m$, $\eta$, $\Omega$, the time variables, and the system size.

For any real endpoints $u,v$, the established local Fourier
bounds and the ordered integration volume $|v-u|^m/m!$ give
\begin{align}
    \left\|
        \operatorname{ad}^{(m)}_{\mathbf P}[\vA]
        (u,v;{>}\Omega)
    \right\|
    &\leq
    \|\vA\|{C_{\beta}}^{m+1}\frac{|v-u|^m}{m!}
    \int_{\sum_{i=0}^{m}|\omega_i|>\Omega}
    e^{-\sum_{i=0}^{m}|\omega_i|/(4d)}\,\dd\vomega
    \nonumber\\
    &\leq
    \|\vA\|{C_{\beta}}^{m+1}\frac{|v-u|^m}{m!}e^{-\Omega/(8d)}
    \int_{\sum_{i=0}^{m}|\omega_i|>\Omega}
    e^{-\sum_{i=0}^{m}|\omega_i|/(8d)}\,\dd\vomega\\
    &\leq
    \|\vA\|{C_{\beta}}^{m+1}\frac{|v-u|^m}{m!}
    e^{-\Omega/(8d)}.
    \label{eq:dyson-high-frequency-operator-bound}
\end{align}
The first inequality absorbs the $2^m$ commutator terms
and the reconstruction normalization into ${C_{\beta}}^{m+1}$.
For either orientation of the time interval, the absolute
ordered integration volume is $|v-u|^m/m!$.
{The last inequality uses}
\begin{align}
    \int_{\sum_{i=0}^{m}|\omega_i|>\Omega}
    e^{-\sum_{i=0}^{m}|\omega_i|/(8d)}\,\dd\vomega\leq \int_{\sum_{i=0}^{m}|\omega_i|\geq 0}
    e^{-\sum_{i=0}^{m}|\omega_i|/(8d)}\,\dd\vomega{=}(16d)^{{m+1}}.
\end{align}
The remaining factors are absorbed into ${C_{\beta}}^{m+1}$.

We first estimate the KMS contribution. The common
$\vH$-evolution still shifts the integration interval in the Dyson expansion exactly.
Using $|i^m|=1$ and Schatten
H\"older's inequality with $\Tr(\vsigma)=1$, we obtain
\begin{align}
    &\frac{\eta^m}{{4\pi\sigma}}
    \Bigg|
    \sum_{\mathbf P}
    \int_{-\Omega_1}^{\Omega_1}\!
    \int_{-\infty}^{\infty}\!
    \int_{-\infty}^{\infty}
    \Tr\Bigg[
        \vO^\dagger\sqrt{\vsigma}
        \times
        \Big(
            \vrho^{-1/2}
            \operatorname{ad}^{(m)}_{\mathbf P}[\vA]
            (0,t+t_2;{>}\Omega)\vrho^{1/2}
        \Big)_{\vH}(-t,\omega_1)
        \sqrt{\vsigma}
    \Bigg]
    g(t)h_+(t_2)\,\dd t_2\,\dd t\,\dd\omega_1
    \Bigg|
    \nonumber\\
    &\leq
    \frac{\eta^m\|\vO\|}{{4\pi\sigma}}
    \int_{-\infty}^{\infty}|g(t)|
    \Bigg\|
    \int_{-\Omega_1}^{\Omega_1}
    \Bigg(
        \vrho^{-1/2}
        \times
        \int_{-\infty}^{\infty}
        \sum_{\mathbf P}
        \operatorname{ad}^{(m)}_{\mathbf P}[\vA]
        (-t,t_2;{>}\Omega)h_+(t_2)\,\dd t_2
        \,\vrho^{1/2}
    \Bigg)_{\vH}(\omega_1)\,\dd\omega_1
    \Bigg\|\,\dd t
    \nonumber\\
    &\leq
    \frac{\eta^m\|\vO\|}{{4\pi\sigma}}
    \frac{2\Omega_1e^{\beta\Omega_1/2+1/4}}
    {\sqrt{\sigma}(2\pi)^{1/4}}
    \int_{-\infty}^{\infty}|g(t)|
        \times
    \Bigg\|
        \int_{-\infty}^{\infty}
        \sum_{\mathbf P}
        \operatorname{ad}^{(m)}_{\mathbf P}[\vA]
        (-t,t_2;{>}\Omega)h_+(t_2)\,\dd t_2
    \Bigg\|\,\dd t
    \nonumber\\
    &\leq
    \frac{\|\vO\|\|\vA\|}{{4\pi\sigma}}
    \eta^m {C_{\beta}}^{m+1}e^{-\Omega/(8d)}
    \int_{-\infty}^{\infty}
    \int_{-\infty}^{\infty}
    \frac{|t+t_2|^m}{m!}|g(t)h_+(t_2)|
    \,\dd t_2\,\dd t.
    \label{eq:dyson-high-frequency-kms-integral}
\end{align}
The second inequality is the previously established
outer-cutoff norm estimate with $\sigma=\beta^{-1}$.
For the last inequality, apply
\eqref{eq:dyson-high-frequency-operator-bound} with
$(u,v)=(-t,t_2)$. The number of Pauli tuples is at most
${C_{\beta}}^m$, and their coefficient products have magnitude at
most one. This count and the fixed outer-cutoff prefactor
are absorbed by increasing ${C_{\beta}}$.

The time integral is controlled by the same kernel bound
used above:
\begin{align}
    \int_{-\infty}^{\infty}
    \int_{-\infty}^{\infty}
    \frac{|t+t_2|^m}{m!}|g(t)h_+(t_2)|
    \,\dd t_2\,\dd t
    \leq
    {C_{\beta}}\int_{-\infty}^{\infty}
    \int_{-\infty}^{\infty}
    \frac{(|t|+|t_2|)^m}{m!}
    e^{-\pi(|t|+|t_2|)/\beta}
    \,\dd t_2\,\dd t
    =
    4{C_{\beta}}(m+1)\left(\frac{\beta}{\pi}\right)^{m+2}.
    \label{eq:dyson-high-frequency-time-moment}
\end{align}
Substituting into
\eqref{eq:dyson-high-frequency-kms-integral} and using
$m\geq1$, the KMS contribution is therefore at most
\begin{align}
    \frac{\|\vO\|\|\vA\|}{{4\pi\sigma}}
    \Big(\eta e^{\poly(\beta^{\pm 1})}\Big)^m
    e^{-\Omega/(8d)}.
    \label{eq:dyson-high-frequency-kms-bound}
\end{align}
All factors independent of $m$, as well as the remaining
polynomial factor in $m$, are absorbed into the
order-$m$ prefactor.

For the GNS contribution, use the same argument without
imaginary-time conjugation. The outer frequency integral
has norm at most $2\Omega_1/[\sqrt{\sigma}(2\pi)^{1/4}]$
times the input operator norm. Thus
\begin{align}
    &\frac{\eta^m}{{4\pi\sigma}}
    \Bigg|
    \sum_{\mathbf P}
    \int_{-\Omega_1}^{\Omega_1}
    \int_{-\infty}^{\infty}
    \int_{-\infty}^{\infty}
    \Tr\Bigg[
        \vO^\dagger
        \times
        \Big(
            \operatorname{ad}^{(m)}_{\mathbf P}[\vA]
            (0,t+t_2;{>}\Omega)
        \Big)_{\vH}(-t,\omega_1)\vsigma
    \Bigg]
    g(t)h_+(t_2)\,\dd t_2\,\dd t\,\dd\omega_1
    \Bigg|
    \nonumber\\
    &\leq
    \frac{\eta^m\|\vO\|}{{4\pi\sigma}}
    \frac{2\Omega_1}{\sqrt{\sigma}(2\pi)^{1/4}}
    \int_{-\infty}^{\infty}
    \int_{-\infty}^{\infty}
    \Bigg\|
        \sum_{\mathbf P}
        \operatorname{ad}^{(m)}_{\mathbf P}[\vA]
        (-t,t_2;{>}\Omega)
    \Bigg\|
    |g(t)h_+(t_2)|\,\dd t_2\,\dd t
    \nonumber\\
    &\leq
    \frac{\|\vO\|\|\vA\|}{{4\pi\sigma}}
    \Big(\eta e^{\poly(\beta^{\pm 1})}\Big)^m
    e^{-\Omega/(8d)}.
    \label{eq:dyson-high-frequency-gns-bound}
\end{align}
Adding the KMS and GNS bounds and increasing the
polynomial once more gives the high-frequency term in
\Cref{lem:dyson_term_adb_bound_m}. Combining it with the
preceding low-frequency estimate completes the proof.\qed

Combining \Cref{lem:dyson_term_adb_bound_m} with \Cref{lem:zeroth-order-dyson} gives the following proof of \Cref{thm:kms-to-gns-reduction}.
\begin{proof}[Proof of \Cref{thm:kms-to-gns-reduction}]
{Choose the spatial radius to give half the desired $\eta$-dependent budget. After the radius and both frequency cutoffs are fixed, \Cref{lem:dyson_term_adb_bound_m} gives a constant $C_\beta$ uniform in every positive order. Write $b_\beta=(16\beta^2d+1)/(16\beta d+1)$. If $\eta C_{\beta}\leq1/2$, then}
\begin{align}
 {\sum_{m\geq1}|E_m|}
 {\leq\frac{\|\vO\|\|\vA\|}{4\pi\sigma}
 \frac{\eta C_{\beta}}{1-\eta C_{\beta}}
 \left(e^{-\Omega/(8d)}+C_{\beta}\epsilon_{\mathrm{ADB}}^{\frac{1}{32\beta d+2}}e^{b_\beta\Omega}\right)}
    {\leq\eta\|\vO\|\|\vA\|e^{p(\beta^{\pm1})}
 \left(e^{-\Omega/(8d)}+\epsilon_{\mathrm{ADB}}^{\frac{1}{32\beta d+2}}e^{b_\beta\Omega}\right).}
\end{align}
{For target polynomial $P_1$, select a single cutoff and a state-independent ADB threshold satisfying}
\begin{align}
 {\frac{\Omega}{8d}\geq P_1+p+\log4,\qquad
 \log\frac1{\epsilon_{\mathrm{ADB}}}
 \geq(32\beta d+2)(P_1+p+b_\beta\Omega+\log4).}
\end{align}
{These choices make the sum at most $\eta\|\vO\|\|\vA\|e^{-P_1}/2$. They are polynomial in $\beta^{\pm1}$ and do not introduce powers of $\Omega$ depending on the Dyson order. The case $\epsilon_{\mathrm{ADB}}=0$ follows directly by omitting its term.}

{To justify termwise integration, the unfiltered Dyson series has the norm majorant $\|\vA\|e^{2\eta\|\vV_\ell\||t+t_2|}$. The outer-cutoff bound also controls its Gibbs-conjugated contribution. Impose $\eta\|\vV_\ell\|\leq\pi/(2\beta)$; the decay of $g$ and the Gaussian decay of $h_+$ then provide an integrable majorant. Add the spatial budget and the zeroth-order estimate from \Cref{lem:zeroth-order-dyson}. This proves the stated bound, including its ADB floor.}
\end{proof}
Finally, we prove \Cref{thm:coarse-adb-linearization} by truncating the variables.
\begin{proof}[Proof of \Cref{thm:coarse-adb-linearization}]
\begingroup
Fix the desired error $e^{-P(\beta^{\pm1})}$. Fix the observable cutoffs $\Omega_1,\Omega_2$ and then choose the spatial truncation radius polynomially large in $\beta^{\pm1}$. By \Cref{lem:kms-gns-perturbation-truncation}, the spatial replacement contributes at most a prescribed fraction of this error. Write the truncated candidate as $\vH+\eta\vV_\ell$, where $\eta\leq2$ and $\|\vV_\ell\|\leq\poly(\beta^{\pm1})$.

First restrict the time variables to $|t|,|t_2|\leq T$. The kernel bound $|g(t)h_+(t_2)|\leq e^{\poly(\beta^{\pm1})}e^{-\pi(|t|+|t_2|)/\beta}$ gives an exponentially decaying time integral. Applying {\eqref{eq:outer-cutoff-operator-bounds}} to the exact dynamics, the omitted contribution is bounded by
\begin{equation*}
 e^{\poly(\beta^{\pm1})}e^{-cT/\beta}
\end{equation*}
for a fixed $c>0$. Thus $T=\poly(\beta^{\pm1})$ can make the time tail smaller than another prescribed fraction of the error.

On this bounded time window, the ordered integrals in \eqref{eq:spatially-truncated-dyson} have volume $|t+t_2|^m/m!$. The order-$m$ contribution therefore obeys
\begin{equation*}
 \frac{(2\eta\|\vV_\ell\|\,|t+t_2|)^m}{m!}\|\vA\|
 \leq\left(\frac{8eT\|\vV_\ell\|}{m}\right)^m\|\vA\|.
\end{equation*}
For $m\geq8e^2T\|\vV_\ell\|$, this is at most $e^{-m}\|\vA\|$. Retaining a polynomial number of orders, sufficiently large also compared with $P$ and the logarithms of the fixed prefactors, therefore makes the entire discarded Dyson tail smaller than a prescribed fraction of $e^{-P(\beta^{\pm1})}$.

For the retained positive orders, apply the absolute-value estimates in the proof of \Cref{lem:dyson_term_adb_bound_m}. Their finite sum is bounded by
\begin{equation*}
 e^{\poly(\beta^{\pm1})}
 \left(e^{-\Omega/(8d)}+
 \epsilon_{\mathrm{ADB}}^{\frac{1}{32\beta d+2}}
 e^{\frac{16\beta^2d+1}{16\beta d+1}\Omega}\right).
\end{equation*}
Choose $\Omega=\poly(\beta^{\pm1})$ to control the first term. With this cutoff fixed, taking $P'$ sufficiently large controls the ADB term and the zeroth-order contribution from \Cref{lem:zeroth-order-dyson}; its absolute-kernel estimate also applies on the restricted time domain. Allocating the error among these contributions and adding them proves the claim.
\endgroup
\end{proof}
{We apply the refined or coarse estimate to both conjugation branches.}

{The second conjugation branch obeys the same bound. Indeed, the symmetric inner and outer frequency integrals preserve adjoints, $\vA$ is Hermitian, $g$ is real, and $h_- =\overline{h_+}$. Taking the adjoint and reversing the frequency variables converts the negative branch to the positive branch with test $\vO^\dagger$, which has the same support and norm. Allocate half the linearization budget to each branch.}

{Define $\mathcal I_\sigma$ by the following exactly normalized time integral; below we suppress its state subscript:}
\begin{align}
 \mathcal I(\vO,\vA;\vH,\vH')
 ={}&{\frac{1}{4\pi\sigma\sqrt{2\pi}}}\int_{\mathbb R^3}
 g(t){w_{\Omega_1}(t_1)}
 \Tr\Big[\vO^\dagger\Big(
 \big(\vA_{\vH'}(t+t_2)\big)_{\vH}(-t+t_1)\vsigma h_+(t_2)
 \nonumber\\
 &\hspace{4cm}
 -\vsigma\big(\vA_{\vH'}(t+t_2)\big)_{\vH}(-t+t_1)h_-(t_2)
 \Big)\Big]\dd t\dd t_1\dd t_2.
 \label{eq:linearized-I}
\end{align}
{Applying \eqref{eq:metastable-identifiability} to the inner-filtered operator and then applying the outer filter gives the two KMS branches with coefficient $1/(4\pi\sigma)$. Replacing each branch by its GNS term and converting the outer filter to time gives exactly \eqref{eq:linearized-I}. {In the refined regime of \Cref{thm:kms-to-gns-reduction}, this gives}}
\begin{align}
 {|\mathcal F_{\mathrm{cut}}-\mathcal I_\sigma|
 \leq\|\vO\|\|\vA\|\left(\eta e^{-P_1(\beta^{\pm1})}
 +e^{\poly(\beta^{\pm1})}\epsilon_{\mathrm{ADB}}^{\frac{1}{32\beta d+2}}\right).}
\end{align}

\subsection{Transfer of Hamiltonian dynamics}
\label{subsec:tested-dynamics-transfer}
We have now rewritten the nonlinear KMS expression in a form linear in the input state. The remaining task is to replace the unknown Hamiltonian dynamics by dynamics generated by a chosen test Hamiltonian, as explained in the next two subsections.

We first transfer the $\vH$-evolution from the state factor to the test observable and define
\begin{align}
 Q(\vO,\vA;\vH,\vH')
 ={}&{\frac{1}{4\pi\sigma\sqrt{2\pi}}}\int_{\mathbb R^3}
 g(t){w_{\Omega_1}(t_1)}
 \Tr\Big[\vO^\dagger_{\vH}(t-t_1)\Big(
 \vA_{\vH'}(t+t_2)\vsigma h_+(t_2)\nonumber\\
 &\hspace{4cm}-\vsigma\vA_{\vH'}(t+{t_2})h_-(t_2)
 \Big)\Big]\,\dd t\dd t_1\dd t_2.
 \label{eq:transferred-Q}
\end{align}
The main results of this subsection are two bounds on $Q-\mathcal I$: a global estimate with an explicit factor of $n$, and a stronger estimate for local tests that is independent of $n$.

\begin{thm}[Approximate transfer of the Hamiltonian dynamics]\label{thm:move-unknown-dynamics-global}
Let $\vA$ be a single-site Pauli operator and let $\vO$ be supported on at most $k$ sites. Then there exist polynomials $\poly$ in $\beta^{\pm1}$ such that
\begin{align}
 |Q(\vO,\vA;\vH,\vH')-\mathcal I(\vO,\vA;\vH,\vH')|
 \leq\|\vO\|\|\vA\|n\epsilon_{\mathrm{ADB}}^{\frac1{32\beta d+2}}
 e^{\poly(\beta^{\pm1})}.
\end{align}
\end{thm}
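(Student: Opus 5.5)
The plan is to compare $Q$ and $\mathcal I$ integrand by integrand, and to charge the whole difference to the motion of $\vsigma$ under $\vH$, which \Cref{thm:approximate-commute-deduce} controls.

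\textbf{Step 1: rewrite $\mathcal I$ by cyclicity.} Fix the time variables $(t,t_1,t_2)$ and set $s:=-t+t_1$ and $\vY:=\vA_{\vH'}(t+t_2)$. In $\mathcal I$ the trace is
\[
\Tr\bigl[\vO^\dagger e^{is\vH}\vY e^{-is\vH}\vsigma\bigr].
\]
Cyclicity turns it into
\[
\Tr\bigl[\vO^\dagger_{\vH}(-s)\,\vY\,\vsigma_{\vH}(-s)\bigr],
\qquad \vsigma_{\vH}(u)=e^{iu\vH}\vsigma e^{-iu\vH}.
\]
Since $-s=t-t_1$, the observable $\vO^\dagger_{\vH}(-s)$ is exactly the one appearing in $Q$. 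The same rewriting applies to the $h_-$ branch, where $\vsigma$ stands to the left of $\vY$. Hence the two integrands differ only by replacing $\vsigma$ with $\vsigma_{\vH}(t-t_1)-\vsigma$.

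\textbf{Step 2: bound the state motion.} By Duhamel,
\[
\vsigma_{\vH}(u)-\vsigma=i\int_0^u e^{iv\vH}[\vH,\vsigma]e^{-iv\vH}\,\dd v,
\]
so $\|\vsigma_{\vH}(u)-\vsigma\|_1\leq|u|\,\|[\vH,\vsigma]\|_1$. Hölder's inequality then bounds each integrand difference by
\[
\|\vO\|\|\vA\|\,|t-t_1|\,\|[\vH,\vsigma]\|_1\bigl(|h_+(t_2)|+|h_-(t_2)|\bigr).
\]
Note that $\|\vY\|=\|\vA\|$ and $\|\vO^\dagger_{\vH}\|=\|\vO\|$ because both are unitary conjugations; no locality of $\vO$ is needed here.

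\textbf{Step 3: integrate against the kernels.} Use $|g(t)|\leq C_\beta e^{-2\pi|t|/\beta}$ and $|h_\pm(t_2)|\leq C_\beta\sqrt\sigma\beta^{-1}e^{\beta\Omega_2}e^{-\sigma^2t_2^2}$. For the outer window, use $|w_{\Omega_1}(t_1)|\leq 2\Omega_1 f_\sigma(t_1)$, which is Gaussian in $t_1$. The weight $|t-t_1|$ is integrable against these kernels. Since $\Omega_1,\Omega_2$ are polynomial, every resulting factor, including the prefactor $1/(4\pi\sigma\sqrt{2\pi})$, is absorbed into $e^{\poly(\beta^{\pm1})}$.

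\textbf{Step 4: insert the commutator bound.} \Cref{thm:approximate-commute-deduce} gives
\[
\|[\vH,\vsigma]\|_1\leq n\,e^{\poly(\beta^{\pm1})}\epsilon_{\mathrm{ADB}}^{\frac1{32\beta d+2}}.
\]
Substituting this into Step 3 yields the claim.

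\textbf{Main obstacle.} The only delicate point is justifying the interchange of integrals and the cyclic rewriting. This is unproblematic because all operators are bounded and the kernels are absolutely integrable with the polynomial time moment $|t-t_1|$. The factor $n$ enters only through the global commutator bound; this is the price the subsequent local estimate via \Cref{thm:weak} is designed to avoid.
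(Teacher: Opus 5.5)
Your proposal is correct and follows essentially the same route as the paper. It uses the cyclic rewriting that produces the state-difference formula \eqref{eq:transfer-state-difference}, the Duhamel bound of \Cref{lem:Hamiltonian-stationarity}, and integration of the absolute kernels against the first time moment $|t-t_1|$; your explicit bounds such as $|w_{\Omega_1}|\leq 2\Omega_1 f_\sigma$ just make the paper's $e^{\poly(\beta^{\pm1})}$ absorption concrete before \Cref{thm:approximate-commute-deduce} supplies the factor $n$.
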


\begin{thm}[Local transfer of real-time dynamics]\label{thm:move-unknown-dynamics}
 Under the same setting of \Cref{thm:move-unknown-dynamics-global}, suppose $\vH'=\vH+\eta\vV$ is an allowed local Hamiltonian, with $|v_\gamma|\leq1$. For every prescribed polynomial $P(\beta^{\pm1})$, there are a polynomial $p$ and a {threshold} $\eta_0=e^{-\poly(\beta^{\pm1})}$ such that, for $0\leq\eta\leq\eta_0$ and $0\leq\epsilon_{\mathrm{ADB}}\leq1$,
\begin{align}
 |Q(\vO,\vA;\vH,\vH')-\mathcal I(\vO,\vA;\vH,\vH')|
 \leq\|\vO\|\left(e^{{p}(\beta^{\pm1})}
 \epsilon_{\mathrm{ADB}}^{{\frac{1}{\poly(\beta^{\pm1})}}}+
 \eta e^{-{P}(\beta^{\pm1})}\right).
\end{align}
The same form of bound holds on the coarse range $\eta_0\leq\eta\leq2$, after enlarging $p$. All constants are independent of $n$.
\end{thm}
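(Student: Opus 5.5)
The difference $Q-\mathcal I$ comes only from where the $\vH$-evolution sits, so I would first rewrite it exactly as an expectation of tested commutators. For fixed $(t,t_1,t_2)$ set $s:=t-t_1$ and $\vY:=\vA_{\vH'}(t+t_2)$. Trace cyclicity gives
\begin{align}
 \Tr\bigl[\vO^\dagger\vY_{\vH}(-s)\vsigma\bigr]
 =\Tr\bigl[\vO^\dagger_{\vH}(s)\,\vY\,e^{-is\vH}\vsigma e^{is\vH}\bigr].
\end{align}
The same holds for the branch with $\vsigma$ on the left. Duhamel's formula,
\begin{align}
 e^{-is\vH}\vsigma e^{is\vH}-\vsigma=-i\int_0^s e^{-iu\vH}[\vH,\vsigma]e^{iu\vH}\,\dd u,
\end{align}
then expresses each integrand of $Q-\mathcal I$ as $\int_0^s \Tr\bigl(\vsigma[\vH,\vB_u]\bigr)\dd u$. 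Here $\vB_u=\vO^\dagger_{\vH}(s-u)\,\vY_{\vH}(-u)$, or $\vY_{\vH}(-u)\,\vO^\dagger_{\vH}(s-u)$ for the other branch. This is exactly the object controlled by \Cref{thm:weak}, except that $\vY$ is evolved under $\vH'$ rather than $\vH$.

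Next I would remove the $\vH'$-dynamics as in \Cref{subsec:adb-linearization}. By \Cref{lem:kms-gns-perturbation-truncation}, or rather the same Lieb--Robinson estimate via \Cref{lem:basic-lr-perturbation} applied here without Gibbs conjugation, replacing $\vH'$ by $\vH+\eta\vV_\ell$ costs $\eta\|\vO\|e^{-P}$ once $\ell=\poly(\beta^{\pm1})$. Expand $\vA_{\vH+\eta\vV_\ell}$ in the Dyson series \eqref{eq:spatially-truncated-dyson}, and expand $\vO$ into at most $4^k$ Pauli strings. The order-$m$ term is then a sum of at most $C_\beta^m$ ordered products of $q\leq k+1+km$ single-site Paulis evolved under $\vH$. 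The evolution times are bounded by $R:=|t|+|t_1|+|t_2|$ up to constants, and the products are integrated over an ordered simplex of volume $|t+t_2|^m/m!$.

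For each such product, \Cref{thm:weak} gives
\begin{align}
 |\Tr(\vsigma[\vH,\vB])|\leq C_\beta^{q}\,\epsilon_{\mathrm{ADB}}^{1/\poly(\beta^{\pm1})}\,e^{\lambda R},
 \qquad \lambda=\frac{\pi}{\beta(16\beta d+1)}.
\end{align}
Multiplying by $\eta^m$, the tuple count, the simplex volume and the Duhamel length $|s|\leq R$, the order-$m$ contribution is bounded by
\begin{align}
 (\eta C_\beta)^m\,\epsilon_{\mathrm{ADB}}^{\alpha}\int \frac{R^{m+1}}{m!}\,e^{\lambda R}\,|g(t)h_\pm(t_2)w_{\Omega_1}(t_1)|\,\dd t\,\dd t_1\,\dd t_2 .
\end{align}
Since $|g|\lesssim e^{-2\pi|t|/\beta}$, $h_\pm$ and $w_{\Omega_1}$ are Gaussian, and $\lambda<\pi/\beta$, the kernel is dominated by $C_\beta e^{-\pi R/\beta}$. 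The moment integral is then $\poly(m)(\beta/\pi)^{m}$, so for $\eta\leq\eta_0=e^{-\poly(\beta^{\pm1})}$ the series over $m\geq1$ sums geometrically. This leaves the floor $e^{p}\epsilon_{\mathrm{ADB}}^{1/\poly}$, with the zeroth-order term contributing the $\eta$-independent part. Interchanging integrals is justified as in the proof of \Cref{thm:kms-to-gns-reduction}, using the majorant $e^{2\eta\|\vV_\ell\||t+t_2|}$. For the coarse range $\eta_0\leq\eta\leq2$, I would mimic \Cref{thm:coarse-adb-linearization}: truncate the time integrals to $|t|,|t_1|,|t_2|\leq T=\poly(\beta^{\pm1})$, bounding the tail by operator norms, and truncate the Dyson series at a polynomial order where $(8eT\|\vV_\ell\|/m)^m\leq e^{-m}$. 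Applying \Cref{thm:weak} to the finitely many retained words gives $e^{p}\epsilon_{\mathrm{ADB}}^{1/\poly}$ plus a tail of $e^{-P}\leq\eta e^{-P}/\eta_0$, which is absorbed after enlarging $P$ and $p$.

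I expect the main obstacle to be the summability bookkeeping. The constant $C_\beta^q$ in \Cref{thm:weak} grows exponentially in the word length $q\sim km$, and this growth must be beaten by $\eta^m/m!$ together with the kernel decay. One also has to check that the exponent $\lambda$ does not consume the full $e^{-2\pi|t|/\beta}$ decay once the Duhamel factor $R^{m+1}$ is included. A secondary point is that \Cref{thm:weak} is stated for products of Paulis at fixed times. Since the bound is uniform in the times, I would apply it pointwise inside all time and simplex integrals before integrating, rather than to the integrated operator.
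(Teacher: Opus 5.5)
Your proposal is correct and follows the paper's proof essentially step for step. The paper uses the same exact state-motion rewriting \eqref{eq:transfer-state-difference}, the same spatial truncation of the perturbation (\Cref{lem:locality-of-observable}), the same Dyson expansion combined with Duhamel's formula and \Cref{thm:weak} applied pointwise to each $\vH$-evolved Pauli word (\Cref{lem:move6-dyson}), and the same finite-time, finite-order truncation of \Cref{thm:coarse-adb-linearization} on the range $\eta_0\leq\eta\leq2$. The differences are cosmetic: the paper resums the Dyson series pointwise into $e^{\eta C_\beta|u|}$ and then takes $\eta C_\beta$ smaller than the spare decay of $g$, whereas you integrate order by order and sum geometrically, which works because every shifted time is in fact at most $R=|t|+|t_1|+|t_2|$; also, in your cyclicity identity the conjugated state should read $e^{is\vH}\vsigma e^{-is\vH}$, although the $\vB_u$ you derive from it is already the correct one.
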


To prove these two theorems, let $s=t-t_1$, $u=t+t_2$, and $\vsigma_{\vH}(s)=e^{is\vH}\vsigma e^{-is\vH}$. Cyclicity of the trace gives the exact difference
\begin{align}
 Q-\mathcal I
 ={}&{\frac{1}{4\pi\sigma\sqrt{2\pi}}}\int_{\mathbb R^3}
 g(t){w_{\Omega_1}(t_1)}
 \Tr\Big[\vO^\dagger_{\vH}(s)\Big(
 \vA_{\vH'}(u)(\vsigma-\vsigma_{\vH}(s))h_+(t_2)\nonumber\\
 &\hspace{3cm}-(\vsigma-\vsigma_{\vH}(s))\vA_{\vH'}(u)h_-({t_2})
 \Big)\Big]\,\dd t\dd t_1\dd t_2.
 \label{eq:transfer-state-difference}
\end{align}
{The underlying state-motion identity is}
\begin{align}
 e^{is\vH}\vsigma e^{-is\vH}-\vsigma
 ={i}\int_0^s[\vH,\vsigma]_{\vH}(v)\,\dd v.
 \label{eq:state-motion-identity}
\end{align}
The proof of \Cref{thm:move-unknown-dynamics-global} uses \eqref{eq:state-motion-identity} to control the motion of the entire state in trace norm. A small value of $\|[\vH,\vsigma]\|_1$ naturally bounds \Cref{eq:transfer-state-difference}. However, the global commutator estimate from \Cref{thm:approximate-commute-deduce} collects contributions throughout the system and therefore introduces a factor of $n$.

For \Cref{thm:move-unknown-dynamics}, we control only how the state motion affects the particular test expectations in \eqref{eq:transfer-state-difference}. Using $\Tr(\vB[\vsigma,\vH])=\Tr(\vsigma[\vH,\vB])$, we reduce the problem to the tested commutator estimates of \Cref{thm:weak}. Locality and the expansion into products of evolved Pauli operators allow these expectations to be bounded directly by the local ADB error. The bounds depend on the test support and evolution times, but not on the total system size.

We first prove the bound using the global commutator.

\begin{lem}[Approximate invariance under Hamiltonian evolution]\label{lem:Hamiltonian-stationarity}
For any state $\vsigma$ and Hamiltonian $\vH$, if $\|[\vH,\vsigma]\|_1\leq\epsilon$, then
\begin{align}
 \|e^{it\vH}\vsigma e^{-it\vH}-\vsigma\|_1\leq|t|\epsilon.
\end{align}
\end{lem}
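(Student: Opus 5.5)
The plan is to integrate the Heisenberg--von Neumann equation along the unitary orbit of $\vsigma$ and then bound the integrand pointwise. The key facts are that conjugation by $e^{iv\vH}$ is a trace-norm isometry and that it commutes with taking the commutator with $\vH$.

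First, set $\vsigma(v):=e^{iv\vH}\vsigma e^{-iv\vH}$ for $v\in\mathbb R$. In finite dimensions this is a smooth operator-valued function, and differentiating gives
\begin{align}
 \frac{\dd}{\dd v}\vsigma(v)=i e^{iv\vH}[\vH,\vsigma]e^{-iv\vH}.
\end{align}
Integrating from $0$ to $t$ then reproduces the state-motion identity \eqref{eq:state-motion-identity}:
\begin{align}
 e^{it\vH}\vsigma e^{-it\vH}-\vsigma=i\int_0^t[\vH,\vsigma]_{\vH}(v)\,\dd v.
\end{align}
The integral is oriented when $t<0$.

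Second, apply the triangle inequality for the Bochner (or Riemann) integral in the trace norm:
\begin{align}
 \|e^{it\vH}\vsigma e^{-it\vH}-\vsigma\|_1\leq\left|\int_0^t\|e^{iv\vH}[\vH,\vsigma]e^{-iv\vH}\|_1\,\dd v\right|.
\end{align}
Unitary invariance of the Schatten $1$-norm shows that each integrand equals $\|[\vH,\vsigma]\|_1$. Each integrand is therefore at most $\epsilon$, which gives $|t|\epsilon$.

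No step is a real obstacle. The only points needing care are the orientation of the integral for negative $t$, which the absolute value handles, and the justification of the integral triangle inequality. The latter is immediate because the integrand is a continuous function into a finite-dimensional normed space. Notably, neither locality, the ADB condition, nor full rank of $\vsigma$ is used, so the lemma holds for arbitrary Hermitian $\vH$ and arbitrary density operators.
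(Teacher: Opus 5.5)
Your proposal is correct and follows the paper's proof: you integrate the state-motion identity \eqref{eq:state-motion-identity}, apply the triangle inequality in trace norm over the oriented interval, and use unitary invariance to bound each integrand by $\|[\vH,\vsigma]\|_1\leq\epsilon$. Your remark that no locality, ADB, or full-rank assumption is needed also matches the lemma's generality.
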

\begin{proof}
{For either sign of $t$, \eqref{eq:state-motion-identity} and unitary invariance give}
\begin{align}
 \|e^{it\vH}\vsigma e^{-it\vH}-\vsigma\|_1
 &\leq\int_{{\min(0,t)}}^{{\max(0,t)}}
 \|[\vH,\vsigma]_{\vH}(v)\|_1\,\dd v
 \leq|t|\epsilon.
\end{align}
\end{proof}
\begin{proof}[Proof of \Cref{thm:move-unknown-dynamics-global}]
{Write $\epsilon_H=\|[\vH,\vsigma]\|_1$. Apply \Cref{lem:Hamiltonian-stationarity} to \eqref{eq:transfer-state-difference}, taking absolute values of all scalar kernels:}
\begin{align}
 |Q-\mathcal I|
 &\leq{\frac{1}{4\pi\sigma\sqrt{2\pi}}}\|\vO\|\|\vA\|\epsilon_H
 \int_{\mathbb R^3}{|t-t_1|\,|g(t)w_{\Omega_1}(t_1)|
 (|h_+(t_2)|+|h_-(t_2)|)}\,\dd t\dd t_1\dd t_2
 \nonumber\\
 &\leq\|\vO\|\|\vA\|\epsilon_H e^{\poly(\beta^{\pm1})}
 \leq\|\vO\|\|\vA\|n\epsilon_{\mathrm{ADB}}^{\frac1{32\beta d+2}}
 e^{\poly(\beta^{\pm1})}.
\end{align}
{The last inequality is \Cref{thm:approximate-commute-deduce}.}
\end{proof}
The examples in the final appendix show that a linear system-size factor is necessary for global stationarity bounds. This does not establish optimality of the ADB exponent or of the error for local tests. We now prove the bound independent of $n$ in \Cref{thm:move-unknown-dynamics}.

To prove this theorem, we use the Dyson expansion \eqref{eq:spatially-truncated-dyson} and the same coefficient-weighted ordered tuples as in the preceding subsection. At each order, we apply the tested commutator bound of \Cref{thm:weak}. This bound grows exponentially with the number of Pauli factors, so sufficiently small $\eta$ controls the growth after summation and integration, as in \Cref{thm:kms-to-gns-reduction}. We begin by spatially truncating the perturbation.
\begin{lem}[{Locality of the transfer error}]\label{lem:locality-of-observable}
{For every prescribed polynomial $P(\beta^{\pm1})$, a sufficiently large polynomial radius $\ell$ gives}
\begin{align}
 &|Q(\vO,\vA;\vH,\vH')-\mathcal I(\vO,\vA;\vH,\vH')|\nonumber\\
 &\quad-|Q(\vO,\vA;\vH,\vH+\eta\vV_\ell)
 -\mathcal I(\vO,\vA;\vH,\vH+\eta\vV_\ell)|
 \leq\|\vO\|\|\vA\|\eta e^{-{P}(\beta^{\pm1})}.
\end{align}
\end{lem}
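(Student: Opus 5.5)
The plan is to mirror the proof of \Cref{lem:kms-gns-perturbation-truncation}, now applied to the transfer difference \eqref{eq:transfer-state-difference}. First I will write $Q-\mathcal I$ for $\vH'$ and for $\vH+\eta\vV_\ell$ via \eqref{eq:transfer-state-difference}. The kernels $g$, $w_{\Omega_1}$, $h_\pm$ and the evolved test $\vO^\dagger_{\vH}(s)$ are identical in the two expressions. Only the inner operator $\vA_{\vH'}(u)$ changes, and it is replaced by $\vA_{\vH+\eta\vV_\ell}(u)$ with $u=t+t_2$. By the reverse triangle inequality, the left-hand side is then bounded by the absolute value of the difference of the two expressions. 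That difference is an integral of $\Tr[\vO^\dagger_{\vH}(s)(\vD(u)(\vsigma-\vsigma_{\vH}(s))h_+ - (\vsigma-\vsigma_{\vH}(s))\vD(u)h_-)]$ against $g(t)w_{\Omega_1}(t_1)$, where $\vD(u):=\vA_{\vH'}(u)-\vA_{\vH+\eta\vV_\ell}(u)$.

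Second, I will bound this crudely, using no ADB at all. Apply Hölder with $\|\vsigma-\vsigma_{\vH}(s)\|_1\le 2$ and $\|\vO_{\vH}(s)\|=\|\vO\|$. This gives a bound of $\frac{4\|\vO\|}{4\pi\sigma\sqrt{2\pi}}\int|g(t)w_{\Omega_1}(t_1)|(|h_+|+|h_-|)(t_2)\,\|\vD(t+t_2)\|$. Unlike the linearization step, there is no imaginary-time conjugation here, so no outer-cutoff argument is needed, and the operator norm of $\vD$ suffices.

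Third, I will estimate $\|\vD(u)\|$ shell by shell with \Cref{lem:basic-lr-perturbation}, exactly as in \eqref{eq:spatial-truncation-inside-conjugation}. Only terms at distance $r\ge\ell-1$ from $\supp(\vA)$ differ, each with a coefficient change of at most $\eta$. Each such shell contributes $\eta\|\vA\|\min\{2|u|,(2d|u|)^{r+1}/(r+1)!\}$, weighted by the shell count $S(r,\vA)\lesssim(1+r)^{D}$. I will split at $|t|+|t_2|=r/(16ed)$. On the inner region the factorial gives $e^{-c_1 r}$. On the outer region the kernel decay, namely $|g|\lesssim e^{-2\pi|t|/\beta}$, the Gaussian decay of $h_\pm$ and the absolute integrability of $w_{\Omega_1}$ (bounded by $2\Omega_1 f_\sigma$), gives $e^{-c_2 r/(d\beta)}$ times a polynomial factor from $|u|$. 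Summing over $r\ge\ell-1$ yields $\eta\|\vO\|\|\vA\|\exp[\poly(\beta^{\pm1})-c\ell/(d(1+\beta))]$, where the $e^{\beta\Omega_2}$ prefactor of $h_\pm$ and the factor $\Omega_1$ are absorbed because the cutoffs are already fixed polynomials.

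Finally, I will choose $\ell$ polynomially large so that $c\ell/(d(1+\beta))\ge P+\poly$. The only point needing care is the bookkeeping that the $t_1$ integral does not couple to $\vD$; this holds since $\vD$ depends only on $u$, so that step is mild. I expect the shell-sum and tail split in the third step to be the only real work, and it is identical to the earlier lemma.
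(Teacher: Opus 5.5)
Your proposal is correct and follows essentially the paper's argument: trace/operator-norm duality, the shell-by-shell perturbative Lieb--Robinson estimate reused from \Cref{lem:kms-gns-perturbation-truncation}, and the kernel decay, with no ADB or outer-cutoff step needed. The only cosmetic difference is that you bound the combined change through \eqref{eq:transfer-state-difference} using $\|\vsigma-\vsigma_{\vH}(s)\|_1\leq 2$. The paper instead bounds the changes in $Q$ and $\mathcal I$ separately, each with $\Tr\vsigma=1$, and adds them by the triangle inequality, which gives the same factor of two.
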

\begin{proof}
By trace/operator-norm duality, the kernel bounds, and the perturbative Lieb--Robinson estimate,
\begin{align}
 &|Q(\vO,\vA;\vH,\vH')-Q(\vO,\vA;\vH,\vH+\eta\vV_\ell)|
 \nonumber\\
 &\quad\leq{\frac{1}{4\pi\sigma\sqrt{2\pi}}}\|\vO\|
 \int_{\mathbb R^3}\|\vA_{\vH'}(t+t_2)-\vA_{\vH+\eta\vV_\ell}(t+t_2)\|
 {|g(t)w_{\Omega_1}(t_1)|}
 \nonumber\\
 &\hspace{4cm}\times(|h_+(t_2)|+|h_-(t_2)|)\,\dd t\dd t_1\dd t_2
 \nonumber\\
 &\quad\leq\eta\|\vO\|\|\vA\|e^{\poly(\beta^{\pm1})}
 \sum_{r\geq{\ell-1}}S(r,\vA)
 \left(e^{-{c_1}r}+e^{-{c_2}r/(d\beta)}\right)
 \nonumber\\
 &\quad\leq\eta\|\vO\|\|\vA\|
 \exp\!\left[\poly(\beta^{\pm1})-\frac{c\ell}{{d}(1+\beta)}\right].
 \label{eq:Q-perturbation-locality}
\end{align}
{The shell estimate is the one used in \Cref{lem:kms-gns-perturbation-truncation}. The same bound holds for $\mathcal I$, since its additional $\vH$-evolution preserves the operator norm. Choose $\ell$ so that each difference is at most $\eta\|\vO\|\|\vA\|e^{-P}/2$, and add them by the triangle inequality.}
\end{proof}
After truncating the perturbation, we apply the Dyson expansion to the perturbed dynamics inside the integral. The following lemma gives the resulting bound.

\begin{lem}[Tested state motion for a truncated perturbation]\label{lem:move6-dyson}
For every $s,u\in\mathbb R$,
\begin{align}
 &\max\Big\{
 |\Tr[(\vO_{\vH}(s))^\dagger\vA_{\vH+\eta\vV_\ell}(u)
 (e^{is\vH}\vsigma e^{-is\vH}-\vsigma)]|,\nonumber\\
 &\hspace{1.5cm}|\Tr[\vA_{\vH+\eta\vV_\ell}(u)(\vO_{\vH}(s))^\dagger
 (e^{is\vH}\vsigma e^{-is\vH}-\vsigma)]|\Big\}\nonumber\\
 &\quad\leq\|\vO\|{C_{\beta}}
 \epsilon_{\mathrm{ADB}}^{{\frac{1}{\poly(\beta^{\pm1})}}}|s|
 \exp\!\left[\frac{\pi}{\beta(16\beta{d}+1)}(|s|+|u|)
 +\eta{C_{\beta}}|u|\right].
 \label{eq:move6-tested-bound}
\end{align}
{For the fixed polynomial radius, $C_\beta\leq e^{\poly(\beta^{\pm1})}$ depends only on the fixed local parameters, including $k$.}
\end{lem}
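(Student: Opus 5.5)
The plan is to reduce both traces to tested commutators $\Tr(\vsigma[\vH,\vB])$ with $\vB$ a coefficient-weighted product of evolved single-site Paulis, and then apply \Cref{thm:weak} term by term.

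\emph{Step 1 (state motion).} Using the state-motion identity \eqref{eq:state-motion-identity} and trace cyclicity, write
\[
\Tr[\vY(e^{is\vH}\vsigma e^{-is\vH}-\vsigma)]
= i\int_0^s \Tr\bigl(\vsigma[\vH,\vY_{\vH}(-v)]\bigr)\,\dd v ,
\]
where $\vY=(\vO_{\vH}(s))^\dagger\vA_{\vH+\eta\vV_\ell}(u)$, or $\vY$ is the product in the other order. This produces the explicit factor $|s|$. It therefore suffices to bound $|\Tr(\vsigma[\vH,\vY_{\vH}(-v)])|$ uniformly in $|v|\leq|s|$.

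\emph{Step 2 (reduction to Pauli products).} Expand the fixed-support test $\vO^\dagger$ into at most $4^k$ Pauli strings with coefficients bounded by $\|\vO\|$. Each string is a product of at most $k$ single-site Paulis, and after evolution under $\vH$ each factor has time $s-v$, bounded by $2|s|$.

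Expand $\vA_{\vH+\eta\vV_\ell}(u)$ with the Dyson series \eqref{eq:spatially-truncated-dyson}. At order $m$ this gives at most $N_\ell^m 2^m$ ordered products of $\vH$-evolved factors $\vP_j(s_j)$ and $\vA_{\vH}(u)$, weighted by $\eta^m$, over an ordered time region of volume $|u|^m/m!$. Each $\vP_j$ is a Pauli string of weight at most $k$, so the resulting word has $q\leq k+1+km$ single-site factors. After the common shift by $-v$, every factor time is bounded by $|s|+|u|$ up to constants.

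\emph{Step 3 (apply \Cref{thm:weak}).} For each word, \Cref{thm:weak} gives the bound
\[
e^{\poly(\beta^{\pm1})(k+1+km)}\,\epsilon_{\mathrm{ADB}}^{1/\poly}\,
\exp\!\Bigl[\tfrac{\pi(|s|+|u|)}{\beta(16\beta d+1)}\Bigr].
\]
Summing over words at order $m$ gives
\[
C_\beta^{\,m+1}\,\frac{(\eta|u|)^m}{m!}\,\epsilon_{\mathrm{ADB}}^{1/\poly}\,e^{\pi(|s|+|u|)/(\beta(16\beta d+1))},
\]
and summing over $m$ then yields the factor $e^{\eta C_\beta|u|}$.

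The integrand in the $v$ integral is bounded uniformly, and integrating over $v$ gives the factor $|s|$. The $m$-sum converges absolutely for every $\eta$ because of the $1/m!$, so no smallness of $\eta$ is needed here. The smallness of $\eta$ is only used later, when integrating against the kernels in \Cref{thm:move-unknown-dynamics}.

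\emph{Main obstacle.} The difficulty is making sure that the exponent of $\epsilon_{\mathrm{ADB}}$ in \Cref{thm:weak} does not depend on $q$. Only the prefactor $e^{\poly\cdot q}$ grows with $q$, and this is what makes the sum over Dyson orders a clean exponential $e^{C\eta|u|}$. I would also check two further points. First, the interchange of the Dyson sum with the trace and the $v$ integral is justified by the operator-norm majorant $e^{2\eta\|\vV_\ell\||u|}$. Second, $N_\ell$ is bounded independently of $n$ because the radius $\ell$ is fixed polynomially.
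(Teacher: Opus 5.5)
Your proposal is correct and follows the paper's proof essentially step for step. You use Duhamel in $v$ to produce the factor $|s|$, expand $\vO$ in Pauli strings together with the order-$m$ Dyson series into words of at most $km+k+1$ evolved single-site factors carrying weight $(\eta C_\beta|u|)^m/m!$, apply \Cref{thm:weak} termwise (its $\epsilon_{\mathrm{ADB}}$ exponent is indeed independent of $q$), and sum to $e^{\eta C_\beta|u|}$.

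Two cosmetic corrections. The Duhamel identity carries $-i$ rather than $+i$. The shifted times also satisfy $|s-v|\leq|s|$ (not $2|s|$), and $|u-v|$ and all shifted Dyson times are at most $|s|+|u|$ with no extra constant, which is exactly what yields the stated exponent $\frac{\pi}{\beta(16\beta d+1)}(|s|+|u|)$ without loss.
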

\begin{proof}
\begingroup
{For every fixed operator $\vB$, differentiation gives}
\begin{align}
    \tr\!\left[\vB(e^{is\vH}\vsigma e^{-is\vH}-\vsigma)\right]
    &=\tr\!\left[\vsigma(\vB_{\vH}(-s)-\vB)\right]\nonumber\\
    &=-i\int_0^s
      \tr\!\left(\vsigma[\vH,\vB_{\vH}(-v)]\right)\dd v.
    \label{eq:move6-tested-duhamel}
\end{align}
For $s<0$ the integral is oriented. The operator $\vB$ is held fixed throughout this identity.

At order $m$, expand the $m$ nested commutators and the $m$ copies of
$\vV_\ell=\sum_{\gamma\in\Gamma_\ell}v_\gamma\vP_\gamma$.
There are at most $2^m$ product terms for each choice of interactions, and their total absolute coefficient-integral weight is at most
\begin{equation}
    2^m\eta^m
    \left(\sum_{\gamma}|v_\gamma|\right)^m
    \frac{|u|^m}{m!}
    \leq \frac{(\eta C_{\beta}|u|)^m}{m!}.
    \label{eq:move6-order-weight}
\end{equation}
{Expand $\vO$ in Pauli strings; their absolute coefficient sum is at most $2^{k}\|\vO\|$, which is absorbed into $C_\beta$. After multiplication by $(\vO_{\vH}(s))^\dagger$, each product contains at most $km+k+1$ single-site $\vH$-evolved Pauli factors.}

Apply \eqref{eq:move6-tested-duhamel} to each of these products. After the common evolution by $-v$, its constituent times belong to
\begin{equation}
    \{s-v,u-v,r_1-v,\ldots,r_m-v\},
    \qquad
    \max\{|s-v|,|u-v|,|r_j-v|\}\leq |s|+|u|,
    \label{eq:move6-shifted-times}
\end{equation}
for $v$ between $0$ and $s$. Consequently, \Cref{thm:weak} in the form
\eqref{eq:weak-final} {bounds} the order-$m$ contribution by
\begin{equation}
    \|\vO\|C_{\beta}^{{km+k+1}}
    \epsilon_{\mathrm{ADB}}^{1/\poly(\beta^{\pm 1})} |s|
    e^{\frac{\pi}{\beta(16\beta{d}+1)}(|s|+|u|)}
    \frac{(\eta C_{\beta}|u|)^m}{m!}.
    \label{eq:move6-order-bound}
\end{equation}
{This includes $m=0$. Enlarging $C_\beta$ to absorb the fixed support factors and summing all orders yields}
\begin{align}
 &\bigl|\tr\!\left[(\vO_{\vH}(s))^\dagger\vA_{\vH+\eta\vV_\ell}(u)
       (e^{is\vH}\vsigma e^{-is\vH}-\vsigma)\right]\bigr|
       \nonumber\\
 &\quad\leq
       \|\vO\|C_{\beta}\epsilon_{\mathrm{ADB}}^{1/\poly(\beta^{\pm 1})} |s|
       e^{\frac{\pi}{\beta(16\beta{d}+1)}(|s|+|u|)}
       \sum_{m=0}^{\infty}
       \frac{(\eta C_{\beta}|u|)^m}{m!}
       \nonumber\\
 &\quad=
       \|\vO\|{C_\beta}\epsilon_{\mathrm{ADB}}^{1/\poly(\beta^{\pm 1})} |s|
       e^{\frac{\pi}{\beta(16\beta{d}+1)}(|s|+|u|)
                +\eta C_{\beta}|u|}.
    \label{eq:move6-Dyson-sum}
\end{align}
For fixed $s,u$, the series and the $v$ integral are absolutely convergent. {The same argument with the two operator factors reversed proves the other ordering in the statement.}
\endgroup
\end{proof}

\begin{proof}[Proof of \Cref{thm:move-unknown-dynamics}]
{First suppose $0\leq\eta\leq\eta_0$. Choose the radius in \Cref{lem:locality-of-observable} so that spatial truncation uses at most half the allowed error. Apply \Cref{lem:move6-dyson} to \eqref{eq:transfer-state-difference}, with $s=t-t_1$ and $u=t+t_2$. {Choose the threshold $\eta_0$ small enough that}}
\begin{align}
 {\eta C_{\beta}\leq\frac12\left(\frac{2\pi}{\beta}-2\frac{\pi}{\beta(16\beta d+1)}\right).}
 \label{eq:move6-smallness}
\end{align}
{The exponential decay of $g$ then dominates the growth in $t$, while the Gaussian decay of $w_{\Omega_1}$ and $h_\pm$ controls the other two time variables. Integrating the bound in \Cref{lem:move6-dyson} therefore costs only a factor $e^{\poly(\beta^{\pm1})}$. Adding the spatial-truncation error proves the refined estimate.}

{For $\eta_0\leq\eta\leq2$, use the finite-time, finite-order argument of \Cref{thm:coarse-adb-linearization}. Choose polynomial spatial and time cutoffs so that the discarded contributions use a fixed fraction of the error budget $\|\vO\|\eta_0e^{-P(\beta^{\pm1})}$. Estimate the time tail using the exact dynamics, which preserves operator norms, before making a Dyson expansion. On the remaining time interval, truncate the Dyson series at a polynomial order. The factorial decay of its coefficients makes the discarded orders fit within the remaining tail budget, uniformly for $\eta\leq2$.}

{Apply \Cref{thm:weak} to the retained ordered Pauli products. The finite sum of their coefficients and time-growth factors costs at most $e^{\poly(\beta^{\pm1})}$, giving the required ADB term with exponent $\frac{1}{\poly(\beta^{\pm1})}$. The spatial, time, and order tails together are bounded by $\|\vO\|\eta e^{-P(\beta^{\pm1})}$ because $\eta\geq\eta_0$. This proves the coarse estimate.}
\end{proof}

\subsection{Construction of measurable auxiliary tests}
\label{subsec:auxiliary-tests}
To construct a measurable family from \eqref{eq:transferred-Q}, replace the Hamiltonian $\vH$ acting on the test operator by a selected local Hamiltonian $\vG$ as in \Cref{defn:local-Hamiltonian}.
For these scalar kernels and frequency cutoffs, define the auxiliary test (again suppressing the state subscript)
\begin{align}
    Q(\vO,\vA;\vG,\vH')&=
    {\frac{1}{4\pi\sigma\sqrt{2\pi}}}\int_{-\infty}^{\infty}\int_{-\infty}^{\infty}\int_{-\infty}^{\infty}\Tr\Bigg[
    \vO^\dagger_{\vG}(t-t_1)
    \Big(
    \vA_{\vH'}(t+t_2)\vsigma \cdot h_+(t_2)\\
    &-
    \vsigma \vA_{\vH'}(t+t_2)\cdot h_-(t_2)\Big)\Bigg]g(t){w_{\Omega_1}(t_1)}
    \dd t \dd t_1 \dd t_2.
    \label{eq:measurable-Q}
\end{align}

This defines the test family and does not assert that an arbitrary $\vG$ approximates $\vH$. By cyclicity, its measurement operator is
\begin{align}
\vM_{\vO,\vA;\vG,\vH'}=
{\frac{1}{4\pi\sigma\sqrt{2\pi}}}\int_{\mathbb R^3}&g(t)w_{\Omega_1}(t_1)
\Bigl[h_+(t_2)(\vO^\dagger)_{\vG}(t-t_1)\vA_{\vH'}(t+t_2)\nonumber\\
&\qquad-h_-(t_2)\vA_{\vH'}(t+t_2)(\vO^\dagger)_{\vG}(t-t_1)\Bigr]
\dd t\dd t_1\dd t_2.
\label{eq:measurement-operator}
\end{align}
For $\vO=[\vA,\vP_\gamma]$ or $\vO=[\vA,\vG-\vH']/\eta$, we have $\vO^\dagger=-\vO$; since $h_-=\overline{h_+}$ and $g,w_{\Omega_1}$ are real, the two terms are adjoints and $\vM$ is Hermitian.
For the parameter search to succeed, the net must contain a candidate close to the true Hamiltonian whose test values are small for every auxiliary choice. \Cref{lem:q-exact-candidate} establishes this bound for the true Hamiltonian itself. Together with the robustness estimate in \Cref{lem:robustness}, it ensures that a nearby net point also has small test values, as shown in \Cref{lem:good-net-candidate}.

\begin{thm}[ADB control of the exact-candidate observable]\label{lem:q-exact-candidate}
For every auxiliary local Hamiltonian $\vG$ as in \Cref{defn:local-Hamiltonian},
\begin{align}
    \bigl|Q(\vO,\vA;\vG,\vH)\bigr|
    \leq{\|\vO\|}e^{\poly(\beta^{\pm 1})}\mathsf{ADB}_{\vA}[\vsigma]^{\frac{1}{32\beta d+2}}.
    \label{eq:exact-candidate-q-bound}
\end{align}
\end{thm}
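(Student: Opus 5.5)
The plan is to exploit the fact that for the exact candidate $\vH'=\vH$ every inner evolution in \eqref{eq:measurement-operator} is generated by $\vH$. Then the two branches of the test should become the same KMS-type expression up to ADB defects, and cancel. The auxiliary $\vG$ will only enter through $\vO^\dagger_{\vG}(t-t_1)$, which has operator norm $\|\vO\|$. This is why the bound can hold uniformly in $\vG$.

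\emph{Step 1 (resum the inner kernel).} Using the exact representation stated just before \Cref{lem:kms-gns-perturbation-truncation} (now with $\vH'=\vH$), I would integrate out $t_2$ first. Writing $\vX:=\int_{-\Omega_2}^{\Omega_2}\widehat{\vA}_{\vH}(\omega)\,\dd\omega=\vA_{\leq\Omega_2}$ for the low-frequency part, this gives
\[
\int_{\mathbb R}\vA_{\vH}(t+t_2)h_+(t_2)\,\dd t_2=\bigl(\vrho^{1/2}\vX\vrho^{-1/2}\bigr)_{\vH}(t),\qquad
\int_{\mathbb R}\vA_{\vH}(t+t_2)h_-(t_2)\,\dd t_2=\bigl(\vrho^{-1/2}\vX\vrho^{1/2}\bigr)_{\vH}(t).
\]
Since $[\vH,\vrho]=0$, real-time evolution passes through both Gibbs conjugations, so I can set $\vX(t):=\vX_{\vH}(t)=\vA_{\vH}(t)_{\leq\Omega_2}$. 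This operator is Hermitian because the cutoff is symmetric and $\vA$ is Hermitian. The test then reads
\[
Q=\tfrac{1}{4\pi\sigma\sqrt{2\pi}}\int g(t)w_{\Omega_1}(t_1)\,\Tr\!\Bigl[\vO^\dagger_{\vG}(t-t_1)\bigl(\vrho^{1/2}\vX(t)\vrho^{-1/2}\sqrt{\vsigma}\,\sqrt{\vsigma}-\sqrt{\vsigma}\,\sqrt{\vsigma}\,\vrho^{-1/2}\vX(t)\vrho^{1/2}\bigr)\Bigr]\dd t\,\dd t_1 .
\]

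\emph{Step 2 (ADB swap in each branch).} In the first branch I would replace $\vrho^{1/2}\vX(t)\vrho^{-1/2}\sqrt{\vsigma}$ by $\sqrt{\vsigma}\vX(t)$. The Hilbert--Schmidt error is controlled by the time-evolved bound of \Cref{lem:low_freq_bound} with $q=1$, $\Delta=\Omega_2$ and $t_0=t$. In the second branch I would replace $\sqrt{\vsigma}\,\vrho^{-1/2}\vX(t)\vrho^{1/2}$ by $\vX(t)\sqrt{\vsigma}$, using the adjoint of the same estimate together with Hermiticity of $\vX(t)$, exactly as in \eqref{eq:zeroth-order-filtered-adb}. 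After both replacements, each branch equals $\Tr[\vO^\dagger_{\vG}(t-t_1)\sqrt{\vsigma}\vX(t)\sqrt{\vsigma}]$, so they cancel identically. The residual is bounded pointwise by Hölder's inequality with $\|\vO^\dagger_{\vG}\sqrt{\vsigma}\|_2\leq\|\vO\|$:
\[
2\|\vO\|\,e^{\poly(\beta^{\pm1})}\,\mathsf{ADB}_{\vA}[\vsigma]^{\frac1{32\beta d+2}}\,e^{\frac{\pi|t|}{\beta(16\beta d+1)}},
\]
where the factor $e^{\Omega_2\cdot(\ldots)}$ is absorbed because $\Omega_2=\poly(\beta^{\pm1})$.

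\emph{Step 3 (integrate the kernels).} The growth $e^{\pi|t|/(\beta(16\beta d+1))}$ is dominated by $|g(t)|\leq C_\beta\beta^{-1}e^{-2\pi|t|/\beta}$, and $w_{\Omega_1}$ is absolutely integrable with integral at most $C_\beta\Omega_1$. Integrating therefore costs only a factor $e^{\poly(\beta^{\pm1})}$, which gives \eqref{eq:exact-candidate-q-bound}.

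The main point to check carefully is Step 1. The kernels $h_\pm$ have to reproduce the Gibbs conjugations \eqref{eq:imaginary-time-positive-basic}--\eqref{eq:imaginary-time-negative-basic} with the correct sign pairing: $h_+$ should go with the branch where the conjugated operator sits to the left of $\vsigma$, and $h_-$ with the branch where it sits to the right. That pairing is what makes the two ADB swaps land on the same expression $\sqrt{\vsigma}\vX\sqrt{\vsigma}$ rather than on mismatched conjugations. I expect this to follow from the stated identity $h_-=\overline{h_+}$ and the shift formulas, but it should be verified before relying on the cancellation.
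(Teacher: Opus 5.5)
Your proposal is correct and is essentially the paper's proof. The paper also undoes the $h_\pm$ representations to obtain the two branches $\sqrt{\vrho}\,\vA(t)_{\leq\Omega_2}\sqrt{\vrho}^{-1}\vsigma$ and $\vsigma\sqrt{\vrho}^{-1}\vA(t)_{\leq\Omega_2}\sqrt{\vrho}$, splits their difference into the same two defects, bounds them by \Cref{lem:low_freq_bound} and its adjoint with $\Delta=\Omega_2$ at time $t$, and then integrates against $g$ and $w_{\Omega_1}$ uniformly in $\vG$. The sign pairing you flag, with $h_+$ producing $\vrho^{1/2}(\cdot)\vrho^{-1/2}$ to the left of $\vsigma$ and $h_-$ producing $\vrho^{-1/2}(\cdot)\vrho^{1/2}$ to the right, is exactly the one given by the paper's inner-kernel representation.
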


\begin{proof}
{Use the unnormalized cutoff $\vA(t)_{\leq\Omega_2}$ from \eqref{eq:low-frequency-part-commutation}, with evolution generated by $\vH$, as in \Cref{lem:low_freq_bound}. This operator is Hermitian because $\vA$ is Hermitian and the frequency interval is symmetric.}
{Undoing only the $h_\pm$ representations in \eqref{eq:measurable-Q} gives}
\begin{align}
 Q(\vO,\vA;\vG,\vH)
 ={}&{\frac{1}{4\pi\sigma\sqrt{2\pi}}}\int_{\mathbb R^2}
 g(t){w_{\Omega_1}(t_1)}
 \Tr\Big[\vO_{\vG}^\dagger(t-t_1)\Big(
 \sqrt{\vrho}{\vA(t)_{\leq\Omega_2}}\sqrt{\vrho}^{-1}\vsigma
 \nonumber\\
 &\hspace{4cm}-\vsigma\sqrt{\vrho}^{-1}{\vA(t)_{\leq\Omega_2}}\sqrt{\vrho}
 \Big)\Big]\,\dd t\dd t_1.
\end{align}
{The inverse inner-kernel representation keeps the time argument $t$. Split the state-dependent block as}
\begin{align}
 &{\sqrt{\vrho}\vA(t)_{\leq\Omega_2}\sqrt{\vrho}^{-1}\vsigma
 -\vsigma\sqrt{\vrho}^{-1}\vA(t)_{\leq\Omega_2}\sqrt{\vrho}}\nonumber\\
 &\quad{=\big(\sqrt{\vrho}\vA(t)_{\leq\Omega_2}\sqrt{\vrho}^{-1}\sqrt{\vsigma}
 -\sqrt{\vsigma}\vA(t)_{\leq\Omega_2}\big)\sqrt{\vsigma}}\nonumber\\
 &\qquad{+\sqrt{\vsigma}\big(\vA(t)_{\leq\Omega_2}\sqrt{\vsigma}
 -\sqrt{\vsigma}\sqrt{\vrho}^{-1}\vA(t)_{\leq\Omega_2}\sqrt{\vrho}\big).}
\end{align}
{The two defects are bounded by \Cref{lem:low_freq_bound} and its adjoint, with $\Delta=\Omega_2$ and time $t$. Since $\Omega_2$ is polynomial, Schatten H\"older's inequality and $\|\sqrt{\vsigma}\|_2=1$ give}
\begin{align}
 &{\left\|\sqrt{\vrho}\vA(t)_{\leq\Omega_2}\sqrt{\vrho}^{-1}\vsigma
 -\vsigma\sqrt{\vrho}^{-1}\vA(t)_{\leq\Omega_2}\sqrt{\vrho}\right\|_1}\nonumber\\
 &\quad{\leq e^{\poly(\beta^{\pm1})}\mathsf{ADB}_{\vA}[\vsigma]^{\frac{1}{32\beta d+2}}
 e^{\frac{\pi}{\beta(16\beta d+1)}|t|}.}
\end{align}
{Finally, $\|\vO_{\vG}(t-t_1)\|=\|\vO\|$ for every allowed $\vG$, so}
\begin{align}
 |Q(\vO,\vA;\vG,\vH)|
 &\leq{\frac{1}{4\pi\sigma\sqrt{2\pi}}\|\vO\|e^{\poly(\beta^{\pm1})}
 \mathsf{ADB}_{\vA}[\vsigma]^{\frac{1}{32\beta d+2}}
 \int_{\mathbb R^2}|g(t)w_{\Omega_1}(t_1)|e^{\frac{\pi}{\beta(16\beta d+1)}|t|}\,\dd t\dd t_1}
 \nonumber\\
 &\leq{\|\vO\|}e^{\poly(\beta^{\pm1})}
 \mathsf{ADB}_{\vA}[\vsigma]^{\frac1{32\beta d+2}}.
\end{align}
{The last integral converges because $\frac{\pi}{\beta(16\beta d+1)}<2\pi/\beta$ and $w_{\Omega_1}$ has Gaussian decay. The estimate is uniform in the auxiliary Hamiltonian.}
\end{proof}

We now combine the preceding estimates to obtain the measurement family in \Cref{lem:tested-identification}. Each test is a bounded observable expectation, the true Hamiltonian has a small test value for every auxiliary choice, and taking the true Hamiltonian as the auxiliary generator gives an approximation to the unfiltered commutator pairing, with error bounds independent of the system size.

\begin{proof}[Proof of \Cref{lem:tested-identification}]
Use the family defined by \eqref{eq:measurable-Q}, with measurement operator \eqref{eq:measurement-operator}. Real-time evolution preserves operator norms, so for normalized test operators the triangle inequality bounds $\|\vM\|$ by the integrals of the absolute scalar kernels. The bounds on $g$, $h_\pm$, and $w_{\Omega_1}$ make these integrals at most $K_\beta=e^{\poly(\beta^{\pm1})}$, uniformly in the selected Hamiltonians and the system size. Measuring the Hermitian real and imaginary parts separately therefore proves the first assertion.

When the candidate is the true Hamiltonian, \Cref{lem:q-exact-candidate} gives a small test value for every auxiliary $\vG$. Since $\mathsf{ADB}_{\vA}[\vsigma]\leq\epsilon_{\mathrm{ADB}}$ for each single-site Pauli, this proves the second assertion after choosing the common constant and exponent below.

For the third assertion, set the auxiliary Hamiltonian to $\vH$ and recall $\eta=\|h-h'\|_\infty$. Fix the observable cutoffs using \Cref{thm:double-frequency-truncation} so that truncation contributes at most $a_\beta\eta/4$. Choose $\eta_0=e^{-\poly(\beta^{\pm1})}$ within the refined regimes of \Cref{thm:kms-to-gns-reduction,thm:move-unknown-dynamics}. For $\eta\leq\eta_0$, apply the refined linearization estimate to both conjugation branches. For $\eta_0\leq\eta\leq2$, apply \Cref{thm:coarse-adb-linearization} with budget $a_\beta\eta_0/8$ per branch, giving total error at most $a_\beta\eta/4$. Both arguments keep the observable cutoffs fixed, so they apply to the same measurement family. Finally, use \Cref{thm:move-unknown-dynamics} for the transfer step. Choosing the prescribed accuracy polynomials in these results gives
\begin{align}
 {|\mathcal F_\sigma-\mathcal F_{\mathrm{cut}}|}
 &{\leq\tfrac14a_\beta\eta,}\nonumber\\
 {|\mathcal F_{\mathrm{cut}}-\mathcal I_\sigma|}
 &{\leq\tfrac14a_\beta\eta+C_{\beta}\epsilon_{\mathrm{ADB}}^{\frac{1}{32\beta d+2}},}\nonumber\\
 {|\mathcal I_\sigma-Q_\sigma|}
 &{\leq\tfrac12a_\beta\eta+C_{\beta}\epsilon_{\mathrm{ADB}}^{\frac{1}{\poly(\beta^{\pm1})}}.}
 \label{eq:assembled-identification-budgets}
\end{align}
Choose one ADB threshold small enough for all of these estimates, including both regimes of $\eta$. The three lines of \eqref{eq:assembled-identification-budgets} control truncation, linearization, and transfer, respectively. Their sum compares the measured test with the original, unfiltered pairing: for every $0\leq\eta\leq2$,
\begin{align}
 |Q_\sigma-\mathcal F_\sigma|
 &\leq|\mathcal F_\sigma-\mathcal F_{\mathrm{cut}}|
 +|\mathcal F_{\mathrm{cut}}-\mathcal I_\sigma|
 +|\mathcal I_\sigma-Q_\sigma|
 {\leq a_\beta\eta+C_\beta\epsilon_{\mathrm{ADB}}^{\frac{1}{\poly(\beta^{\pm1})}}.}
\end{align}
Since $0\leq\epsilon_{\mathrm{ADB}}\leq1$, choosing a common inverse-polynomial exponent no larger than any of the input exponents bounds all ADB contributions in the stated form. Enlarging $C_\beta$ absorbs their prefactors and the true-candidate bound. All constants and cutoffs are independent of $n$, proving the three assertions. Coefficient recovery follows in \Cref{subsec:coefficient-recovery}.
\end{proof}

\clearpage
\makeatletter
\MT@newlabel{eq:coarse-measured-interface}
\MT@newlabel{{eq:coarse-measured-interface}}
\MT@newlabel{eq:coefficient-interface}
\MT@newlabel{{eq:coefficient-interface}}
\MT@newlabel{eq:coefficient-recovery-chain}
\MT@newlabel{{eq:coefficient-recovery-chain}}
\MT@newlabel{eq:completed-learning-cost}
\MT@newlabel{{eq:completed-learning-cost}}
\MT@newlabel{eq:contraction-allocation}
\MT@newlabel{{eq:contraction-allocation}}
\MT@newlabel{eq:full-candidate-hybrid}
\MT@newlabel{{eq:full-candidate-hybrid}}
\MT@newlabel{eq:full-refinement-budget}
\MT@newlabel{{eq:full-refinement-budget}}
\MT@newlabel{eq:global-regularized-adb}
\MT@newlabel{{eq:global-regularized-adb}}
\MT@newlabel{eq:good-net-candidate-bound}
\MT@newlabel{{eq:good-net-candidate-bound}}
\MT@newlabel{eq:learning-net-count}
\MT@newlabel{{eq:learning-net-count}}
\MT@newlabel{eq:learning-radii}
\MT@newlabel{{eq:learning-radii}}
\MT@newlabel{eq:learning-error-floor}
\MT@newlabel{{eq:learning-error-floor}}
\MT@newlabel{eq:learning-round-cost}
\MT@newlabel{{eq:learning-round-cost}}
\MT@newlabel{eq:learning-shots}
\MT@newlabel{{eq:learning-shots}}
\MT@newlabel{eq:prime-discretization}
\MT@newlabel{{eq:prime-discretization}}
\MT@newlabel{eq:robust-local-perturbation}
\MT@newlabel{{eq:robust-local-perturbation}}
\MT@newlabel{eq:robust-remote-perturbation}
\MT@newlabel{{eq:robust-remote-perturbation}}
\MT@newlabel{eq:robust-truncation}
\MT@newlabel{{eq:robust-truncation}}
\MT@newlabel{eq:rotated}
\MT@newlabel{{eq:rotated}}
\MT@newlabel{eq:selected-true-auxiliary}
\MT@newlabel{{eq:selected-true-auxiliary}}
\MT@newlabel{eq:virtual-regularization-bias}
\MT@newlabel{{eq:virtual-regularization-bias}}
\MT@newlabel{eq:faithfulness-haar-average}
\MT@newlabel{{eq:faithfulness-haar-average}}
\MT@newlabel{eq:faithfulness-partial-trace}
\MT@newlabel{{eq:faithfulness-partial-trace}}
\MT@newlabel{eq:faithfulness-frequency-split}
\MT@newlabel{{eq:faithfulness-frequency-split}}
\MT@newlabel{eq:faithfulness-high-frequency}
\MT@newlabel{{eq:faithfulness-high-frequency}}
\MT@newlabel{eq:faithfulness-low-frequency}
\MT@newlabel{{eq:faithfulness-low-frequency}}
\MT@newlabel{eq:faithfulness-rotation-tail}
\MT@newlabel{{eq:faithfulness-rotation-tail}}
\MT@newlabel{eq:faithfulness-low-kms}
\MT@newlabel{{eq:faithfulness-low-kms}}
\MT@newlabel{eq:faithfulness-three-terms}
\MT@newlabel{{eq:faithfulness-three-terms}}
\MT@newlabel{eq:faithfulness-first-holder}
\MT@newlabel{{eq:faithfulness-first-holder}}
\MT@newlabel{eq:faithfulness-filtered-adb}
\MT@newlabel{{eq:faithfulness-filtered-adb}}
\MT@newlabel{eq:faithfulness-first-term}
\MT@newlabel{{eq:faithfulness-first-term}}
\MT@newlabel{eq:faithfulness-conjugated-filter}
\MT@newlabel{{eq:faithfulness-conjugated-filter}}
\MT@newlabel{eq:faithfulness-second-term}
\MT@newlabel{{eq:faithfulness-second-term}}
\MT@newlabel{eq:faithfulness-third-term}
\MT@newlabel{{eq:faithfulness-third-term}}
\MT@newlabel{eq:faithfulness-trace-norm}
\MT@newlabel{{eq:faithfulness-trace-norm}}
\MT@newlabel{eq:faithfulness-rotated-bound}
\MT@newlabel{{eq:faithfulness-rotated-bound}}
\MT@newlabel{eq:faithfulness-combined-bound}
\MT@newlabel{{eq:faithfulness-combined-bound}}
\MT@newlabel{eq:faithfulness-tail-choice}
\MT@newlabel{{eq:faithfulness-tail-choice}}
\MT@newlabel{eq:faithfulness-defect-choice}
\MT@newlabel{{eq:faithfulness-defect-choice}}
\MT@newlabel{eq:faithfulness-final-gap}
\MT@newlabel{{eq:faithfulness-final-gap}}
\makeatother
\section{Learning protocol and error analysis}\label{sec:learning-protocol}
In \Cref{sec:identifiability-bounds}, we constructed measurable observables whose expectations approximate local KMS pairings with controlled error, providing a way to bound the squared KMS norms that encode the discrepancy between a candidate Hamiltonian and the true one.

Building on this construction, our goal here is to give a systematic learning algorithm for the model of \Cref{defn:local-Hamiltonian} and analyze its accuracy, sample complexity, and computational cost. The algorithm compares local Hamiltonian guesses through these observables and repeatedly updates their coefficients to improve the estimate. To justify these updates, we first need to show that small KMS errors imply small coefficient errors.

For clarity, \Crefrange{subsec:faithfulness}{subsec:learning-complexity} first present the algorithm and its analysis using independent copies of a single metastable state. We then prove \Cref{thm:intro-global-metastable-learning} in \Cref{sec:streaming-learning}, allowing independent preparations of different metastable states.

In the following subsection, we prove that the KMS norm faithfully controls the usual operator norm for operators of bounded support, with constants depending on the temperature and fixed locality parameters but not on the system size. Applied to local commutators of the Hamiltonian difference, this faithfulness result supplies the link between the measured tests and coefficient recovery that underlies the learning protocol.

\begingroup\thickmuskip=3mu plus 2mu minus 2mu\medmuskip=3mu plus 2mu minus 3mu
\subsection{Faithfulness}
\label{subsec:faithfulness}

We prove the metastable-state local KMS faithfulness estimate needed for coefficient recovery in \Cref{subsec:refinement}. This supplies the link from small local test pairings to small Hamiltonian coefficient errors.
\begin{thm}[Local faithfulness for metastable states]\label{thm:meta-local-faithfulness}
    Let $\vH$ be a local Hamiltonian as in \Cref{defn:local-Hamiltonian}. Let $\epsilon_{\mathrm{ADB}}$ denote the maximum ADB defect over the single-site Pauli jumps. {For operators $\vB$ of bounded support size, there exists a polynomial $\poly(\beta^{\pm 1})$ such that, if $\epsilon_{\mathrm{ADB}}<e^{-\poly(\beta^{\pm 1})}$, then $\|\vB\|_{\vsigma}\geq\|\vB\|_{\vtau}e^{-\poly(\beta^{\pm 1})}$. The polynomial may depend on the support-size bound and fixed locality parameters, but not on $n$.}
\end{thm}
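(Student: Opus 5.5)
The plan is to reduce local faithfulness for $\vsigma$ to local faithfulness for the Gibbs state $\vrho$, via an intermediate quantity linear in $\vsigma$, and then to use a Haar (Pauli) twirl on the support of $\vB$ to compare with $\vtau$.

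\textbf{Step 1: linearize the KMS norm.} First split $\vB$ into low- and high-frequency parts with respect to $\vH$, $\vB=\vB_{\leq\Delta}+\vB_{>\Delta}$ (after normalization by the reconstruction factor). The high-frequency part has operator norm $\leq C_\beta e^{-\Delta/(8d)}\|\vB\|$, by the Gaussian-shift bound used in \Cref{lem:weak-high-frequency}. For the low-frequency part, expand $\vB$ into Pauli strings of bounded weight, so each string is a product of $q\leq|\supp\vB|$ single-site Paulis. \Cref{lem:low_freq_bound} then gives
\[
\bigl\|\sqrt{\vsigma}\vB_{\leq\Delta}-\vrho^{1/2}\vB_{\leq\Delta}\vrho^{-1/2}\sqrt{\vsigma}\bigr\|_2\leq e^{q\poly}e^{c\Delta}\epsilon_{\mathrm{ADB}}^{1/(32\beta d+2)}.
\]
Inserting this into $\|\vB_{\leq\Delta}\|_\vsigma^2=\Tr[\vB_{\leq\Delta}^\dagger\sqrt{\vsigma}\vB_{\leq\Delta}\sqrt{\vsigma}]$ yields
\[
\|\vB_{\leq\Delta}\|_\vsigma^2\approx\Tr\bigl[\vB_{\leq\Delta}^\dagger\,\vrho^{1/2}\vB_{\leq\Delta}\vrho^{-1/2}\,\vsigma\bigr].
\]
This is a linear, GNS-type expectation $\Tr(\vsigma\vW)$ with $\vW=\vB_{\leq\Delta}^\dagger\vrho^{1/2}\vB_{\leq\Delta}\vrho^{-1/2}$ quasi-local.

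\textbf{Step 2: remove the state dependence.} Next I would show that such expectations cannot be small for every state. Replace $\vB$ by a conjugate $\vU\vB\vU^\dagger$ with $\vU$ a local unitary on a bounded patch around $\supp\vB$, and average over the Haar measure. For the Haar average of $\vU^\dagger\vsigma\vU$, the patch marginal becomes maximally mixed, and the analogous quantity for $\vtau$ is $\|\vB\|_\vtau^2$ up to Gibbs-conjugation factors.

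The cleaner route is to use the positivity hidden in the KMS form: the exact expression $\Tr[\vX^\dagger\vrho^{1/2}\vX\vrho^{-1/2}\vsigma]$ for $\vsigma$ satisfying ADB should be comparable to $\|\vX\|_\vsigma^2$. So I would instead lower bound $\|\vB\|_\vsigma^2\geq\|\vB\|_{\vsigma}^2$ restricted to a local reduced state. By monotonicity of the KMS inner product under partial trace (operator convexity / Lieb concavity of $\Tr\vX^\dagger\sqrt{\vsigma}\vX\sqrt{\vsigma}$), one has $\|\vB\|_\vsigma^2\geq\Tr[\vB^\dagger\sqrt{\vsigma_S}\vB\sqrt{\vsigma_S}]$ on a patch $S$. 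It then suffices to show that the reduced state $\vsigma_S$ has minimum eigenvalue at least $e^{-\poly}$; this gives $\|\vB\|_\vsigma^2\geq\lambda_{\min}(\vsigma_S)\,2^{|S|}\|\vB\|_\vtau^2$.

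\textbf{Step 3: lower bound on the local marginal (main obstacle).} I expect the hardest step to be showing $\vsigma_S\geq e^{-\poly}$ from ADB alone. The plan is to use ADB with single-site jumps to transport weight: by the conjugation identity $\sqrt{\vsigma}\approx\vA^a(\omega)\sqrt{\vsigma}\,[\vrho^{-1/2}\vA^a(\omega)\vrho^{1/2}]^{-1}$ in the low-frequency sense, any local projector $\vPi$ on $S$ carrying small weight $\Tr(\vPi\vsigma)$ forces small weight on every projector reachable by single-site Paulis. The weights change by factors at most $e^{\beta\Omega}$ per step at bounded frequency, with high-frequency leakage controlled as in Step 1. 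Since single-site Paulis generate all operators on $S$ in $|S|$ steps, the weights of all local projectors would be small, contradicting $\Tr\vsigma=1$. Quantitatively this gives $\Tr(\vPi\vsigma)\geq e^{-\poly|S|}-e^{\poly}\epsilon_{\mathrm{ADB}}^{\alpha}$. The threshold $\epsilon_{\mathrm{ADB}}<e^{-\poly}$ makes the second term negligible, and the frequency cutoff $\Delta$ is chosen polynomially to balance the high-frequency tails.
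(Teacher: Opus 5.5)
\paragraph*{The gap: Step~2 has the monotonicity backwards.} Everything downstream feeds into Step~2, and that step fails. For $\vB$ supported in $S$, the inequality $\|\vB\|_{\vsigma}^2\geq\Tr[\vB^\dagger\sqrt{\vsigma_S}\vB\sqrt{\vsigma_S}]$ is false in general, and Lieb concavity gives the \emph{opposite} direction. The state $\vsigma_S\otimes\vtau_{S^c}$ is the Haar twirl of $\vsigma$ over unitaries $\vW$ on $S^c$. Each such $\vW$ commutes with $\vB$ and leaves $\Tr[\vB^\dagger\sqrt{\vsigma}\vB\sqrt{\vsigma}]$ unchanged. Concavity of $\vsigma\mapsto\Tr[\vB^\dagger\vsigma^{1/2}\vB\vsigma^{1/2}]$ therefore yields $\|\vB\|_{\vsigma}^2\leq\Tr[\vB^\dagger\sqrt{\vsigma_S}\vB\sqrt{\vsigma_S}]$.

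A concrete counterexample to your direction: let $\vsigma=|\psi\rangle\langle\psi|$ be a pure state maximally entangled between $S$ and other sites. Then $\vsigma_S=\vI_S/2^{|S|}$ has the largest possible minimum eigenvalue. Yet $\|\vP\|_{\vsigma}^2=|\langle\psi|\vP|\psi\rangle|^2=0$ for every nonidentity Pauli $\vP$ on $S$.

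So no lower bound on $\lambda_{\min}(\vsigma_S)$ can give faithfulness, however Step~3 obtains it. The KMS seminorm degenerates because of the nonlocal factor $\sqrt{\vsigma}$ on both sides of $\vB$, and ADB must be used to control exactly that, not the marginal. Two further problems: Step~3 is only heuristic, since ADB is an averaged $L^2$ condition on $\sqrt{\vsigma}$ and gives no operator lower bound on $\vsigma_S$ directly. And your Step~1 linearization is never connected back to $\|\vB\|_{\vtau}$.

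\paragraph*{The missing idea.} Keep the Haar average you abandoned, but use independent left and right unitaries, and apply ADB to the unitaries rather than to $\vB$.

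First, the averaging bound holds for every state. For independent Haar $\vU,\vV$ on $\mathsf B=\supp(\vB)$,
\[
\mathbb E\|\vU^\dagger\vB\vV\|_{\vsigma}^2=\Tr[(\tr_{\mathsf B}\sqrt{\vsigma})^2]\,\|\vB\|_{\vtau}^2/2^{|\mathsf B|}.
\]
Since $\sqrt{\vsigma}\leq2^{|\mathsf B|}\vI_{\mathsf B}\otimes\tr_{\mathsf B}\sqrt{\vsigma}$, we get $\Tr[(\tr_{\mathsf B}\sqrt{\vsigma})^2]\geq2^{-3|\mathsf B|}$ for \emph{every} state. Hence some rotated $\vU^\dagger\vB\vV$ has KMS norm at least $2^{-2|\mathsf B|}\|\vB\|_{\vtau}$, with no marginal bound needed.

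Second, split $\vU,\vV$ into low- and high-frequency parts; the tails have size $e^{-\Delta/(4d)}$. Bound
\[
\|\vU_{\leq\Delta}^\dagger\vB\vV_{\leq\Delta}\|_{\vsigma}^2\leq\|\vU_{\leq\Delta}^\dagger\vB\vV_{\leq\Delta}\|\,\|\sqrt{\vsigma}\vU_{\leq\Delta}^\dagger\vB\vV_{\leq\Delta}\sqrt{\vsigma}\|_1 .
\]
Then use \Cref{lem:low_freq_bound} to move $\vU_{\leq\Delta}^\dagger$ and $\vV_{\leq\Delta}$ across $\sqrt{\vsigma}$, at the cost of Gibbs conjugations of norm $e^{O(\beta\Delta)}$. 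The remaining term is at most $e^{O(\beta\Delta)}\|\vsigma^{1/4}\vB\vsigma^{1/4}\|_2=e^{O(\beta\Delta)}\|\vB\|_{\vsigma}$.

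Putting these together gives
\[
2^{-2|\mathsf B|}\|\vB\|_{\vtau}\leq(\text{tails})+(\epsilon_{\mathrm{ADB}}\text{ terms})+e^{O(\beta\Delta)}\|\vB\|_{\vsigma}^{1/2}\|\vB\|^{1/2}.
\]
Choosing $\Delta$ and the ADB threshold polynomially in $\beta^{\pm1}$ then closes the argument.
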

The proof is organized into the following two lemmas.

\begin{lem}\label{lem:faithfulness-local-rotations} We relate $\|\vB\|_{\vtau}$ to the $\vsigma$-weighted KMS norm of a rotated version of $\vB$. In particular, there exist unitaries $\vU, \vV$ supported on $\supp(\vB)$ such that
\begin{align}
    \|\vB\|_{\vtau} \leq 2^{2|\supp(\vB)|} \|\vU^\dagger \vB \vV\|_{\vsigma}. \label{eq:rotated}
\end{align}
\end{lem}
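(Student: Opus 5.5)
The plan is to choose $\vU,\vV$ as Pauli strings on $S:=\supp(\vB)$, writing $m:=|S|$. Instead of constructing them directly, we average $\|\vP^\dagger\vB\vQ\|_{\vsigma}^2$ over all pairs $(\vP,\vQ)$ of Pauli strings on $S$. We then use the fact that the maximum is at least the average. The average will be computable in closed form because of the Pauli twirl identity
\begin{align}
 \frac{1}{4^m}\sum_{\vP}\vP\vY\vP^\dagger=\frac{\vI_S}{2^m}\otimes\tr_S\vY,
\end{align}
valid for every operator $\vY$ on the full system.

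\emph{Step 1 (twirl).} Write $\|\vP^\dagger\vB\vQ\|_{\vsigma}^2=\Tr[\vQ^\dagger\vB^\dagger\vP\sqrt{\vsigma}\vP^\dagger\vB\vQ\sqrt{\vsigma}]$. Set $\vR:=\tr_S\sqrt{\vsigma}$, which is a positive operator on $E=\Lambda\setminus S$. Summing over $\vP$ replaces $\vP\sqrt{\vsigma}\vP^\dagger$ by $2^{m}(\vI_S\otimes\vR)$ after normalization. Summing over $\vQ$, and using cyclicity to move $\vQ$ onto the right-hand $\sqrt{\vsigma}$, does the same there. Since $\vB$ acts only on $S$ and $\vI_S\otimes\vR$ acts only on $E$, the two factorize. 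This gives
\begin{align}
 \frac{1}{16^m}\sum_{\vP,\vQ}\|\vP^\dagger\vB\vQ\|_{\vsigma}^2
 =\frac{1}{4^m}\tr_S(\vB^\dagger\vB)\,\tr_E(\vR^2).
\end{align}

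\emph{Step 2 (dimension-free lower bound on $\tr_E\vR^2$).} This is the only nontrivial point. A naive bound such as $\vR\geq\vsigma_E$ gives a purity $\tr\vsigma_E^2$, which can be exponentially small, so it is not enough. Instead I use the bipartite operator inequality already proved in \eqref{eq:adb4-bipartite-order}, applied to the positive operator $\sqrt{\vsigma}$ in place of $\vsigma$:
\begin{align}
 \sqrt{\vsigma}\leq 2^m\,\vI_S\otimes\tr_S\sqrt{\vsigma}=2^m\,\vI_S\otimes\vR.
\end{align}
Since $\tr_E\vR^2=\Tr[\sqrt{\vsigma}(\vI_S\otimes\vR)]$ and both factors are positive, this inequality gives $\Tr[\sqrt{\vsigma}(\vI_S\otimes\vR)]\geq 2^{-m}\Tr[\sqrt{\vsigma}\sqrt{\vsigma}]=2^{-m}$. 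Hence $\tr_E\vR^2\geq2^{-m}$. This holds for any state, with no full-rank assumption.

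\emph{Step 3 (conclude).} Since $\|\vB\|_{\vtau}^2=2^{-n}\Tr(\vB^\dagger\vB)=2^{-m}\tr_S(\vB^\dagger\vB)$, the average in Step 1 is at least $2^{-2m}\|\vB\|_{\vtau}^2$. Hence some pair of Pauli strings $\vU=\vP$, $\vV=\vQ$ supported on $S$ satisfies $\|\vB\|_{\vtau}\leq 2^{m}\|\vU^\dagger\vB\vV\|_{\vsigma}$. This is stronger than \eqref{eq:rotated}. The main obstacle is Step 2, namely avoiding any dependence on $|E|$. The reuse of \eqref{eq:adb4-bipartite-order} for $\sqrt{\vsigma}$ is exactly what removes it. The rest is bookkeeping of Pauli normalizations, which I will double-check, but any loss there is absorbed by the stated factor $2^{2|\supp(\vB)|}$.
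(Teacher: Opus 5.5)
Your proposal is correct and is essentially the paper's proof. Both average the squared KMS norm over local unitaries on $\supp(\vB)$ (you use Pauli strings, a unitary 1-design, where the paper uses independent Haar-random $\vU,\vV$), reduce the average to $\Tr[(\tr_{S}\sqrt{\vsigma})^2]$, and lower-bound that trace with the bipartite inequality $\sqrt{\vsigma}\leq 2^{m}\,\vI_S\otimes\tr_S\sqrt{\vsigma}$. Your one deviation is to use $\Tr(\vC^2)\leq\Tr(\vC\vD)$ instead of the paper's $\Tr(\vC^2)\leq\Tr(\vD^2)$; this gives $2^{-m}$ rather than $2^{-3m}$ for that trace, and hence the sharper constant $2^{m}$ in place of $2^{2m}$.
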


\begin{proof}
We follow exactly the Haar-averaging strategy of Ref.~\cite[Lemma~III.6, Claim~1, Eq.~(3.5)]{chen2025learning}; this part of the argument applies to any state $\vsigma$.
Let $\mathsf{B}:=\supp(\vB)$, and let $|\vB|,|\vU|,|\vV|$ denote support cardinalities. Choose $\vU$ and $\vV$ independently from the Haar measure on $\mathsf{B}$. Averaging the squared KMS norm of the rotated operator separates the two state factors and yields
\begin{equation}
    \mathbb{E}_{\vU, \vV} \tr( \vB^\dagger \vU\sqrt{\vsigma} \vU^\dagger \vB \vV \sqrt{\vsigma} \vV^\dagger) =  \tr((\tr_{\mathsf{B}}(\sqrt{\vsigma}))^2) \frac
    {\|\vB\|_{\vtau}^2}{2^{|\mathsf{B}|}}.\label{eq:faithfulness-haar-average}
\end{equation}
Here the Haar average satisfies $\mathbb{E}_{\vU} \vU \sqrt{\vsigma} \vU^\dagger = \tr_{\mathsf{B}}(\sqrt{\vsigma}) \otimes \frac{\vI_{\mathsf{B}}}{2^{|\mathsf{B}|}}$. To obtain a lower bound on this mean, it suffices to control $\tr((\tr_{\mathsf{B}}(\sqrt{\vsigma}))^2)$. Positivity and normalization give
\begin{align}
    \sqrt{\vsigma} \leq 2^{|\mathsf{B}|} \vI_{\mathsf{B}} \otimes \tr_{\mathsf{B}}(\sqrt{\vsigma}) \Longrightarrow \ 1=\tr(\vsigma) \leq 2^{3|\mathsf{B}|}  \tr((\tr_{\mathsf{B}}(\sqrt{\vsigma}))^2).\label{eq:faithfulness-partial-trace}
\end{align}
The last step follows from $\tr(\vC^2)\leq\tr(\vD^2)$ whenever $0\leq\vC\leq\vD$, as seen from $\tr(\vD^2-\vC^2)=\tr[(\vD-\vC)(\vD+\vC)]$. Substituting \eqref{eq:faithfulness-partial-trace} into \eqref{eq:faithfulness-haar-average} bounds the mean squared norm from below. At least one pair of unitaries attains this mean or a larger value; taking square roots gives \eqref{eq:rotated}.
\end{proof}

\Cref{lem:faithfulness-operator-norm} provides the main distinction between our theorem and the Gibbs-state faithfulness result of Ref.~\cite[Lemma~III.6]{chen2025learning}: it establishes the norm bound for states with a small ADB defect. We split the local unitaries into low- and high-frequency components, control the high-frequency tails, and repeatedly apply \Cref{lem:low_freq_bound} to bound the errors in the low-frequency terms. Together with \Cref{lem:faithfulness-local-rotations}, these estimates yield the desired norm comparison for a suitable cutoff and sufficiently small ADB error.

\begin{lem}\label{lem:faithfulness-operator-norm} Under the hypotheses of \Cref{thm:meta-local-faithfulness}, for $\vB\neq0$, the $\vsigma$-weighted KMS norm and the usual operator norm of $\vB$ satisfy
\begin{align}
    \frac{\|\vB\|_{\vsigma}}{\|\vB\|}{\geq}e^{-\poly(\beta^{\pm 1})}.
\end{align}
\end{lem}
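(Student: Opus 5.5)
The plan is to combine \Cref{lem:faithfulness-local-rotations} with a comparison of the form
\begin{align}
 \|\vU^\dagger\vB\vV\|_{\vsigma}\leq e^{\poly(\beta^{\pm1})}\|\vB\|_{\vsigma}+\delta\,\|\vB\|,
 \label{eq:proposal-rotation-comparison}
\end{align}
where $\delta$ is small once the cutoff is large and $\epsilon_{\mathrm{ADB}}$ is small. Given \eqref{eq:proposal-rotation-comparison}, the conclusion follows quickly. For bounded support we have $\|\vB\|\leq2^{|\supp(\vB)|/2}\|\vB\|_{\vtau}$. Inserting \eqref{eq:rotated} then gives
\[
 \|\vB\|\leq 2^{3|\supp(\vB)|}\bigl(e^{\poly(\beta^{\pm1})}\|\vB\|_{\vsigma}+\delta\|\vB\|\bigr).
\]
Choosing $\delta\leq2^{-3|\supp(\vB)|-1}$ absorbs the last term into the left-hand side, which yields the lemma. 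Applying \eqref{eq:rotated} once more then gives \Cref{thm:meta-local-faithfulness}.

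For \eqref{eq:proposal-rotation-comparison}, first expand $\vU$ and $\vV$ in Pauli strings on $\mathsf{B}=\supp(\vB)$. There are at most $4^{|\mathsf{B}|}$ strings, with coefficients bounded by one, so it suffices to handle a single Pauli string $\vP$ of weight at most $|\mathsf{B}|$ acting on each side. Next decompose $\vP=\vP_{\leq\Omega}+\vP_{>\Omega}$ using the unnormalized filter \eqref{eq:low-frequency-part-commutation}, up to its scalar normalization. The high-frequency part is handled as in the proofs of \Cref{lem:low_freq_bound,lem:weak-high-frequency}: the product of Gaussian-shifted factors gives $\|\vP_{>\Omega}\|\leq e^{q\poly(\beta^{\pm1})}e^{-\Omega/(8d)}$. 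Since $\|\vX\|_{\vsigma}\leq\|\vX\|$, every term containing a high-frequency factor contributes at most $e^{q\poly}e^{-\Omega/(8d)}\|\vB\|$ to the right-hand side. This piece becomes part of $\delta\|\vB\|$.

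For the low-frequency factors, I would write the squared KMS norm as $\Tr(\vY^\dagger\sqrt{\vsigma}\vY\sqrt{\vsigma})$ with $\vY=\vP_{\leq\Omega}^\dagger\vB\vP'_{\leq\Omega}$. I would then use \Cref{lem:low_freq_bound} (and its adjoint) to commute each low-frequency factor through the adjacent $\sqrt{\vsigma}$, in the form
\[
 \sqrt{\vsigma}\vX_{\leq\Omega}\approx\vrho^{1/2}\vX_{\leq\Omega}\vrho^{-1/2}\sqrt{\vsigma}.
\]
Each such swap costs a Hilbert--Schmidt error $e^{q\poly}e^{O(\beta\Omega)}\epsilon_{\mathrm{ADB}}^{1/(32\beta d+2)}$. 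Paired with $\|\vB\|$ through Schatten H\"older and $\|\sqrt{\vsigma}\|_2=1$, this again contributes to $\delta\|\vB\|$. After the swaps, both rotations appear as outer factors around an expression of the form $\sqrt{\vsigma}\vB\,\cdot\,\sqrt{\vsigma}\vB^\dagger$. The outer factors are imaginary-time conjugates of low-frequency filtered operators. By \eqref{eq:imaginary-time-norm-basic} and \eqref{eq:product-outer-cutoff-bound}, their operator norms are at most $e^{\beta\Omega/2}e^{q\poly}$, which supplies the $e^{\poly}\|\vB\|_{\vsigma}$ term.

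The main obstacle is this last step: arranging the swaps so that the remaining trace is genuinely controlled by the KMS norm, $\Tr(\vB^\dagger\sqrt{\vsigma}\vB\sqrt{\vsigma})$, rather than by a GNS quantity such as $\|\sqrt{\vsigma}\vB\|_2$. The KMS pairing is symmetric in the two $\sqrt{\vsigma}$ slots, whereas ADB only moves operators across a full $\sqrt{\vsigma}$. My first attempt would move $\vP^\dagger$ leftward across one $\sqrt{\vsigma}$ and $\vP'$ rightward across the other, so that their conjugates sit on the two outer sides of $\vB^\dagger\sqrt{\vsigma}\vB\sqrt{\vsigma}$. I would then use $|\Tr(\vX\vZ\vW)|\leq\|\vX\|\|\vW\|\|\vZ\|_1$ together with positivity of $\sqrt{\vsigma}^{1/2}\vB\sqrt{\vsigma}\vB^\dagger\sqrt{\vsigma}^{1/2}$. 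If the factors cannot be separated this cleanly, a fallback is to compare GNS and KMS norms via one further ADB swap, which costs only another $\delta\|\vB\|$ error.

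The final step is to choose the parameters. I would first take $\Omega=\poly(\beta^{\pm1})$ large enough that the high-frequency error is below $2^{-3|\mathsf{B}|-2}$. I would then choose the threshold $\epsilon_{\mathrm{ADB}}<e^{-\poly(\beta^{\pm1})}$ so that the swap errors, which include the factor $e^{O(\beta\Omega)}$, meet the same budget. Both choices are independent of $n$.
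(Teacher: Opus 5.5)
Your overall architecture is the paper's: the Haar rotations of \Cref{lem:faithfulness-local-rotations}, a frequency split of the rotations, ADB swaps supplied by \Cref{lem:low_freq_bound}, and a H\"older estimate. The gap is the target. The linear comparison $\|\vU^\dagger\vB\vV\|_{\vsigma}\leq e^{\poly(\beta^{\pm1})}\|\vB\|_{\vsigma}+\delta\|\vB\|$ is not what these swaps give, and your ``first attempt'' misdescribes where the factors land.

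Cyclically, $\|\vY\|_{\vsigma}^2=\Tr\bigl(\vB^\dagger\,\vP_{\leq\Omega}\sqrt{\vsigma}\vP^\dagger_{\leq\Omega}\,\vB\,\vP'_{\leq\Omega}\sqrt{\vsigma}(\vP'_{\leq\Omega})^\dagger\bigr)$. Each $\sqrt{\vsigma}$ is wrapped by a rotation pair that is trapped between $\vB^\dagger$ and $\vB$. A swap moves one member of a pair across its $\sqrt{\vsigma}$, but the resulting product stays between the two copies of $\vB$. Concretely, moving $\vP^\dagger_{\leq\Omega}$ left and $\vP'_{\leq\Omega}$ right leaves the main term
\begin{align*}
 \Tr\Bigl((\vP'_{\leq\Omega})^\dagger\vB^\dagger\vP_{\leq\Omega}\,\vrho^{1/2}\vP^\dagger_{\leq\Omega}\vrho^{-1/2}\,\sqrt{\vsigma}\vB\sqrt{\vsigma}\,\vrho^{-1/2}\vP'_{\leq\Omega}\vrho^{1/2}\Bigr).
\end{align*}
Positivity of $\vsigma^{1/4}\vB\sqrt{\vsigma}\vB^\dagger\vsigma^{1/4}$ is therefore never available. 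What you can isolate is $\sqrt{\vsigma}\vB\sqrt{\vsigma}$, whose trace norm is at most $\|\vsigma^{1/4}\|_4^2\,\|\vsigma^{1/4}\vB\vsigma^{1/4}\|_2=\|\vB\|_{\vsigma}$, while $\vB^\dagger$ must be paid in operator norm. The outcome is $\|\vY\|_{\vsigma}^2\leq C\|\vB\|\,\|\vB\|_{\vsigma}+\delta\|\vB\|^2$, a geometric-mean bound rather than a linear one.

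A linear bound would require quarter-power swaps $\vsigma^{1/4}\vP\approx\vrho^{1/4}\vP\vrho^{-1/4}\vsigma^{1/4}$, or an operator inequality $\vP\sqrt{\vsigma}\vP^\dagger\leq C\sqrt{\vsigma}$. ADB provides neither; it only moves an operator across a full $\sqrt{\vsigma}$. Your GNS fallback has the same defect. A single swap produces a twisted, non-positive pairing such as $\Tr(\vB^\dagger\vrho^{1/2}\vB_{\leq\Omega}\vrho^{-1/2}\vsigma)$, not a comparison of norms.

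The repair is to accept the weaker bound, which still suffices. It gives
\begin{align*}
 \|\vU^\dagger\vB\vV\|_{\vsigma}\leq C\sqrt{\|\vB\|\,\|\vB\|_{\vsigma}}+\delta'\|\vB\|,
\end{align*}
with $C=e^{\poly(\beta^{\pm1})}$ absorbing $e^{\beta\Omega/2}$ and the $4^{|\supp(\vB)|}$ Pauli-expansion factor. Here $\delta'$ collects the high-frequency tails and the square roots of the swap errors. Combining this with \eqref{eq:rotated} and $\|\vB\|\leq2^{|\supp(\vB)|/2}\|\vB\|_{\vtau}$ gives $2^{-3|\supp(\vB)|}\leq\delta'+C\sqrt{\|\vB\|_{\vsigma}/\|\vB\|}$. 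Choose $\Omega$, and then the ADB threshold, so that $\delta'\leq2^{-3|\supp(\vB)|-1}$. Squaring yields $\|\vB\|_{\vsigma}/\|\vB\|\geq2^{-6|\supp(\vB)|-2}C^{-2}=e^{-\poly(\beta^{\pm1})}$.

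This is exactly the paper's argument. Its three-term decomposition of $\sqrt{\vsigma}\vU^\dagger_{\leq\Delta}\vB\vV_{\leq\Delta}\sqrt{\vsigma}$ is your pair of swaps, and its third term is bounded by the same $\vsigma^{1/4}$ H\"older estimate. The square root halves the ADB exponent to $1/(64\beta d+4)$.
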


\begin{proof}
The available ADB estimate controls \(\|\sqrt{\vsigma}\vP-\sqrt{\vrho}\vP\sqrt{\vrho^{-1}}\sqrt{\vsigma}\|_2\) but does not directly control \(\|\vsigma^{\frac{1}{4}}\vP-\vrho^{\frac{1}{4}}\vP\vrho^{-\frac{1}{4}}\vsigma^{\frac{1}{4}}\|_2\). We therefore use a different estimate from the Gibbs-state learning proof.

Choose $\vU$ and $\vV$ as in \Cref{lem:faithfulness-local-rotations}, and decompose them into low- and high-frequency components to obtain
\begin{align}
    \vU= \frac{1}{\sqrt{2\sigma\sqrt{2\pi}}} \left(\int_{|\omega|\leq \Delta}\hat{\vU}(\omega)d\omega + \int_{|\omega|\geq \Delta}\hat{\vU}(\omega)d\omega \right) := \vU_{\leq \Delta} + \vU_{\geq \Delta},\label{eq:faithfulness-frequency-split}
\end{align}
for a cutoff $\Delta>0$, and analogously for $\vV$. {The Pauli expansion of $\vU$ has coefficient sum at most $2^{|\vB|}$. Applying the imaginary-time and Gaussian-shift estimates from the proof of \Cref{lem:weak-high-frequency} gives}
\begin{align}
    \|\vU_{\geq \Delta}\|
    \leq \frac{1}{\sigma\sqrt{2\pi}} \int_{|\omega|\geq \Delta} 4^{|\vB|}e^{{-|\omega|/(4d) + \sigma^2/(16d^2)}} d\omega
    \leq \frac{4d}{\sigma} 4^{|\vB|} e^{{\sigma^2/(16d^2)-\Delta/(4d)}}.
    \label{eq:faithfulness-high-frequency}
\end{align}
Using the normalization in \eqref{eq:faithfulness-frequency-split}, the low-frequency component satisfies
\begingroup
\begin{align}
    \|\vU_{\leq\Delta}\|
    &=\frac1{2\pi}\left\|\int_{-\infty}^{\infty}\vU(t)e^{-\sigma^2t^2}\frac{2\sin(\Delta t)}{t}\dd t\right\|\leq\frac{\beta\Delta}{\sqrt\pi}.
    \label{eq:faithfulness-low-frequency}
\end{align}
\endgroup
{The same bounds hold for $\vV$. Using \eqref{eq:faithfulness-high-frequency}, $\|\vU_{\leq\Delta}\|\leq1+\|\vU_{\geq\Delta}\|$, and $\|\cdot\|_{\vsigma}\leq\|\cdot\|$, we obtain}

\begin{align}
    \|\vU^\dagger \vB \vV\|_{\vsigma}
    \leq \|\vU_{\leq \Delta}^\dagger \vB \vV_{\leq \Delta}\|_{\vsigma} +(2+ {4d\beta 4^{|\vB|}} e^{{1/(16d^2\beta^2)-\Delta/(4d)}}){4d\beta 4^{|\vB|}} e^{{1/(16d^2\beta^2)-\Delta/(4d)}} \|\vB\|.
    \label{eq:faithfulness-rotation-tail}
\end{align}

{By \eqref{eq:faithfulness-low-frequency}, the remaining low-frequency contribution is bounded by}
\begin{align}
    \|\vU_{\leq \Delta}^\dagger \vB \vV_{\leq \Delta}\|^2_{\vsigma}
    &=\tr\L(\vV_{\leq \Delta}^\dagger \vB^\dagger \vU_{\leq \Delta} \sqrt{\vsigma} \vU_{\leq \Delta}^\dagger \vB \vV_{\leq \Delta} \sqrt{\vsigma}\R) \nonumber\\
    &\leq \|\vV_{\leq \Delta}^\dagger \vB^\dagger \vU_{\leq \Delta}\|\cdot\|\sqrt{\vsigma} \vU_{\leq \Delta}^\dagger \vB \vV_{\leq \Delta} \sqrt{\vsigma}\|_1 \nonumber\\
    &\leq \|\vB\|{\frac{\beta^2\Delta^2}{\pi}}\cdot \|\sqrt{\vsigma} \vU_{\leq \Delta}^\dagger \vB \vV_{\leq \Delta} \sqrt{\vsigma}\|_1.
    \label{eq:faithfulness-low-kms}
\end{align}

We bound the final trace norm using the decomposition
\begin{align}
    &\sqrt{\vsigma} \vU_{\leq \Delta}^\dagger \vB \vV_{\leq \Delta} \sqrt{\vsigma} =\L(\sqrt{\vsigma} \vU_{\leq \Delta}^\dagger-\sqrt{\vrho} \vU_{\leq \Delta}^\dagger\sqrt{\vrho^{-1}}\sqrt{\vsigma}\R)\vB \vV_{\leq \Delta} \sqrt{\vsigma} \nonumber\\
    &\quad +\sqrt{\vrho} \vU_{\leq \Delta}^\dagger\sqrt{\vrho^{-1}}\sqrt{\vsigma}\vB\L(\vV_{\leq \Delta} \sqrt{\vsigma}-\sqrt{\vsigma}\sqrt{\vrho^{-1}}\vV_{\leq \Delta} \sqrt{\vrho}\R) \nonumber\\
    &\quad +\sqrt{\vrho}\vU_{\leq \Delta}^\dagger\sqrt{\vrho^{-1}}\sqrt{\vsigma}\vB\sqrt{\vsigma}\sqrt{\vrho^{-1}}\vV_{\leq \Delta} \sqrt{\vrho}.
\label{eq:faithfulness-three-terms}
\end{align}
By the triangle inequality $\|\vA+\vB+\vC\|_1\leq\|\vA\|_1+\|\vB\|_1+\|\vC\|_1$, it suffices to estimate the three terms separately.

The first term satisfies
\begin{align}
    &\|\L(\sqrt{\vsigma} \vU_{\leq \Delta}^\dagger-\sqrt{\vrho} \vU_{\leq \Delta}^\dagger\sqrt{\vrho^{-1}}\sqrt{\vsigma}\R)\vB \vV_{\leq \Delta} \sqrt{\vsigma}\|_1\nonumber\\
    &\leq \|\L(\sqrt{\vsigma} \vU_{\leq \Delta}^\dagger-\sqrt{\vrho} \vU_{\leq \Delta}^\dagger\sqrt{\vrho^{-1}}\sqrt{\vsigma}\R)\|_2\cdot\|\vB\|\cdot\|\vV_{\leq \Delta}\|\cdot\|\sqrt{\vsigma}\|_2.
\label{eq:faithfulness-first-holder}
\end{align}
The nonlocal filtered factor is controlled by the low-frequency ADB estimate in \Cref{lem:low_freq_bound}.

The Pauli coefficients of $\vU$ have absolute sum at most $2^{|\vU|}\leq4^{|\vU|}$. Apply \Cref{lem:low_freq_bound} termwise, using $(\vU_{\leq\Delta})^\dagger=(\vU^\dagger)_{\leq\Delta}$ and absorbing the normalization factor $(2\sigma\sqrt{2\pi})^{-1/2}$ into $e^{\poly(\beta^{\pm1})}$. This gives
\begin{align}
    \|\sqrt{\vsigma} \vU_{\leq \Delta}^\dagger-\sqrt{\vrho} \vU_{\leq \Delta}^\dagger\sqrt{\vrho^{-1}}\sqrt{\vsigma}\|_2\leq 4^{|\vU|}\epsilon_{\mathrm{ADB}}^{\frac{1}{32\beta d+2}}\cdot e^{\frac{8\beta^2 d+1/4}{16\beta d+1}\Delta}e^{\poly(\beta^{\pm 1})}.
    \label{eq:faithfulness-filtered-adb}
\end{align}
{Substituting \eqref{eq:faithfulness-low-frequency} and \eqref{eq:faithfulness-filtered-adb} into \eqref{eq:faithfulness-first-holder} yields}
\begin{align}
    &\|\L(\sqrt{\vsigma} \vU_{\leq \Delta}^\dagger-\sqrt{\vrho} \vU_{\leq \Delta}^\dagger\sqrt{\vrho^{-1}}\sqrt{\vsigma}\R)\vB \vV_{\leq \Delta} \sqrt{\vsigma}\|_1\nonumber\\
    &\leq \|\L(\sqrt{\vsigma} \vU_{\leq \Delta}^\dagger-\sqrt{\vrho} \vU_{\leq \Delta}^\dagger\sqrt{\vrho^{-1}}\sqrt{\vsigma}\R)\|_2\cdot\|\vB\|\cdot\|\vV_{\leq \Delta}\|\cdot\|\sqrt{\vsigma}\|_2\nonumber\\
    &\leq 4^{|\vU|+|\vV|}\epsilon_{\mathrm{ADB}}^{\frac{1}{32\beta d+2}}\cdot e^{\frac{8\beta^2 d+1/4}{16\beta d+1}\Delta}e^{\poly(\beta^{\pm 1})}{\frac{\beta\Delta}{\sqrt\pi}}\|\vB\|.
\label{eq:faithfulness-first-term}
\end{align}
And the second term satisfies
\begin{align}
    &\|\sqrt{\vrho} \vU_{\leq \Delta}^\dagger\sqrt{\vrho^{-1}}\sqrt{\vsigma}\vB\L(\vV_{\leq \Delta} \sqrt{\vsigma}-\sqrt{\vsigma}\sqrt{\vrho^{-1}}\vV_{\leq \Delta} \sqrt{\vrho}\R)\|_1\nonumber\\
    &\leq  \|\sqrt{\vrho} \vU_{\leq \Delta}^\dagger\sqrt{\vrho^{-1}}\|\|\sqrt{\vsigma}\|_2\cdot\|\vB\|\cdot\|\L(\vV_{\leq \Delta} \sqrt{\vsigma}-\sqrt{\vsigma}\sqrt{\vrho^{-1}}\vV_{\leq \Delta} \sqrt{\vrho}\R)\|_2\nonumber\\
    &\leq  4^{|\vU|+|\vV|}e^{\beta \Delta/2}{\frac{e^{1/4}}{\sqrt\pi}}\epsilon_{\mathrm{ADB}}^{\frac{1}{32\beta d+2}}\cdot e^{\frac{8\beta^2 d+1/4}{16\beta d+1}\Delta}e^{\poly(\beta^{\pm 1})}\|\vB\|.
\label{eq:faithfulness-second-term}
\end{align}
For the third term, we have
\begin{align}
    &\|\sqrt{\vrho}\vU_{\leq \Delta}^\dagger\sqrt{\vrho^{-1}}\sqrt{\vsigma}\vB\sqrt{\vsigma}\sqrt{\vrho^{-1}}\vV_{\leq \Delta} \sqrt{\vrho}\|_1\nonumber\\
    &\leq \|\sqrt{\vrho}\vU_{\leq \Delta}^\dagger\sqrt{\vrho^{-1}}\|\cdot \|\vsigma^{1/4}\|_4\cdot \|\vsigma^{1/4}\vB\vsigma^{1/4}\|_2\cdot \|\vsigma^{1/4}\|_4\cdot \|\sqrt{\vrho^{-1}}\vV_{\leq \Delta} \sqrt{\vrho}\|\nonumber\\
    &\leq  4^{|\vU|+|\vV|}\|\vB\|_{\vsigma}e^{\beta \Delta}{\frac{e^{1/2}}{\pi}}.
\label{eq:faithfulness-third-term}
\end{align}
{Combining \eqref{eq:faithfulness-three-terms} with \eqref{eq:faithfulness-first-term}, \eqref{eq:faithfulness-second-term}, and \eqref{eq:faithfulness-third-term}, and enlarging the polynomials, gives}

\begin{align}
    \|\sqrt{\vsigma} \vU_{\leq \Delta}^\dagger \vB \vV_{\leq \Delta} \sqrt{\vsigma}\|_1
    &\leq 4^{|\vU|+|\vV|}e^{\poly(\beta^{\pm 1})\Delta}\epsilon_{\mathrm{ADB}}^{\frac{1}{32\beta d+2}}\cdot e^{\poly(\beta^{\pm 1})}\|\vB\|\nonumber\\
    &\quad+4^{|\vU|+|\vV|}\|\vB\|_{\vsigma}e^{\beta \Delta}{\frac{e^{1/2}}{\pi}}.
    \label{eq:faithfulness-trace-norm}
\end{align}
{Applying $\sqrt{a+b+c}\leq\sqrt a+\sqrt b+\sqrt c$ to \eqref{eq:faithfulness-low-kms} and \eqref{eq:faithfulness-trace-norm}, and then using \eqref{eq:faithfulness-rotation-tail}, gives}
\begin{align}
    \|\vU^\dagger \vB \vV\|_{\vsigma}
    &\leq (2+ {4d\beta 4^{|\vB|}} e^{{1/(16d^2\beta^2)-\Delta/(4d)}}){4d\beta 4^{|\vB|}} e^{{1/(16d^2\beta^2)-\Delta/(4d)}}\|\vB\| \nonumber\\
    &\quad +2^{|\vU|+|\vV|}e^{\poly(\beta^{\pm 1})\Delta}\epsilon_{\mathrm{ADB}}^{\frac{1}{64\beta d+4}}\cdot e^{\poly(\beta^{\pm 1})}\|\vB\| \nonumber\\
    &\quad +2^{|\vU|+|\vV|}\|\vB\|^{1/2}_{\vsigma}\|\vB\|^{1/2}e^{\beta \Delta/2}{\frac{e^{1/4}\beta\Delta}{\pi}}.
\label{eq:faithfulness-rotated-bound}
\end{align}

{Combining \eqref{eq:rotated} and \eqref{eq:faithfulness-rotated-bound}, dividing by $\|\vB\|$, and using $\|\vB\|_{\vtau}\geq2^{-|\vB|/2}\|\vB\|$, gives}
\begin{align}
    \frac{1}{2^{3|\vB|}}\leq \frac{\|\vB\|_{\vtau}}{2^{2|\vB|}\|\vB\|} &\leq (2+ {4d\beta 4^{|\vB|}} e^{{1/(16d^2\beta^2)-\Delta/(4d)}}){4d\beta 4^{|\vB|}} e^{{1/(16d^2\beta^2)-\Delta/(4d)}} \nonumber\\
    &\quad +2^{|\vU|+|\vV|}e^{\poly(\beta^{\pm 1})\Delta}\epsilon_{\mathrm{ADB}}^{\frac{1}{64\beta d+4}}\cdot e^{\poly(\beta^{\pm 1})} \nonumber\\
    &\quad +2^{|\vU|+|\vV|}{\frac{e^{1/4}\beta\Delta}{\pi}}\frac{\|\vB\|^{1/2}_{\vsigma}}{\|\vB\|^{1/2}}e^{\beta \Delta/2}.
\label{eq:faithfulness-combined-bound}
\end{align}

\begingroup
Choose $\Delta=\poly(\beta^{\pm1})$ sufficiently large and assume a sufficiently small upper bound on $\epsilon_{\mathrm{ADB}}$ so that
\begin{align}
    2^{-3|\vB|-1}
    &\leq2^{|\vU|+|\vV|}\frac{e^{1/4}\beta\Delta}{\pi}
    \frac{\|\vB\|_{\vsigma}^{1/2}}{\|\vB\|^{1/2}}e^{\beta\Delta/2}.
    \label{eq:faithfulness-final-gap}
\end{align}
Since $|\vU|,|\vV|\leq|\vB|$ are bounded and $\Delta=\poly(\beta^{\pm1})$, squaring and rearranging \eqref{eq:faithfulness-final-gap} gives $\|\vB\|_{\vsigma}/\|\vB\|\geq e^{-\poly(\beta^{\pm1})}$. This proves the lemma. Enlarging a single polynomial suffices for both the ADB threshold and the norm bound.

\endgroup
\end{proof}

\begin{proof}[Proof of \Cref{thm:meta-local-faithfulness}]
Apply \Cref{lem:faithfulness-operator-norm} and use $\|\vB\|_{\vtau}\leq\|\vB\|$; the conclusion also holds for $\vB=0$.
\end{proof}

\endgroup

\subsection{Putting everything together}
\label{subsec:refinement}
\label{subsec:coefficient-recovery}
\label{subsec:learning-summary}

Together, \Cref{lem:tested-identification,thm:meta-local-faithfulness} provide measurable tests that control the commutator error $[\vA,\vH-\vH']$ for single-site Paulis $\vA$, and hence the errors in the local Hamiltonian coefficients. Motivated by the iterative learning algorithm of Ref.~\cite[Algorithm~IV.3]{chen2025learning}, we use these tests to search over local parameter corrections, selecting the candidate with the smallest maximum test value over the auxiliary net. \Cref{lem:robustness} controls the errors from spatial truncation and parameter discretization, allowing us to use finite local nets. \Cref{lem:good-net-candidate} shows that each net contains a candidate with small test values for every auxiliary guess; minimizing the largest measured test value therefore also gives a small value, up to measurement error. \Cref{lem:local-candidate-accuracy} then guarantees that the selected local coefficients have error at most $\eta/{2}$, under the stated ADB and measurement error conditions. Updating each site thus reduces the full coefficient error by {a factor of two}. At fixed temperature and locality parameters, the local net sizes are independent of $n$, and the contraction requires only logarithmically many rounds to reach a target accuracy above the ADB error floor. These facts justify the iterative algorithm and explain the efficiency of its local search. We now specify the parameters and present one refinement step.

\makeatletter
\MT@newlabel{eq:local-net-small-tests}
\MT@newlabel{{eq:local-net-small-tests}}
\makeatother
\begingroup
At each refinement step, let $\vH_0$ be the current estimate with $\|h_0-h\|_\infty\leq\eta$ and $0<\eta\leq1$. We start with $\vH_0=0$ and $\eta=1$, since $|h_\gamma|\leq1$. Choose the frequency cutoffs as in \Cref{lem:tested-identification}, with $a_\beta$ sufficiently small, and set
\begin{align}
 \Omega_1,\Omega_2&=\poly(\beta^{\pm1}), & \ell_0&=\log(e/\eta)\poly(\beta^{\pm1}),\nonumber\\
 \ell_1&=\poly(\beta^{\pm1}), & \kappa_0&=e^{-\poly(\beta^{\pm1})}.
 \label{eq:learning-radii}
\end{align}
Here $\ell_1$ is the radius for searching over coefficient corrections, and $\ell_0$ is the radius retained when evaluating the test dynamics. Round the radii up to integers and choose their polynomials so that $\ell_0\geq\ell_1$, including the fixed enlargement needed to contain all test supports.

For each site, form a coefficient $\ell_\infty$ net of normalized corrections in $[-1,1]$ with mesh at most $\kappa_0$. Include the endpoints of each admissible interval so that the corrected Hamiltonian still has coefficients in $[-1,1]$. Multiplying a correction by $\eta$ gives physical mesh at most $\eta\kappa_0$; outside the search region, retain $\vH_0$. Thus every full candidate is within coefficient distance $2\eta\leq2$ of $\vH$, as required by \Cref{lem:tested-identification}. Assign each Pauli term to one fixed site in its support for recording its update.

\endgroup

\begin{algorithm}[H]
\caption{One local refinement step}
\label{alg:iterative-local-learning}
\begin{algorithmic}[1]
\Require Model of \Cref{defn:local-Hamiltonian}, copies of $\vsigma$, current estimate $\vH_0$ with $\|h_0-h\|_\infty\leq\eta\leq{1}$, measurement accuracy $\delta_Q$, and failure budget $\delta$.
\Ensure {Updated coefficients with $\|h_{\mathrm{new}}-h\|_\infty\leq\eta/{2}$, under the error conditions of \Cref{lem:local-candidate-accuracy}.}
\For{each site $i\in\Lambda$ belonging to an allowed interaction term}
 \State Let $\mathcal A_i=\{\vX_i,\vY_i,\vZ_i\}$ and $\Gamma_i=\{\gamma:i\in\supp(\vP_\gamma)\}$.
 \State Form a $\kappa_0$-net $\mathcal N_i$ of normalized corrections within radius $\ell_1$ of $i$, {in coefficient $\ell_\infty$ distance and covering all admissible corrections as above}.
 \For{each $\vW,\vU\in\mathcal N_i$}
  \State Set $\vG_{\vW}=\vH_0+\eta\vW$ and $\vK_{\vU}=\vH_0+\eta\vU$ within the allowed model.
  \State Truncate their dynamics to radius $\ell_0$ around the corresponding test supports.
  \For{each $\vA\in\mathcal A_i$ and $\gamma\in\Gamma_i$}
   \State Estimate $Q([\vA,\vP_\gamma],\vA;(\vG_{\vW})_{\ell_0},(\vK_{\vU})_{\ell_0})$ to accuracy $\delta_Q$.
   \State Denote the estimate by $\widehat Q_i(\vA,\gamma;\vW,\vU)$. Allocate failure probabilities to total at most $\delta$.
  \EndFor
 \EndFor
 \State Choose $\vU_i^*\in\arg\min_{\vU\in\mathcal N_i}\max_{\vW\in\mathcal N_i,\,\vA\in\mathcal A_i,\,\gamma\in\Gamma_i}|\widehat Q_i(\vA,\gamma;\vW,\vU)|$.
 \For{each term $\gamma$ assigned to site $i$}
  \State Set $h_{\mathrm{new},\gamma}=h_{0,\gamma}+\eta\,u^*_{i,\gamma}$.
 \EndFor
\EndFor
\State \Return $h_{\mathrm{new}}$.
\end{algorithmic}
\end{algorithm}
Compatible local measurements may be scheduled in parallel. Their statistical errors are included in the budgets below.

We first establish the robustness of the observable under spatial truncation and local perturbations. This lemma shows that the errors from truncating the Hamiltonian and searching on a finite parameter net are controlled by the chosen parameters.

The proof repeatedly invokes the Lieb--Robinson bounds collected in \Cref{lem:basic-lr}. Compare also the corresponding Gibbs-state argument in Ref.~\cite{chen2025learning}.

\begin{lem}[Robustness of the identifiability observable]\label{lem:robustness}
{For local Hamiltonians as in \Cref{defn:local-Hamiltonian}, a single-site Pauli $\vA$, and a fixed-support test $\vO$ with $\|\vO\|\leq1$, the observable $Q(\vO,\vA;\vG,\vH')$ obeys the following stability bounds. }
\begin{align}
    {\bigl|}Q(\vO,\vA;\vG,\vH')-Q(\vO,\vA;\vG_l,\vH'_l){\bigr|}\leq e^{\poly(\beta^{\pm 1})}\cdot {e^{-l/\poly(\beta^{\pm1})}}.
\label{eq:robust-truncation}
\end{align}
If $\vG-\vG'$ is supported within distance $l$ of $\supp(\vO)$, $\vK-\vK'$ is supported within distance $l$ of $\supp(\vA)$, and {each Pauli coefficient changes by at most $\eta$}, then
\begin{align}
    {\bigl|}Q(\vO,\vA;\vG,\vK)-Q(\vO,\vA;\vG',\vK'){\bigr|}\leq \eta {(1+l)^D} e^{\poly(\beta^{\pm 1})}.
\label{eq:robust-local-perturbation}
\end{align}
Finally, suppose $\vG$ and $\vG'$ agree inside radius $l$ and {each Pauli coefficient changes by at most $\eta$} outside that radius, and likewise for $\vK$ and $\vK'$. Then
\begin{align}
    {\bigl|}Q(\vO,\vA;\vG,\vK)-Q(\vO,\vA;\vG',\vK'){\bigr|}\leq \eta e^{\poly(\beta^{\pm 1})}{e^{-l/\poly(\beta^{\pm1})}}.
\label{eq:robust-remote-perturbation}
\end{align}
\end{lem}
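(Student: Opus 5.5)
The plan is to work directly from the explicit time-domain representation \eqref{eq:measurable-Q}. There $Q(\vO,\vA;\vG,\vH')$ is an absolutely convergent triple integral. Its scalar kernels satisfy $|g(t)|\lesssim e^{-2\pi|t|/\beta}$, while $|w_{\Omega_1}(t_1)|$ and $|h_\pm(t_2)|$ are Gaussian in their time variables with prefactors $e^{\poly(\beta^{\pm1})}$, because the cutoffs $\Omega_1,\Omega_2$ are polynomial. The integrand is $\Tr[\vO^\dagger_{\vG}(t-t_1)\vA_{\vH'}(t+t_2)\vsigma]$ and its mirror, so every difference of two $Q$'s is bounded by
\[
\int|g\,w_{\Omega_1}|\,(|h_+|+|h_-|)\Bigl(\|\vO_{\vG}(s)-\vO_{\vG'}(s)\|+\|\vA_{\vK}(u)-\vA_{\vK'}(u)\|\Bigr)
\]
with $s=t-t_1$ and $u=t+t_2$. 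This uses $\|\vsigma\|_1=1$, unitary invariance of the norms, and a telescoping split of the product difference. All three estimates therefore reduce to real-time Lieb--Robinson bounds on single operators, integrated against a kernel that decays at least like $e^{\poly(\beta^{\pm1})}e^{-c(|s|+|u|)/\beta}$.

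For \eqref{eq:robust-truncation}, apply \Cref{lem:basic-lr} to $\vO$ (support at most $k$) under $\vG$ versus $\vG_l$, and to $\vA$ under $\vH'$ versus $\vH'_l$. Split each time integral at $|s|,|u|\le l/(4ed)$. Inside, the factor $(2d|t|)^l/l!$ is at most $2^{-l}$. Outside, use the trivial bound $2$ and the kernel tail $e^{-c l/(d\beta)}$. Together these give $e^{\poly}e^{-l/\poly}$.

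For \eqref{eq:robust-local-perturbation}, use \Cref{lem:basic-lr-perturbation} with $\|\vf'_\delta-\vf_\delta\|\le\eta$. Only the $O((1+l)^D)$ changed terms within distance $l$ contribute, and each contributes at most $\min\{2|t|,\dots\}\le 2|t|$. Integrating $|t|$ against the kernel costs $e^{\poly}$, which gives $\eta(1+l)^De^{\poly}$.

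For \eqref{eq:robust-remote-perturbation}, use the same lemma, but now every changed term lies at distance $r_\delta\ge l$. Sum over shells, with $S(r,\cdot)\lesssim(1+r)^{D}$ terms in shell $r$, each contributing $\eta\min\{2|t|,(2d|t|)^{r+1}/(r+1)!\}$. Then split time at $|t|\le r/(4ed)$ shell by shell, exactly as in \eqref{eq:spatial-truncation-inside-conjugation}. Each shell contributes $\eta e^{\poly}(e^{-c_1r}+e^{-c_2r/(d\beta)})$, and the sum over $r\ge l$ gives the claimed $\eta e^{\poly}e^{-l/\poly}$.

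The main obstacle is making sure the kernel decay dominates the time growth. In the shell argument, the outer region contributes $\int_{|t|>r/(4ed)}|t|e^{-c|t|/\beta}$, and the polynomial shell count must be absorbed into the exponential without any $n$-dependence; this works because $S(r,\cdot)$ grows only polynomially. A further check is that no Gibbs conjugation appears in $Q$, since the $h_\pm$ kernels already absorbed it. This is why these bounds, unlike those in \Cref{lem:kms-gns-perturbation-truncation}, need no outer frequency-cutoff step.
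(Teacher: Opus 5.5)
Your proposal is correct and matches the paper's proof essentially step by step. Both arguments reduce every difference of $Q$'s to real-time operator differences integrated against the fixed kernels $|g\,w_{\Omega_1}|(|h_+|+|h_-|)$, then use \Cref{lem:basic-lr} with a time split for \eqref{eq:robust-truncation}, a bound linear in $|t|$ times the $(1+l)^D$ term count for \eqref{eq:robust-local-perturbation}, and \Cref{lem:basic-lr-perturbation} shell by shell for \eqref{eq:robust-remote-perturbation}. The one cosmetic difference is that the paper gets the linear estimate directly from Duhamel's formula, $\|\vA_{\vK}(u)-\vA_{\vK'}(u)\|\leq 2|u|\,\|\vK-\vK'\|$, rather than from the $2|t|$ branch of the perturbative Lieb--Robinson lemma; the resulting bound is the same.
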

\begin{proof}
In \eqref{eq:measurable-Q}, change one Hamiltonian at a time and use $\|\vsigma\|_1=1$ and preservation of operator norms under real-time evolution. Since $\|\vO\|\leq1$ and $\|\vA\|=1$,
\begin{align*}
 \bigl|Q(\vO,\vA;\vG,\vK)-Q(\vO,\vA;\vG',\vK')\bigr|
 \leq&\frac{1}{4\pi\sigma\sqrt{2\pi}}\int_{\mathbb R^3}
 \Bigl(\|\vO_{\vG}(t-t_1)-\vO_{\vG'}(t-t_1)\|
 +\|\vA_{\vK}(t+t_2)-\vA_{\vK'}(t+t_2)\|\Bigr)\\
 &|g(t)w_{\Omega_1}(t_1)|
 \bigl(|h_+(t_2)|+|h_-(t_2)|\bigr)\,\dd t\dd t_1\dd t_2.
\end{align*}
The scalar kernels have exponential or Gaussian decay, and their zeroth and first time moments are bounded by $e^{\poly(\beta^{\pm1})}$.

For \eqref{eq:robust-truncation}, take $\vG'=\vG_l$, $\vK=\vH'$, and $\vK'=\vH'_l$. On $|t|+|t_1|+|t_2|\leq l/(4ed)$, \Cref{lem:basic-lr} and the factorial bound give an error $\lesssim2^{-l}$ for each evolved operator, with the fixed support size absorbed into the constant. On the complementary region, use the trivial norm bound of two and the kernel tails. Integrating gives $e^{\poly(\beta^{\pm1})}e^{-l/\poly(\beta^{\pm1})}$.

For \eqref{eq:robust-local-perturbation}, Duhamel's formula gives
\begin{align*}
 \|\vA_{\vK}(t+t_2)-\vA_{\vK'}(t+t_2)\|
 \leq2|t+t_2|\|\vK-\vK'\|
 \leq C_\beta\eta(1+l)^D|t+t_2|.
\end{align*}
The last inequality counts the allowed terms in a radius-$l$ patch. The same estimate holds for the $\vO$ difference with $|t-t_1|$. The first time moments of the kernels therefore give the claimed bound.

For \eqref{eq:robust-remote-perturbation}, apply \Cref{lem:basic-lr-perturbation} to the changed terms outside radius $l$:
\begin{align*}
 \|\vA_{\vK}(t)-\vA_{\vK'}(t)\|
 \lesssim\frac{\eta}{d}\sum_{\ell=l}^{\infty}S(\ell,\vA)
 \min\left\{2|t|,\frac{(2d|t|)^{\ell+1}}{(\ell+1)!}\right\}.
\end{align*}
Here $S(\ell,\vA)$ is the number of allowed interaction terms at graph distance $\ell$ from $\supp(\vA)$. For each shell, split the kernel integral at $|t|+|t_1|+|t_2|=(\ell+1)/(4ed)$. The short-time factorial bound and the complementary kernel tails give exponential decay in $\ell/\poly(\beta^{\pm1})$. Summing the polynomial shell counts retains the factor $\eta$ and gives the stated decay in $l$. The $\vO$ difference is treated identically, proving the final bound.
\end{proof}

The following lemma shows that the local net always contains a candidate whose test values are uniformly small over all auxiliary guesses. It bounds these values by the ADB contribution and the errors from discretization and spatial truncation.

    \begin{lem}[Existence of a good local-net candidate]\label{lem:good-net-candidate}
{For each site $i$ and the local net above, there exists one candidate $\vK'_{\ell_1}=\vK_{\vU}$, with $\vU\in\mathcal N_i$, such that, simultaneously for every auxiliary $\vG'_{\ell_1}=\vG_{\vW}$ with $\vW\in\mathcal N_i$, every $\vA\in\mathcal A_i$, and every $\gamma\in\Gamma_i$,}
\begin{align}
\bigl|Q([\vA,\vP_{\gamma}],\vA;{(\vG'_{\ell_1})_{\ell_0},(\vK'_{\ell_1})_{\ell_0}})\bigr|
\leq e^{\poly(\beta^{\pm 1})}\mathsf{ADB}_{\vA}[\vsigma]^{\frac{1}{32\beta d+2}}+\eta e^{-\poly(\beta^{\pm 1})}.
\label{eq:good-net-candidate-bound}
\end{align}
    \end{lem}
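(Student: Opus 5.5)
The plan is to take as the good candidate the net point closest to the truth, and to reduce its test values to the exact-candidate bound of \Cref{lem:q-exact-candidate} through the robustness estimates of \Cref{lem:robustness}. Since $\|h_0-h\|_\infty\leq\eta$, the normalized true correction $(h-h_0)/\eta$ restricted to the radius-$\ell_1$ patch around $i$ lies in $[-1,1]$ coefficientwise. Because the net has mesh $\kappa_0$ and contains the admissible endpoints, there is a point $\vU\in\mathcal N_i$ within $\ell_\infty$ distance $\kappa_0$ of it. Set $\vK'_{\ell_1}=\vK_{\vU}=\vH_0+\eta\vU$. This Hamiltonian differs from $\vH$ by at most $\eta\kappa_0$ per coefficient inside the patch, and by at most $\eta$ per coefficient outside it, where it equals $\vH_0$. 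The auxiliary $\vG_{\vW}$ is arbitrary; completeness must therefore hold uniformly in $\vG$, which is exactly what \Cref{lem:q-exact-candidate} provides.

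I would then chain the estimates. First, by \eqref{eq:robust-truncation}, remove the $\ell_0$ truncations of both Hamiltonians. The cost is $e^{\poly(\beta^{\pm1})}e^{-\ell_0/\poly(\beta^{\pm1})}$, and the choice $\ell_0=\log(e/\eta)\poly(\beta^{\pm1})$ makes this at most $\eta e^{-\poly(\beta^{\pm1})}$. Second, replace $\vK_{\vU}$ by the hybrid $\vK^\sharp$, which equals $\vK_{\vU}$ inside radius $\ell_1$ and $\vH$ outside. Here \eqref{eq:robust-remote-perturbation} applies, since the coefficient change outside the patch is at most $\eta$, and gives $\eta e^{\poly(\beta^{\pm1})}e^{-\ell_1/\poly(\beta^{\pm1})}$. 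Third, replace $\vK^\sharp$ by $\vH$. By \eqref{eq:robust-local-perturbation}, with coefficient change $\eta\kappa_0$ supported within $\ell_1$ of $\supp(\vA)$, the cost is $\eta\kappa_0(1+\ell_1)^De^{\poly(\beta^{\pm1})}$. Throughout these steps the auxiliary $\vG_{\vW}$ is held fixed.

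At the end I apply \Cref{lem:q-exact-candidate} with $\vO=[\vA,\vP_\gamma]$, whose norm is at most $2$ and whose support is bounded. This gives $e^{\poly(\beta^{\pm1})}\mathsf{ADB}_{\vA}[\vsigma]^{1/(32\beta d+2)}$ uniformly in the auxiliary Hamiltonian.

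The main obstacle is ordering the parameter choices so that every error is $\eta e^{-\poly}$ with a \emph{prescribed} polynomial, which is what the later contraction step needs. I would fix the target polynomial first, then choose $\ell_1$ large enough to absorb the remote-perturbation prefactor, and then choose $\kappa_0=e^{-\poly}$ small enough to beat $(1+\ell_1)^De^{\poly}$. The polynomials in $\eqref{eq:robust-local-perturbation}$ and $\eqref{eq:robust-remote-perturbation}$ must be independent of $\ell_1$ and $\kappa_0$; this holds because they come only from kernel moments and the fixed cutoffs.

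One point needs care: all the intermediate Hamiltonians must remain allowed local Hamiltonians with coefficients in $[-1,1]$, so that \Cref{lem:robustness} and \Cref{lem:q-exact-candidate} apply. The hybrid uses only true or candidate coefficients, which lie in $[-1,1]$ by the endpoint convention, so this is satisfied. Since the single candidate $\vU$ was chosen independently of $\vW$, $\vA$ and $\gamma$, the bound \eqref{eq:good-net-candidate-bound} holds simultaneously for all of them.
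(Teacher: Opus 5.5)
Your proposal is correct and follows the same argument as the paper. You choose the net point nearest to $\vH$ on the search region, chain the three bounds of \Cref{lem:robustness}, and finish with \Cref{lem:q-exact-candidate}; the three bounds are the $\ell_0$-truncation bound, the remote-perturbation bound through a hybrid, and the local-perturbation bound with the factor $(1+\ell_1)^D e^{\poly(\beta^{\pm1})}$ absorbed by choosing $\kappa_0$ after $\ell_1$. The only difference is the order: you remove the $\ell_0$ truncations first and build the hybrid on the untruncated Hamiltonians, whereas the paper does the local and remote comparisons on the $\ell_0$-truncated Hamiltonians and removes the truncation last.
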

\emph{Proof.} Choose $\vU'_{\ell_1}$ so that $\vK'_{\ell_1}=\vH_0+\eta\vU'_{\ell_1}$ is the closest net point to $\vH$ on the search region. Let $\vH_{\ell_1}$ and $\vH_{\ell_0}$ denote the truncations of $\vH$ to radii $\ell_1$ and $\ell_0$, respectively. {On the search region, the coefficient-wise discrepancy between $\vH$ and $\vK'_{\ell_1}$ is at most $\eta\kappa_0$; the subscript $\ell_1$ indicates the support of the correction, not a truncation of the full background $\vH_0$.} Define the hybrid Hamiltonian $\vH_K$ as follows.
    \begin{itemize}
        \item For each Pauli term $\vP_{\gamma}$ {retained in $\vH_{\ell_1}$}, its coefficient in $\vH_K$ equals the corresponding coefficient in $\vH_{\ell_1}$.
        \item For {terms retained in $\vH_{\ell_0}$ but outside the search region, the coefficient} in $\vH_K$ equals that in $\vK'_{\ell_1}$, and hence that in $\vH_0$.
        \item For Pauli terms {outside the measurement region}, the coefficient in $\vH_K$ is zero.
    \end{itemize}
The second bound in \Cref{lem:robustness} gives
    \begin{align}
        {\bigl|}Q([\vA,\vP_{\gamma}],\vA;{(\vG'_{\ell_1})_{\ell_0},(\vK'_{\ell_1})_{\ell_0}})-Q([\vA,\vP_{\gamma}],\vA;{(\vG'_{\ell_1})_{\ell_0}},\vH_K){\bigr|}\leq {\eta} e^{-\poly(\beta^{\pm 1})},
    \end{align}
because {$(\vK'_{\ell_1})_{\ell_0}$ and $\vH_K$} coincide outside radius $\ell_1$ and {their coefficients differ by at most $\eta\kappa_0$ inside it. Choose $\kappa_0$ small enough to absorb the factor $(1+\ell_1)^D e^{\poly(\beta^{\pm1})}$ in the robustness bound.}

Next, comparing $\vH_K$ with the radius-$\ell_0$ truncation of $\vH$ gives
    \begin{align}
        {\bigl|}Q([\vA,\vP_{\gamma}],\vA;{(\vG'_{\ell_1})_{\ell_0}},\vH_K)-Q([\vA,\vP_{\gamma}],\vA;{(\vG'_{\ell_1})_{\ell_0}},\vH_{\ell_0}){\bigr|}\leq {\eta} e^{-\poly(\beta^{\pm 1})}.
    \end{align}
The two Hamiltonians agree within radius $\ell_1$ and {differ coefficient-wise by at most $\eta$} outside that radius, so the third bound in \Cref{lem:robustness} applies.

Finally, because $\ell_0={\log(e/\eta)}\poly(\beta^{\pm 1})$, the first bound in \Cref{lem:robustness} {compares $Q([\vA,\vP_\gamma],\vA;(\vG'_{\ell_1})_{\ell_0},\vH_{\ell_0})$ with $Q([\vA,\vP_\gamma],\vA;\vG'_{\ell_1},\vH)$} at cost ${\eta} e^{-\poly(\beta^{\pm 1})}$.

Combining these three comparisons with \Cref{lem:q-exact-candidate} proves the lemma. \hfill{$\square$}

The next lemma shows that a local candidate with small test values for every auxiliary guess has coefficients close to those of the true Hamiltonian. Under the stated error conditions, the local coefficient error is at most $\eta/{2}$, which gives the required improvement at each refinement step.

\begingroup
\begin{lem}[Small local tests imply accurate coefficients]\label{lem:local-candidate-accuracy}
Fix a site $i$ and a full candidate $\vH'=\vK_{\vU}$ with $\vU\in\mathcal N_i$. Suppose that
\begin{equation}
 \max_{\substack{\vW\in\mathcal N_i,\,\vA\in\mathcal A_i\\\gamma\in\Gamma_i}}
 \bigl|Q([\vA,\vP_\gamma],\vA;(\vG_{\vW})_{\ell_0},(\vH')_{\ell_0})\bigr|\leq\xi.
 \label{eq:local-net-small-tests}
\end{equation}
Under the hypotheses of \Cref{lem:tested-identification,thm:meta-local-faithfulness},
\begin{equation}
 \max_{\gamma\in\Gamma_i}|h_\gamma-h'_\gamma|
 \leq C_\beta\bigl(\xi+2a_\beta\eta+\eta e^{-\poly(\beta^{\pm1})}\bigr)
       +C'_\beta\epsilon_{\mathrm{ADB}}^{\frac{1}{\poly(\beta^{\pm1})}}.
 \label{eq:coefficient-interface}
\end{equation}
In particular, under a suitable choice of parameters, the local coefficient error is at most $\eta/{2}$.
\end{lem}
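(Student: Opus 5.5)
The plan is to chain the untruncated identification guarantee, the faithfulness bound, and the robustness lemma. First remove the truncations. By the first bound of \Cref{lem:robustness} with $l=\ell_0=\log(e/\eta)\poly(\beta^{\pm1})$, each measured value $Q([\vA,\vP_\gamma],\vA;(\vG_{\vW})_{\ell_0},(\vH')_{\ell_0})$ differs from $Q([\vA,\vP_\gamma],\vA;\vG_{\vW},\vH')$ by at most $\eta e^{-\poly(\beta^{\pm1})}$. Next we must replace the auxiliary $\vG_{\vW}$ by the true $\vH$. Here we take $\vW$ to be the net point closest to $\vH$ on the search region, exactly as in \Cref{lem:good-net-candidate}. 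We then move from $\vG_{\vW}$ to $\vH$ through the same hybrid Hamiltonians used there:
\begin{itemize}
\item the local-perturbation bound \eqref{eq:robust-local-perturbation} with mesh $\eta\kappa_0$ inside radius $\ell_1$;
\item the remote bound \eqref{eq:robust-remote-perturbation} for the coefficients outside radius $\ell_1$, which differ by at most $\eta$.
\end{itemize}
Together these give $|Q([\vA,\vP_\gamma],\vA;\vH,\vH')|\leq\xi+\eta e^{-\poly(\beta^{\pm1})}$ for all $\vA\in\mathcal A_i$ and $\gamma\in\Gamma_i$.

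Second, invoke part 3 of \Cref{lem:tested-identification}. This applies because $\|h-h'\|_\infty\leq2\eta\leq2$. It yields
\[
\Bigl|\tfrac{\beta}{2}\langle[\vA,\vP_\gamma],[\vA,\vH-\vH']\rangle_{\vsigma}\Bigr|\leq\xi+2a_\beta\eta+\eta e^{-\poly(\beta^{\pm1})}+C_\beta\epsilon_{\mathrm{ADB}}^{1/\poly(\beta^{\pm1})}.
\]
Write $\vX:=[\vA,\vH-\vH']=\sum_{\gamma\in\Gamma_i}(h_\gamma-h'_\gamma)[\vA,\vP_\gamma]$. Only terms containing site $i$ contribute, and $|\Gamma_i|\leq d$. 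By sesquilinearity,
\[
\|\vX\|_{\vsigma}^2\leq\sum_{\gamma}|h_\gamma-h'_\gamma|\,\bigl|\langle[\vA,\vP_\gamma],\vX\rangle_{\vsigma}\bigr|.
\]
Since $|h_\gamma-h'_\gamma|\leq2\eta\leq 2$, this gives $\|\vX\|_{\vsigma}^2\lesssim\beta^{-1}(\text{RHS above})$ without dividing by $\eta$. The operator $\vX$ has support size at most $k(d+1)$, which is bounded. So \Cref{thm:meta-local-faithfulness} gives $\|\vX\|_{\vtau}\leq e^{\poly(\beta^{\pm1})}\|\vX\|_{\vsigma}$, provided $\epsilon_{\mathrm{ADB}}$ lies below its threshold. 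Pauli orthogonality then gives
\[
\sum_{\vA\in\mathcal A_i}\|[\vA,\vH-\vH']\|_{\vtau}^2=4\sum_{\vA}\sum_{\gamma:\{\vA,\vP_\gamma\}=0}|h_\gamma-h'_\gamma|^2.
\]
Every $\gamma\in\Gamma_i$ anticommutes with at least one of $\vX_i,\vY_i,\vZ_i$. Hence $\max_\gamma|h_\gamma-h'_\gamma|^2\lesssim e^{\poly}\,(\xi+2a_\beta\eta+\eta e^{-\poly}+\epsilon_{\mathrm{ADB}}^{1/\poly})$.

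The main obstacle is the step at the end, getting a bound linear rather than square-root in $\xi,\eta$. The route above gives $\max|h-h'|^2\lesssim$ the error, which is not the form \eqref{eq:coefficient-interface}. I would fix this as in the Gibbs-state argument, by exploiting that $\vX$ is itself proportional to $\eta$. Set $\vY=\vX/\max_\gamma|h_\gamma-h'_\gamma|$, a normalized combination of the $[\vA,\vP_\gamma]$. Pair with $\vY$ instead of $\vX$: the bound $|\langle[\vA,\vP_\gamma],\vX\rangle_{\vsigma}|\leq E$ gives $\|\vX\|_{\vsigma}\|\vY\|_{\vsigma}\leq dE$. Faithfulness bounds $\|\vY\|_{\vsigma}$ below by a constant, via $\|\vY\|_{\vtau}\gtrsim1$. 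It also bounds $\|\vX\|_{\vsigma}\gtrsim\max|h-h'|$. Together these yield $\max|h-h'|\leq C_\beta E$, which is linear, as claimed.

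For the final "in particular", the constants are chosen in a fixed order:
\begin{itemize}
\item $a_\beta$ small enough that $2C_\beta a_\beta\leq1/8$;
\item the robustness exponents (through $\kappa_0$ and $\ell_0$) so that $C_\beta\eta e^{-\poly}\leq\eta/8$;
\item the measurement accuracy so that $\xi\leq\eta/(8C_\beta)$, using the min--max choice and \Cref{lem:good-net-candidate};
\item the ADB floor $C'_\beta\epsilon_{\mathrm{ADB}}^{1/\poly}\leq\eta/8$.
\end{itemize}
With these choices the total is at most $\eta/2$.
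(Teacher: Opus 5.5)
Your proposal is correct and follows essentially the paper's route. You use the same robustness comparisons to reach $|Q([\vA,\vP_\gamma],\vA;\vH,\vH')|\le\xi+\eta e^{-\poly(\beta^{\pm1})}$, then \Cref{lem:tested-identification}, Pauli orthogonality and local faithfulness, and your normalization by $\max_\gamma|h_\gamma-h'_\gamma|$ does the same job as the paper's Cauchy--Schwarz step, which cancels one factor of $\|\vB\|_{\vsigma}$ to give a linear bound. In that step, take $\vA\in\mathcal A_i$ anticommuting with $\vP_{\gamma^*}$ for the maximizing $\gamma^*$; for an arbitrary $\vA$, the claim $\|\vY\|_{\vtau}\gtrsim1$ can fail.
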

\begin{proof}
Choose the auxiliary net point closest to $\vH$ on the search region. Applying the three comparisons in the proof of \Cref{lem:good-net-candidate} to the auxiliary argument, and removing the measurement truncations by \Cref{lem:robustness}, gives
\begin{equation}
 \bigl|Q([\vA,\vP_\gamma],\vA;\vH,\vH')\bigr|
 \leq\xi+\eta e^{-\poly(\beta^{\pm1})}
 \qquad(\vA\in\mathcal A_i,\ \gamma\in\Gamma_i).
 \label{eq:selected-true-auxiliary}
\end{equation}
For $\vB=[\vA,\vH-\vH']$, local faithfulness and Pauli orthogonality give
\begin{align}
 \|\vB\|_{\vsigma}^2
 &=\sum_{\gamma\in\Gamma_i}(h_\gamma-h'_\gamma)
 \langle[\vA,\vP_\gamma],\vB\rangle_{\vsigma},\nonumber\\
 4\sum_{\gamma:\{\vA,\vP_\gamma\}=0}|h_\gamma-h'_\gamma|^2
 &=\|\vB\|_{\vtau}^2\leq C_\beta^2\|\vB\|_{\vsigma}^2.
 \label{eq:coefficient-recovery-chain}
\end{align}
At most $d$ terms contribute, so Cauchy--Schwarz bounds their coefficient errors linearly by the largest absolute pairing. Every $\gamma\in\Gamma_i$ anticommutes with some $\vA\in\mathcal A_i$. Apply \Cref{lem:tested-identification} to \eqref{eq:selected-true-auxiliary}, using $\|h-h'\|_\infty\leq2\eta$, and then \eqref{eq:coefficient-recovery-chain} to obtain \eqref{eq:coefficient-interface}.
\end{proof}

On the event that every measured test is accurate to $\delta_Q$, \Cref{lem:good-net-candidate} and empirical min--max selection ensure that the selected candidate satisfies \eqref{eq:local-net-small-tests} with
\begin{equation}
 \xi\leq e^{\poly(\beta^{\pm1})}\epsilon_{\mathrm{ADB}}^{\frac{1}{32\beta d+2}}
       +\eta e^{-\poly(\beta^{\pm1})}+2\delta_Q.
 \label{eq:full-refinement-budget}
\end{equation}
The two measurement errors account for comparing the selected empirical value with the good candidate and then returning to exact test values. {Choose $a_\beta$, the net and truncation parameters, and $\delta_Q$ so that their contribution to the coefficient error is at most $\eta/4$. For $\eta\geq C_\beta\epsilon_{\mathrm{ADB}}^{1/\poly(\beta^{\pm1})}$, the ADB contribution is also at most $\eta/4$, after enlarging $C_\beta$. Then \Cref{lem:local-candidate-accuracy} gives accuracy $\eta/2$ at each site. Recording each assigned coefficient once yields $\|h_{\mathrm{new}}-h\|_\infty\leq\eta/2$, with probability at least $1-\delta$.}
\endgroup

\subsection{Learning precision analysis}
\label{subsec:learning-precision}
\begingroup
We analyze the learning precision under global and local metastability, beginning with the accuracy attainable from a bound on the uniform ADB error.

\paragraph*{Learning precision under ADB.}
Let $\epsilon_{\mathrm{ADB}}=\max_a\mathsf{ADB}_a[\vsigma]$ as in \eqref{eq:intro-uniform-ADB-error}, {and assume the ADB smallness conditions required by local faithfulness and \Cref{lem:tested-identification}. Starting from $\vH_0=0$ and $\eta=1$, we use the assembled tests of \Cref{lem:tested-identification}, which already include the local dynamics-transfer error in the ADB term.}

The min--max argument in \Cref{subsec:refinement}, followed by local faithfulness, bounds each recorded coefficient error by the ADB contribution and the controllable errors. {Choose $a_\beta$, the net spacing, and the spatial truncation and measurement accuracies so that the sum of their contributions after coefficient recovery, including the factor $2\delta_Q$ from empirical selection, is at most $\eta/4$.} Enlarging $C_\beta=e^{\poly(\beta^{\pm1})}$ once to include the fixed comparison constants, each refinement step then satisfies
\begin{equation*}
    \|h_{\mathrm{new}}-h\|_\infty
    \leq \frac{\eta}{4}+C_\beta\epsilon_{\mathrm{ADB}}^{\frac{1}{\poly(\beta^{\pm1})}}.
\end{equation*}
Thus the error is at most $\eta/2$ whenever $\eta\geq4C_\beta\epsilon_{\mathrm{ADB}}^{\frac{1}{\poly(\beta^{\pm1})}}$. Repeating the refinement until the certified error is at most $\epsilon$ gives an estimate $h'$ satisfying
\begin{equation*}
    \max_\gamma|h'_\gamma-h_\gamma|\leq\epsilon,
    \qquad \epsilon\geq4C_\beta\epsilon_{\mathrm{ADB}}^{\frac{1}{\poly(\beta^{\pm1})}}.
\end{equation*}
 Allocating total failure probability $\delta$ across all tests and rounds makes this guarantee hold with probability at least $1-\delta$. Also, this sufficient precision threshold is independent of $n$.

\paragraph*{Learning precision under local metastability.}
Suppose $\epsilon_{\mathrm{ms}}^{\mathrm{loc}}=\max_{a\in\mathcal A}\|\CL_a[\vsigma]\|_1$ is sufficiently small. By \Cref{thm:local-adb}, the uniform ADB error satisfies
\begin{equation*}
    \epsilon_{\mathrm{ADB}}
    \leq C_\beta\epsilon_{\mathrm{ms}}^{\mathrm{loc}}
    \left[\log\frac{e}{\epsilon_{\mathrm{ms}}^{\mathrm{loc}}}\right]^{D+1}.
\end{equation*}
Substituting this estimate into the preceding ADB precision bound and absorbing the fixed prefactors into $C_\beta$, the algorithm learns the coefficients to accuracy $\max_\gamma|h'_\gamma-h_\gamma|\leq\epsilon$, with probability at least $1-\delta$, whenever
\begin{equation*}
    \epsilon\geq C_\beta\bigl(\epsilon_{\mathrm{ms}}^{\mathrm{loc}}\bigr)^{\frac{1}{\poly(\beta^{\pm1})}}
    \left[\log\frac{e}{\epsilon_{\mathrm{ms}}^{\mathrm{loc}}}\right]^{\frac{D+1}{\poly(\beta^{\pm1})}}.
\end{equation*}
Both the local metastability-to-ADB bound and the coefficient recovery have constants independent of $n$, so this precision threshold is also independent of the system size.

\paragraph*{Learning precision under global metastability.}
Suppose $0<\epsilon_{\mathrm{ms}}^{\mathrm{glo}}\leq1/2$ and the ADB bound in \Cref{thm:global-adb} satisfies the smallness assumptions above. Substituting that bound into the ADB precision guarantee and absorbing the fixed prefactors into $C_\beta$ gives coefficient accuracy $\max_\gamma|h'_\gamma-h_\gamma|\leq\epsilon$, with probability at least $1-\delta$, whenever
\begin{equation*}
    \epsilon\geq C_\beta\bigl(\epsilon_{\mathrm{ms}}^{\mathrm{glo}}\bigr)^{\frac{1}{\poly(\beta^{\pm1})}}
    \left[n+\log\frac1{\epsilon_{\mathrm{ms}}^{\mathrm{glo}}}\right]^{\frac{1}{\poly(\beta^{\pm1})}}.
\end{equation*}
\endgroup
\subsection{Learning complexity analysis}
\label{subsec:learning-complexity}
For any Hermitian observable $\vE$ with $\|\vE\|\leq1$, its expectation $\Tr(\vsigma\vE)$ can be estimated to additive error $\epsilon$, with probability at least $1-\delta$, using $O(\epsilon^{-2}\log(1/\delta))$ independent copies of $\vsigma$. We apply this standard measurement argument as in Ref.~\cite[Appendix A]{chen2025learning}. For the tests $\vO=[\vA,\vP_\gamma]$, the operator in \eqref{eq:measurement-operator} is Hermitian and has norm at most $C_\beta=e^{\poly(\beta^{\pm1})}$, by the scalar kernel bounds. Thus estimating one test to accuracy $\delta_Q$ requires $O(C_\beta^2\delta_Q^{-2}\log(1/\delta))$ copies. We now describe an efficient implementation of this measurement.

At refinement scale $\eta$, take $\delta_Q=\eta/C_\beta$ {with $C_\beta$ sufficiently large,} and use the time cutoff $T$ and grid spacing $\Delta$ given by
\begin{equation*}
 T=\poly(\beta^{\pm1})\log\frac{C_\beta}{\delta_Q},
 \qquad
 \Delta=\frac{\delta_Q}{C_\beta(1+\ell_0)^D(1+T)^3}.
\end{equation*}
First truncate the three time integrals in \eqref{eq:measurement-operator} to $[-T,T]^3$. The exponential decay of $g$ and Gaussian decay of $w_{\Omega_1}$ and $h_\pm$ make the discarded operator norm at most $\delta_Q/4$. Next discretize each integral on this interval, replacing the times by grid points and $\dd t\dd t_1\dd t_2$ by $\Delta^3$. Bounds on the derivatives of the scalar kernels and local evolutions give a discretization error at most $\delta_Q/4$ with the chosen mesh.

The resulting finite sum is implemented by the Hamiltonian-simulation and block-encoding tools used in Ref.~\cite[Appendix A]{chen2025learning}. The truncated Hamiltonians contain $O((1+\ell_0)^D)$ local terms, so their time evolutions and the corresponding evolved Pauli products admit efficient block encodings. The scalar coefficients factor into three explicit one-dimensional kernels and have total absolute weight at most $C_\beta$. {Expand the test operator into Paulis and write the grid sum as $\sum_{j=1}^{L}c_j\vB_j$. The kernel bounds give $0\leq c_j=O(C_\beta\Delta^3)$ and $L=O((1+T/\Delta)^3)$, hence $Lc_j\leq C_\beta(1+T)^3$ after enlarging $C_\beta$. Pad the sum with zero terms so that $L$ is a power of two, prepare a uniform index register, use $Lc_j/[C_\beta(1+T)^3]$ as the success amplitude of a controlled rotation, apply the controlled $\vB_j$, and unprepare the index register. This gives normalization $C_\beta(1+T)^3$ without enumerating the grid. Arithmetic, rotations, and simulations need only $O(\log(C_\beta(1+\ell_0)^D(1+T)^3/\delta_Q))$ bits of accuracy, so the preprocessing and per-shot costs remain polynomial in the quantities below.} To operator accuracy $\delta_Q/4$, the gate cost per measurement shot is
\begin{equation*}
 C_\beta\poly\!\left((1+\ell_0)^D,T,\log\frac{T}{\Delta},\log\frac1{\delta_Q}\right)
 =e^{\poly(\beta^{\pm1})}\polylog\frac1\eta.
\end{equation*}
Each shot uses one copy of $\vsigma$. {The normalization gives $O(C_\beta^2(1+T)^6\delta_Q^{-2}\log(1/\delta))$ copies per test; the extra factor $(1+T)^6$ is polylogarithmic in $1/\eta$.} The remaining error budget is assigned to statistical estimation. Classical preparation of the local Hamiltonians and evaluation of the scalar kernels obey the same polynomial cost bound.

Finally, \eqref{eq:learning-radii} implies $|\mathcal N_i|\leq(1+2/\kappa_0)^{O((1+\ell_1)^D)}\leq C_\beta$, so each round requires at most $C_\beta n$ tests. Partition them into $C_\beta(1+\ell_0)^D$ groups with disjoint supports and measure each group in parallel on the same copies. Allocate failure probability $\delta/[C_\beta n\log(e/\epsilon)]$ to each test. Using $\ell_0=\poly(\beta^{\pm1})\log(e/\eta)$ and summing over the {halving scales $\eta=1,1/2,1/4,\ldots$ down to $\epsilon$, with the last scales dominating the sum of $\eta^{-2}$,} gives
\begin{align}
 N_{\mathrm{samp}}
 &=O\!\left(\frac{e^{\poly(\beta^{\pm1})}}{\epsilon^2}
 \log\frac{n}{\delta}\,\polylog\frac1\epsilon\right),
 T_{\mathrm{tot}}
 =O\!\left(\frac{n e^{\poly(\beta^{\pm1})}}{\epsilon^2}
 \log\frac{n}{\delta}\,\polylog\frac1\epsilon\right).
 \label{eq:completed-learning-cost}
\end{align}
{Here $T_{\mathrm{tot}}$ counts all quantum gates and classical operations for the local-net candidates, with the input states supplied; parallel execution affects the depth, not this total. The measurements use ancillas and controlled operations on local patches, simulating only the known candidates.}
\subsection{Learning algorithm on different metastable states}
\label{sec:streaming-learning}

We now assume we receive $N$ independent preparations
$\vsigma_1\otimes\cdots\otimes\vsigma_N$, each satisfying the same
local or global metastability bound from \Cref{subsec:learning-precision} for the same
Hamiltonian and temperature, but the states themselves may differ. Different from the previous section, the states are not identical. Within each round, the product input
is fixed independently of the private random measurement assignment and measurement results.

We will prove that after redesigning the sampling algorithm, the same learning precision and complexity bounds hold for this scenario.

{The analysis in \Crefrange{subsec:faithfulness}{subsec:learning-complexity} assumes independent copies of the same metastable state. With different preparations, the test values need not refer to a common state, so the preceding argument does not apply directly. Although the ADB error is nonlinear in the state, the generators defining metastability are linear. Hence, for the classical average $\overline{\vsigma}=\frac{1}{N}\sum_{i=1}^{N}\vsigma_i$, the triangle inequality gives, under the respective global and local assumptions,}
\begin{align}
    \|\mathcal{L}(\overline{\vsigma})\|_1&\leq\frac{1}{N}\sum_{i=1}^{N}\|\mathcal{L}(\vsigma_i)\|_1\leq \epsilon_{\mathrm{ms}}^{\mathrm{glo}}\\
    \max_{a}\|\mathcal{L}_a(\overline{\vsigma})\|_1&\leq\max_{a}\frac{1}{N}\sum_{i=1}^{N}\|\mathcal{L}_a(\vsigma_i)\|_1\leq \epsilon_{\mathrm{ms}}^{\mathrm{loc}}
\end{align}
{The averaged state therefore satisfies the same metastability bound, and \Cref{thm:local-adb,thm:global-adb} give the corresponding ADB control. For each fixed test $Q$, the defining operator has bounded norm, so its measurement outcomes are bounded for every input state. Hoeffding's inequality therefore controls independent measurements on different preparations: their average concentrates around the expectation on $\overline{\vsigma}$, with a number of preparations quadratic in the inverse accuracy and logarithmic in the inverse failure probability. To use this averaging argument for coefficient recovery, we next modify the test observable so that a suitable auxiliary test controls the coefficient error under its own average state.}

{At refinement scale $\eta$, retain the candidates and parameters of \Cref{alg:iterative-local-learning}.}
For every candidate $\vH'$ and $\vG$ on the local net, consider the following choice for operator $\vO$:
\begin{align}
    \vO=\frac{[\vA,\vG-\vH' ]}{\eta}
\end{align}
where $\vA$ is every possible single site Pauli operator in the local net.

Notice that the test operator $\vO$ here and below may be supported on up to $dk$ sites rather
than $k$. Accordingly, when applying the preceding bounds, we replace
the fixed locality parameter $k$ by a larger constant
$k_{\mathrm{eff}}=O_{d,k}(1)$ that bounds the relevant support sizes
and interaction degrees, including those obtained by adjoining the
bounded-support Pauli expansion of $\vO$ to the Hamiltonian terms.
Whenever unit norm is required, we normalize $\vO$ and restore its
norm factor, which is also $O(1)$ for these tests.
These adjustments affect only constants depending on the fixed
locality and geometric parameters, leaving the stated dependence
on $n$, $\eta$, and $\delta$ unchanged.

Define the corresponding test observable as
\begin{align}
    Q'(\vA;\vG,\vH')=\frac{Q([\vA,\vG-\vH'],\vA;\vG,\vH')}{\eta}.
\end{align}

{This choice turns one suitably chosen auxiliary test into an approximate squared KMS norm under that test's own average state. It avoids summing pairings evaluated on different average states.} In the following we will propose an algorithm that estimates $Q'(\vA;\vG,\vH')$ on the classically averaged state $\overline{\vsigma}$.

For each fixed choice of $\vA,\vG,\vH'$, write
    $Q'_{\overline{\vsigma}}(\vA;\vG,\vH')=\tr(\vM\overline{\vsigma}).$
Here $\overline{\vsigma}$ denotes the average of the preparations
used for this test; different tests may use different averages.
Since both candidates differ coefficientwise from the current
estimate by at most $\eta$, locality gives $ \|[\vA,\vG-\vH']/\eta\|\leq 4d.$
We therefore apply the measurement construction of Appendix~D\,4
directly to this normalized test operator. For measurement accuracy
$\delta_Q>0$, truncate and discretize the time integrals and include
the implementation errors so that
\begin{align}
    \widetilde{\vM}
    =\sum_{j=1}^{L}c_j\vB_j,\qquad {c_j\geq0,}\quad \|\vB_j\|\leq 1,\qquad
    W=\sum_{j=1}^{L}c_j\leq C_\beta,\qquad
    \|\widetilde{\vM}-\vM\|\leq\frac{\delta_Q}{2}.
    \label{eq:prime-discretization}
\end{align}
The coefficients include the measurement
normalizations, with their phases absorbed into $\vB_j$. {The grid construction in \Cref{subsec:learning-complexity} gives $Lc_j\leq C_\beta(1+T)^3$, and each coefficient and controlled Pauli product can be evaluated with the per-shot cost stated there.}

After this decomposition, for every local Pauli operator $\vA$ and $\vG$, $\vH'$ in the local net, consider the following randomized sampling procedure (\Cref{alg:randomized-Qprime-measurement}):

\begin{algorithm}[htbp]
\caption{Randomized measurement of
$Q'(\vA;\vG,\vH')$ on nonidentical preparations}
\label{alg:randomized-Qprime-measurement}
\begin{algorithmic}[1]
\Require A fixed test $(\vA,\vG,\vH')$, its decomposition
    \eqref{eq:prime-discretization}, measurement accuracy
    $\delta_Q$, failure probability $\delta$, and {access to fresh independent preparations}.
\Ensure An estimate of
    $Q'_{\overline{\vsigma}}(\vA;\vG,\vH')$
    with error less than $\delta_Q$, except with probability $\delta$, where $\overline{\vsigma}=\frac{1}{N}\sum_{i=1}^{N}\vsigma_i$.
\State Set
    $N=\left\lceil
       64C_\beta^2{(1+T)^6}\delta_Q^{-2}\log(4/\delta)
       \right\rceil$.
\For{$i=1,\ldots,N$}
    \State Receive the preparation $\vsigma_i$.
    \State Independently sample $j$ {uniformly from $\{1,\ldots,L\}$}
        and toss a fair coin.
    \If{the coin selects the real part}
        \State Perform the local measurement with outcome
            $X_i\in[-1,1]$ and mean
            $\tr((\vB_j+\vB_j^\dagger)\vsigma_i)/2$.
        \State Set $Z_i=2{Lc_j}X_i$.
    \Else
        \State Perform the local measurement with outcome
            $X_i\in[-1,1]$ and mean
            $\tr((\vB_j-\vB_j^\dagger)\vsigma_i)/(2i)$.
        \State Set $Z_i=2i{Lc_j}X_i$.
    \EndIf
\EndFor
\State \Return
    $\widehat Q'(\vA;\vG,\vH')
       :=N^{-1}\sum_{i=1}^{N}Z_i$.
\end{algorithmic}
\end{algorithm}

Indeed, averaging over the sampled term, the coin, and the
measurement outcome gives
\[
    \mathbb E Z_i
    =\sum_{j=1}^{L}c_j\tr(\vB_j\vsigma_i),
    \qquad
    \mathbb E\widehat Q'
    =\tr\!\left(
       \widetilde{\vM}\frac1N\sum_{i=1}^{N}\vsigma_i
      \right).
\]
Thus each shot estimates the entire discretized test on its own
input state, rather than estimating its summands separately. Furthermore,
\Cref{lem:randomized-kernel-measurement} gives
\[
    \Pr\!\left[
       \left|\widehat Q'-
       Q'_{\overline{\vsigma}}(\vA;\vG,\vH')\right|
       \geq\delta_Q
    \right]
    \leq
    4\exp\!\left(
       -\frac{N\delta_Q^2}{64C_\beta^2{(1+T)^6}}
      \right)
    \leq\delta.
\]

For simultaneous estimation of $M$ tests in one refinement step, where $M=C_{\beta}n$ in our setting,
we need to replace the failure budget in each call by $\delta/M$.
As in Appendix~D\,4, partition the tests into at most
$C_\beta(1+\ell_0)^D$ groups with disjoint measurement supports.
All tests in a group can be performed in parallel on the same
preparation, independently sampling their kernel terms and
real or imaginary parts. Use a fresh batch for each group.
Condition on the preceding rounds, so the current estimate and tests are fixed. Independence across fresh preparations gives concentration for each test; correlations between tests on the same preparation do not affect the union bound.
A union bound guarantees accuracy $\delta_Q$ for every test,
each evaluated on its group's average state, using
\[
    O\!\left(
       \frac{e^{\poly(\beta^{\pm1})}{(1+T)^6}(1+\ell_0)^D}{\delta_Q^2}
       \log\frac{4M}{\delta}
      \right)
\]
preparations {in this round. Summing over halving scales with the failure allocation of \Cref{subsec:learning-complexity} gives \eqref{eq:completed-learning-cost}.}
\begin{lem}[Randomized measurement of an integrated observable]
\label{lem:randomized-kernel-measurement}
Fix a test $Q_{\vsigma}=\tr(\vM\vsigma)$.
After absorbing coefficient phases into the operators, suppose its
discretization satisfies \eqref{eq:prime-discretization}.
Given an arbitrary product input
$\vsigma_1\otimes\cdots\otimes\vsigma_N$, {with $N\geq64C_\beta^2(1+T)^6\delta_Q^{-2}\log(4/\delta)$,} \Cref{alg:randomized-Qprime-measurement} {estimates}
$Q_{\overline{\vsigma}}$, where
$\overline{\vsigma}=N^{-1}\sum_{i=1}^{N}\vsigma_i$,
to error less than ${\delta_Q}$ with probability at least $1-\delta$.
\end{lem}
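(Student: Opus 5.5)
The plan is to show that each estimator $Z_i$ is an unbiased, bounded sample of $\sum_j c_j\tr(\vB_j\vsigma_i)$, and then to apply Hoeffding's inequality to independent, non-identically distributed bounded variables. We then add the deterministic discretization error $\|\widetilde{\vM}-\vM\|\leq\delta_Q/2$.

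\emph{Unbiasedness.} Fix $i$ and condition on everything except the private randomness used in shot $i$. The index $j$ is uniform on $\{1,\ldots,L\}$ and the coin is fair. On the real branch, $\mathbb E[Z_i\mid j]=Lc_j\tr((\vB_j+\vB_j^\dagger)\vsigma_i)$. On the imaginary branch, $\mathbb E[Z_i\mid j]=Lc_j\tr((\vB_j-\vB_j^\dagger)\vsigma_i)$. Averaging over the coin gives $Lc_j\tr(\vB_j\vsigma_i)$, and averaging over $j$ gives $\sum_j c_j\tr(\vB_j\vsigma_i)=\tr(\widetilde{\vM}\vsigma_i)$. The product structure of the input and the independence of the private sampling make the $Z_i$ independent. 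Hence $\mathbb E\widehat Q'=\tr(\widetilde{\vM}\overline{\vsigma})$.

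\emph{Boundedness.} Each $X_i$ lies in $[-1,1]$. Such a measurement exists because $(\vB_j\pm\vB_j^\dagger)/2$ and $(\vB_j-\vB_j^\dagger)/(2i)$ are Hermitian with norm at most $\|\vB_j\|\leq1$. From the grid construction we have $Lc_j\leq C_\beta(1+T)^3$, so $|Z_i|\leq 2C_\beta(1+T)^3=:R$. Since $Z_i$ may be complex, we treat the real and imaginary parts separately. Each is a real variable in $[-R,R]$, because exactly one of them is nonzero and it equals $\pm 2Lc_jX_i$.

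\emph{Concentration.} Apply Hoeffding to each part with deviation $\delta_Q/(2\sqrt2)$, so that the complex deviation is below $\delta_Q/2$. Each part then fails with probability at most $2\exp(-2N(\delta_Q/(2\sqrt2))^2/(2R)^2)=2\exp(-N\delta_Q^2/(64C_\beta^2(1+T)^6))$. The union bound over the two parts gives the factor $4$, which is at most $\delta$ under the stated lower bound on $N$. Combining this with $|\tr((\widetilde{\vM}-\vM)\overline{\vsigma})|\leq\|\widetilde{\vM}-\vM\|\,\|\overline{\vsigma}\|_1\leq\delta_Q/2$ gives total error below $\delta_Q$.

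\emph{Main obstacle.} The only delicate point is bookkeeping the constants, namely the factor $2$ in $Z_i$ and the real/imaginary split, so that they match the exact threshold $64C_\beta^2(1+T)^6$. Concentration here holds for non-identical states only because Hoeffding requires independence and boundedness, not identical distribution.
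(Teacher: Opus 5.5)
Your proposal is correct and follows the paper's proof essentially step for step. Each $Z_i$ is unbiased for $\tr(\widetilde{\vM}\vsigma_i)$; Hoeffding is applied separately to the real and imaginary coordinates (range $4C_\beta(1+T)^3$, deviation $\delta_Q/(2\sqrt2)$); a union bound gives the factor $4$; and the deterministic bias $\|\widetilde{\vM}-\vM\|\leq\delta_Q/2$ is added. One small slip: $(\vB_j-\vB_j^\dagger)/2$ is anti-Hermitian, so the two Hermitian observables you need are $(\vB_j+\vB_j^\dagger)/2$ and $(\vB_j-\vB_j^\dagger)/(2i)$, which is what your argument actually uses.
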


\begin{proof}
{The variables $Z_i$ are independent, and their real and imaginary parts lie in $[-2C_\beta(1+T)^3,2C_\beta(1+T)^3]$.} Applying Hoeffding's inequality to each
coordinate and taking a union bound gives
\[
    \Pr\!\left[
       \left|\widehat Q-
       \tr(\widetilde{\vM}\overline{\vsigma})\right|\geq u
    \right]
    \leq4\exp\!\left(-\frac{Nu^2}{16{C_\beta^2(1+T)^6}}\right),
\]
since a complex error of magnitude at least $u$ has a real or
imaginary part of magnitude at least $u/\sqrt2$.
{Taking $u=\delta_Q/2$ and adding the bias $\|\widetilde{\vM}-\vM\|\leq\delta_Q/2$ proves the claim with the stated $N$.}
\end{proof}

{
After estimating the truncated tests, select
\[
 U_i^*\in\arg\min_{U\in\mathcal N_i}
 \max_{W\in\mathcal N_i,\,\vA\in\mathcal A_i}
 \bigl|\widehat Q'_i(\vA;\vG_W,\vK_U)\bigr|.
\]
For each coefficient assigned to site $i$, record $h_{\mathrm{new},\gamma}=h_{0,\gamma}+\eta u^*_{i,\gamma}$, as in \Cref{alg:iterative-local-learning}. The maximization over $\gamma$ is absent. For different choices of $U,W,\vA$, the measurements may use different batches, so their averaged states need not coincide. Thus \Cref{lem:good-net-candidate,lem:local-candidate-accuracy} cannot be applied directly as stated for a common state; in particular, the coefficient-recovery proof combines KMS pairings evaluated on that state. \Cref{lem:good-Qprime-candidate} supplies a single candidate with small test values for all auxiliary choices and all admissible average states. \Cref{lem:Qprime-coefficient-recovery} obtains a squared KMS norm from one suitably chosen auxiliary test under its own average state and uses faithfulness to bound the coefficient error. Together, these lemmas justify the min--max selection even when the tests use different average states.
}

\begin{lem}[Existence of a good local-net candidate]
\label{lem:good-Qprime-candidate}
For each site $i$, there exists $U_*\in\mathcal N_i$ such that,
for every $W\in\mathcal N_i$, $\vA\in\mathcal A_i$, and every
state $\vsigma$ satisfying the above assumptions,
\[
    \left|
    Q'_{\vsigma}\!\left(
       \vA;(\vG_W)_{\ell_0},(\vK_{U_*})_{\ell_0}
    \right)
    \right|
    \leq
    C_\beta\epsilon_{\mathrm{ADB}}^{1/(32\beta d+2)}
    +\eta e^{-\poly(\beta^{\pm1})}.
\]
In particular, the same candidate works for all admissible input states.
\end{lem}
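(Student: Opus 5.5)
The plan is to follow the proof of \Cref{lem:good-net-candidate} almost verbatim, adapted to the normalized test operator $\vO=[\vA,\vG-\vH']/\eta$, which now depends on both Hamiltonian arguments. Take $U_*$ so that $\vK_{U_*}=\vH_0+\eta\vU_*$ is the net point closest to $\vH$ on the search region. Its coefficients then differ from those of $\vH$ by at most $\eta\kappa_0$ inside radius $\ell_1$ and by at most $\eta$ outside. The choice depends only on $\vH$, $\vH_0$ and the net, never on $\vsigma$. This is what makes one candidate work simultaneously for all admissible input states: every bound below depends on $\vsigma$ only through $\epsilon_{\mathrm{ADB}}$. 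By \Cref{thm:global-adb,thm:local-adb}, any averaged state $\overline{\vsigma}$ satisfying the metastability hypotheses obeys the same ADB bound, since metastability passes to averages by linearity.

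The comparison proceeds in three steps. First, the $\vH'$-slot of $Q'$ enters in two places: the evolution $\vA_{\vH'}$ and the test operator $\vO$. Using the robustness estimates of \Cref{lem:robustness}, replace $(\vK_{U_*})_{\ell_0}$ by the hybrid $\vH_K$, then by $\vH_{\ell_0}$, then by $\vH$ in the evolution slot, at cost $\eta e^{-\poly(\beta^{\pm1})}$, exactly as in \Cref{lem:good-net-candidate}. Second, the change in the test operator must be handled separately. Since $Q$ is linear in $\vO$ with $|Q(\vO,\ldots)|\le K_\beta\|\vO\|$ (\Cref{lem:tested-identification}, part 1), replacing $[\vA,\vG-\vK_{U_*}]/\eta$ by $[\vA,\vG-\vH]/\eta$ costs $K_\beta\|[\vA,\vH-\vK_{U_*}]\|/\eta$. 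The commutator retains only terms meeting the site of $\vA$, which lie within the search region, so this is at most $K_\beta\cdot 2d\kappa_0$. Choosing $\kappa_0=e^{-\poly(\beta^{\pm1})}$ small enough makes this $e^{-\poly}$. Third, the auxiliary truncation $(\vG_W)_{\ell_0}\to\vG_W$ is removed by \eqref{eq:robust-truncation}, with $\ell_0=\log(e/\eta)\poly(\beta^{\pm1})$ making the cost $\eta e^{-\poly}$. After these replacements the quantity is $Q(\vO',\vA;\vG_W,\vH)/\eta$ with $\vO'=[\vA,\vG_W-\vH]$. \Cref{lem:q-exact-candidate} bounds $|Q(\vO',\vA;\vG_W,\vH)|$ by $\|\vO'\|e^{\poly(\beta^{\pm1})}\epsilon_{\mathrm{ADB}}^{1/(32\beta d+2)}$, uniformly in the auxiliary Hamiltonian. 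Since $\|\vO'\|\le 4d\eta$, the normalization cancels and yields the claimed ADB term. The support size of $\vO'$ is up to $dk$, which is absorbed into $k_{\mathrm{eff}}$.

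The main obstacle is that the error bookkeeping for the first two steps must carry the factor $\eta$ through the division by $\eta$. The $\vH'$-slot robustness errors from \Cref{lem:robustness} are of order $\eta e^{-\poly}$ only when multiplied by $\|\vO\|=O(1)$ after normalization. I would check that every robustness bound is applied to the already-normalized operator $\vO$ of norm $O(1)$, so that no stray $1/\eta$ appears. In particular, the truncation error \eqref{eq:robust-truncation} lacks an explicit $\eta$ factor and must be driven below $\eta e^{-\poly}$ through the logarithmic choice of $\ell_0$.
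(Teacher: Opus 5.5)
\textbf{There is a genuine gap in your second step.} Your first and third steps match the paper's proof. The second step, however, produces an error that is not of the claimed form. You replace the test operator $[\vA,\vG_W-\vK_{U_*}]/\eta$ by $[\vA,\vG_W-\vH]/\eta$ and bound the difference with the crude estimate $|Q(\vO,\ldots)|\le K_\beta\|\vO\|$. This costs $K_\beta\|[\vA,\vH-\vK_{U_*}]\|/\eta\le 2dK_\beta\kappa_0$, and that bound has no factor of $\eta$. The $\eta$ in $\|[\vA,\vH-\vK_{U_*}]\|\le 2d\eta\kappa_0$ is cancelled exactly by the normalization. Since $\kappa_0=e^{-\poly(\beta^{\pm1})}$ is fixed independently of $\eta$ in \eqref{eq:learning-radii}, your argument only proves the bound $C_\beta\epsilon_{\mathrm{ADB}}^{1/(32\beta d+2)}+\eta e^{-\poly(\beta^{\pm1})}+2dK_\beta\kappa_0$.

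This is not a harmless weakening. A constant term in $\xi$ in \Cref{lem:Qprime-coefficient-recovery} gives $|h_\gamma-h'_\gamma|^2\lesssim C_\beta\eta\kappa_0$, so the halving step stalls once $\eta\lesssim C_\beta\kappa_0$. Letting $\kappa_0$ shrink with $\eta$ would inflate $|\mathcal N_i|$ to $\eta^{-O((1+\ell_1)^D)}$ and ruin the complexity bounds. You correctly flagged that $\eta$ must survive the division by $\eta$, but this is exactly the step where it is lost.

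\textbf{The fix is to leave $\vO$ untouched, as the paper does.} \Cref{lem:q-exact-candidate} gives $|Q(\vO,\vA;\vG,\vH)|\le\|\vO\|e^{\poly(\beta^{\pm1})}\mathsf{ADB}_{\vA}[\vsigma]^{1/(32\beta d+2)}$ for an \emph{arbitrary} test operator $\vO$ and arbitrary auxiliary $\vG$. Nothing requires $\vO$ to have the form $[\vA,\vG-\vH]$. So proceed as follows:
\begin{enumerate}
\item Keep $\vO=[\vA,\vG_W-\vK_{U_*}]/\eta$ fixed, with $\|\vO\|\le 4d$.
\item Apply \Cref{lem:robustness} only in the two Hamiltonian slots, i.e. your first and third steps. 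Each costs $\eta e^{-\poly(\beta^{\pm1})}$ after multiplying by $\|\vO\|\le 4d$; the untruncation step uses $\ell_0=\log(e/\eta)\poly(\beta^{\pm1})$.
\item Apply \Cref{lem:q-exact-candidate} directly to $Q(\vO,\vA;\vG_W,\vH)$.
\end{enumerate}
If you prefer to keep your swap of the test operator, do it only after the Hamiltonian arguments have become $(\vG_W,\vH)$. Then bound the difference $Q([\vA,\vH-\vK_{U_*}]/\eta,\vA;\vG_W,\vH)$ with \Cref{lem:q-exact-candidate} rather than with $K_\beta\|\cdot\|$. It becomes $2d\kappa_0\,e^{\poly(\beta^{\pm1})}\epsilon_{\mathrm{ADB}}^{1/(32\beta d+2)}$ and is absorbed into the ADB term. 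Your observation that $U_*$ depends only on $\vH$, $\vH_0$ and the net, and not on the state, is correct and matches the paper.
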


\begin{proof}
Choose $\vK_{U_*}$ nearest to $\vH$ on the search region, with
coefficient error at most $\eta\kappa_0$ there.
For fixed $\vA,W$, set
$\vO=[\vA,\vG_W-\vK_{U_*}]/\eta$, so $\|\vO\|\leq4d$.
Keeping $\vO$ fixed, the net and spatial comparisons in
\Cref{lem:robustness} replace the measured test by
$Q_{\vsigma}(\vO,\vA;\vG_W,\vH)$
at cost $\eta e^{-\poly(\beta^{\pm1})}$.
\Cref{lem:q-exact-candidate} bounds the latter by
$C_\beta\epsilon_{\mathrm{ADB}}^{1/(32\beta d+2)}$.
\end{proof}
After that, we only need to prove that a good local candidate implies accurate coefficient recovery:
\begin{lem}[Small tests imply accurate coefficients]
\label{lem:Qprime-coefficient-recovery}
Use the local net and measurement truncations above, choosing their
errors sufficiently small compared with $a_\beta\eta$.
Fix a single-site Pauli $\vA$ and a candidate $\vH'$ with
$\|h-h'\|_\infty\leq2\eta$.
Suppose that
\[
    |Q'(\vA;\vG,\vH')|\leq\xi
    \qquad\text{for every auxiliary $\vG$ in the local net}.
\]
Each test may use a different state, provided all states satisfy
the hypotheses of Theorems~C.1 and~D.1 with the same bound
$\epsilon_{\mathrm{ADB}}$.
Then every $\gamma$ with $\{\vA,\vP_\gamma\}=0$ satisfies
\[
    |h_\gamma-h'_\gamma|^2
    \leq C_\beta\eta
    \left(
       \xi+a_\beta\eta
       +\epsilon_{\mathrm{ADB}}^{1/\poly(\beta^{\pm1})}
    \right).
\]
\end{lem}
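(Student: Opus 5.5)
The plan is to pick one auxiliary test in the net whose own value is close to a squared KMS norm of $[\vA,\vH-\vH']$. We then lower-bound that norm by faithfulness, so that every pairing is evaluated on a single (average) state.

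\emph{Choosing the auxiliary.} Take $\vG_*$ to be the net point nearest to $\vH$ on the search region, with coefficient error at most $\eta\kappa_0$ there. Fix the state $\vsigma$ attached to the test $Q'(\vA;\vG_*,\vH')$. Set $\vO_*=[\vA,\vG_*-\vH']/\eta$, so that $\|\vO_*\|\leq4d$ and $\vO_*$ has bounded support. By the robustness bounds of \Cref{lem:robustness}, we first replace the auxiliary $\vG_*$ by $\vH$ in the evolution of the test operator (the measurement truncations are handled the same way). We keep $\vO_*$ fixed. This costs $\eta e^{-\poly(\beta^{\pm1})}$, which the parameter choice makes at most $a_\beta\eta$. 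Then apply item 3 of \Cref{lem:tested-identification}; the hypothesis $\|h-h'\|_\infty\leq2$ holds since the distance is at most $2\eta$. Using linearity of the pairing in $\vO$ (the normalization by $\|\vO_*\|=O(1)$ is harmless), we obtain
\[
\Bigl|\eta\,Q'-\tfrac{\beta}{2}\langle[\vA,\vG_*-\vH'],[\vA,\vH-\vH']\rangle_{\vsigma}\Bigr|\leq \eta\bigl(a_\beta\cdot 2\eta+\eta e^{-\poly}+C_\beta\epsilon_{\mathrm{ADB}}^{1/\poly}\bigr)\cdot O(1)\,\eta^{-1}\eta ,
\]
which after bookkeeping reads $|\tfrac{\beta}{2}\langle\vO_*,\vB\rangle_{\vsigma}|\leq \xi+O(a_\beta\eta)+C_\beta\epsilon_{\mathrm{ADB}}^{1/\poly}$, with $\vB:=[\vA,\vH-\vH']$.

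\emph{Converting to a squared norm.} Write $\eta\vO_*=\vB+[\vA,\vG_*-\vH]$. The second summand has norm at most $O(\eta\kappa_0)$ from the search region, plus the terms outside it. Outside the search region both $\vG_*$ and $\vH'$ equal $\vH_0$ there, so those terms do not meet $\supp(\vA)$ once $\ell_1\geq1$ and contribute nothing. Hence $\|\vB\|_{\vsigma}^2=\eta\langle\vO_*,\vB\rangle_{\vsigma}-\langle[\vA,\vG_*-\vH],\vB\rangle_{\vsigma}$. Cauchy--Schwarz together with $\|\vB\|\leq O(\eta)$ and $\|\cdot\|_{\vsigma}\leq\|\cdot\|$ bounds the last term by $O(\kappa_0\eta^2)$, which is absorbed into $a_\beta\eta^2$. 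Therefore
\[
\|\vB\|_{\vsigma}^2\leq C_\beta\eta\bigl(\xi+a_\beta\eta+\epsilon_{\mathrm{ADB}}^{1/\poly}\bigr).
\]

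\emph{Faithfulness and conclusion.} Apply \Cref{thm:meta-local-faithfulness} to $\vB$, which has bounded support. It needs the ADB threshold for the state $\vsigma$; this is part of the stated hypothesis with common $\epsilon_{\mathrm{ADB}}$. It gives $\|\vB\|_{\vtau}^2\leq C_\beta\|\vB\|_{\vsigma}^2$. Pauli orthogonality yields $\|\vB\|_{\vtau}^2=4\sum_{\{\vA,\vP_\gamma\}=0}|h_\gamma-h'_\gamma|^2$, and this gives the claim for each such $\gamma$. The delicate point, which I expect to be the main obstacle, is that the identification estimate and the faithfulness bound must both be applied to the \emph{same} state, namely the average state of this one chosen test. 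This is why only a single auxiliary test is used and no sum of pairings over different tests is formed. One must also check that replacing $\vG_*$ by $\vH$ inside $Q$ while keeping $\vO_*$ fixed is legitimate with the enlarged support parameter $k_{\mathrm{eff}}$.
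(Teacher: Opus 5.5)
Your proposal is correct and follows the paper's proof essentially step for step. You choose the auxiliary net point nearest to $\vH$ and fix that single test's state. You then use \Cref{lem:robustness} and item~3 of \Cref{lem:tested-identification} to compare $Q'$ with $\frac{\beta}{2\eta}\langle[\vA,\vG_*-\vH'],[\vA,\vH-\vH']\rangle_{\vsigma}$, remove the cross term $\langle[\vA,\vG_*-\vH],[\vA,\vH-\vH']\rangle_{\vsigma}=O(\kappa_0\eta^2)$ by Cauchy--Schwarz, and conclude with \Cref{thm:meta-local-faithfulness} and Pauli orthogonality. One small correction: the terms of $\vG_*-\vH$ outside the search region are generally nonzero, since $\vG_*=\vH_0\neq\vH$ there; they drop out of $[\vA,\vG_*-\vH]$ only because the search region contains every term meeting $\supp(\vA)$, not because $\vH'$ agrees with $\vH_0$ there.
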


\begin{proof}
Choose $\vG$ nearest to $\vH$ on the search region, and let
$\vsigma$ be the state used for this test. Since the region
contains all terms meeting $\operatorname{supp}(\vA)$,
\[
    \left\|\frac{[\vA,\vG-\vH]}{\eta}\right\|
    \leq2d\kappa_0
\]
Then we can apply \Cref{lem:robustness} to the following estimate:
\begin{align}
    Q'(\vA;\vG,\vH') &= Q([\vA,\vG-\vH']/\eta,\vA;\vG,\vH')\\
    &\geq\frac{\beta}{2\eta}\langle[\vA,\vG-\vH'],[\vA,\vH-\vH']\rangle_{\vsigma}-a_\beta\eta
    -C_{\beta}\epsilon_{\mathrm{ADB}}^{1/\poly(\beta^{\pm1})}\\
    &\geq \frac{\beta}{2\eta}\|[\vA,\vH-\vH']\|^2_{\vsigma}-4\beta d^2\kappa_0\eta-a_\beta\eta
    -C_{\beta}\epsilon_{\mathrm{ADB}}^{1/\poly(\beta^{\pm1})}\\
    &\geq \frac{1}{C^2_\beta}\frac{\beta}{2\eta}\|[\vA,\vH-\vH']\|^2_{\vtau}-4\beta d^2\kappa_0\eta-a_\beta\eta
    -C_{\beta}\epsilon_{\mathrm{ADB}}^{1/\poly(\beta^{\pm1})}
\end{align}
where the first inequality is due to \Cref{lem:tested-identification}, the second inequality is due to the Cauchy--Schwarz inequality, and the last inequality is due to \Cref{thm:meta-local-faithfulness}. After enlarging $C_\beta$ and taking a sufficiently small $a_{\beta}$, we have
\begin{align*}
    4\sum_{\gamma:\{\vA,\vP_\gamma\}=0}
       |h_\gamma-h'_\gamma|^2
    &=\|[\vA,\vH-\vH']\|_\tau^2\\
    &\leq C_\beta\eta
       \left(
          \xi+a_\beta\eta
          +\epsilon_{\mathrm{ADB}}^{1/\poly(\beta^{\pm1})}
       \right).
\end{align*}
So the lemma is proved.
\end{proof}
On the event of measurement accuracy $\delta_Q$, empirical
min--max selection and Lemma~\ref{lem:good-Qprime-candidate} give
\[
    \xi\leq
    C_\beta\epsilon_{\mathrm{ADB}}^{1/(32\beta d+2)}
    +\eta e^{-\poly(\beta^{\pm1})}+2\delta_Q.
\]
{
Substituting this bound into \Cref{lem:Qprime-coefficient-recovery}, taking a common smaller inverse-polynomial exponent, and recording each assigned coefficient once gives
\[
 \|h_{\mathrm{new}}-h\|_\infty^2
 \leq C_\beta\eta\left[
 (a_\beta+e^{-\poly(\beta^{\pm1})})\eta+2\delta_Q
 +\epsilon_{\mathrm{ADB}}^{1/\poly(\beta^{\pm1})}\right].
\]
Choose the controllable constants sufficiently small and enlarge $C_\beta$ in the following conditions. Then
\[
 \|h_{\mathrm{new}}-h\|_\infty\leq\frac\eta2
 \quad\text{when}\quad
 \eta\geq C_\beta\epsilon_{\mathrm{ADB}}^{1/\poly(\beta^{\pm1})},
 \qquad \delta_Q\leq\frac\eta{C_\beta}.
\]
The factor $\eta$ in the squared-error bound preserves the precision floor of \Cref{subsec:learning-precision}. Repeating this halving step with the failure allocation and cost count in \Cref{subsec:learning-complexity} gives the same final precision and complexity bounds.
}

\clearpage
\section{Lower bounds}\label{sec:lower-bounds}
We use the error measures defined in
\Cref{subsec:metastability-notions}, with Gaussian width $1/\beta$.
The examples below limit coefficient precision and local-to-global
stationarity bounds. Write
\[
 \kappa:=\int_{\mathbb R}\gamma(\omega)
 |\widehat f_{1/\beta}(\omega)|^2\dd\omega>0.
\]

\subsection{An ADB-limited learning lower bound}
\label{sec:adb-learning-lower-bound}

\begin{prop}[Information-theoretic ADB floor]
\label{prop:adb-information-floor}
For an ADB budget $\epsilon>0$, let
\begin{equation}
 \eta_*:=\min\left\{1,\frac{\log(1+\sqrt{2\epsilon})}{\beta}\right\}.
 \label{eq:adb-eta-star}
\end{equation}
There are two allowed target Hamiltonians separated by $\eta_*$ in
coefficient $\ell_\infty$ norm, with the same supplied state satisfying
$\epsilon_{\mathrm{ADB}}\leq\epsilon$ and $\epsilon_H=0$.
Every learner receiving only copies of this state has worst-case
expected coefficient error at least $\eta_*/2$.
No such learner can achieve error strictly below $\eta_*/2$ with
probability greater than $1/2$ on both instances.
\end{prop}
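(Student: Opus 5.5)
The plan is to build the two instances on a single qubit, or on a product of noninteracting qubits if the model forces more sites. Take $\vH_0=0$ and $\vH_1=\eta_*\vZ$, and supply the state $\vsigma=\vtau=\vI/2$, the Gibbs state of $\vH_0$. For $\vH_0$ the state is exactly Gibbs, so $\epsilon_{\mathrm{ADB}}=0$ and $\epsilon_H=0$. For $\vH_1$ we have $[\vH_1,\vtau]=0$, so $\epsilon_H=0$ again. It remains to bound $\mathsf{ADB}_a[\vtau]$ with respect to $\vH_1$, whose Gibbs state is $\vrho_1\propto e^{-\beta\eta_*\vZ}$.

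\emph{Step 1: compute the ADB defect.} Since $\sqrt{\vtau}=\vI/\sqrt2$ is scalar, the defect is
\[
2^{-1/2}\bigl(\vA^a(\omega,t)-\vrho_1^{-1/2}\vA^a(\omega,t)\vrho_1^{1/2}\bigr).
\]
For the $\vZ$ jump it vanishes identically. For the $\vX$ and $\vY$ jumps, the operator splits into Bohr components $\vA_{\pm\nu}$ with $\nu=2\eta_*$. Conjugation by $\vrho_1^{-1/2}$ multiplies $\vA_{\pm\nu}$ by $e^{\pm\beta\nu/2}$, so each component picks up a factor $(1-e^{\pm\beta\eta_*})$. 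Using $|e^{\beta\eta_*}-1|\le\sqrt{2\epsilon}$, which follows from the definition \eqref{eq:adb-eta-star}, and $|1-e^{-\beta\eta_*}|\le|e^{\beta\eta_*}-1|$, we get
\[
\|\text{defect}\|_2^2\le \tfrac12\,(2\epsilon)\,\sum_\pm\|\vA_{\pm\nu}\widehat f(\omega\mp\nu)\|_2^2 .
\]
Integrating against $\gamma(\omega)g(t)$, with $\int g=1/2$, $\gamma\le1$ and $\int|\widehat f|^2=1$, the right-hand side is at most $\epsilon$. The time evolution contributes only phases and drops out. The constants in the definition of $\eta_*$ are presumably tuned exactly so that this bookkeeping closes, so this step only requires careful tracking of normalizations: $\|\vX\|_2^2=2$, the $1/2$ from $\sqrt{\vtau}$, and the $\int g$ factor.

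\emph{Step 2: indistinguishability.} Both instances supply identical copies of $\vtau$, so the learner's output distribution is the same on both. The two coefficient vectors differ by $\eta_*$ in the $\vZ$ coefficient. By the triangle inequality, any output $\hat h$ satisfies $|\hat h-h^{(0)}|+|\hat h-h^{(1)}|\ge\eta_*$. Averaging over the common output law, the two expected errors sum to at least $\eta_*$, so one of them is at least $\eta_*/2$.

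For the high-probability statement, the events $\{|\hat h-h^{(0)}|<\eta_*/2\}$ and $\{|\hat h-h^{(1)}|<\eta_*/2\}$ are disjoint. Since they have the same law under both instances, their probabilities sum to at most $1$, so they cannot both exceed $1/2$.

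The expected main obstacle is only the normalization check in Step 1. If the model requires $n$ sites, we take the product of $\vtau$ with the trivial Hamiltonian elsewhere. The single-site jumps on other sites then have zero defect, because there $\vH_1$ acts trivially and the state is maximally mixed, which is Gibbs.
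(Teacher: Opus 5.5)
Your proposal is correct and follows the paper's proof essentially step for step: the same instances $\vH^{(0)}=0$ and $\vH^{(1)}=\eta_*\vZ_1$ with the maximally mixed state, the same Bohr-component computation giving $\mathsf{ADB}_{\vX_1},\mathsf{ADB}_{\vY_1}\le\frac14\bigl[(e^{\beta\eta_*}-1)^2+(1-e^{-\beta\eta_*})^2\bigr]\le\epsilon$, and the same indistinguishability and disjoint-interval arguments. The normalizations close exactly as you expect; the one step to state explicitly is that $\vA_{+\nu}$ and $\vA_{-\nu}$ are Hilbert--Schmidt orthogonal, which is what lets you add their squared norms without losing a factor of two.
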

\begin{proof}
Take
\[
 \vH^{(0)}=0,\qquad \vH^{(1)}=\eta_*\vZ_1,
 \qquad \vsigma_*=\vI/2^n.
\]
The supplied state commutes with both targets and is the Gibbs state of
$\vH^{(0)}$. For $\vH^{(1)}$, only the $\vX_1,\vY_1$ defects can be
nonzero. The other sites factor out of their squared Hilbert--Schmidt
norms. On site $1$, Gibbs conjugation multiplies the two off-diagonal
entries by $e^{\pm\beta\eta_*}$. Thus, using $\int g=1/2$,
$\gamma\leq1$, and the normalized Gaussian transform from
\Cref{subsec:operator-fourier},
\begin{equation}
 \mathsf{ADB}_{\vX_1}[\vsigma_*]
 =\mathsf{ADB}_{\vY_1}[\vsigma_*]
 \leq\frac{(e^{\beta\eta_*}-1)^2+(1-e^{-\beta\eta_*})^2}{4}
 \leq\frac{(e^{\beta\eta_*}-1)^2}{2}\leq\epsilon.
 \label{eq:adb-second-instance-budget}
\end{equation}
Every measurement strategy has the same output distribution on these
identical input states, regardless of the number of copies. For its
estimate $\widehat h_1$, the inequality
$|\widehat h_1|+|\widehat h_1-\eta_*|\geq\eta_*$ proves the expected-error
claim. The two success intervals of radius less than $\eta_*/2$ are
disjoint, proving the probability claim.
\end{proof}
For fixed $\beta$ and $\epsilon\to0$, this gives a precision floor
$\eta_*/2\sim\sqrt{\epsilon}/(\sqrt2\beta)$ even under exact Hamiltonian
stationarity. It does not establish tightness of the learning upper bound.

The same example limits learning from metastable states, with no dependence on $n$.
For $\vH^{(0)}$, the state $\vsigma_*$ is the Gibbs state, so all its metastability errors vanish.
For $\vH^{(1)}$, the Hamiltonian part of $\CL$ vanishes on $\vsigma_*$ because $[\vH^{(1)},\vsigma_*]=0$.
Every ADB defect other than those of $\vX_1$ and $\vY_1$ vanishes, so \eqref{eq:adb-to-local-metastability} forces the corresponding components $\CL_a[\vsigma_*]$ to vanish as well.
The two remaining components are bounded by $C_\beta\sqrt{\epsilon}$, and therefore
\[
 {\epsilon_{\mathrm{ms}}^{\mathrm{loc}}}(\vsigma_*)\leq C_\beta\sqrt{\epsilon},
 \qquad
 {\epsilon_{\mathrm{ms}}^{\mathrm{glo}}}(\vsigma_*)\leq 2C_\beta\sqrt{\epsilon},
\]
uniformly in $n$.
Given a global or local metastability budget $\epsilon_{\mathrm{ms}}$, choosing $\sqrt{\epsilon}=\epsilon_{\mathrm{ms}}/(2C_\beta)$ gives a precision floor of order $\epsilon_{\mathrm{ms}}/(C_\beta\beta)$ that no number of copies can overcome.
This floor is independent of $n$, so the example does not decide whether the system-size factor in \Cref{thm:intro-global-metastable-learning} is necessary.

\subsection{Necessary system-size factors in global stationarity}
\label{subsec:commutation-error-lower-bound}

The same family witnesses both the ADB-to-global-commutation and the
local-to-global-metastability obstructions.
Fix $\beta>0$, $0<h\leq1$, and $|\theta|\leq\pi/2$, and set
\begin{equation}
 \vH=h\sum_{j=1}^n\vZ_j,\qquad
 \vG=\vX^{\otimes n},\qquad
 \vsigma_\theta=e^{-i\theta\vG/2}\vrho e^{i\theta\vG/2},
 \qquad \vrho=\frac{e^{-\beta\vH}}{(2\cosh(\beta h))^n}.
 \label{eq:comm-lb-state}
\end{equation}
\begin{prop}[Necessary linear system-size dependence]
\label{prop:commutation-error-lower-bound}
Uniformly in $n\geq1$ and $|\theta|\leq\pi/2$,
\begin{align}
 \epsilon_H(\vsigma_\theta)&=2nh\tanh(\beta h)|\sin\theta|,
 \label{eq:comm-lb-exact-commutator}\\
 \epsilon_{\mathrm{ADB}}&=\Theta_{\beta,h}(\sin^2\theta),
 &{\epsilon_{\mathrm{ms}}^{\mathrm{loc}}}(\vsigma_\theta)&=\Theta_{\beta,h}(|\sin\theta|),
 \label{eq:comm-lb-two-errors}\\
 {\epsilon_{\mathrm{ms}}^{\mathrm{glo}}}(\vsigma_\theta)&=\Theta_{\beta,h}(n|\sin\theta|).
 \label{eq:comm-lb-global-ms}
\end{align}
\end{prop}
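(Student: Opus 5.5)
The plan is to exploit the product structure after rotating back. Since $\vG=\vX^{\otimes n}$ is not a product of single-site unitaries, I would first rewrite $\vsigma_\theta$ directly. Because $\vG^2=\vI$, we have $e^{-i\theta\vG/2}=\cos(\theta/2)\vI-i\sin(\theta/2)\vG$. Since $\vG\vZ_j\vG=-\vZ_j$, we get $\vG\vrho\vG=\vrho^{-}$, where $\vrho^{-}$ is the Gibbs state of $-\vH$. Expanding,
\begin{align}
 \vsigma_\theta=\cos^2(\theta/2)\vrho+\sin^2(\theta/2)\vrho^{-}+\tfrac{i}{2}\sin\theta\,[\vrho,\vG].
\end{align}
For \eqref{eq:comm-lb-exact-commutator}, the first two terms commute with $\vH$. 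So $[\vH,\vsigma_\theta]=\tfrac{i}{2}\sin\theta\,[\vH,[\vrho,\vG]]$. Since $\vG$ flips every $\vZ$, the operator $\vG$ has Bohr frequency changes only between sectors with energies $E$ and $-E$. In the $\vZ$ basis, the matrix elements pair $x$ with $\bar x$, with energy difference $2E(x)$. I would compute the trace norm exactly on each $2\times2$ block $\{x,\bar x\}$. This reduces the result to $\sum_x p(x)\,|E(x)|\tanh$-type weights, which should sum to $2nh\tanh(\beta h)$ by the product structure of $\vrho$. Alternatively, I would use the formula $\|[\vH,\vsigma_\theta]\|_1=\|[\vH',\vrho]\|_1$ with $\vH'=e^{i\theta\vG/2}\vH e^{-i\theta\vG/2}=\cos\theta\,\vH+\sin\theta\,(\ldots)$, where the off-diagonal part is $h\sin\theta\sum_j \vY_j\vX^{\otimes(n)}_{\neq j}$-like. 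This is a sum of $n$ terms, each flipping all spins. I would do the block computation; this exact evaluation is the main routine obstacle.

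For \eqref{eq:comm-lb-two-errors}, I would use the fact that single-site quantities only see local deviations. For each jump $\vA^a$ on site $j$, $\vrho$ has zero ADB defect. By \Cref{lem:state-cont}, $\sqrt{\mathsf{ADB}_a[\vsigma_\theta]}\le 2\|\sqrt{\vsigma_\theta}-\sqrt{\vrho}\|_2$. Unitary conjugation gives $\sqrt{\vsigma_\theta}=e^{-i\theta\vG/2}\sqrt{\vrho}e^{i\theta\vG/2}$, so this difference is $O(|\sin(\theta/2)|)\|[\vG,\sqrt{\vrho}]\|_2$. The problem is that this bound is $n$-independent only if $\|[\vG,\sqrt{\vrho}]\|_2=O(1)$. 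That holds since it is at most $2$. This gives the upper bound $\epsilon_{\mathrm{ADB}}=O(\sin^2\theta)$. Then \eqref{eq:adb-to-local-metastability} gives $\epsilon_{\mathrm{ms}}^{\mathrm{loc}}=O(|\sin\theta|)$.

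For the lower bounds, I would compute one explicit component. Take $\vA^a=\vX_1$ and note $\CL_a[\vrho]=\CL_a[\vrho^{-}]$-type terms. Here $\vrho^{-}$ is not stationary for $\CL_a$ at order one. Then I would apply $\CL_a$ to the expansion above. The term $\sin^2(\theta/2)\CL_a[\vrho^-]$ is $\Theta(\theta^2)$, so it is too small. Instead I would use the cross term $\tfrac{i}{2}\sin\theta\,\CL_a[[\vrho,\vG]]$. Its trace norm should be $\Theta_{\beta,h}(|\sin\theta|)$, because $\CL_a$ acts on site $1$ of the all-flip operator $[\vrho,\vG]$ nontrivially. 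I would verify this by tracing against a test observable such as $\vY_1\vX^{\otimes(n-1)}_{\neq1}$, whose pairing factorizes across sites into $n$-independent constants. The ADB lower bound then follows from \eqref{eq:adb-to-local-metastability} in reverse: $\epsilon_{\mathrm{ADB}}\ge(\epsilon_{\mathrm{ms}}^{\mathrm{loc}}/C_\beta)^2$.

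For \eqref{eq:comm-lb-global-ms}, the upper bound follows from \eqref{eq:relations-triangle} and the preceding estimates. For the lower bound, I would again pair $\CL[\vsigma_\theta]$ with a bounded test operator. I would take $\vO=\vX^{\otimes n}$-rotated, i.e.\ $\vO=\sum$-free choice $e^{-i\theta\vG/2}\vY_1\cdots$. A simpler option is the observable $\vY\vX^{\otimes(n-1)}$ symmetrized. However, a norm-one observable must capture $n$ contributions coherently, so I would choose $\vO=\vG\,\mathrm{sgn}$-type. Concretely, I expect $\Tr(\vG'\,\CL[\vsigma_\theta])$ with $\vG'=i[\vH,\vG]/\|\cdot\|$ to fail because its norm grows with $n$. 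So instead I would lower-bound $\|\CL[\vsigma_\theta]\|_1\ge\|[\vH,\vsigma_\theta]\|_1-\|\sum_a\CL_a[\vsigma_\theta]\|_1$. This does not immediately work either. The fallback is to show that the Hamiltonian part and the dissipative part act on orthogonal Bohr sectors at first order in $\theta$, and then evaluate blockwise as in the commutator computation. This last step is where I expect the real difficulty.
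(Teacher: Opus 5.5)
Most of your plan is sound. The exact value of $\epsilon_H$ follows, more directly than via $2\times2$ blocks, from $[\vH,\vsigma_\theta]=i\sin\theta\,\vH(\vrho-\vG\vrho\vG)\vG$. This is a negative semidefinite diagonal matrix times a unitary, and its diagonal factor has trace $2\Tr(\vH\vrho)=-2nh\tanh(\beta h)$. Your ADB and local-MS upper bounds are correct: they use \Cref{lem:state-cont} and $\|[e^{-i\theta\vG/2},\sqrt{\vrho}]\|_2=|\sin(\theta/2)|\,\|[\vG,\sqrt{\vrho}]\|_2\le2|\sin(\theta/2)|$. Your local lower bound is also correct. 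Against $\vY_1\vX^{\otimes(n-1)}$, the diagonal terms $\vrho$ and $\vG\vrho\vG$ drop out, $\CL_{\vX_1}^\dagger[\vY_1]$ is a negative multiple of $\vY_1$, and the cross term gives $\Theta_{\beta,h}(|\sin\theta|)$. Reading the ADB lower bound off \eqref{eq:adb-to-local-metastability} is a legitimate alternative to the paper's direct computation with the $\vZ_i$ jump.

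The genuine gap is the lower bound $\epsilon_{\mathrm{ms}}^{\mathrm{glo}}(\vsigma_\theta)=\Omega_{\beta,h}(n|\sin\theta|)$. This is the real content of the proposition, and you leave it open. Your fallbacks cannot close it, for two reasons.
\begin{itemize}
\item The dissipative part is itself of order $n|\sin\theta|$; already $\sum_i\CL_{\vZ_i}[\vsigma_\theta]=-in\kappa\sin\theta\,(\vrho-\vG\vrho\vG)\vG$. So $\|[\vH,\vsigma_\theta]\|_1-\|\sum_a\CL_a[\vsigma_\theta]\|_1$ need not even be positive.
\item The dissipative part sits on exactly the same entries $|x\rangle\langle\bar x|$, with the same Bohr frequencies $2E(x)$, as $-i[\vH,\vsigma_\theta]=\sin\theta\,\vH(\vrho-\vG\vrho\vG)\vG$. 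So there is no Bohr-sector separation to exploit.
\end{itemize}

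The missing idea is to use the norm-one witness $\vG=\vX^{\otimes n}$ itself, not $i[\vH,\vG]$. The Hamiltonian part is a semidefinite diagonal matrix times $\vG$, so $\Tr(\vG\,(-i[\vH,\vsigma_\theta]))=\sin\theta\,\Tr(\vH(\vrho-\vG\vrho\vG))=-2nh\tanh(\beta h)\sin\theta$. This collects all $n$ contributions with one sign.

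Moreover, $\vG$ is blind to every thermal component, by the following steps.
\begin{itemize}
\item For this product Hamiltonian, $\widehat{\vA}^a(\omega,t)^\dagger\widehat{\vA}^a(\omega,t)$ is diagonal and independent of $t$. The principal value of the odd kernel $c(t)$ therefore kills the coherent correction.
\item The filtered $\vX_i$, $\vY_i$, $\vZ_i$ jumps are real off-diagonal, purely imaginary off-diagonal, and diagonal, respectively. So each $\CL_a^\dagger$ maps $\vX_i$ to a scalar multiple of $\vX_i$; this is the same single-site computation you already need for $\vY_1$.
\item Hence $\CL_a^\dagger[\vG]\in\mathbb{C}\vG$, and $\Tr(\vG\,\CL_a[\vsigma_\theta])=\Tr(\CL_a^\dagger[\vG]\,\vsigma_\theta)=0$, because $\Tr(\vG\vsigma_\theta)=\Tr(\vG\vrho)=0$.
\end{itemize}
Trace duality then gives $\epsilon_{\mathrm{ms}}^{\mathrm{glo}}(\vsigma_\theta)\ge|\Tr(\vG\,\CL[\vsigma_\theta])|=2nh\tanh(\beta h)|\sin\theta|$. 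The matching upper bound comes from \eqref{eq:relations-triangle}.
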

\begin{proof}
Put $m=\tanh(\beta h)$ and $q_n=\operatorname{sech}^n(\beta h)$.
Since $\vG\vH\vG=-\vH$,
\[
 [\vH,\vsigma_\theta]
 =i\sin\theta\,\vH(\vrho-\vG\vrho\vG)\vG.
\]
The diagonal matrix $\vH(\vrho-\vG\vrho\vG)$ is negative semidefinite,
with trace $2\Tr(\vH\vrho)=-2nhm$, proving
\eqref{eq:comm-lb-exact-commutator}.

The same rotation applied to $\sqrt{\vrho}$ gives
\[
 \|\sqrt{\vsigma_\theta}-\sqrt{\vrho}\|_2^2
 =(1-\cos\theta)(1-q_n),\qquad
 \|[\vZ_i,\sqrt{\vsigma_\theta}]\|_2^2
 =2(1-q_n)\sin^2\theta.
\]
The $\vZ_i$ jump commutes with $\vH$, so
$\mathsf{ADB}_{\vZ_i}[\vsigma_\theta]=\kappa(1-q_n)\sin^2\theta$.
Lemma~\ref{lem:state-cont}, applied with the exact Gibbs state, bounds
every local ADB defect by $4(1-\cos\theta)(1-q_n)$.
Since $1-q_n\geq1-\operatorname{sech}(\beta h)>0$, these bounds give
the asserted ADB scaling. Equation~\eqref{eq:adb-to-local-metastability}
then gives the local-MS upper bound. For the lower bound,
\[
 \|\CL_{\vZ_i}[\vsigma_\theta]\|_1
 =\kappa|\sin\theta|\,\|\vrho-\vG\vrho\vG\|_1
 \geq2\kappa m|\sin\theta|,
\]
where the last inequality follows by tracing out all but one site.

It remains to rule out cancellation in the full generator. For this
product Hamiltonian, each filtered jump acts on one site. Its squared
modulus is diagonal, so the coherent correction
\eqref{eq:thermal-coherent-correction} vanishes by oddness of $c(t)$.
The filtered $\vX_i,\vY_i$ jumps are respectively real and purely
imaginary off-diagonal matrices, while the $\vZ_i$ jump is diagonal.
Their adjoint generators therefore map $\vX_i$ to a scalar multiple of
$\vX_i$, and hence $\CL_a^\dagger[\vG]$ is a scalar multiple of $\vG$.
As $\Tr(\vG\vsigma_\theta)=0$, trace duality yields
\begin{equation}
 {\epsilon_{\mathrm{ms}}^{\mathrm{glo}}}(\vsigma_\theta)
 \geq|\Tr(\vG\CL[\vsigma_\theta])|
 =|\Tr(\vG(-i[\vH,\vsigma_\theta]))|
 =2nhm|\sin\theta|.
 \label{eq:comm-lb-global-witness}
\end{equation}
The matching upper bound follows from \eqref{eq:relations-triangle}
and the local-MS estimate above.
\end{proof}
Thus $\epsilon_H=\Theta_{\beta,h}(n\sqrt{\epsilon_{\mathrm{ADB}}})$ and
${\epsilon_{\mathrm{ms}}^{\mathrm{glo}}}=\Theta_{\beta,h}(n{\epsilon_{\mathrm{ms}}^{\mathrm{loc}}})$.
At any fixed sufficiently small nonzero $\theta$, both local errors are
small independently of $n$, while both global errors grow linearly
in $n$. Thus neither conversion can replace its system-size factor by
$o(n)$ at a fixed local error budget. Letting $\theta\to0$ also excludes
an ADB exponent greater than $1/2$ for global commutation. These examples
do not determine the optimal error exponents of the general upper bounds.

\bibliography{metastable_references}

\begin{thebibliography}{77}%
\makeatletter
\providecommand \@ifxundefined [1]{%
 \@ifx{#1\undefined}
}%
\providecommand \@ifnum [1]{%
 \ifnum #1\expandafter \@firstoftwo
 \else \expandafter \@secondoftwo
 \fi
}%
\providecommand \@ifx [1]{%
 \ifx #1\expandafter \@firstoftwo
 \else \expandafter \@secondoftwo
 \fi
}%
\providecommand \natexlab [1]{#1}%
\providecommand \enquote  [1]{``#1''}%
\providecommand \bibnamefont  [1]{#1}%
\providecommand \bibfnamefont [1]{#1}%
\providecommand \citenamefont [1]{#1}%
\providecommand \href@noop [0]{\@secondoftwo}%
\providecommand \href [0]{\begingroup \@sanitize@url \@href}%
\providecommand \@href[1]{\@@startlink{#1}\@@href}%
\providecommand \@@href[1]{\endgroup#1\@@endlink}%
\providecommand \@sanitize@url [0]{\catcode `\\12\catcode `\$12\catcode
  `\&12\catcode `\#12\catcode `\^12\catcode `\_12\catcode `\%12\relax}%
\providecommand \@@startlink[1]{}%
\providecommand \@@endlink[0]{}%
\providecommand \url  [0]{\begingroup\@sanitize@url \@url }%
\providecommand \@url [1]{\endgroup\@href {#1}{\urlprefix }}%
\providecommand \urlprefix  [0]{URL }%
\providecommand \Eprint [0]{\href }%
\providecommand \doibase [0]{https://doi.org/}%
\providecommand \selectlanguage [0]{\@gobble}%
\providecommand \bibinfo  [0]{\@secondoftwo}%
\providecommand \bibfield  [0]{\@secondoftwo}%
\providecommand \translation [1]{[#1]}%
\providecommand \BibitemOpen [0]{}%
\providecommand \bibitemStop [0]{}%
\providecommand \bibitemNoStop [0]{.\EOS\space}%
\providecommand \EOS [0]{\spacefactor3000\relax}%
\providecommand \BibitemShut  [1]{\csname bibitem#1\endcsname}%
\let\auto@bib@innerbib\@empty
\bibitem [{\citenamefont {Gebhart}\ \emph {et~al.}(2023)\citenamefont
  {Gebhart}, \citenamefont {Santagati}, \citenamefont {Gentile}, \citenamefont
  {Gauger}, \citenamefont {Craig}, \citenamefont {Ares}, \citenamefont
  {Banchi}, \citenamefont {Marquardt}, \citenamefont {Pezz{\`e}},\ and\
  \citenamefont {Bonato}}]{gebhartLearningQuantumSystems2023}%
  \BibitemOpen
  \bibfield  {author} {\bibinfo {author} {\bibfnamefont {V.}~\bibnamefont
  {Gebhart}}, \bibinfo {author} {\bibfnamefont {R.}~\bibnamefont {Santagati}},
  \bibinfo {author} {\bibfnamefont {A.~A.}\ \bibnamefont {Gentile}}, \bibinfo
  {author} {\bibfnamefont {E.~M.}\ \bibnamefont {Gauger}}, \bibinfo {author}
  {\bibfnamefont {D.}~\bibnamefont {Craig}}, \bibinfo {author} {\bibfnamefont
  {N.}~\bibnamefont {Ares}}, \bibinfo {author} {\bibfnamefont {L.}~\bibnamefont
  {Banchi}}, \bibinfo {author} {\bibfnamefont {F.}~\bibnamefont {Marquardt}},
  \bibinfo {author} {\bibfnamefont {L.}~\bibnamefont {Pezz{\`e}}},\ and\
  \bibinfo {author} {\bibfnamefont {C.}~\bibnamefont {Bonato}},\ }\bibfield
  {title} {\bibinfo {title} {Learning quantum systems},\ }\href
  {https://doi.org/10.1038/s42254-022-00552-1} {\bibfield  {journal} {\bibinfo
  {journal} {Nat. Rev. Phys.}\ }\textbf {\bibinfo {volume} {5}},\ \bibinfo
  {pages} {141} (\bibinfo {year} {2023})}\BibitemShut {NoStop}%
\bibitem [{\citenamefont {Anshu}\ and\ \citenamefont
  {Arunachalam}(2024)}]{anshuSurveyComplexityLearning2023}%
  \BibitemOpen
  \bibfield  {author} {\bibinfo {author} {\bibfnamefont {A.}~\bibnamefont
  {Anshu}}\ and\ \bibinfo {author} {\bibfnamefont {S.}~\bibnamefont
  {Arunachalam}},\ }\bibfield  {title} {\bibinfo {title} {A survey on the
  complexity of learning quantum states},\ }\href
  {https://doi.org/10.1038/s42254-023-00662-4} {\bibfield  {journal} {\bibinfo
  {journal} {Nat. Rev. Phys.}\ }\textbf {\bibinfo {volume} {6}},\ \bibinfo
  {pages} {59} (\bibinfo {year} {2024})}\BibitemShut {NoStop}%
\bibitem [{\citenamefont {Huang}\ \emph {et~al.}(2022)\citenamefont {Huang},
  \citenamefont {Broughton}, \citenamefont {Cotler}, \citenamefont {Chen},
  \citenamefont {Li}, \citenamefont {Mohseni}, \citenamefont {Neven},
  \citenamefont {Babbush}, \citenamefont {Kueng}, \citenamefont {Preskill},\
  and\ \citenamefont {McClean}}]{huangQuantumAdvantageLearning2022}%
  \BibitemOpen
  \bibfield  {author} {\bibinfo {author} {\bibfnamefont {H.-Y.}\ \bibnamefont
  {Huang}}, \bibinfo {author} {\bibfnamefont {M.}~\bibnamefont {Broughton}},
  \bibinfo {author} {\bibfnamefont {J.}~\bibnamefont {Cotler}}, \bibinfo
  {author} {\bibfnamefont {S.}~\bibnamefont {Chen}}, \bibinfo {author}
  {\bibfnamefont {J.}~\bibnamefont {Li}}, \bibinfo {author} {\bibfnamefont
  {M.}~\bibnamefont {Mohseni}}, \bibinfo {author} {\bibfnamefont
  {H.}~\bibnamefont {Neven}}, \bibinfo {author} {\bibfnamefont
  {R.}~\bibnamefont {Babbush}}, \bibinfo {author} {\bibfnamefont
  {R.}~\bibnamefont {Kueng}}, \bibinfo {author} {\bibfnamefont
  {J.}~\bibnamefont {Preskill}},\ and\ \bibinfo {author} {\bibfnamefont
  {J.~R.}\ \bibnamefont {McClean}},\ }\bibfield  {title} {\bibinfo {title}
  {Quantum advantage in learning from experiments},\ }\href
  {https://doi.org/10.1126/science.abn7293} {\bibfield  {journal} {\bibinfo
  {journal} {Science}\ }\textbf {\bibinfo {volume} {376}},\ \bibinfo {pages}
  {1182} (\bibinfo {year} {2022})}\BibitemShut {NoStop}%
\bibitem [{\citenamefont {Haah}\ \emph {et~al.}(2016)\citenamefont {Haah},
  \citenamefont {Harrow}, \citenamefont {Ji}, \citenamefont {Wu},\ and\
  \citenamefont {Yu}}]{haahSampleoptimalTomographyQuantum2016}%
  \BibitemOpen
  \bibfield  {author} {\bibinfo {author} {\bibfnamefont {J.}~\bibnamefont
  {Haah}}, \bibinfo {author} {\bibfnamefont {A.~W.}\ \bibnamefont {Harrow}},
  \bibinfo {author} {\bibfnamefont {Z.}~\bibnamefont {Ji}}, \bibinfo {author}
  {\bibfnamefont {X.}~\bibnamefont {Wu}},\ and\ \bibinfo {author}
  {\bibfnamefont {N.}~\bibnamefont {Yu}},\ }\bibfield  {title} {\bibinfo
  {title} {Sample-optimal tomography of quantum states},\ }in\ \href
  {https://doi.org/10.1145/2897518.2897585} {\emph {\bibinfo {booktitle}
  {Proceedings of the Forty-Eighth Annual ACM Symposium on Theory of
  Computing}}}\ (\bibinfo {year} {2016})\ pp.\ \bibinfo {pages}
  {913--925}\BibitemShut {NoStop}%
\bibitem [{\citenamefont {O'Donnell}\ and\ \citenamefont
  {Wright}(2016)}]{odonnellEfficientQuantumTomography2016}%
  \BibitemOpen
  \bibfield  {author} {\bibinfo {author} {\bibfnamefont {R.}~\bibnamefont
  {O'Donnell}}\ and\ \bibinfo {author} {\bibfnamefont {J.}~\bibnamefont
  {Wright}},\ }\bibfield  {title} {\bibinfo {title} {Efficient quantum
  tomography},\ }in\ \href {https://doi.org/10.1145/2897518.2897544} {\emph
  {\bibinfo {booktitle} {Proceedings of the Forty-Eighth Annual ACM Symposium
  on Theory of Computing}}}\ (\bibinfo {year} {2016})\ pp.\ \bibinfo {pages}
  {899--912}\BibitemShut {NoStop}%
\bibitem [{\citenamefont {Cramer}\ \emph {et~al.}(2010)\citenamefont {Cramer},
  \citenamefont {Plenio}, \citenamefont {Flammia}, \citenamefont {Somma},
  \citenamefont {Gross}, \citenamefont {Bartlett}, \citenamefont
  {{Landon-Cardinal}}, \citenamefont {Poulin},\ and\ \citenamefont
  {Liu}}]{cramerEfficientQuantumState2010}%
  \BibitemOpen
  \bibfield  {author} {\bibinfo {author} {\bibfnamefont {M.}~\bibnamefont
  {Cramer}}, \bibinfo {author} {\bibfnamefont {M.~B.}\ \bibnamefont {Plenio}},
  \bibinfo {author} {\bibfnamefont {S.~T.}\ \bibnamefont {Flammia}}, \bibinfo
  {author} {\bibfnamefont {R.}~\bibnamefont {Somma}}, \bibinfo {author}
  {\bibfnamefont {D.}~\bibnamefont {Gross}}, \bibinfo {author} {\bibfnamefont
  {S.~D.}\ \bibnamefont {Bartlett}}, \bibinfo {author} {\bibfnamefont
  {O.}~\bibnamefont {{Landon-Cardinal}}}, \bibinfo {author} {\bibfnamefont
  {D.}~\bibnamefont {Poulin}},\ and\ \bibinfo {author} {\bibfnamefont {Y.-K.}\
  \bibnamefont {Liu}},\ }\bibfield  {title} {\bibinfo {title} {Efficient
  quantum state tomography},\ }\href {https://doi.org/10.1038/ncomms1147}
  {\bibfield  {journal} {\bibinfo  {journal} {Nat. Commun.}\ }\textbf {\bibinfo
  {volume} {1}},\ \bibinfo {pages} {149} (\bibinfo {year} {2010})}\BibitemShut
  {NoStop}%
\bibitem [{\citenamefont {Flammia}\ \emph {et~al.}(2012)\citenamefont
  {Flammia}, \citenamefont {Gross}, \citenamefont {Liu},\ and\ \citenamefont
  {Eisert}}]{flammiaQuantumTomographyCompressed2012}%
  \BibitemOpen
  \bibfield  {author} {\bibinfo {author} {\bibfnamefont {S.~T.}\ \bibnamefont
  {Flammia}}, \bibinfo {author} {\bibfnamefont {D.}~\bibnamefont {Gross}},
  \bibinfo {author} {\bibfnamefont {Y.-K.}\ \bibnamefont {Liu}},\ and\ \bibinfo
  {author} {\bibfnamefont {J.}~\bibnamefont {Eisert}},\ }\bibfield  {title}
  {\bibinfo {title} {Quantum tomography via compressed sensing: Error bounds,
  sample complexity and efficient estimators},\ }\href
  {https://doi.org/10.1088/1367-2630/14/9/095022} {\bibfield  {journal}
  {\bibinfo  {journal} {New J. Phys.}\ }\textbf {\bibinfo {volume} {14}},\
  \bibinfo {pages} {095022} (\bibinfo {year} {2012})}\BibitemShut {NoStop}%
\bibitem [{\citenamefont {Huang}\ \emph {et~al.}(2020)\citenamefont {Huang},
  \citenamefont {Kueng},\ and\ \citenamefont
  {Preskill}}]{huangPredictingManyProperties2020}%
  \BibitemOpen
  \bibfield  {author} {\bibinfo {author} {\bibfnamefont {H.-Y.}\ \bibnamefont
  {Huang}}, \bibinfo {author} {\bibfnamefont {R.}~\bibnamefont {Kueng}},\ and\
  \bibinfo {author} {\bibfnamefont {J.}~\bibnamefont {Preskill}},\ }\bibfield
  {title} {\bibinfo {title} {Predicting many properties of a quantum system
  from very few measurements},\ }\href
  {https://doi.org/10.1038/s41567-020-0932-7} {\bibfield  {journal} {\bibinfo
  {journal} {Nat. Phys.}\ }\textbf {\bibinfo {volume} {16}},\ \bibinfo {pages}
  {1050} (\bibinfo {year} {2020})}\BibitemShut {NoStop}%
\bibitem [{\citenamefont
  {Jaynes}(1957)}]{jaynesInformationTheoryStatistical1957}%
  \BibitemOpen
  \bibfield  {author} {\bibinfo {author} {\bibfnamefont {E.~T.}\ \bibnamefont
  {Jaynes}},\ }\bibfield  {title} {\bibinfo {title} {Information theory and
  statistical mechanics},\ }\href {https://doi.org/10.1103/PhysRev.106.620}
  {\bibfield  {journal} {\bibinfo  {journal} {Phys. Rev.}\ }\textbf {\bibinfo
  {volume} {106}},\ \bibinfo {pages} {620} (\bibinfo {year}
  {1957})}\BibitemShut {NoStop}%
\bibitem [{\citenamefont
  {Alhambra}(2023)}]{alhambraQuantumManyBodySystems2023}%
  \BibitemOpen
  \bibfield  {author} {\bibinfo {author} {\bibfnamefont {{\'A}.~M.}\
  \bibnamefont {Alhambra}},\ }\bibfield  {title} {\bibinfo {title} {Quantum
  many-body systems in thermal equilibrium},\ }\href
  {https://doi.org/10.1103/PRXQuantum.4.040201} {\bibfield  {journal} {\bibinfo
   {journal} {PRX Quantum}\ }\textbf {\bibinfo {volume} {4}},\ \bibinfo {pages}
  {040201} (\bibinfo {year} {2023})}\BibitemShut {NoStop}%
\bibitem [{\citenamefont {Ackley}\ \emph {et~al.}(1985)\citenamefont {Ackley},
  \citenamefont {Hinton},\ and\ \citenamefont
  {Sejnowski}}]{ackleyLearningAlgorithmBoltzmann1985}%
  \BibitemOpen
  \bibfield  {author} {\bibinfo {author} {\bibfnamefont {D.}~\bibnamefont
  {Ackley}}, \bibinfo {author} {\bibfnamefont {G.}~\bibnamefont {Hinton}},\
  and\ \bibinfo {author} {\bibfnamefont {T.}~\bibnamefont {Sejnowski}},\
  }\bibfield  {title} {\bibinfo {title} {A learning algorithm for {Boltzmann}
  machines},\ }\href {https://doi.org/10.1016/S0364-0213(85)80012-4} {\bibfield
   {journal} {\bibinfo  {journal} {Cogn. Sci.}\ }\textbf {\bibinfo {volume}
  {9}},\ \bibinfo {pages} {147} (\bibinfo {year} {1985})}\BibitemShut {NoStop}%
\bibitem [{\citenamefont
  {Bresler}(2015)}]{breslerEfficientlyLearningIsing2014}%
  \BibitemOpen
  \bibfield  {author} {\bibinfo {author} {\bibfnamefont {G.}~\bibnamefont
  {Bresler}},\ }\bibfield  {title} {\bibinfo {title} {Efficiently learning
  {Ising} models on arbitrary graphs},\ }in\ \href
  {https://doi.org/10.1145/2746539.2746631} {\emph {\bibinfo {booktitle}
  {Proceedings of the Forty-Seventh Annual ACM Symposium on Theory of
  Computing}}}\ (\bibinfo {year} {2015})\ pp.\ \bibinfo {pages}
  {771--782}\BibitemShut {NoStop}%
\bibitem [{\citenamefont {Vuffray}\ \emph {et~al.}(2016)\citenamefont
  {Vuffray}, \citenamefont {Misra}, \citenamefont {Lokhov},\ and\ \citenamefont
  {Chertkov}}]{vuffrayInteractionScreeningEfficient2016}%
  \BibitemOpen
  \bibfield  {author} {\bibinfo {author} {\bibfnamefont {M.}~\bibnamefont
  {Vuffray}}, \bibinfo {author} {\bibfnamefont {S.}~\bibnamefont {Misra}},
  \bibinfo {author} {\bibfnamefont {A.~Y.}\ \bibnamefont {Lokhov}},\ and\
  \bibinfo {author} {\bibfnamefont {M.}~\bibnamefont {Chertkov}},\ }\bibfield
  {title} {\bibinfo {title} {Interaction screening: Efficient and
  sample-optimal learning of {Ising} models},\ }in\ \href@noop {} {\emph
  {\bibinfo {booktitle} {Proceedings of the 30th International Conference on
  Neural Information Processing Systems}}},\ \bibinfo {series and number}
  {NIPS'16}\ (\bibinfo {year} {2016})\ pp.\ \bibinfo {pages}
  {2603--2611}\BibitemShut {NoStop}%
\bibitem [{\citenamefont {Klivans}\ and\ \citenamefont
  {Meka}(2017)}]{klivansLearningGraphicalModels2017}%
  \BibitemOpen
  \bibfield  {author} {\bibinfo {author} {\bibfnamefont {A.}~\bibnamefont
  {Klivans}}\ and\ \bibinfo {author} {\bibfnamefont {R.}~\bibnamefont {Meka}},\
  }\bibfield  {title} {\bibinfo {title} {Learning graphical models using
  multiplicative weights},\ }in\ \href {https://doi.org/10.1109/FOCS.2017.39}
  {\emph {\bibinfo {booktitle} {2017 IEEE 58th Annual Symposium on Foundations
  of Computer Science (FOCS)}}}\ (\bibinfo {year} {2017})\ pp.\ \bibinfo
  {pages} {343--354}\BibitemShut {NoStop}%
\bibitem [{\citenamefont {Amin}\ \emph {et~al.}(2018)\citenamefont {Amin},
  \citenamefont {Andriyash}, \citenamefont {Rolfe}, \citenamefont
  {Kulchytskyy},\ and\ \citenamefont
  {Melko}}]{aminQuantumBoltzmannMachine2018}%
  \BibitemOpen
  \bibfield  {author} {\bibinfo {author} {\bibfnamefont {M.~H.}\ \bibnamefont
  {Amin}}, \bibinfo {author} {\bibfnamefont {E.}~\bibnamefont {Andriyash}},
  \bibinfo {author} {\bibfnamefont {J.}~\bibnamefont {Rolfe}}, \bibinfo
  {author} {\bibfnamefont {B.}~\bibnamefont {Kulchytskyy}},\ and\ \bibinfo
  {author} {\bibfnamefont {R.}~\bibnamefont {Melko}},\ }\bibfield  {title}
  {\bibinfo {title} {Quantum {Boltzmann} machine},\ }\href
  {https://doi.org/10.1103/PhysRevX.8.021050} {\bibfield  {journal} {\bibinfo
  {journal} {Phys. Rev. X}\ }\textbf {\bibinfo {volume} {8}},\ \bibinfo {pages}
  {021050} (\bibinfo {year} {2018})}\BibitemShut {NoStop}%
\bibitem [{\citenamefont {Samarakoon}\ \emph {et~al.}(2020)\citenamefont
  {Samarakoon}, \citenamefont {Barros}, \citenamefont {Li}, \citenamefont
  {Eisenbach}, \citenamefont {Zhang}, \citenamefont {Ye}, \citenamefont
  {Sharma}, \citenamefont {Dun}, \citenamefont {Zhou}, \citenamefont {Grigera},
  \citenamefont {Batista},\ and\ \citenamefont
  {Tennant}}]{samarakoonMachinelearningassistedInsightSpin2020}%
  \BibitemOpen
  \bibfield  {author} {\bibinfo {author} {\bibfnamefont {A.~M.}\ \bibnamefont
  {Samarakoon}}, \bibinfo {author} {\bibfnamefont {K.}~\bibnamefont {Barros}},
  \bibinfo {author} {\bibfnamefont {Y.~W.}\ \bibnamefont {Li}}, \bibinfo
  {author} {\bibfnamefont {M.}~\bibnamefont {Eisenbach}}, \bibinfo {author}
  {\bibfnamefont {Q.}~\bibnamefont {Zhang}}, \bibinfo {author} {\bibfnamefont
  {F.}~\bibnamefont {Ye}}, \bibinfo {author} {\bibfnamefont {V.}~\bibnamefont
  {Sharma}}, \bibinfo {author} {\bibfnamefont {Z.~L.}\ \bibnamefont {Dun}},
  \bibinfo {author} {\bibfnamefont {H.}~\bibnamefont {Zhou}}, \bibinfo {author}
  {\bibfnamefont {S.~A.}\ \bibnamefont {Grigera}}, \bibinfo {author}
  {\bibfnamefont {C.~D.}\ \bibnamefont {Batista}},\ and\ \bibinfo {author}
  {\bibfnamefont {D.~A.}\ \bibnamefont {Tennant}},\ }\bibfield  {title}
  {\bibinfo {title} {Machine-learning-assisted insight into spin ice
  {Dy$_2$Ti$_2$O$_7$}},\ }\href {https://doi.org/10.1038/s41467-020-14660-y}
  {\bibfield  {journal} {\bibinfo  {journal} {Nat. Commun.}\ }\textbf {\bibinfo
  {volume} {11}},\ \bibinfo {pages} {892} (\bibinfo {year} {2020})}\BibitemShut
  {NoStop}%
\bibitem [{\citenamefont {Wiebe}\ \emph {et~al.}(2014)\citenamefont {Wiebe},
  \citenamefont {Granade}, \citenamefont {Ferrie},\ and\ \citenamefont
  {Cory}}]{wiebeHamiltonianLearningCertification2014}%
  \BibitemOpen
  \bibfield  {author} {\bibinfo {author} {\bibfnamefont {N.}~\bibnamefont
  {Wiebe}}, \bibinfo {author} {\bibfnamefont {C.}~\bibnamefont {Granade}},
  \bibinfo {author} {\bibfnamefont {C.}~\bibnamefont {Ferrie}},\ and\ \bibinfo
  {author} {\bibfnamefont {D.~G.}\ \bibnamefont {Cory}},\ }\bibfield  {title}
  {\bibinfo {title} {{Hamiltonian} learning and certification using quantum
  resources},\ }\href {https://doi.org/10.1103/PhysRevLett.112.190501}
  {\bibfield  {journal} {\bibinfo  {journal} {Phys. Rev. Lett.}\ }\textbf
  {\bibinfo {volume} {112}},\ \bibinfo {pages} {190501} (\bibinfo {year}
  {2014})}\BibitemShut {NoStop}%
\bibitem [{\citenamefont {Eisert}\ \emph {et~al.}(2020)\citenamefont {Eisert},
  \citenamefont {Hangleiter}, \citenamefont {Walk}, \citenamefont {Roth},
  \citenamefont {Markham}, \citenamefont {Parekh}, \citenamefont {Chabaud},\
  and\ \citenamefont {Kashefi}}]{eisertQuantumCertificationBenchmarking2020}%
  \BibitemOpen
  \bibfield  {author} {\bibinfo {author} {\bibfnamefont {J.}~\bibnamefont
  {Eisert}}, \bibinfo {author} {\bibfnamefont {D.}~\bibnamefont {Hangleiter}},
  \bibinfo {author} {\bibfnamefont {N.}~\bibnamefont {Walk}}, \bibinfo {author}
  {\bibfnamefont {I.}~\bibnamefont {Roth}}, \bibinfo {author} {\bibfnamefont
  {D.}~\bibnamefont {Markham}}, \bibinfo {author} {\bibfnamefont
  {R.}~\bibnamefont {Parekh}}, \bibinfo {author} {\bibfnamefont
  {U.}~\bibnamefont {Chabaud}},\ and\ \bibinfo {author} {\bibfnamefont
  {E.}~\bibnamefont {Kashefi}},\ }\bibfield  {title} {\bibinfo {title} {Quantum
  certification and benchmarking},\ }\href
  {https://doi.org/10.1038/s42254-020-0186-4} {\bibfield  {journal} {\bibinfo
  {journal} {Nat. Rev. Phys.}\ }\textbf {\bibinfo {volume} {2}},\ \bibinfo
  {pages} {382} (\bibinfo {year} {2020})}\BibitemShut {NoStop}%
\bibitem [{\citenamefont {Carrasco}\ \emph {et~al.}(2021)\citenamefont
  {Carrasco}, \citenamefont {Elben}, \citenamefont {Kokail}, \citenamefont
  {Kraus},\ and\ \citenamefont
  {Zoller}}]{carrascoTheoreticalExperimentalPerspectives2021}%
  \BibitemOpen
  \bibfield  {author} {\bibinfo {author} {\bibfnamefont {J.}~\bibnamefont
  {Carrasco}}, \bibinfo {author} {\bibfnamefont {A.}~\bibnamefont {Elben}},
  \bibinfo {author} {\bibfnamefont {C.}~\bibnamefont {Kokail}}, \bibinfo
  {author} {\bibfnamefont {B.}~\bibnamefont {Kraus}},\ and\ \bibinfo {author}
  {\bibfnamefont {P.}~\bibnamefont {Zoller}},\ }\bibfield  {title} {\bibinfo
  {title} {Theoretical and experimental perspectives of quantum verification},\
  }\href {https://doi.org/10.1103/PRXQuantum.2.010102} {\bibfield  {journal}
  {\bibinfo  {journal} {PRX Quantum}\ }\textbf {\bibinfo {volume} {2}},\
  \bibinfo {pages} {010102} (\bibinfo {year} {2021})}\BibitemShut {NoStop}%
\bibitem [{\citenamefont {Kokail}\ \emph {et~al.}(2021)\citenamefont {Kokail},
  \citenamefont {van Bijnen}, \citenamefont {Elben}, \citenamefont
  {Vermersch},\ and\ \citenamefont
  {Zoller}}]{kokailEntanglementHamiltonianTomography2021}%
  \BibitemOpen
  \bibfield  {author} {\bibinfo {author} {\bibfnamefont {C.}~\bibnamefont
  {Kokail}}, \bibinfo {author} {\bibfnamefont {R.}~\bibnamefont {van Bijnen}},
  \bibinfo {author} {\bibfnamefont {A.}~\bibnamefont {Elben}}, \bibinfo
  {author} {\bibfnamefont {B.}~\bibnamefont {Vermersch}},\ and\ \bibinfo
  {author} {\bibfnamefont {P.}~\bibnamefont {Zoller}},\ }\bibfield  {title}
  {\bibinfo {title} {Entanglement {Hamiltonian} tomography in quantum
  simulation},\ }\href {https://doi.org/10.1038/s41567-021-01260-w} {\bibfield
  {journal} {\bibinfo  {journal} {Nat. Phys.}\ }\textbf {\bibinfo {volume}
  {17}},\ \bibinfo {pages} {936} (\bibinfo {year} {2021})}\BibitemShut
  {NoStop}%
\bibitem [{\citenamefont {Anshu}\ \emph {et~al.}(2021)\citenamefont {Anshu},
  \citenamefont {Arunachalam}, \citenamefont {Kuwahara},\ and\ \citenamefont
  {Soleimanifar}}]{anshuSampleefficientLearningQuantum2021}%
  \BibitemOpen
  \bibfield  {author} {\bibinfo {author} {\bibfnamefont {A.}~\bibnamefont
  {Anshu}}, \bibinfo {author} {\bibfnamefont {S.}~\bibnamefont {Arunachalam}},
  \bibinfo {author} {\bibfnamefont {T.}~\bibnamefont {Kuwahara}},\ and\
  \bibinfo {author} {\bibfnamefont {M.}~\bibnamefont {Soleimanifar}},\
  }\bibfield  {title} {\bibinfo {title} {Sample-efficient learning of
  interacting quantum systems},\ }\href
  {https://doi.org/10.1038/s41567-021-01232-0} {\bibfield  {journal} {\bibinfo
  {journal} {Nat. Phys.}\ }\textbf {\bibinfo {volume} {17}},\ \bibinfo {pages}
  {931} (\bibinfo {year} {2021})}\BibitemShut {NoStop}%
\bibitem [{\citenamefont {Haah}\ \emph {et~al.}(2022)\citenamefont {Haah},
  \citenamefont {Kothari},\ and\ \citenamefont {Tang}}]{haah2022optimal}%
  \BibitemOpen
  \bibfield  {author} {\bibinfo {author} {\bibfnamefont {J.}~\bibnamefont
  {Haah}}, \bibinfo {author} {\bibfnamefont {R.}~\bibnamefont {Kothari}},\ and\
  \bibinfo {author} {\bibfnamefont {E.}~\bibnamefont {Tang}},\ }\bibfield
  {title} {\bibinfo {title} {Optimal learning of quantum {Hamiltonians} from
  high-temperature {Gibbs} states},\ }in\ \href
  {https://doi.org/10.1109/FOCS54457.2022.00020} {\emph {\bibinfo {booktitle}
  {2022 IEEE 63rd Annual Symposium on Foundations of Computer Science
  (FOCS)}}}\ (\bibinfo {year} {2022})\ pp.\ \bibinfo {pages}
  {135--146}\BibitemShut {NoStop}%
\bibitem [{\citenamefont {Bakshi}\ \emph
  {et~al.}(2024{\natexlab{a}})\citenamefont {Bakshi}, \citenamefont {Liu},
  \citenamefont {Moitra},\ and\ \citenamefont
  {Tang}}]{bakshiLearningQuantumHamiltonians2023}%
  \BibitemOpen
  \bibfield  {author} {\bibinfo {author} {\bibfnamefont {A.}~\bibnamefont
  {Bakshi}}, \bibinfo {author} {\bibfnamefont {A.}~\bibnamefont {Liu}},
  \bibinfo {author} {\bibfnamefont {A.}~\bibnamefont {Moitra}},\ and\ \bibinfo
  {author} {\bibfnamefont {E.}~\bibnamefont {Tang}},\ }\bibfield  {title}
  {\bibinfo {title} {Learning quantum {Hamiltonians} at any temperature in
  polynomial time},\ }in\ \href {https://doi.org/10.1145/3618260.3649619}
  {\emph {\bibinfo {booktitle} {Proceedings of the 56th Annual ACM Symposium on
  Theory of Computing}}}\ (\bibinfo {year} {2024})\ pp.\ \bibinfo {pages}
  {1470--1477}\BibitemShut {NoStop}%
\bibitem [{\citenamefont {Chen}\ \emph
  {et~al.}(2025{\natexlab{a}})\citenamefont {Chen}, \citenamefont {Anshu},\
  and\ \citenamefont {Nguyen}}]{chen2025learning}%
  \BibitemOpen
  \bibfield  {author} {\bibinfo {author} {\bibfnamefont {C.-F.}\ \bibnamefont
  {Chen}}, \bibinfo {author} {\bibfnamefont {A.}~\bibnamefont {Anshu}},\ and\
  \bibinfo {author} {\bibfnamefont {Q.~T.}\ \bibnamefont {Nguyen}},\ }\bibfield
   {title} {\bibinfo {title} {Learning quantum {Gibbs} states locally and
  efficiently},\ }in\ \href {https://doi.org/10.1109/FOCS63196.2025.00076}
  {\emph {\bibinfo {booktitle} {2025 IEEE 66th Annual Symposium on Foundations
  of Computer Science (FOCS)}}}\ (\bibinfo {year} {2025})\ pp.\ \bibinfo
  {pages} {1458--1464},\ \Eprint {https://arxiv.org/abs/2504.02706}
  {arXiv:2504.02706} \BibitemShut {NoStop}%
\bibitem [{\citenamefont
  {Barahona}(1982)}]{barahonaComputationalComplexityIsing1982}%
  \BibitemOpen
  \bibfield  {author} {\bibinfo {author} {\bibfnamefont {F.}~\bibnamefont
  {Barahona}},\ }\bibfield  {title} {\bibinfo {title} {On the computational
  complexity of {Ising} spin glass models},\ }\href
  {https://doi.org/10.1088/0305-4470/15/10/028} {\bibfield  {journal} {\bibinfo
   {journal} {J. Phys. A: Math. Gen.}\ }\textbf {\bibinfo {volume} {15}},\
  \bibinfo {pages} {3241} (\bibinfo {year} {1982})}\BibitemShut {NoStop}%
\bibitem [{\citenamefont
  {Sly}(2010)}]{slyComputationalTransitionUniqueness2010}%
  \BibitemOpen
  \bibfield  {author} {\bibinfo {author} {\bibfnamefont {A.}~\bibnamefont
  {Sly}},\ }\bibfield  {title} {\bibinfo {title} {Computational transition at
  the uniqueness threshold},\ }in\ \href {https://doi.org/10.1109/FOCS.2010.34}
  {\emph {\bibinfo {booktitle} {2010 IEEE 51st Annual Symposium on Foundations
  of Computer Science}}}\ (\bibinfo {year} {2010})\ pp.\ \bibinfo {pages}
  {287--296}\BibitemShut {NoStop}%
\bibitem [{\citenamefont {Sly}\ and\ \citenamefont
  {Sun}(2014)}]{slyCountingTwospinModels2014}%
  \BibitemOpen
  \bibfield  {author} {\bibinfo {author} {\bibfnamefont {A.}~\bibnamefont
  {Sly}}\ and\ \bibinfo {author} {\bibfnamefont {N.}~\bibnamefont {Sun}},\
  }\bibfield  {title} {\bibinfo {title} {Counting in two-spin models on
  {$d$}-regular graphs},\ }\href {https://doi.org/10.1214/13-AOP888} {\bibfield
   {journal} {\bibinfo  {journal} {Ann. Probab.}\ }\textbf {\bibinfo {volume}
  {42}},\ \bibinfo {pages} {2383} (\bibinfo {year} {2014})}\BibitemShut
  {NoStop}%
\bibitem [{\citenamefont {Kitaev}\ \emph {et~al.}(2002)\citenamefont {Kitaev},
  \citenamefont {Shen},\ and\ \citenamefont
  {Vyalyi}}]{kitaevClassicalQuantumComputation2002}%
  \BibitemOpen
  \bibfield  {author} {\bibinfo {author} {\bibfnamefont {A.}~\bibnamefont
  {Kitaev}}, \bibinfo {author} {\bibfnamefont {A.}~\bibnamefont {Shen}},\ and\
  \bibinfo {author} {\bibfnamefont {M.}~\bibnamefont {Vyalyi}},\ }\href
  {https://doi.org/10.1090/gsm/047} {\emph {\bibinfo {title} {Classical and
  Quantum Computation}}},\ \bibinfo {series} {Graduate Studies in Mathematics},
  Vol.~\bibinfo {volume} {47}\ (\bibinfo  {publisher} {American Mathematical
  Society},\ \bibinfo {address} {Providence, RI},\ \bibinfo {year}
  {2002})\BibitemShut {NoStop}%
\bibitem [{\citenamefont {Kempe}\ \emph {et~al.}(2006)\citenamefont {Kempe},
  \citenamefont {Kitaev},\ and\ \citenamefont
  {Regev}}]{kempeComplexityLocalHamiltonian2006}%
  \BibitemOpen
  \bibfield  {author} {\bibinfo {author} {\bibfnamefont {J.}~\bibnamefont
  {Kempe}}, \bibinfo {author} {\bibfnamefont {A.}~\bibnamefont {Kitaev}},\ and\
  \bibinfo {author} {\bibfnamefont {O.}~\bibnamefont {Regev}},\ }\bibfield
  {title} {\bibinfo {title} {The complexity of the local {Hamiltonian}
  problem},\ }\href {https://doi.org/10.1137/S0097539704445226} {\bibfield
  {journal} {\bibinfo  {journal} {SIAM J. Comput.}\ }\textbf {\bibinfo {volume}
  {35}},\ \bibinfo {pages} {1070} (\bibinfo {year} {2006})}\BibitemShut
  {NoStop}%
\bibitem [{\citenamefont {Chen}\ \emph
  {et~al.}(2025{\natexlab{b}})\citenamefont {Chen}, \citenamefont {Huang},
  \citenamefont {Preskill},\ and\ \citenamefont
  {Zhou}}]{chenLocalMinimaQuantum2023}%
  \BibitemOpen
  \bibfield  {author} {\bibinfo {author} {\bibfnamefont {C.-F.}\ \bibnamefont
  {Chen}}, \bibinfo {author} {\bibfnamefont {H.-Y.}\ \bibnamefont {Huang}},
  \bibinfo {author} {\bibfnamefont {J.}~\bibnamefont {Preskill}},\ and\
  \bibinfo {author} {\bibfnamefont {L.}~\bibnamefont {Zhou}},\ }\bibfield
  {title} {\bibinfo {title} {Local minima in quantum systems},\ }\href
  {https://doi.org/10.1038/s41567-025-02781-4} {\bibfield  {journal} {\bibinfo
  {journal} {Nat. Phys.}\ }\textbf {\bibinfo {volume} {21}},\ \bibinfo {pages}
  {654} (\bibinfo {year} {2025}{\natexlab{b}})}\BibitemShut {NoStop}%
\bibitem [{\citenamefont {Rouz{\'e}}\ \emph {et~al.}(2026)\citenamefont
  {Rouz{\'e}}, \citenamefont {Stilck~Fran{\c c}a},\ and\ \citenamefont
  {Alhambra}}]{rouzeEfficientThermalizationUniversal2026}%
  \BibitemOpen
  \bibfield  {author} {\bibinfo {author} {\bibfnamefont {C.}~\bibnamefont
  {Rouz{\'e}}}, \bibinfo {author} {\bibfnamefont {D.}~\bibnamefont
  {Stilck~Fran{\c c}a}},\ and\ \bibinfo {author} {\bibfnamefont {{\'A}.~M.}\
  \bibnamefont {Alhambra}},\ }\bibfield  {title} {\bibinfo {title} {Efficient
  thermalization and universal quantum computing with quantum {Gibbs}
  samplers},\ }\href {https://doi.org/10.1038/s41567-026-03246-y} {\bibfield
  {journal} {\bibinfo  {journal} {Nat. Phys.}\ }\textbf {\bibinfo {volume}
  {22}},\ \bibinfo {pages} {1123} (\bibinfo {year} {2026})}\BibitemShut
  {NoStop}%
\bibitem [{\citenamefont {Gamarnik}\ \emph {et~al.}(2024)\citenamefont
  {Gamarnik}, \citenamefont {Kiani},\ and\ \citenamefont
  {Zlokapa}}]{gamarnikSlowMixingQuantum2024}%
  \BibitemOpen
  \bibfield  {author} {\bibinfo {author} {\bibfnamefont {D.}~\bibnamefont
  {Gamarnik}}, \bibinfo {author} {\bibfnamefont {B.~T.}\ \bibnamefont
  {Kiani}},\ and\ \bibinfo {author} {\bibfnamefont {A.}~\bibnamefont
  {Zlokapa}},\ }\bibfield  {title} {\bibinfo {title} {Slow mixing of quantum
  {Gibbs} samplers},\ }\Eprint {https://arxiv.org/abs/2411.04300}
  {arXiv:2411.04300}  (\bibinfo {year} {2024})\BibitemShut {NoStop}%
\bibitem [{\citenamefont {Rakovszky}\ \emph {et~al.}(2026)\citenamefont
  {Rakovszky}, \citenamefont {Placke}, \citenamefont {Breuckmann},\ and\
  \citenamefont {Khemani}}]{rakovszkyBottlenecksQuantumChannels2024}%
  \BibitemOpen
  \bibfield  {author} {\bibinfo {author} {\bibfnamefont {T.}~\bibnamefont
  {Rakovszky}}, \bibinfo {author} {\bibfnamefont {B.}~\bibnamefont {Placke}},
  \bibinfo {author} {\bibfnamefont {N.~P.}\ \bibnamefont {Breuckmann}},\ and\
  \bibinfo {author} {\bibfnamefont {V.}~\bibnamefont {Khemani}},\ }\bibfield
  {title} {\bibinfo {title} {Bottlenecks in quantum channels and finite
  temperature phases of matter},\ }\href {https://doi.org/10.1103/y9z2-52rl}
  {\bibfield  {journal} {\bibinfo  {journal} {Phys. Rev. Lett.}\ }\textbf
  {\bibinfo {volume} {137}},\ \bibinfo {pages} {050402} (\bibinfo {year}
  {2026})}\BibitemShut {NoStop}%
\bibitem [{\citenamefont {Berthier}\ and\ \citenamefont
  {Biroli}(2011)}]{berthierTheoreticalPerspectiveGlass2011}%
  \BibitemOpen
  \bibfield  {author} {\bibinfo {author} {\bibfnamefont {L.}~\bibnamefont
  {Berthier}}\ and\ \bibinfo {author} {\bibfnamefont {G.}~\bibnamefont
  {Biroli}},\ }\bibfield  {title} {\bibinfo {title} {Theoretical perspective on
  the glass transition and amorphous materials},\ }\href
  {https://doi.org/10.1103/RevModPhys.83.587} {\bibfield  {journal} {\bibinfo
  {journal} {Rev. Mod. Phys.}\ }\textbf {\bibinfo {volume} {83}},\ \bibinfo
  {pages} {587} (\bibinfo {year} {2011})}\BibitemShut {NoStop}%
\bibitem [{\citenamefont {Brown}\ \emph {et~al.}(2016)\citenamefont {Brown},
  \citenamefont {Loss}, \citenamefont {Pachos}, \citenamefont {Self},\ and\
  \citenamefont {Wootton}}]{brownQuantumMemoriesFinite2016}%
  \BibitemOpen
  \bibfield  {author} {\bibinfo {author} {\bibfnamefont {B.~J.}\ \bibnamefont
  {Brown}}, \bibinfo {author} {\bibfnamefont {D.}~\bibnamefont {Loss}},
  \bibinfo {author} {\bibfnamefont {J.~K.}\ \bibnamefont {Pachos}}, \bibinfo
  {author} {\bibfnamefont {C.~N.}\ \bibnamefont {Self}},\ and\ \bibinfo
  {author} {\bibfnamefont {J.~R.}\ \bibnamefont {Wootton}},\ }\bibfield
  {title} {\bibinfo {title} {Quantum memories at finite temperature},\ }\href
  {https://doi.org/10.1103/RevModPhys.88.045005} {\bibfield  {journal}
  {\bibinfo  {journal} {Rev. Mod. Phys.}\ }\textbf {\bibinfo {volume} {88}},\
  \bibinfo {pages} {045005} (\bibinfo {year} {2016})}\BibitemShut {NoStop}%
\bibitem [{\citenamefont {Bergamaschi}\ \emph {et~al.}(2025)\citenamefont
  {Bergamaschi}, \citenamefont {Chen},\ and\ \citenamefont
  {Vazirani}}]{bergamaschi2025metastability}%
  \BibitemOpen
  \bibfield  {author} {\bibinfo {author} {\bibfnamefont {T.}~\bibnamefont
  {Bergamaschi}}, \bibinfo {author} {\bibfnamefont {C.-F.}\ \bibnamefont
  {Chen}},\ and\ \bibinfo {author} {\bibfnamefont {U.}~\bibnamefont
  {Vazirani}},\ }\bibfield  {title} {\bibinfo {title} {A structural theory of
  quantum metastability: {Markov} properties and area laws},\ }\Eprint
  {https://arxiv.org/abs/2510.08538} {arXiv:2510.08538}  (\bibinfo {year}
  {2025})\BibitemShut {NoStop}%
\bibitem [{\citenamefont {Davies}(1974)}]{daviesMarkovianMasterEquations1974}%
  \BibitemOpen
  \bibfield  {author} {\bibinfo {author} {\bibfnamefont {E.~B.}\ \bibnamefont
  {Davies}},\ }\bibfield  {title} {\bibinfo {title} {{Markovian} master
  equations},\ }\href {https://doi.org/10.1007/BF01608389} {\bibfield
  {journal} {\bibinfo  {journal} {Commun. Math. Phys.}\ }\textbf {\bibinfo
  {volume} {39}},\ \bibinfo {pages} {91} (\bibinfo {year} {1974})}\BibitemShut
  {NoStop}%
\bibitem [{\citenamefont {Chen}\ \emph
  {et~al.}(2025{\natexlab{c}})\citenamefont {Chen}, \citenamefont {Kastoryano},
  \citenamefont {Brand{\~a}o},\ and\ \citenamefont
  {Gily{\'e}n}}]{chen2025thermal}%
  \BibitemOpen
  \bibfield  {author} {\bibinfo {author} {\bibfnamefont {C.-F.}\ \bibnamefont
  {Chen}}, \bibinfo {author} {\bibfnamefont {M.}~\bibnamefont {Kastoryano}},
  \bibinfo {author} {\bibfnamefont {F.~G. S.~L.}\ \bibnamefont {Brand{\~a}o}},\
  and\ \bibinfo {author} {\bibfnamefont {A.}~\bibnamefont {Gily{\'e}n}},\
  }\bibfield  {title} {\bibinfo {title} {Efficient quantum thermal
  simulation},\ }\href {https://doi.org/10.1038/s41586-025-09583-x} {\bibfield
  {journal} {\bibinfo  {journal} {Nature}\ }\textbf {\bibinfo {volume} {646}},\
  \bibinfo {pages} {561} (\bibinfo {year} {2025}{\natexlab{c}})}\BibitemShut
  {NoStop}%
\bibitem [{\citenamefont {Chen}\ \emph
  {et~al.}(2023{\natexlab{a}})\citenamefont {Chen}, \citenamefont
  {Kastoryano},\ and\ \citenamefont {Gily{\'e}n}}]{chen2023exact}%
  \BibitemOpen
  \bibfield  {author} {\bibinfo {author} {\bibfnamefont {C.-F.}\ \bibnamefont
  {Chen}}, \bibinfo {author} {\bibfnamefont {M.~J.}\ \bibnamefont
  {Kastoryano}},\ and\ \bibinfo {author} {\bibfnamefont {A.}~\bibnamefont
  {Gily{\'e}n}},\ }\bibfield  {title} {\bibinfo {title} {An efficient and exact
  noncommutative quantum {Gibbs} sampler},\ }\Eprint
  {https://arxiv.org/abs/2311.09207} {arXiv:2311.09207}  (\bibinfo {year}
  {2023}{\natexlab{a}})\BibitemShut {NoStop}%
\bibitem [{\citenamefont {Scandi}\ and\ \citenamefont
  {Alhambra}(2026)}]{scandiThermalizationOpenManyBody2026}%
  \BibitemOpen
  \bibfield  {author} {\bibinfo {author} {\bibfnamefont {M.}~\bibnamefont
  {Scandi}}\ and\ \bibinfo {author} {\bibfnamefont {{\'A}.~M.}\ \bibnamefont
  {Alhambra}},\ }\bibfield  {title} {\bibinfo {title} {Thermalization in open
  many-body systems and {KMS} detailed balance},\ }\href
  {https://doi.org/10.1103/sfp3-3sqf} {\bibfield  {journal} {\bibinfo
  {journal} {Phys. Rev. X}\ }\textbf {\bibinfo {volume} {16}},\ \bibinfo
  {pages} {011040} (\bibinfo {year} {2026})}\BibitemShut {NoStop}%
\bibitem [{\citenamefont {Temme}\ \emph {et~al.}(2011)\citenamefont {Temme},
  \citenamefont {Osborne}, \citenamefont {Vollbrecht}, \citenamefont {Poulin},\
  and\ \citenamefont {Verstraete}}]{temmeQuantumMetropolisSampling2011}%
  \BibitemOpen
  \bibfield  {author} {\bibinfo {author} {\bibfnamefont {K.}~\bibnamefont
  {Temme}}, \bibinfo {author} {\bibfnamefont {T.~J.}\ \bibnamefont {Osborne}},
  \bibinfo {author} {\bibfnamefont {K.~G.}\ \bibnamefont {Vollbrecht}},
  \bibinfo {author} {\bibfnamefont {D.}~\bibnamefont {Poulin}},\ and\ \bibinfo
  {author} {\bibfnamefont {F.}~\bibnamefont {Verstraete}},\ }\bibfield  {title}
  {\bibinfo {title} {Quantum {Metropolis} sampling},\ }\href
  {https://doi.org/10.1038/nature09770} {\bibfield  {journal} {\bibinfo
  {journal} {Nature}\ }\textbf {\bibinfo {volume} {471}},\ \bibinfo {pages}
  {87} (\bibinfo {year} {2011})}\BibitemShut {NoStop}%
\bibitem [{\citenamefont {Cleve}\ and\ \citenamefont
  {Wang}(2017)}]{cleveEfficientQuantumAlgorithms2017}%
  \BibitemOpen
  \bibfield  {author} {\bibinfo {author} {\bibfnamefont {R.}~\bibnamefont
  {Cleve}}\ and\ \bibinfo {author} {\bibfnamefont {C.}~\bibnamefont {Wang}},\
  }\bibfield  {title} {\bibinfo {title} {Efficient quantum algorithms for
  simulating {Lindblad} evolution},\ }in\ \href
  {https://doi.org/10.4230/LIPICS.ICALP.2017.17} {\emph {\bibinfo {booktitle}
  {44th International Colloquium on Automata, Languages, and Programming (ICALP
  2017)}}},\ \bibinfo {series} {Leibniz International Proceedings in
  Informatics (LIPIcs)}, Vol.~\bibinfo {volume} {80}\ (\bibinfo {year} {2017})\
  pp.\ \bibinfo {pages} {17:1--17:14}\BibitemShut {NoStop}%
\bibitem [{\citenamefont {Chen}\ \emph
  {et~al.}(2023{\natexlab{b}})\citenamefont {Chen}, \citenamefont {Kastoryano},
  \citenamefont {Brand{\~a}o},\ and\ \citenamefont
  {Gily{\'e}n}}]{chen2023thermal}%
  \BibitemOpen
  \bibfield  {author} {\bibinfo {author} {\bibfnamefont {C.-F.}\ \bibnamefont
  {Chen}}, \bibinfo {author} {\bibfnamefont {M.~J.}\ \bibnamefont
  {Kastoryano}}, \bibinfo {author} {\bibfnamefont {F.~G. S.~L.}\ \bibnamefont
  {Brand{\~a}o}},\ and\ \bibinfo {author} {\bibfnamefont {A.}~\bibnamefont
  {Gily{\'e}n}},\ }\bibfield  {title} {\bibinfo {title} {Quantum thermal state
  preparation},\ }\Eprint {https://arxiv.org/abs/2303.18224} {arXiv:2303.18224}
   (\bibinfo {year} {2023}{\natexlab{b}})\BibitemShut {NoStop}%
\bibitem [{\citenamefont {Ding}\ \emph {et~al.}(2025)\citenamefont {Ding},
  \citenamefont {Li},\ and\ \citenamefont
  {Lin}}]{dingEfficientQuantumGibbs2024}%
  \BibitemOpen
  \bibfield  {author} {\bibinfo {author} {\bibfnamefont {Z.}~\bibnamefont
  {Ding}}, \bibinfo {author} {\bibfnamefont {B.}~\bibnamefont {Li}},\ and\
  \bibinfo {author} {\bibfnamefont {L.}~\bibnamefont {Lin}},\ }\bibfield
  {title} {\bibinfo {title} {Efficient quantum {Gibbs} samplers with
  {Kubo}--{Martin}--{Schwinger} detailed balance condition},\ }\href
  {https://doi.org/10.1007/s00220-025-05235-3} {\bibfield  {journal} {\bibinfo
  {journal} {Commun. Math. Phys.}\ }\textbf {\bibinfo {volume} {406}},\
  \bibinfo {pages} {67} (\bibinfo {year} {2025})}\BibitemShut {NoStop}%
\bibitem [{\citenamefont {Gily{\'e}n}\ \emph {et~al.}(2024)\citenamefont
  {Gily{\'e}n}, \citenamefont {Chen}, \citenamefont {Doriguello},\ and\
  \citenamefont {Kastoryano}}]{gilyen2024glauber}%
  \BibitemOpen
  \bibfield  {author} {\bibinfo {author} {\bibfnamefont {A.}~\bibnamefont
  {Gily{\'e}n}}, \bibinfo {author} {\bibfnamefont {C.-F.}\ \bibnamefont
  {Chen}}, \bibinfo {author} {\bibfnamefont {J.~F.}\ \bibnamefont
  {Doriguello}},\ and\ \bibinfo {author} {\bibfnamefont {M.~J.}\ \bibnamefont
  {Kastoryano}},\ }\bibfield  {title} {\bibinfo {title} {Quantum
  generalizations of {Glauber} and {Metropolis} dynamics},\ }\Eprint
  {https://arxiv.org/abs/2405.20322} {arXiv:2405.20322}  (\bibinfo {year}
  {2024})\BibitemShut {NoStop}%
\bibitem [{\citenamefont {Breuer}\ and\ \citenamefont
  {Petruccione}(2007)}]{breuerTheoryOpenQuantum2007}%
  \BibitemOpen
  \bibfield  {author} {\bibinfo {author} {\bibfnamefont {H.-P.}\ \bibnamefont
  {Breuer}}\ and\ \bibinfo {author} {\bibfnamefont {F.}~\bibnamefont
  {Petruccione}},\ }\href
  {https://doi.org/10.1093/acprof:oso/9780199213900.001.0001} {\emph {\bibinfo
  {title} {The Theory of Open Quantum Systems}}},\ \bibinfo {edition} {1st}\
  ed.\ (\bibinfo  {publisher} {Oxford University Press},\ \bibinfo {address}
  {Oxford},\ \bibinfo {year} {2007})\BibitemShut {NoStop}%
\bibitem [{\citenamefont {Mozgunov}\ and\ \citenamefont
  {Lidar}(2020)}]{mozgunovCompletelyPositiveMaster2020}%
  \BibitemOpen
  \bibfield  {author} {\bibinfo {author} {\bibfnamefont {E.}~\bibnamefont
  {Mozgunov}}\ and\ \bibinfo {author} {\bibfnamefont {D.}~\bibnamefont
  {Lidar}},\ }\bibfield  {title} {\bibinfo {title} {Completely positive master
  equation for arbitrary driving and small level spacing},\ }\href
  {https://doi.org/10.22331/q-2020-02-06-227} {\bibfield  {journal} {\bibinfo
  {journal} {Quantum}\ }\textbf {\bibinfo {volume} {4}},\ \bibinfo {pages}
  {227} (\bibinfo {year} {2020})}\BibitemShut {NoStop}%
\bibitem [{\citenamefont
  {Narayanan}(2025)}]{narayananImprovedAlgorithmsLearning2024}%
  \BibitemOpen
  \bibfield  {author} {\bibinfo {author} {\bibfnamefont {S.}~\bibnamefont
  {Narayanan}},\ }\bibfield  {title} {\bibinfo {title} {Improved algorithms for
  learning quantum {Hamiltonians}, via flat polynomials},\ }in\ \href@noop {}
  {\emph {\bibinfo {booktitle} {Proceedings of Thirty Eighth Conference on
  Learning Theory}}}\ (\bibinfo {year} {2025})\ pp.\ \bibinfo {pages}
  {4360--4385}\BibitemShut {NoStop}%
\bibitem [{\citenamefont {Lokhov}\ \emph {et~al.}(2018)\citenamefont {Lokhov},
  \citenamefont {Vuffray}, \citenamefont {Misra},\ and\ \citenamefont
  {Chertkov}}]{lokhovOptimalStructureParameter2018}%
  \BibitemOpen
  \bibfield  {author} {\bibinfo {author} {\bibfnamefont {A.~Y.}\ \bibnamefont
  {Lokhov}}, \bibinfo {author} {\bibfnamefont {M.}~\bibnamefont {Vuffray}},
  \bibinfo {author} {\bibfnamefont {S.}~\bibnamefont {Misra}},\ and\ \bibinfo
  {author} {\bibfnamefont {M.}~\bibnamefont {Chertkov}},\ }\bibfield  {title}
  {\bibinfo {title} {Optimal structure and parameter learning of {Ising}
  models},\ }\href {https://doi.org/10.1126/sciadv.1700791} {\bibfield
  {journal} {\bibinfo  {journal} {Sci. Adv.}\ }\textbf {\bibinfo {volume}
  {4}},\ \bibinfo {pages} {e1700791} (\bibinfo {year} {2018})}\BibitemShut
  {NoStop}%
\bibitem [{\citenamefont {Bresler}\ \emph {et~al.}(2018)\citenamefont
  {Bresler}, \citenamefont {Gamarnik},\ and\ \citenamefont
  {Shah}}]{breslerLearningGraphicalModels2018}%
  \BibitemOpen
  \bibfield  {author} {\bibinfo {author} {\bibfnamefont {G.}~\bibnamefont
  {Bresler}}, \bibinfo {author} {\bibfnamefont {D.}~\bibnamefont {Gamarnik}},\
  and\ \bibinfo {author} {\bibfnamefont {D.}~\bibnamefont {Shah}},\ }\bibfield
  {title} {\bibinfo {title} {Learning graphical models from the {Glauber}
  dynamics},\ }\href {https://doi.org/10.1109/TIT.2017.2713828} {\bibfield
  {journal} {\bibinfo  {journal} {IEEE Trans. Inform. Theory}\ }\textbf
  {\bibinfo {volume} {64}},\ \bibinfo {pages} {4072} (\bibinfo {year}
  {2018})}\BibitemShut {NoStop}%
\bibitem [{\citenamefont {Gaitonde}\ \emph {et~al.}(2025)\citenamefont
  {Gaitonde}, \citenamefont {Moitra},\ and\ \citenamefont
  {Mossel}}]{gaitondeBypassingNoisyParity2024}%
  \BibitemOpen
  \bibfield  {author} {\bibinfo {author} {\bibfnamefont {J.}~\bibnamefont
  {Gaitonde}}, \bibinfo {author} {\bibfnamefont {A.}~\bibnamefont {Moitra}},\
  and\ \bibinfo {author} {\bibfnamefont {E.}~\bibnamefont {Mossel}},\
  }\bibfield  {title} {\bibinfo {title} {Bypassing the noisy parity barrier:
  Learning higher-order {Markov} random fields from dynamics},\ }in\ \href
  {https://doi.org/10.1145/3717823.3718231} {\emph {\bibinfo {booktitle}
  {Proceedings of the 57th Annual ACM Symposium on Theory of Computing}}}\
  (\bibinfo {year} {2025})\ pp.\ \bibinfo {pages} {348--359}\BibitemShut
  {NoStop}%
\bibitem [{\citenamefont {Liu}\ \emph {et~al.}(2024)\citenamefont {Liu},
  \citenamefont {Mohanty}, \citenamefont {Raghavendra}, \citenamefont
  {Rajaraman},\ and\ \citenamefont
  {Wu}}]{liuLocallyStationaryDistributions2024}%
  \BibitemOpen
  \bibfield  {author} {\bibinfo {author} {\bibfnamefont {K.}~\bibnamefont
  {Liu}}, \bibinfo {author} {\bibfnamefont {S.}~\bibnamefont {Mohanty}},
  \bibinfo {author} {\bibfnamefont {P.}~\bibnamefont {Raghavendra}}, \bibinfo
  {author} {\bibfnamefont {A.}~\bibnamefont {Rajaraman}},\ and\ \bibinfo
  {author} {\bibfnamefont {D.~X.}\ \bibnamefont {Wu}},\ }\bibfield  {title}
  {\bibinfo {title} {Locally stationary distributions: A framework for
  analyzing slow-mixing {Markov} chains},\ }in\ \href
  {https://doi.org/10.1109/FOCS61266.2024.00022} {\emph {\bibinfo {booktitle}
  {2024 IEEE 65th Annual Symposium on Foundations of Computer Science
  (FOCS)}}}\ (\bibinfo {year} {2024})\ pp.\ \bibinfo {pages}
  {203--215}\BibitemShut {NoStop}%
\bibitem [{\citenamefont {Jayakumar}\ \emph {et~al.}(2026)\citenamefont
  {Jayakumar}, \citenamefont {Lokhov}, \citenamefont {Misra},\ and\
  \citenamefont {Vuffray}}]{jayakumarDiscreteDistributionsAre2026}%
  \BibitemOpen
  \bibfield  {author} {\bibinfo {author} {\bibfnamefont {A.}~\bibnamefont
  {Jayakumar}}, \bibinfo {author} {\bibfnamefont {A.~Y.}\ \bibnamefont
  {Lokhov}}, \bibinfo {author} {\bibfnamefont {S.}~\bibnamefont {Misra}},\ and\
  \bibinfo {author} {\bibfnamefont {M.}~\bibnamefont {Vuffray}},\ }\bibfield
  {title} {\bibinfo {title} {Discrete distributions are learnable from
  metastable samples},\ }\href {https://doi.org/10.1038/s41467-026-75439-1}
  {\bibfield  {journal} {\bibinfo  {journal} {Nat. Commun.}\ }\textbf {\bibinfo
  {volume} {17}},\ \bibinfo {pages} {8957} (\bibinfo {year}
  {2026})}\BibitemShut {NoStop}%
\bibitem [{\citenamefont {Stilck~Fran{\c c}a}\ \emph
  {et~al.}(2024)\citenamefont {Stilck~Fran{\c c}a}, \citenamefont {Markovich},
  \citenamefont {Dobrovitski}, \citenamefont {Werner},\ and\ \citenamefont
  {Borregaard}}]{stilckfrancaEfficientRobustEstimation2024}%
  \BibitemOpen
  \bibfield  {author} {\bibinfo {author} {\bibfnamefont {D.}~\bibnamefont
  {Stilck~Fran{\c c}a}}, \bibinfo {author} {\bibfnamefont {L.~A.}\ \bibnamefont
  {Markovich}}, \bibinfo {author} {\bibfnamefont {V.~V.}\ \bibnamefont
  {Dobrovitski}}, \bibinfo {author} {\bibfnamefont {A.~H.}\ \bibnamefont
  {Werner}},\ and\ \bibinfo {author} {\bibfnamefont {J.}~\bibnamefont
  {Borregaard}},\ }\bibfield  {title} {\bibinfo {title} {Efficient and robust
  estimation of many-qubit {Hamiltonians}},\ }\href
  {https://doi.org/10.1038/s41467-023-44012-5} {\bibfield  {journal} {\bibinfo
  {journal} {Nat. Commun.}\ }\textbf {\bibinfo {volume} {15}},\ \bibinfo
  {pages} {311} (\bibinfo {year} {2024})}\BibitemShut {NoStop}%
\bibitem [{\citenamefont {Yu}\ \emph {et~al.}(2023)\citenamefont {Yu},
  \citenamefont {Sun}, \citenamefont {Han},\ and\ \citenamefont
  {Yuan}}]{yuRobustEfficientHamiltonian2023a}%
  \BibitemOpen
  \bibfield  {author} {\bibinfo {author} {\bibfnamefont {W.}~\bibnamefont
  {Yu}}, \bibinfo {author} {\bibfnamefont {J.}~\bibnamefont {Sun}}, \bibinfo
  {author} {\bibfnamefont {Z.}~\bibnamefont {Han}},\ and\ \bibinfo {author}
  {\bibfnamefont {X.}~\bibnamefont {Yuan}},\ }\bibfield  {title} {\bibinfo
  {title} {Robust and efficient {Hamiltonian} learning},\ }\href
  {https://doi.org/10.22331/q-2023-06-29-1045} {\bibfield  {journal} {\bibinfo
  {journal} {Quantum}\ }\textbf {\bibinfo {volume} {7}},\ \bibinfo {pages}
  {1045} (\bibinfo {year} {2023})}\BibitemShut {NoStop}%
\bibitem [{\citenamefont {Huang}\ \emph {et~al.}(2023)\citenamefont {Huang},
  \citenamefont {Tong}, \citenamefont {Fang},\ and\ \citenamefont
  {Su}}]{huangLearningManyBodyHamiltonians2023}%
  \BibitemOpen
  \bibfield  {author} {\bibinfo {author} {\bibfnamefont {H.-Y.}\ \bibnamefont
  {Huang}}, \bibinfo {author} {\bibfnamefont {Y.}~\bibnamefont {Tong}},
  \bibinfo {author} {\bibfnamefont {D.}~\bibnamefont {Fang}},\ and\ \bibinfo
  {author} {\bibfnamefont {Y.}~\bibnamefont {Su}},\ }\bibfield  {title}
  {\bibinfo {title} {Learning many-body {Hamiltonians} with
  {Heisenberg}-limited scaling},\ }\href
  {https://doi.org/10.1103/PhysRevLett.130.200403} {\bibfield  {journal}
  {\bibinfo  {journal} {Phys. Rev. Lett.}\ }\textbf {\bibinfo {volume} {130}},\
  \bibinfo {pages} {200403} (\bibinfo {year} {2023})}\BibitemShut {NoStop}%
\bibitem [{\citenamefont {Dutkiewicz}\ \emph {et~al.}(2024)\citenamefont
  {Dutkiewicz}, \citenamefont {O'Brien},\ and\ \citenamefont
  {Schuster}}]{dutkiewiczAdvantageQuantumControl2024}%
  \BibitemOpen
  \bibfield  {author} {\bibinfo {author} {\bibfnamefont {A.}~\bibnamefont
  {Dutkiewicz}}, \bibinfo {author} {\bibfnamefont {T.~E.}\ \bibnamefont
  {O'Brien}},\ and\ \bibinfo {author} {\bibfnamefont {T.}~\bibnamefont
  {Schuster}},\ }\bibfield  {title} {\bibinfo {title} {The advantage of quantum
  control in many-body {Hamiltonian} learning},\ }\href
  {https://doi.org/10.22331/q-2024-11-26-1537} {\bibfield  {journal} {\bibinfo
  {journal} {Quantum}\ }\textbf {\bibinfo {volume} {8}},\ \bibinfo {pages}
  {1537} (\bibinfo {year} {2024})}\BibitemShut {NoStop}%
\bibitem [{\citenamefont {Li}\ \emph {et~al.}(2024)\citenamefont {Li},
  \citenamefont {Tong}, \citenamefont {Gefen}, \citenamefont {Ni},\ and\
  \citenamefont {Ying}}]{liHeisenberglimitedHamiltonianLearning2023}%
  \BibitemOpen
  \bibfield  {author} {\bibinfo {author} {\bibfnamefont {H.}~\bibnamefont
  {Li}}, \bibinfo {author} {\bibfnamefont {Y.}~\bibnamefont {Tong}}, \bibinfo
  {author} {\bibfnamefont {T.}~\bibnamefont {Gefen}}, \bibinfo {author}
  {\bibfnamefont {H.}~\bibnamefont {Ni}},\ and\ \bibinfo {author}
  {\bibfnamefont {L.}~\bibnamefont {Ying}},\ }\bibfield  {title} {\bibinfo
  {title} {{Heisenberg}-limited {Hamiltonian} learning for interacting
  bosons},\ }\href {https://doi.org/10.1038/s41534-024-00881-2} {\bibfield
  {journal} {\bibinfo  {journal} {npj Quantum Inf.}\ }\textbf {\bibinfo
  {volume} {10}},\ \bibinfo {pages} {83} (\bibinfo {year} {2024})}\BibitemShut
  {NoStop}%
\bibitem [{\citenamefont {M{\"o}bus}\ \emph {et~al.}(2025)\citenamefont
  {M{\"o}bus}, \citenamefont {Bluhm}, \citenamefont {Gefen}, \citenamefont
  {Tong}, \citenamefont {Werner},\ and\ \citenamefont
  {Rouz{\'e}}}]{mobusHeisenberglimitedHamiltonianLearning2025}%
  \BibitemOpen
  \bibfield  {author} {\bibinfo {author} {\bibfnamefont {T.}~\bibnamefont
  {M{\"o}bus}}, \bibinfo {author} {\bibfnamefont {A.}~\bibnamefont {Bluhm}},
  \bibinfo {author} {\bibfnamefont {T.}~\bibnamefont {Gefen}}, \bibinfo
  {author} {\bibfnamefont {Y.}~\bibnamefont {Tong}}, \bibinfo {author}
  {\bibfnamefont {A.~H.}\ \bibnamefont {Werner}},\ and\ \bibinfo {author}
  {\bibfnamefont {C.}~\bibnamefont {Rouz{\'e}}},\ }\bibfield  {title} {\bibinfo
  {title} {{Heisenberg}-limited {Hamiltonian} learning continuous variable
  systems via engineered dissipation},\ }\Eprint
  {https://arxiv.org/abs/2506.00606} {arXiv:2506.00606}  (\bibinfo {year}
  {2025})\BibitemShut {NoStop}%
\bibitem [{\citenamefont {Ni}\ \emph {et~al.}(2024)\citenamefont {Ni},
  \citenamefont {Li},\ and\ \citenamefont
  {Ying}}]{niQuantumHamiltonianLearning2024}%
  \BibitemOpen
  \bibfield  {author} {\bibinfo {author} {\bibfnamefont {H.}~\bibnamefont
  {Ni}}, \bibinfo {author} {\bibfnamefont {H.}~\bibnamefont {Li}},\ and\
  \bibinfo {author} {\bibfnamefont {L.}~\bibnamefont {Ying}},\ }\bibfield
  {title} {\bibinfo {title} {Quantum {Hamiltonian} learning for the
  {Fermi}-{Hubbard} model},\ }\href
  {https://doi.org/10.1007/s10440-024-00651-4} {\bibfield  {journal} {\bibinfo
  {journal} {Acta Appl. Math.}\ }\textbf {\bibinfo {volume} {191}},\ \bibinfo
  {pages} {2} (\bibinfo {year} {2024})}\BibitemShut {NoStop}%
\bibitem [{\citenamefont {Mirani}\ and\ \citenamefont
  {Hayden}(2024)}]{miraniLearningInteractingFermionic2024}%
  \BibitemOpen
  \bibfield  {author} {\bibinfo {author} {\bibfnamefont {A.}~\bibnamefont
  {Mirani}}\ and\ \bibinfo {author} {\bibfnamefont {P.}~\bibnamefont
  {Hayden}},\ }\bibfield  {title} {\bibinfo {title} {Learning interacting
  fermionic {Hamiltonians} at the {Heisenberg} limit},\ }\href
  {https://doi.org/10.1103/PhysRevA.110.062421} {\bibfield  {journal} {\bibinfo
   {journal} {Phys. Rev. A}\ }\textbf {\bibinfo {volume} {110}},\ \bibinfo
  {pages} {062421} (\bibinfo {year} {2024})}\BibitemShut {NoStop}%
\bibitem [{\citenamefont {Bakshi}\ \emph
  {et~al.}(2024{\natexlab{b}})\citenamefont {Bakshi}, \citenamefont {Liu},
  \citenamefont {Moitra},\ and\ \citenamefont
  {Tang}}]{bakshiStructureLearningHamiltonians2024}%
  \BibitemOpen
  \bibfield  {author} {\bibinfo {author} {\bibfnamefont {A.}~\bibnamefont
  {Bakshi}}, \bibinfo {author} {\bibfnamefont {A.}~\bibnamefont {Liu}},
  \bibinfo {author} {\bibfnamefont {A.}~\bibnamefont {Moitra}},\ and\ \bibinfo
  {author} {\bibfnamefont {E.}~\bibnamefont {Tang}},\ }\bibfield  {title}
  {\bibinfo {title} {Structure learning of {Hamiltonians} from real-time
  evolution},\ }in\ \href {https://doi.org/10.1109/FOCS61266.2024.00069} {\emph
  {\bibinfo {booktitle} {2024 IEEE 65th Annual Symposium on Foundations of
  Computer Science (FOCS)}}}\ (\bibinfo {year} {2024})\ pp.\ \bibinfo {pages}
  {1037--1050}\BibitemShut {NoStop}%
\bibitem [{\citenamefont {Zhao}(2025)}]{zhaoLearningStructureAny2025}%
  \BibitemOpen
  \bibfield  {author} {\bibinfo {author} {\bibfnamefont {A.}~\bibnamefont
  {Zhao}},\ }\bibfield  {title} {\bibinfo {title} {Learning the structure of
  any {Hamiltonian} from minimal assumptions},\ }in\ \href
  {https://doi.org/10.1145/3717823.3718115} {\emph {\bibinfo {booktitle}
  {Proceedings of the 57th Annual ACM Symposium on Theory of Computing}}}\
  (\bibinfo {year} {2025})\ pp.\ \bibinfo {pages} {1201--1211}\BibitemShut
  {NoStop}%
\bibitem [{\citenamefont {Ma}\ \emph {et~al.}(2026)\citenamefont {Ma},
  \citenamefont {Flammia}, \citenamefont {Preskill},\ and\ \citenamefont
  {Tong}}]{maLearningKbodyHamiltonians2026}%
  \BibitemOpen
  \bibfield  {author} {\bibinfo {author} {\bibfnamefont {M.}~\bibnamefont
  {Ma}}, \bibinfo {author} {\bibfnamefont {S.~T.}\ \bibnamefont {Flammia}},
  \bibinfo {author} {\bibfnamefont {J.}~\bibnamefont {Preskill}},\ and\
  \bibinfo {author} {\bibfnamefont {Y.}~\bibnamefont {Tong}},\ }\bibfield
  {title} {\bibinfo {title} {Learning {$k$}-body {Hamiltonians} via compressed
  sensing},\ }\href {https://doi.org/10.1109/TIT.2026.3720996} {\bibfield
  {journal} {\bibinfo  {journal} {IEEE Trans. Inform. Theory}\ }\textbf
  {\bibinfo {volume} {72}},\ \bibinfo {pages} {7553} (\bibinfo {year}
  {2026})}\BibitemShut {NoStop}%
\bibitem [{\citenamefont {Hu}\ \emph {et~al.}(2025)\citenamefont {Hu},
  \citenamefont {Ma}, \citenamefont {Gong}, \citenamefont {Ye}, \citenamefont
  {Tong}, \citenamefont {Flammia},\ and\ \citenamefont
  {Yelin}}]{huAnsatzfreeHamiltonianLearning2025}%
  \BibitemOpen
  \bibfield  {author} {\bibinfo {author} {\bibfnamefont {H.-Y.}\ \bibnamefont
  {Hu}}, \bibinfo {author} {\bibfnamefont {M.}~\bibnamefont {Ma}}, \bibinfo
  {author} {\bibfnamefont {W.}~\bibnamefont {Gong}}, \bibinfo {author}
  {\bibfnamefont {Q.}~\bibnamefont {Ye}}, \bibinfo {author} {\bibfnamefont
  {Y.}~\bibnamefont {Tong}}, \bibinfo {author} {\bibfnamefont {S.~T.}\
  \bibnamefont {Flammia}},\ and\ \bibinfo {author} {\bibfnamefont {S.~F.}\
  \bibnamefont {Yelin}},\ }\bibfield  {title} {\bibinfo {title} {Ansatz-free
  {Hamiltonian} learning with {Heisenberg}-limited scaling},\ }\href
  {https://doi.org/10.1103/j7b8-pb77} {\bibfield  {journal} {\bibinfo
  {journal} {PRX Quantum}\ }\textbf {\bibinfo {volume} {6}},\ \bibinfo {pages}
  {040315} (\bibinfo {year} {2025})}\BibitemShut {NoStop}%
\bibitem [{\citenamefont {Zhou}\ and\ \citenamefont
  {Gong}(2026)}]{zhouOptimalAnsatzfreeHamiltonian2026}%
  \BibitemOpen
  \bibfield  {author} {\bibinfo {author} {\bibfnamefont {T.}~\bibnamefont
  {Zhou}}\ and\ \bibinfo {author} {\bibfnamefont {W.}~\bibnamefont {Gong}},\
  }\bibfield  {title} {\bibinfo {title} {Optimal ansatz-free {Hamiltonian}
  learning in situ},\ }\Eprint {https://arxiv.org/abs/2606.19486}
  {arXiv:2606.19486}  (\bibinfo {year} {2026})\BibitemShut {NoStop}%
\bibitem [{\citenamefont {Chen}\ \emph
  {et~al.}(2025{\natexlab{d}})\citenamefont {Chen}, \citenamefont {Li},\ and\
  \citenamefont {Slote}}]{chenLowerBoundsLearning2026}%
  \BibitemOpen
  \bibfield  {author} {\bibinfo {author} {\bibfnamefont {Z.}~\bibnamefont
  {Chen}}, \bibinfo {author} {\bibfnamefont {J.}~\bibnamefont {Li}},\ and\
  \bibinfo {author} {\bibfnamefont {J.}~\bibnamefont {Slote}},\ }\bibfield
  {title} {\bibinfo {title} {Lower bounds for learning {Hamiltonians} from time
  evolution},\ }\Eprint {https://arxiv.org/abs/2509.20665} {arXiv:2509.20665}
  (\bibinfo {year} {2025}{\natexlab{d}})\BibitemShut {NoStop}%
\bibitem [{\citenamefont {{Cedillo Vayson de Pradenne}}\ \emph
  {et~al.}(2026)\citenamefont {{Cedillo Vayson de Pradenne}}, \citenamefont
  {Cotler},\ and\ \citenamefont
  {Huang}}]{pradenneLearningHamiltoniansLong2026}%
  \BibitemOpen
  \bibfield  {author} {\bibinfo {author} {\bibfnamefont {C.}~\bibnamefont
  {{Cedillo Vayson de Pradenne}}}, \bibinfo {author} {\bibfnamefont
  {J.}~\bibnamefont {Cotler}},\ and\ \bibinfo {author} {\bibfnamefont {H.-Y.}\
  \bibnamefont {Huang}},\ }\bibfield  {title} {\bibinfo {title} {Learning
  {Hamiltonians} at long times},\ }\Eprint {https://arxiv.org/abs/2606.05690}
  {arXiv:2606.05690}  (\bibinfo {year} {2026})\BibitemShut {NoStop}%
\bibitem [{\citenamefont {Lewis}\ \emph {et~al.}(2026)\citenamefont {Lewis},
  \citenamefont {Tang},\ and\ \citenamefont
  {Wright}}]{lewisLearningStructureOpen2026}%
  \BibitemOpen
  \bibfield  {author} {\bibinfo {author} {\bibfnamefont {L.}~\bibnamefont
  {Lewis}}, \bibinfo {author} {\bibfnamefont {E.}~\bibnamefont {Tang}},\ and\
  \bibinfo {author} {\bibfnamefont {J.}~\bibnamefont {Wright}},\ }\bibfield
  {title} {\bibinfo {title} {Learning the structure of open quantum systems},\
  }\Eprint {https://arxiv.org/abs/2606.30358} {arXiv:2606.30358}  (\bibinfo
  {year} {2026})\BibitemShut {NoStop}%
\bibitem [{\citenamefont {M{\"o}bus}\ \emph {et~al.}(2026)\citenamefont
  {M{\"o}bus}, \citenamefont {Bergamaschi}, \citenamefont {Stilck~Fran{\c
  c}a},\ and\ \citenamefont {Rouz{\'e}}}]{mobusRobustStructureLearning2026}%
  \BibitemOpen
  \bibfield  {author} {\bibinfo {author} {\bibfnamefont {T.}~\bibnamefont
  {M{\"o}bus}}, \bibinfo {author} {\bibfnamefont {T.}~\bibnamefont
  {Bergamaschi}}, \bibinfo {author} {\bibfnamefont {D.}~\bibnamefont
  {Stilck~Fran{\c c}a}},\ and\ \bibinfo {author} {\bibfnamefont
  {C.}~\bibnamefont {Rouz{\'e}}},\ }\bibfield  {title} {\bibinfo {title}
  {Robust structure learning of {$k$}-local {Lindbladians}},\ }\Eprint
  {https://arxiv.org/abs/2606.23652} {arXiv:2606.23652}  (\bibinfo {year}
  {2026})\BibitemShut {NoStop}%
\bibitem [{\citenamefont {Arad}\ \emph {et~al.}(2026)\citenamefont {Arad},
  \citenamefont {Chen}, \citenamefont {Guo}, \citenamefont {Rebentrost},\ and\
  \citenamefont {Yu}}]{aradNearOptimalLearningLocal2026}%
  \BibitemOpen
  \bibfield  {author} {\bibinfo {author} {\bibfnamefont {I.}~\bibnamefont
  {Arad}}, \bibinfo {author} {\bibfnamefont {Z.}~\bibnamefont {Chen}}, \bibinfo
  {author} {\bibfnamefont {N.}~\bibnamefont {Guo}}, \bibinfo {author}
  {\bibfnamefont {P.}~\bibnamefont {Rebentrost}},\ and\ \bibinfo {author}
  {\bibfnamefont {Z.}~\bibnamefont {Yu}},\ }\bibfield  {title} {\bibinfo
  {title} {Near-optimal learning of local {Lindbladians}},\ }\Eprint
  {https://arxiv.org/abs/2606.20535} {arXiv:2606.20535}  (\bibinfo {year}
  {2026})\BibitemShut {NoStop}%
\bibitem [{\citenamefont {Qi}\ and\ \citenamefont
  {Ranard}(2019)}]{qiDeterminingLocalHamiltonian2019a}%
  \BibitemOpen
  \bibfield  {author} {\bibinfo {author} {\bibfnamefont {X.-L.}\ \bibnamefont
  {Qi}}\ and\ \bibinfo {author} {\bibfnamefont {D.}~\bibnamefont {Ranard}},\
  }\bibfield  {title} {\bibinfo {title} {Determining a local {Hamiltonian} from
  a single eigenstate},\ }\href {https://doi.org/10.22331/q-2019-07-08-159}
  {\bibfield  {journal} {\bibinfo  {journal} {Quantum}\ }\textbf {\bibinfo
  {volume} {3}},\ \bibinfo {pages} {159} (\bibinfo {year} {2019})}\BibitemShut
  {NoStop}%
\bibitem [{\citenamefont {Bairey}\ \emph {et~al.}(2019)\citenamefont {Bairey},
  \citenamefont {Arad},\ and\ \citenamefont
  {Lindner}}]{baireyLearningLocalHamiltonian2019}%
  \BibitemOpen
  \bibfield  {author} {\bibinfo {author} {\bibfnamefont {E.}~\bibnamefont
  {Bairey}}, \bibinfo {author} {\bibfnamefont {I.}~\bibnamefont {Arad}},\ and\
  \bibinfo {author} {\bibfnamefont {N.~H.}\ \bibnamefont {Lindner}},\
  }\bibfield  {title} {\bibinfo {title} {Learning a local {Hamiltonian} from
  local measurements},\ }\href {https://doi.org/10.1103/PhysRevLett.122.020504}
  {\bibfield  {journal} {\bibinfo  {journal} {Phys. Rev. Lett.}\ }\textbf
  {\bibinfo {volume} {122}},\ \bibinfo {pages} {020504} (\bibinfo {year}
  {2019})}\BibitemShut {NoStop}%
\bibitem [{\citenamefont {Bairey}\ \emph {et~al.}(2020)\citenamefont {Bairey},
  \citenamefont {Guo}, \citenamefont {Poletti}, \citenamefont {Lindner},\ and\
  \citenamefont {Arad}}]{baireyLearningDynamicsOpen2020}%
  \BibitemOpen
  \bibfield  {author} {\bibinfo {author} {\bibfnamefont {E.}~\bibnamefont
  {Bairey}}, \bibinfo {author} {\bibfnamefont {C.}~\bibnamefont {Guo}},
  \bibinfo {author} {\bibfnamefont {D.}~\bibnamefont {Poletti}}, \bibinfo
  {author} {\bibfnamefont {N.~H.}\ \bibnamefont {Lindner}},\ and\ \bibinfo
  {author} {\bibfnamefont {I.}~\bibnamefont {Arad}},\ }\bibfield  {title}
  {\bibinfo {title} {Learning the dynamics of open quantum systems from their
  steady states},\ }\href {https://doi.org/10.1088/1367-2630/ab73cd} {\bibfield
   {journal} {\bibinfo  {journal} {New J. Phys.}\ }\textbf {\bibinfo {volume}
  {22}},\ \bibinfo {pages} {032001} (\bibinfo {year} {2020})}\BibitemShut
  {NoStop}%
\bibitem [{\citenamefont {Chen}\ and\ \citenamefont
  {Rouz{\'e}}(2025)}]{chen2025GibbsMarkov}%
  \BibitemOpen
  \bibfield  {author} {\bibinfo {author} {\bibfnamefont {C.-F.}\ \bibnamefont
  {Chen}}\ and\ \bibinfo {author} {\bibfnamefont {C.}~\bibnamefont
  {Rouz{\'e}}},\ }\bibfield  {title} {\bibinfo {title} {Quantum {Gibbs} states
  are locally {Markovian}},\ }\Eprint {https://arxiv.org/abs/2504.02208}
  {arXiv:2504.02208}  (\bibinfo {year} {2025})\BibitemShut {NoStop}%
\bibitem [{\citenamefont {Lieb}\ and\ \citenamefont
  {Robinson}(1972)}]{lieb1972finite}%
  \BibitemOpen
  \bibfield  {author} {\bibinfo {author} {\bibfnamefont {E.~H.}\ \bibnamefont
  {Lieb}}\ and\ \bibinfo {author} {\bibfnamefont {D.~W.}\ \bibnamefont
  {Robinson}},\ }\bibfield  {title} {\bibinfo {title} {The finite group
  velocity of quantum spin systems},\ }\href
  {https://doi.org/10.1007/BF01645779} {\bibfield  {journal} {\bibinfo
  {journal} {Commun. Math. Phys.}\ }\textbf {\bibinfo {volume} {28}},\ \bibinfo
  {pages} {251} (\bibinfo {year} {1972})}\BibitemShut {NoStop}%
\bibitem [{\citenamefont {Powers}\ and\ \citenamefont
  {St{\o}rmer}(1970)}]{powers1970free}%
  \BibitemOpen
  \bibfield  {author} {\bibinfo {author} {\bibfnamefont {R.~T.}\ \bibnamefont
  {Powers}}\ and\ \bibinfo {author} {\bibfnamefont {E.}~\bibnamefont
  {St{\o}rmer}},\ }\bibfield  {title} {\bibinfo {title} {Free states of the
  canonical anticommutation relations},\ }\href
  {https://doi.org/10.1007/BF01645492} {\bibfield  {journal} {\bibinfo
  {journal} {Commun. Math. Phys.}\ }\textbf {\bibinfo {volume} {16}},\ \bibinfo
  {pages} {1} (\bibinfo {year} {1970})}\BibitemShut {NoStop}%
\end{thebibliography}%
\end{document}